\DeclareMathOperator*{\argmax}{argmax}
\DeclareMathOperator*{\argmin}{argmin}
\newcommand{\E}{\mathop{\mathbb{E}}}
\newcommand{\bfone}{\mathbf{1}}
\newcommand{\tmax}{{t_{\mathrm{max}}}}
\newcommand{\PiProd}{\Pi_{\mathsf{Prod}}}
\newcommand{\PiBNE}{\Pi_{\mathsf{BNE}}}
\newcommand{\PiBS}{\Pi_{\mathsf{BS}}}
\newcommand{\PiSF}{\Pi_{\mathsf{SF}}}
\newcommand{\PiANF}{\Pi_{\mathsf{ANF}}}
\newcommand{\PiCom}{\Pi_{\mathsf{Com}}}
\newcommand{\SigmaSF}{\Sigma_{\mathsf{SF}}}
\newcommand{\SigmaANF}{\Sigma_{\mathsf{ANF}}}
\newcommand{\SigmaProd}{\Sigma_{\mathsf{Prod}}}
\newcommand{\SigmaBNE}{\Sigma_{\mathsf{BNE}}}
\newcommand{\PoA}{{\mathsf{PoA}}}
\newcommand{\vSW}{v_{\mathsf{SW}}}
\newcommand{\phiSF}{\phi_{\mathsf{SF}}}
\newcommand{\vEP}{v_{\mathrm{EP}}}
\newcommand{\RLS}{R_{\mathsf{LS}}}
\newcommand{\RUSi}{R_{\mathsf{US},i}}
\newcommand{\RTSi}{R_{\mathsf{TS},i}}
\newcommand{\RSSi}{R_{\mathsf{SS},i}}
\newcommand{\RUSj}{R_{\mathsf{US},j}}
\newcommand{\RTSj}{R_{\mathsf{TS},j}}
\newcommand{\RSSj}{R_{\mathsf{SS},j}}
\newtheorem{theorem}{Theorem}[section]
\theoremstyle{plain}
\newtheorem{lemma}[theorem]{Lemma}
\newtheorem{proposition}[theorem]{Proposition}
\theoremstyle{definition}
\newtheorem{definition}[theorem]{Definition}
\newtheorem{remark}[theorem]{Remark}
\newtheorem{example}[theorem]{Example}
\title{Bayes correlated equilibria, no-regret dynamics in Bayesian games, and the price of anarchy}
\author{Kaito Fujii\footnote{National Institute of Informatics. Email: \texttt{fujiik@nii.ac.jp}}}
\begin{document}

\maketitle

\begin{abstract}%
This paper investigates equilibrium computation and the price of anarchy for Bayesian games, which are the fundamental models of games with incomplete information. In normal-form games with complete information, it is known that efficiently computable no-regret dynamics converge to correlated equilibria, and the price of anarchy for correlated equilibria can be bounded for a broad class of games called smooth games. However, in Bayesian games, as surveyed by Forges (1993), several non-equivalent extensions of correlated equilibria exist, and it remains unclear whether they can be efficiently computed or whether their price of anarchy can be bounded.

In this paper, we identify a natural extension of correlated equilibria that can be computed efficiently and is guaranteed to have bounds on the price of anarchy in various games. First, we propose a variant of regret called untruthful swap regret. If each player minimizes it in repeated play of Bayesian games, the empirical distribution of these dynamics is guaranteed to converge to communication equilibria, which is one of the extensions of correlated equilibria proposed by Myerson (1982). We present an efficient algorithm for minimizing untruthful swap regret with a sublinear upper bound, which we prove to be tight in terms of the number of types. As a result, by simulating the dynamics with our algorithm, we can approximately compute a communication equilibrium in polynomial time. Furthermore, we extend existing lower bounds on the price of anarchy based on the smoothness arguments from Bayes--Nash equilibria to equilibria obtained by the proposed dynamics.
\end{abstract}
\clearpage

\tableofcontents
\clearpage

\section{Introduction}
In normal-form games, computing a Nash equilibrium is known to be PPAD-complete even if for two-player games \citep*{CDT09}, and it requires exponentially many queries in the number of players \citep*{Babichenko16}.
In contrast, \textit{correlated equilibria} \citep*{Aum74} can be efficiently computed even for multi-player non-zero-sum games.
A correlated equilibrium can be interpreted as an equilibrium concept realized by a \textit{mediator} who can recommend (possibly correlated) actions to players.
A standard approach to computing a correlated equilibrium is to simulate a variant of \textit{no-regret dynamics},
in which players repeatedly play the same game many times and independently decide their own action in each round.
If each player decides their action according to an algorithm that exhibits no internal regret \citep*{FV97,HM00} or swap regret \citep*{BM07}, the dynamics are known to converge to correlated equilibria.%
\footnote{Formally, the empirical distribution of action profiles in all past rounds converges to the set of correlated equilibria.
In this paper, we say ``dynamics converge'' in this sense.
}

Another paramount challenge in algorithmic game theory is to bound the \textit{price of anarchy} (PoA) \citep*{KP99,BHLR08} in a broad class of games.
As each individual seeks their own interests, players in non-cooperative games can fall into a bad equilibrium.
To measure how bad equilibria could be, \citet*{KP99} proposed the price of anarchy, which is defined as the ratio of the social welfare in the worst equilibrium to the maximum social welfare.
\citet*{Roughgarden15} proved that if a game satisfies a property called \textit{smoothness}, the PoA of (coarse) correlated equilibria can be bounded,
and various classes of games satisfy smoothness.
\citet*{ST13} later developed a variant of smoothness for mechanism design and provided lower bounds on the PoA of correlated equilibria for a large class of mechanisms.

As discussed above, in normal-form games with \textit{complete information}, correlated equilibria can be efficiently computed by simulating dynamics and are guaranteed to have PoA bounds in various games.
Thus, we can compute an equilibrium efficiently in a distributed fashion and then realize it by introducing a mediator, which enables players to settle into an equilibrium with high social welfare.
In this paper, we focus on \textit{Bayesian games},
in which each player $i \in [n]$ has private information represented by a random variable $\theta_i \in \Theta_i$ called a \textit{type} jointly generated from a commonly known prior distribution $\rho \in \Delta(\Theta_1 \times \dots \times \Theta_n)$.
Each player $i \in [n]$ chooses their action $a_i \in A_i$ based on their own type $\theta_i$, and then each player $i \in [n]$ obtains the payoff $v_i(\theta; a)$ depending not only on the actions $a = (a_1,\dots,a_n)$ but also on the realized types $\theta = (\theta_1,\dots,\theta_n)$.
Since their introduction by \citet*{Harsanyi67,Harsanyi68a,Harsanyi68b}, Bayesian games have been studied as one of the most significant models in game theory.
The central question of this paper is as follows:
\begin{center}
\textit{Is there any equilibrium concept for Bayesian games that can be realized by a mediator, efficiently computable by simulating dynamics, and guaranteed to have PoA bounds for various games?}
\end{center}

Since correlated equilibria satisfy these properties for games with complete information, a natural approach to this question is to extend correlated equilibria to Bayesian games.
However, as surveyed by \citet*{Forges93,Forges14}, there are various non-equivalent extensions of correlated equilibria, which are collectively called \textit{Bayes correlated equilibria} (\Cref{fig:relation}).
For example, \textit{communication equilibria} are naturally realized by introducing a mediator who can bidirectionally communicate with players \citep*{Myerson82,Forges86}.
Other extensions include \textit{strategic-form correlated equilibria} (SFCEs) and \textit{agent-normal-form correlated equilibria} (ANFCEs), which are defined as correlated equilibria of complete-information interpretations of Bayesian games called the strategic form and agent normal form, respectively.

\begin{figure}[t]
\centering
\begin{tikzpicture}[line width=0.003\paperwidth, yscale=0.5]
\definecolor{darkred}{HTML}{990000}
\definecolor{darkblue}{HTML}{0D47A1}
\newlength{\ul}
\setlength{\ul}{0.01\paperwidth}
\draw [pattern=north east lines] (0\ul,-12.2\ul) arc (-77.3:+77.3:{27.5\ul} and {12.5\ul})
-- (0\ul,+12.2\ul) arc (+102.7:+257.3:{27.5\ul} and {12.5\ul}) -- cycle;
\node (ParCE) [draw=black, ellipse, inner sep=0, minimum width=70\ul, minimum height=18\ul] {};
\node (SFCE) [draw=black, ellipse, inner sep=0, minimum width=40\ul, minimum height=9\ul] at (ParCE) {};
\node (BNE) [draw=black, ellipse, inner sep=0, anchor=south, minimum width=30\ul, minimum height=5\ul] at ([yshift=1\ul]SFCE.south) {};
\node (ANFCE) [draw=black, fill=darkblue!50, fill opacity=0.3, ellipse, inner sep=0, minimum width=55\ul, minimum height=12.5\ul] at ([xshift=-6\ul]ParCE) {};
\node (ComCE) [draw=black, fill=darkred!50, fill opacity=0.3, ellipse, inner sep=0, minimum width=55\ul, minimum height=12.5\ul] at ([xshift=+6\ul]ParCE) {};

\node (BNEc) [align=center] at (BNE) {Bayes--Nash\\equilibria};
\node (SFCEc) [align=center, anchor=north, inner sep=2pt] at ([yshift=-2pt]SFCE.north) {Strategic-form CE};
\node (ComCEc) [align=right, anchor=east, text=darkred] at ([xshift=-20pt]ComCE.east) {Communi-\\[-2pt]cation\\equilibria};
\node (ANFCEc) [align=center, anchor=west, text=darkblue] at ([xshift=2pt]ANFCE.west) {Agent-normal-\\form CE};
\node (ParCEc) [align=left, anchor=north, inner sep=2pt] at ([yshift=-0.8\ul]ParCE.north) {Bayesian solution};

\end{tikzpicture}
\caption{%
Relations among classes of Bayes correlated equilibria.
The shaded region represents the intersection of communication equilibria and ANFCEs (equivalently, communication equilibria with strategy representability), which this paper focuses on.
}\label{fig:relation}
\end{figure}

While their definitions have been well studied, little is known about computational tractability of Bayes correlated equilibria.
By simulating a variant of no-regret dynamics proposed by \citet*{HST15}, we can compute a class of Bayes \textit{coarse} correlated equilibria, which is broader than Bayes correlated equilibria.%
\footnote{%
Formally, as Bayes correlated equilibria, Bayes coarse correlated equilibria have many classes.
The dynamics proposed by \citet*{HST15} converge to agent-normal-form coarse correlated equilibria, which are a superset of ANFCEs but not a superset of communication equilibria.
See \Cref{sec:bcce} for formal definitions of Bayes coarse correlated equilibria.
}
On the hardness side, since Bayes--Nash equilibria generalize Nash equilibria in complete-information games, hardness of Bayes--Nash equilibria immediately follows that of Nash equilibria, which is extended even for two players with a constant number of actions \citep*{Rubinstein18}.
Whether it is possible to efficiently compute Bayes correlated equilibria is a problem lying between them.

Although much effort has been devoted to PoA bounds for Bayesian games, most studies focused on Bayes--Nash equilibria \citep*{CKS16,PT10,Roughgarden15incomplete,Syrgkanis12,ST13,HHT14,FFGL20,JL22}.
\citet*{Roughgarden15incomplete} and \citet*{Syrgkanis12} derived bounds on the PoA of Bayes--Nash equilibria by using the smoothness arguments, but it remains to be seen whether these bounds can be extended to a broader class of equilibria.

In this paper, we present \textit{the intersection of communication equilibria and ANFCEs}%
\footnote{
This concept is equivalently defined as communication equilibria with \textit{strategy representability} (see \Cref{sec:sr} and \Cref{prop:com-sr} for details), which can be naturally interpreted as an equilibrium where communication with each player can be conducted separately in any order.
}
as an equilibrium concept that is efficiently computable and has PoA bounds for various games.
Our contributions are summarized as follows.
\begin{itemize}
\item This paper shows that dynamics minimizing a variant of swap regret, which we call \textit{untruthful swap regret}, converge to this class (\Cref{sec:concept}).
\item We propose an efficient algorithm for minimizing the untruthful swap regret with a sublinear upper bound (\Cref{sec:ub}).
\item We show that this upper bound is tight in terms of the number of types by providing a problem instance for which no algorithm can achieve better untruthful swap regret (\Cref{sec:lb}).
\item We show that most existing PoA bounds for Bayes--Nash equilibria based on the smoothness arguments can be extended to this intersection class (\Cref{sec:poa}).
\end{itemize}

\paragraph{Other related work}
Extensive-form games with imperfect information can represent Bayesian games by treating types as actions of a chance node.
\citet*{FCMG22} proposed dynamics converging to extensive-form correlated equilibria (EFCEs) in this class of games by using the $\Phi$-regret framework.
However, while EFCEs generalize ANFCEs, they do not generalize communication equilibria, and their regret definition extends internal regret rather than swap regret.
To study the Stackelberg value, \citet*{MMSS22} considered \textit{linear swap regret} (originally called \textit{linear regret} by \citet*{Gor08}) in repeated play of Bayesian games with two players, but they did not provide algorithms with regret guarantees.
We prove the equivalence of linear swap regret in Bayesian games and untruthful swap regret in \Cref{sec:linear-swap}.
In Appendix D.2.2 of their paper, \citet*{MMSS22} also mentioned an equilibrium concept similar to communication equilibria for two-player Bayesian games, in which \textit{a single player} cannot benefit from any type reporting \textit{or} action deviations but not their combination.
Since the initial appearance of this paper, several related studies have been published.
\citet*{Farina23} and \citet*{Zhang24} proposed efficient algorithms for minimizing linear swap regret in extensive-form games, which generalize untruthful swap regret in Bayesian games, but the order of their upper bounds is suboptimal for untruthful swap regret.
\citet*{PR24} and \citet*{DDFG24} proposed an algorithm for swap regret minimization with exponentially many actions.
As a special case, these algorithms achieve $\displaystyle O(T / \log T)$ strategy swap regret (defined in \Cref{sec:sf}) and yield an algorithm computing an SFCE in time exponential in $1/\epsilon$, where $\epsilon$ is an additive error of the incentive constraints.
It remains to be seen whether there exists an algorithm for SFCEs that runs in time polynomial also in $1/\epsilon$.
\citet{Fujii25} analyzed the PoA and the price of stability (PoS) for Bayes correlated equilibria in Bayesian games with submodular social welfare and showed that there is a separation in the PoA and PoS among the variants of Bayes correlated equilibria.
In \Cref{sec:related-work}, we survey further related work and connections of our results to Bayesian incentive compatible mechanisms or information design.

\section{Communication equilibria and untruthful swap regret}\label{sec:concept}
Our first goal is to define untruthful swap regret and prove that dynamics minimizing it converge to the intersection of communication equilibria and ANFCEs.
See \Cref{sec:pre} for detailed notations for Bayesian games and \Cref{sec:dynamics} for various definitions of Bayes correlated equilibria.

Note that dynamics converging to ANFCEs can be easily obtained by extending no-swap-regret dynamics to a complete-information game known as the agent normal form, where a single player with different types is hypothetically treated as distinct players (see \Cref{sec:anf} for ANFCEs and corresponding dynamics).
Therefore, our challenge is to develop a stronger notion of swap regret that guarantees the convergence to communication equilibria.

A communication equilibrium in an $n$-player Bayesian game is represented by a type-wise distribution $\pi \in \Delta(A)^\Theta$, where $A \coloneqq A_1 \times \dots \times A_n$ is the set of all action profiles and $\Theta \coloneqq \Theta_1 \times \dots \times \Theta_n$ is the set of all type profiles.
A mediator first gathers type $\theta \in \Theta$ from players and then privately recommends an action profile $a \in A$ generated from the distribution $\pi(\theta)$.
The incentive constraints for each player $i \in N \coloneqq [n]$ are specified by two different kinds of deviations:
(1) reporting an untruthful type $\psi(\theta_i) \in \Theta_i$ instead of the true type $\theta_i \in \Theta_i$ according to $\psi \colon \Theta_i \to \Theta_i$ and (2) taking an action $\phi(\theta_i, a_i)$ instead of the recommended action $a_i$ when their type is $\theta_i$ according to $\phi \colon \Theta_i \times A_i \to A_i$.
If each player cannot benefit from these deviations, a distribution $\pi \in \Delta(A)^\Theta$ is called a communication equilibrium.
\begin{definition}[$\epsilon$-Approximate communication equilibria]
For any $\epsilon \ge 0$, a type-wise distribution $\pi \in \Delta(A)^\Theta$ is an $\epsilon$-approximate communication equilibrium
if
for any $i \in N$, $\psi \colon \Theta_i \to \Theta_i$, and $\phi \colon \Theta_i \times A_i \to A_i$, it holds that
\begin{equation}\label{eq:ic-com}
\tag{$\mathrm{IC}_{\mathrm{Com}}$}
\E_{\theta \sim \rho} \left[ \E_{a \sim \pi(\theta)} \left[ v_i(\theta; a) \right] \right]
\ge
\E_{\theta \sim \rho} \left[ \E_{a \sim \pi(\psi(\theta_i),\theta_{-i})} \left[ v_i(\theta; \phi(\theta_i, a_i), a_{-i}) \right] \right] - \epsilon.
\end{equation}
Let $\PiCom^\epsilon \subseteq \Delta(A)^\Theta$ be the set of all $\epsilon$-approximate communication equilibria.
\end{definition}

To deal with these two kinds of deviations, we introduce a new variant of regret called untruthful swap regret.
In repeated play of Bayesian games, each player faces an online learning problem with rewards determined by a stochastic type, which we call \textit{online learning with stochastic types}.
In each round $t \in [T]$ of this problem, each player $i \in [n]$ decides a distribution $\pi_i^t \in \Delta(A_i)^{\Theta_i}$ that determines a (randomized) action for each type and receives reward%
\footnote{We use the term ``reward'' for online learning problems and distinguish it from the term ``payoff'' for games.}
$u_i^t(\theta_i,a_i)$ depending on their realized type $\theta_i \sim \rho_i$ and action $a_i \sim \pi_i^t(\theta_i)$, where $u_i^t \in [0,1]^{\Theta_i \times A_i}$ is the reward vector defined by the other players' distributions and $\rho_i \in \Delta(\Theta_i)$ is the marginal distribution of $\theta_i$.
Formally, we define the reward vector $u_i^t \in [0,1]^{\Theta_i \times A_i}$ by
\begin{equation*}\label{eq:reward-pi}
u_i^t(\theta_i, a_i) = \E_{\theta_{-i} \sim \rho|\theta_i} \left[ \E_{a_{-i} \sim \pi^t_{-i}(\theta_{-i})} \left[ v_i( \theta; a) \right] \right]
\end{equation*}
for each type $\theta_i \in \Theta_i$ and action $a_i \in A_i$, where
$\rho|\theta_i$ is the distribution of $\theta_{-i}$ conditioned on $\theta_i \in \Theta_i$
and $\pi_{-i}^t(\theta_{-i})$ is the product distribution that independently generates $a_j \sim \pi_j^t(\theta_j)$ for each $j \in N \setminus \{i\}$.
For this online learning problem with stochastic types, the untruthful swap regret $\RUSi^T$ for player $i \in [n]$ is defined as follows.
\begin{definition}[Untruthful swap regret]\label{def:untruthful}
For online learning with stochastic types specified by actions $A_i$, types $\Theta_i$, prior distribution $\rho_i$, and reward vector $u_i^t \in [0,1]^{\Theta_i \times A_i}$ for every round $t \in [T]$, untruthful swap regret is defined as
\begin{equation*}
\RUSi^T = \max_{\psi \colon \Theta_i \to \Theta_i} \max_{\phi \colon \Theta_i \times A_i \to A_i} \sum_{t=1}^T \E_{\theta_i \sim \rho_i} \left[ \E_{a_i \sim \pi_i^t(\psi(\theta_i))} \left[ u_i^t(\theta_i, \phi(\theta_i, a_i)) \right] - \E_{a_i \sim \pi_i^t(\theta_i)} \left[ u_i^t(\theta_i, a_i) \right] \right].
\end{equation*}
\end{definition}

\begin{algorithm}[t]
\caption{Dynamics for the intersection of communication equilibria and ANFCEs}\label{alg:com-dynamics}
	For each $i \in N$, let $\calA_i$ be a subroutine that minimizes untruthful swap regret for online learning with stochastic types.\;
	\For{each round $t = 1,\dots,T$}{
		Each player $i \in N$ decides their randomized strategy $\pi^t_{i} \in \Delta(A_i)^{\Theta_i}$ according to $\calA_{i}$ and shares it with the other players.\;
		Each player $i \in N$ computes reward $u_i^t(\theta_i, a_i) = \E_{\theta_{-i} \sim \rho|\theta_i} \left[ \E_{a_{-i} \sim \pi^t_{-i}(\theta_{-i})} \left[ v_i(\theta; a) \right] \right]$ for every $\theta_i \in \Theta_i$ and $a_i \in A_i$, where $\pi^t_{-i}(\theta_{-i}) \in \Delta(A_{-i})$ is the product distribution that independently generates $a_j \sim \pi_j^t(\theta_j)$ for each $j \in N \setminus \{i\}$.\;
		Feed $u_i^t \in [0,1]^{\Theta_i \times A_i}$ to each $\calA_{i}$ as a reward vector for round $t$.\;
	}
\end{algorithm}

The resulting dynamics are described in \Cref{alg:com-dynamics}, where each player $i \in N$ employs an algorithm $\calA_i$ to make decisions.
Let $\PiANF^\epsilon \subseteq \Delta(A)^\Theta$ denote the set of all $\epsilon$-approximate ANFCEs mapped to $\Delta(A)^\Theta$ (see \Cref{sec:anf} for the formal definition).
The following theorem (proved in \Cref{sec:dynamics-proof}) states that if every player achieves sublinear untruthful swap regret, then the dynamics converge to communication equilibria (and also to ANFCEs).
For readability, we write $\pi(\theta; a)$ and $\pi^t_i(\theta_i; a_i)$ instead of $\pi(\theta)(a)$ and $\pi^t_i(\theta_i)(a_i)$, respectively.

\begin{restatable}{theorem}{thmdynamicscom}
\label{thm:dynamics-com}
Let $\pi_{i}^t \in \Delta(A_i)^{\Theta_i}$ be the type-wise distribution of each player $i \in N$ for each round $t \in [T]$ in \Cref{alg:com-dynamics}.
If $\pi \in \Delta(A)^{\Theta}$ is the empirical distribution defined by $\pi(\theta;a) = \frac{1}{T} \sum_{t=1}^T \prod_{i \in N} \pi_{i}^t(\theta_i;a_i)$ for each $\theta \in \Theta$ and $a \in A$,
then $\pi \in \PiCom^\epsilon \cap \PiANF^\epsilon$ holds with $\epsilon = \frac{\max_{i \in N} \RUSi^T}{T}$,
where $\RUSi^T$ is the untruthful swap regret for each subroutine $\calA_i$.
\end{restatable}

\section{Algorithm for minimizing untruthful swap regret}\label{sec:ub}
In this section, we describe an efficient algorithm for minimizing untruthful swap regret, which leads to efficient computation of approximate communication equilibria.
In this section, we focus on the problem that each player $i \in N$ is faced with, and therefore, the subscript $i$ for $A_i$, $a_i$, $\Theta_i$, $\theta_i$, $\pi_i^t$, and $\RUSi^T$ is not essential.
We put the subscript $i$ just for notational consistency.
All the proofs of this section are deferred to \Cref{sec:ub-proof}.

\subsection{Reduction to $\Phi$-regret minimization}\label{sec:min1}
In the first step, the problem of minimizing untruthful swap regret is interpreted as an instance of \textit{$\Phi$-regret minimization}.
To our knowledge, the definition of $\Phi$-regret was first provided by \citet*{GJ03} when the decision space is a probability simplex and then extended to a general decision space by \citet*{SL07}.
For a set of transformations $\Phi$, the $\Phi$-regret measures how the algorithm gains by transforming their decisions with an optimal $\phi \in \Phi$ in hindsight.
To interpret untruthful swap regret as $\Phi$-regret, we vectorize the set of distributions for each type $\Delta(A_i)^{\Theta_i}$ as $\calX$ defined by
$\calX = \left\{ x \in [0,1]^{\Theta_i \times A_i} \;\middle|\; \sum_{a_i \in A_i} x(\theta_i,a_i) = 1 ~ (\forall \theta_i \in \Theta_i) \right\}$,
where we set $x^t(\theta_i,a_i) = \pi_i^t(\theta_i;a_i)$ for each $t \in [T]$, $\theta_i \in \Theta_i$, and $a_i \in A_i$.
Then the space of all transformations expressed by $\psi \colon \Theta_i \to \Theta_i$ and $\phi \colon \Theta_i \times A_i \to A_i$ can be written as $\calQ$ defined by
\begin{equation*}\label{eq:calQ}
\calQ = \left\{ Q \in [0,1]^{(\Theta_i \times A_i) \times (\Theta_i \times A_i)} 
\;\middle|\;
\begin{array}{l}
\text{there exists some $W \in [0,1]^{\Theta_i \times \Theta_i}$ such that}\\
\sum_{\theta'_i \in \Theta_i} W(\theta_i,\theta'_i) = 1 ~ \text{($\forall \theta_i \in \Theta_i$) and}\\
\sum_{a_i \in A_i} Q((\theta_i,a_i),(\theta'_i,a'_i)) = W(\theta_i,\theta'_i) ~ \text{($\forall \theta_i, \theta'_i \in \Theta_i, a'_i \in A_i$)}
\end{array}
\right\},
\end{equation*}
where $W$ represents a convex combinations of all possible $\psi$ and each block matrix $Q_{\theta_i,\psi(\theta_i)}$ represents a convex combination of all possible $\phi(\theta_i,\cdot)$.
Note that $W$ is a row stochastic matrix, while each block of $Q$ is a multiple of a column stochastic matrix.
Hence, each $Q \in \calQ$ is not a stochastic matrix, and this problem does not fit within the reduction framework of \citet*{BM07}.

Let $\bar{u}^t \in [0,1]^{\Theta_i \times A_i}$ be the reward vector weighted by prior probabilities $\rho_i$, i.e., $\bar{u}^t(\theta_i,a_i) = \rho_i(\theta_i) u_i^t(\theta_i,a_i)$ for each $\theta_i \in \Theta_i$ and $a_i \in A_i$.
\begin{restatable}{lemma}{RUSiPhi}\label{lem:RUSi-Phi}
The untruthful swap regret can be written as
\begin{equation*}
\RUSi = \max_{Q \in \calQ} \sum_{t=1}^T \langle Q x^t, \bar{u}^t \rangle - \sum_{t=1}^T \langle x^t, \bar{u}^t \rangle.
\end{equation*}
\end{restatable}

\subsection{Reduction to online linear optimization}\label{sec:min2}
In the second step, we apply the framework developed by \citet*{Gor08} that reduces $\Phi$-regret minimization to fixed point computation and online linear optimization.
First, we show that each $Q \in \calQ$ has a fixed point in $\calX$.

\begin{restatable}{lemma}{lemfixedpoint}
\label{lem:fixed-point}
For every $Q \in \calQ$, there exists $x \in \calX$ such that $Q x = x$.
\end{restatable}

Next, we show that this fixed point can be obtained by eigenvector computation.
As described later, all the subroutines are variants of the multiplicative weights, and therefore, all entries of their outputs are positive.
We can assume that each entry of $Q^t$, which is computed as a product of the subroutines' outputs, is positive for every $t \in [T]$.

\begin{restatable}{lemma}{lemeigen}
\label{lem:eigen}
If all entries of $Q \in \calQ$ are positive, then we can compute $x \in \calX$ such that $Qx = x$ in time polynomial in $|\Theta_i|$ and $|A_i|$.
\end{restatable}

Using this fixed point computation, we can solve untruthful swap regret minimization via a subroutine for online linear optimization over decision space $\calQ$.
Let $Q^t \in \calQ$ be the subroutine's output in round $t \in [T]$.
We then compute a fixed point $x^t \in \calQ$ that satisfies $Q^t x^t = x^t$, and let $x^t$ be the decision for round $t$.
For the subroutine, feed reward $U^t \in [0,1]^{(\Theta_i \times A_i) \times (\Theta_i \times A_i)}$ defined by $U^t((\theta_i,a_i),(\theta'_i,a'_i)) = \bar{u}^t(\theta_i,a_i) x^t(\theta'_i,a'_i)$ for each $\theta_i,\theta'_i \in \Theta_i$ and $a_i,a'_i \in A_i$.
We define external regret for this subroutine as
\begin{equation*}
R^T_\calQ = \max_{Q \in \calQ} \sum_{t = 1}^T \langle Q, U^t \rangle - \sum_{t = 1}^T \langle Q^t, U^t \rangle.
\end{equation*}
Then we show that untruthful swap regret $\RUSi^T$ equals this subroutine's external regret $R^T_\calQ$.

\begin{restatable}{lemma}{lemreductionone}
\label{lem:reduction1}
$\RUSi^T = R^T_\calQ$.
\end{restatable}

Since for online linear optimization over a polytope, there exist efficient algorithms with sublinear external regret (e.g., Follow-the-Perturbed-Leader \citep*{KV05} or Component Hedge \citep*{Koolen10}), this problem is already tractable.
However, they do not provide an upper bound of the optimal order.
We thus need the following third step.

\subsection{Decomposition into small external regret minimization problems}\label{sec:min3}
Next, we reduce online linear optimization with decision space $\calQ$ to small external regret minimization problems, using an approach similar to counterfactual regret minimization \citep*{Zinkevich07}.
Since $\calQ$ can be regarded as a product of block stochastic matrices and a stochastic matrix in some sense, it can be decomposed into $|\Theta_i|^2|A_i| + |\Theta_i|$ probability simplices.
For each $\theta_i, \theta'_i \in \Theta_i$ and $a'_i \in A_i$, we use a subroutine $\calE_{\theta_i,\theta'_i,a'_i}$ with decision space $A_i$.
Moreover, for each $\theta_i \in \Theta_i$, we use a subroutine $\calE_{\theta_i}$ with decision space $\Theta_i$.
In total, we use $|A_i||\Theta_i|^2 + |\Theta_i|$ subroutines.

The reduction proceeds as follows.
Let $y^t_{\theta_i,\theta'_i,a'_i} \in \Delta(A_i)$ be the output of $\calE_{\theta_i,\theta'_i,a'_i}$ and $w^t_{\theta_i} \in \Delta(\Theta_i)$ the output of $\calE_{\theta_i}$ in round $t \in [T]$.
According to these outputs, we set $Q^t((\theta_i,a_i),(\theta'_i,a'_i)) = w^t_{\theta_i}(\theta'_i) y^t_{\theta_i,\theta'_i,a'_i}(a_i)$ for each $\theta_i,\theta'_i \in \Theta_i$ and $a_i,a'_i \in A_i$.
Note that this $Q^t$ is contained in $\calQ$, which can be checked by setting $W(\theta_i,\theta'_i) = w^t_{\theta_i}(\theta'_i)$ for each $\theta_i,\theta'_i \in \Theta_i$.

In each round $t \in [T]$, based on the observed reward $U^t = \bar{u}^t (x^t)^\top$, we feed the reward for each subroutine as follows.
\begin{itemize}
\item For each $\calE_{\theta_i,\theta'_i,a'_i}$, the reward for decision $a_i \in A_i$ is $x^t(\theta'_i,a'_i) \bar{u}_i^t(\theta_i,a_i)$.
\item For each $\calE_{\theta_i}$, the reward for decision $\theta'_i \in \Theta_i$ is $\sum_{a_i,a'_i \in A_i} y^t_{\theta_i,\theta'_i,a'_i}(a_i) x^t(\theta'_i,a'_i) \bar{u}^t(\theta_i,a_i)$.
\end{itemize}

The external regret for online linear optimization with decision space $\calQ$ is bounded by the sum of the external regrets for these subroutines as follows.

\begin{restatable}{lemma}{lemreductiontwo}
\label{lem:reduction2}
\begin{equation*}
R^T_\calQ \le \sum_{\theta_i \in \Theta_i} R^T_{\theta_i} + \sum_{\theta_i \in \Theta_i} \max_{\theta'_i \in \Theta_i} \sum_{a'_i \in A_i} R^T_{\theta_i,\theta'_i,a'_i}.
\end{equation*}
\end{restatable}

These three-step reduction yields the algorithm described in \Cref{alg:untruthful}.
Finally, we obtain an upper bound on untruthful swap regret as follows.
\begin{restatable}{theorem}{thmuntruthfulub}
\label{thm:untruthful-ub}
The untruthful swap regret of \Cref{alg:untruthful} is bounded as
\begin{equation*}
\RUSi^T \le \sqrt{\frac{1}{2} T \log |\Theta_i|} + 6 \sqrt{T |A_i| \log |A_i|}.
\end{equation*}
\end{restatable}

\begin{algorithm}[t]
\caption{Algorithm for minimizing untruthful swap regret}\label{alg:untruthful}
	\KwIn{The set of types $\Theta_i$ and the set of actions $A_i$ are specified in advance. The reward vector $u_i^t \in [0,1]^{\Theta_i \times A_i}$ is given at the end of each round $t \in [T]$.}
	Initialize subroutines as follows:
	\begin{itemize}
	\item let $\calE_{\theta_i}$ be a multiplicative weights algorithm with decision space $\Theta_i$ for each $\theta_i \in \Theta_i$, and
	\item let $\calE_{\theta_i,\theta'_i,a'_i}$ be AdaHedge (see, e.g., \citep*[Section 7.6]{Orabona19}) with decision space $A_i$ for each $\theta_i, \theta'_i \in \Theta_i$ and $a'_i \in A_i$.
	\end{itemize}
	\For{each round $t = 1,\dots,T$}{
		Let $w^t_{\theta_i} \in \Delta(\Theta_i)$ be the output of $\calE_{\theta_i}$ in round $t$ for each $\theta_i \in \Theta_i$.\\
		Let $y^t_{\theta_i,\theta'_i,a'_i} \in \Delta(A_i)$ be the output of $\calE_{\theta_i,\theta'_i,a'_i}$ in round $t$ for each $\theta_i,\theta'_i \in \Theta_i$ and $a'_i \in A_i$.\\
		Define $Q^t \in [0,1]^{(\Theta_i \times A_i) \times (\Theta_i \times A_i)}$ by $Q^t((\theta_i,a_i),(\theta'_i,a'_i)) = w^t_{\theta_i}(\theta'_i) y^t_{\theta_i,\theta'_i,a'_i}(a_i)$ for each $\theta_i,\theta'_i \in \Theta_i$ and $a_i,a'_i \in A_i$.\\
		Compute an eigenvector $x^t \in \bbR^{\Theta_i \times A_i}$ of $Q^t$ such that $Q^t x^t = x^t$ and $(x^t)^\top \bfone = |\Theta_i|$.\\
		Decide the output $\pi_i^t \in \Delta(A_i)^{\Theta_i}$ by $\pi_i^t(\theta_i;a_i) = x^t(\theta_i, a_i)$ for each $\theta_i \in \Theta_i$ and $a_i \in A_i$.\\
		Observe reward vector $u_i^t \in [0,1]^{\Theta_i \times A_i}$ and feed reward vectors to subroutines as follows:
		\begin{itemize}
		\item feed $\displaystyle \sum_{a_i,a'_i \in A_i} y^t_{\theta_i,\theta'_i,a'_i}(a_i) \pi_i^t(\theta'_i;a'_i) \rho_i(\theta_i) u_i^t(\theta_i,a_i)$ as the reward for decision $\theta'_i \in \Theta_i$\\to subroutine $\calE_{\theta_i}$ for each $\theta_i \in \Theta_i$, and
		\item feed $\displaystyle \pi_i^t(\theta'_i;a'_i) \rho_i(\theta_i) u_i^t(\theta_i,a_i)$ as the reward for decision $a_i \in A_i$ to subroutine $\calE_{\theta_i,\theta'_i,a'_i}$\\for each $\theta_i,\theta'_i \in \Theta_i$ and $a'_i \in A_i$.
		\end{itemize}
	}
\end{algorithm}

\subsection{Application to equilibrium computation}\label{sec:polytime}

Now we apply the algorithm for untruthful swap regret minimization to equilibrium computation.
\Cref{thm:dynamics-com} claims that if each player minimizes the untruthful swap regret, then the dynamics converge to the intersection of communication equilibria and ANFCEs.
However, exactly simulating the dynamics requires the exact evaluation of rewards $u_i^t$, which needs exponential time in general.
Instead, we estimate these values by random sampling and bound the error using H\oe{}ffding's inequality.

\begin{restatable}{corollary}{corcom}
\label{cor:com}
For any $\epsilon > 0$, there exists an algorithm that outputs a succinct representation of $\pi \in \PiCom^\epsilon \cap \PiANF^\epsilon$ with probability at least $1-\delta$ and runs in time polynomial in $n$, $\max_{i \in N} |\Theta_i|$, $\max_{i \in N} |A_i|$, $1/\epsilon$, and $\log(1/\delta)$ with an oracle for utility functions.
\end{restatable}

\section{Lower bound for untruthful swap regret}\label{sec:lb}
Our next result is an $\Omega(\sqrt{T \log |\Theta_i|})$ lower bound on untruthful swap regret, which is tight up to a multiplicative constant in terms of the number of rounds $T$ and types $|\Theta_i|$.
In terms of the number of actions, for swap regret minimization, which is a special case of our setting (the case of $|\Theta_i| = 1$), \citet*{Ito20} provided an $\Omega(\sqrt{T |A_i| \log |A_i|})$ lower bound for adaptive adversaries, and \citet*{PR24} and \citet*{DDFG24} provided an $\Omega(\sqrt{T |A_i|})$ lower bound for oblivious adversaries.
Since our upper bound applies only to oblivious adversaries, a gap of order $\sqrt{\log |A_i|}$ remains.

Since the full proof is involved, it is deferred to \Cref{sec:lb-proof}.
In this section, we present a challenging problem instance used for our proof of the lower bound, which explains why untruthful swap regret cannot be addressed by existing techniques for proving lower bounds on swap regret.
Additionally, we outline the direction of our proof.

The proof idea for the swap regret lower bounds \citep*{BM07,Ito20} can be summarized as follows.
A problem instance is constructed such that the reward for each action $a_i \in A_i$ in each round $t \in [T]$ is independently chosen from the uniform distribution over $\{0,1\}$.
We can assume that the algorithm selects most of the actions $A_i$ for at least $\Omega(T/|A_i|)$ rounds.
This can be guaranteed by ``blocking'' operations, which set the rewards of actions selected more than a certain number of times to $0$ for the remaining rounds.
For the rounds when the algorithm selects such an action $a_i \in A_i$, the gap between the total expected rewards for $a_i$ and an optimal action $\phi(a_i) \in A_i$ is $\Omega \left(\sqrt{\frac{T}{|A_i|} \log |A_i|} \right)$ due to an anti-concentration bound for binomial distributions.
The swap regret is at least the sum of $\Omega \left(\sqrt{\frac{T}{|A_i|} \log |A_i|} \right)$ for these actions, hence $\Omega(\sqrt{T |A_i| \log |A_i|})$.

Based on a similar idea, for untruthful swap regret, we can easily prove a lower bound if the algorithm's decisions are guaranteed to be sufficiently different for different types.
For example, we consider a problem instance with two actions $A_i = \{\alpha_0,\alpha_1\}$, in which for each type $\theta_i \in \Theta_i$ and round $t \in [T]$, the reward for $\alpha_0$ is chosen from $\{0,1\}$ uniformly at random, and the reward for $\alpha_1$ is decided by flipping the corresponding reward for $\alpha_0$.
Then the total expected reward for any algorithm is $T/2$.
Assume that the algorithm's decisions are completely different for different types.
For example, we assume that for each type $\theta_i \in \Theta_i$ and round $t \in [T]$, the algorithm chooses $\alpha_0$ or $\alpha_1$ uniformly at random.
Recall that the competitor of untruthful swap regret can apply any type swap $\psi \colon \Theta_i \to \Theta_i$.
Since for each pair of $\theta_i$ and $\psi(\theta_i) \in \Theta_i$, the expected total reward obtained by applying the decision for $\psi(\theta_i)$ to the rewards for $\theta_i$ follows a binomial distribution, the optimal reward is $T/2 + \Omega(\sqrt{T \log|\Theta_i|})$ again from an anti-concentration bound, which leads to a desired lower bound.

However, it is not easy to guarantee that the algorithm's decisions are sufficiently different for different types.
For any payoffs, if the algorithm's decisions are completely the same for all types, any $\psi$ yields no untruthful swap regret.
We thus need to consider a combination of $\psi$ and $\phi$.
For example, if there is a type $\theta_i \in \Theta_i$ for which the payoff for $\alpha_0$ is always better than $\alpha_1$, the algorithm suffers untruthful swap regret due to $\phi(\theta_i,\alpha_0) = \phi(\theta_i,\alpha_1) = \alpha_0$ unless always choosing $\alpha_0$ (as formally stated in \Cref{lem:edge-type}).

\begin{figure}
\centering
\newlength{\width}
\setlength{\width}{\textwidth}
\begin{tikzpicture}[line width=0.001\paperwidth]
\node (block) [minimum width=0.6\width, minimum height=0.2\width] {};
\node (random) at (block.south) [anchor=south, minimum width=0.6\width, minimum height=0.1\width, fill=gray, draw=gray, text=white] {};
\node (branch) at (block.north) [anchor=north, minimum width=0.6\width, minimum height=0.1\width, draw=gray] {};
\node at (branch.north west) [anchor=north west, minimum width=0.15\width, minimum height=0.05\width, fill=black] {};
\node at ([xshift=0.15\width]branch.north west) [anchor=north west, minimum width=0.15\width, minimum height=0.025\width, inner sep=0, fill=black] {};
\node at ([xshift=0.15\width,yshift=-0.05\width]branch.north west) [anchor=north west, minimum width=0.15\width, minimum height=0.025\width, inner sep=0, fill=black] {};
\foreach \x in {0,1,...,3}{
\node at ([xshift=0.30\width,yshift=-0.025*\x\width]branch.north west) [anchor=north west, minimum width=0.15\width, minimum height=0.0125\width, inner sep=0, fill=black] {};
}
\foreach \x in {0,1,...,7}{
\node at ([xshift=0.45\width,yshift=-0.0125*\x\width]branch.north west) [anchor=north west, minimum width=0.15\width, minimum height=0.00625\width, inner sep=0, fill=black] {};
}
\foreach \x in {0.00625,0.0125,...,0.09375}{
\draw[gray] ([yshift=-\x\width]branch.north west) -- ([yshift=-\x\width]branch.north east);
}
\foreach \x in {-0.29375,-0.28750,...,0.29375}{
\draw[gray] ([xshift=\x\width]branch.north) -- ([xshift=\x\width]branch.south);
}
\node [text=white] at (random) {independently from the uniform distribution over $\{0,1\}$};
\node at ([xshift=-0.225\width]branch.north) [anchor=south] {$1$st block};
\node at ([xshift=-0.075\width]branch.north) [anchor=south] {$2$nd block};
\node at ([xshift=+0.075\width]branch.north) [anchor=south] {$\cdots$};
\node at ([xshift=+0.225\width]branch.north) [anchor=south] {$B$th block};
\node (par) at (branch.east) [anchor=west, inner sep=0, scale=1.5] {$\Biggr\}$};
\node at (par.east) [anchor=west, inner sep=0] {randomly shuffled};
\node at (branch.west) [anchor=east] {$\Theta'_i$};
\node at (random.west) [anchor=east] {$\Theta''_i$};
\end{tikzpicture}
\caption{
A problem instance for proving a lower bound on untruthful swap regret.
The types $\Theta_i$ are partitioned into $\Theta'_i$ and $\Theta''_i$.
The rewards for types $\Theta'_i$ are constant within each block of rounds and randomly branch at the beginning of each block.
The reward for action $\alpha_0$ is $1$ for the black cells and $0$ for the white cells.
For each $\theta''_i \in \Theta''_i$, the rewards for $\alpha_0$ are generated independently.
Every reward for action $\alpha_1$ is determined by flipping the corresponding reward for $\alpha_0$.
}\label{fig:hard-instance}
\end{figure}

Based on this observation, we create a problem instance with $A_i = \{\alpha_0,\alpha_1\}$ and the uniform distribution $\rho_i$ (\Cref{fig:hard-instance}).
We assume $|\Theta_i| = 2^{B+1}$ for some positive integer $B$ for simplicity.
The set of types $\Theta_i$ is partitioned into $\Theta'_i$ and $\Theta''_i$ of equal size.
The set of rounds $[T]$ is partitioned into $B$ blocks of equal length.
The rewards for $\Theta''_i$ are randomly determined as in the problem instance described above.
Each $\theta'_i \in \Theta'_i$ is associated with a binary sequence of length $B$ by a random bijection $\zeta \colon \Theta'_i \to \{0,1\}^B$, and the $b$th bit of this sequence indicates which of $\alpha_0$ or $\alpha_1$ has reward $1$ for $\theta'_i$ in the $b$th block.

The rewards for $\Theta'_i$ are designed so that the algorithm must use different decisions for different types.
Let $\theta^0_i,\theta^1_i \in \Theta'_i$ be the types for which $\alpha_0$'s and $\alpha_1$'s rewards are always $1$ (the completely black row and completely white row), respectively.
As mentioned above, $\phi(\theta_i,\alpha_0) = \phi(\theta_i,\alpha_1) = \alpha_0$ forces the algorithm to almost always select $\alpha_0$ for the type $\theta^0_i$.
Similarly, the algorithm almost always selects $\alpha_1$ for the type $\theta^1_i \in \Theta'_i$.
Since the rewards for $\Theta'_i$ are randomly shuffled, the algorithm does not know which are $\theta^0_i$ and $\theta^1_i$ before reaching the final block.
In the first block, since the algorithm does not know which type in the black rows is $\theta^0_i$ and which type in the white rows is $\theta^1_i$, the algorithm must almost always choose $\alpha_0$ for all the black cells and $\alpha_1$ for all the white cells.

We are tempted to expect that for the second and later blocks, the algorithm also almost always selects $\alpha_0$ for the black cells and $\alpha_1$ for the white cells.
Unfortunately, there is a counterexample to this expectation.
Suppose that the algorithm uses the same action for each contiguous \textit{two} blocks.
For example, if the first, second, third, and fourth blocks are black, white, white, and black, respectively, the algorithm uses $\alpha_0$ for the first two blocks and $\alpha_1$ for the next two blocks.
Then the algorithm successfully selects $\alpha_0$ for $\theta_i^0$ and $\alpha_1$ for $\theta_i^1$ in almost all rounds.
Furthermore, even though the algorithm obtains the total reward only $T/2$ for some type, this is optimal among all decisions that select the same action for each contiguous two blocks, and then the algorithm suffers no untruthful swap regret.%
\footnote{%
Note that a similar algorithm focusing on each contiguous \textit{three} blocks yields $\Omega(T)$ untruthful swap regret.
For example, this algorithm always selects $\alpha_0$ for the row with black, white, and white cells for the first, second, and third blocks, but selecting $\alpha_1$ all the time obtains $T/3$ better reward.
}
We prove that this example is tight, that is, if the untruthful swap regret is small, then the difference of the algorithm's decisions between the ``adjacent'' black and white rows must be at least $T/2$ approximately (see \Cref{lem:gap-type} for a formal statement).
By using this fact, we show that the algorithm's decisions must be significantly different at each branching point with some constant probability (\Cref{lem:gap-block}).

If the algorithm always selects $\alpha_0$ for the black cells and $\alpha_1$ for the white cells, by choosing a better action separately for $\theta''_i \in \Theta''_i$ in each block, we can show that the algorithm suffers $B \cdot \Omega(\sqrt{T/B}) = \Omega(\sqrt{T \log |\Theta_i|})$ untruthful swap regret.
Since we can control the algorithm's decisions only for a constant fraction of all rounds, we need to carefully determine $\psi$ so that the algorithm cannot manipulate the untruthful swap regret (\Cref{sec:lb2}) and trace the difference of cumulative rewards by using martingale analysis (\Cref{sec:lb3}).
This technique to track the algorithm's cumulative reward as a martingale and apply Doob's optional stopping theorem may be of independent interest as a widely applicable approach to analyzing regret lower bounds.
These analyses lead to the following lower bound (\Cref{sec:lb4}).

\begin{restatable}{theorem}{thmuntruthfullb}
\label{thm:untruthful-lb}
Assume $|A_i| = 2$, $|\Theta_i| = 2^{B+1}$ for some $B \in \bbZ$, and $T$ is a multiple of $B = \log_2 |\Theta_i| - 1$.
If $T \ge 2^{-49} |\Theta_i|^2 \log_2 |\Theta_i|$,
then for any randomized algorithm, there exists an oblivious adversary for which the untruthful swap regret of the algorithm is at least $2^{-28} \sqrt{T \log_2 |\Theta_i|}$.
\end{restatable}

The assumption of $T = \Omega(|\Theta_i|^2 \log |\Theta_i|)$ is essentially used in the proofs.
The other assumptions on $|A_i|$, $|\Theta_i|$, and $T$ are just for simplicity of the proofs and only affect the multiplicative constants.

\section{Smoothness and price of anarchy}\label{sec:poa}
In this section, we introduce lower bounds on the price of anarchy (PoA) in Bayesian games via the smoothness assumption.
The proofs of this section are deferred to \Cref{sec:poa-app}.

As in existing studies \citep*{Roughgarden15incomplete,Syrgkanis12,ST13}, we make the following two assumptions for all of our PoA results.
\begin{itemize}
\item Assumption 1: the prior distribution $\rho \in \Delta(\Theta)$ is a product distribution, that is, there exists some $\rho_i \in \Delta(\Theta_i)$ for each $i \in N$ such that $\rho(\theta) = \prod_{i \in N} \rho_i(\theta_i)$.
\item Assumption 2: the value of $v_i(\theta;a)$ does not depend on $\theta_{-i}$.
	Under this assumption, we write $v_i(\theta_i; a)$ in place of $v_i(\theta; a)$ by abuse of notation.
\end{itemize}
These assumptions naturally hold when type $\theta_i$ represents a preference or attribute of each player $i \in N$.
We define the price of anarchy as follows.

\begin{definition}[Price of anarchy of Bayesian games]\label{def:poa}
Given an equilibrium class $\Pi \subseteq \Delta(A)^\Theta$, its price of anarchy is defined by
\begin{equation*}
\PoA_\Pi = 
\displaystyle
\left.
\inf_{\pi \in \Pi}\E_{\theta \sim \rho} \left[ \E_{a \sim \pi(\theta)} \left[ \vSW(\theta; a) \right] \right]%
\middle/
\displaystyle\E_{\theta \sim \rho} \left[ \max_{a \in A} \vSW(\theta; a) \right]
\right.
,
\end{equation*}
where $\vSW \colon \Theta \times A \to [0,1]$ is the social welfare function.
\end{definition}

While most of the existing studies considered the case in which $\Pi$ is the set of Bayes--Nash equilibria,
we assume that $\Pi$ is the intersection of (non-approximate) communication equilibria and ANFCEs, i.e., $\PiCom^0 \cap \PiANF^0$.
As described in \Cref{cor:com}, an equilibrium in this class can be efficiently computed by simulating no-untruthful-swap-regret dynamics.

Our analysis is based on the smoothness argument, which has been widely used for bounding the PoA of coarse correlated equilibria in games with complete information.
The smoothness argument cannot be extended directly to Bayesian games, even for Bayes--Nash equilibria,
 because the denominator of the PoA takes an optimal action profile for each $\theta \in \Theta$, which can be better than the expected social welfare for the optimal strategy profile.
Therefore, we cannot directly apply the smoothness analysis of the strategic form or agent normal form.
Existing studies analyze the PoA of Bayes--Nash equilibria by making an additional assumption \citep*{Syrgkanis12,Roughgarden15incomplete,ST13}.

We extend their results from Bayes--Nash equilibria to the intersection of communication equilibria and ANFCEs for the following two cases:
the case in which the social welfare is the sum of all players' payoffs and
Bayesian games derived from various Bayesian mechanisms, where the social welfare is defined as the sum of players' valuations (not payoffs).
The first case has applications in various resource allocations games \citep*{KO22,Yao23}, and the second case in various \textit{conditionally} smooth mechanisms such as the sequential composition of smooth mechanisms \citep*{ST13}, greedy auctions with matching constraints \citep*{LS15}, draft auctions \citep*{DMSW15}.
Note that \citet*{HST15} assumed a stronger assumption of \textit{(unconditional)} smoothness of mechanisms, while existing studies including \citet*{ST13} have shown various mechanisms to satisfy only conditional smoothness.
By carefully reading their proofs of conditional smoothness, we can see that some mechanisms also satisfy (unconditional) smoothness, but it excludes several important examples mentioned above.

\subsection{PoA bounds for the sum of payoffs}\label{sec:poa-sum}

Here we state our result for the case in which the social welfare is defined as the sum of all players' payoffs, i.e., $\vSW(\theta; a) = \sum_{i \in N} v_i(\theta_i; a)$.
See \Cref{sec:poa-sum} for the proofs and \Cref{sec:poa-mechanism} for the result for mechanisms.
\citet*{Syrgkanis12} proved a PoA lower bound with (unconditional) smoothness, and \citet*{ST13} then extended it to a weaker version of smoothness, called \textit{conditional smoothness}.

\begin{definition}[$(\lambda,\mu)$-Conditional smoothness of Bayesian games {\citep*{ST13}}]
A Bayesian game is $(\lambda,\mu)$-conditionally smooth
if
there exists some $a^*_{i,\theta,a_i} \in A_i$ for each $i \in N$, $\theta \in \Theta$, and $a_i \in A_i$ such that
\begin{equation*}
\sum_{i \in N} v_i(\theta_i; a^*_{i,\theta,a_i}, a_{-i}) \ge \lambda \max_{a' \in A} \vSW(\theta; a') - \mu \vSW(\theta; a)
\end{equation*}
holds for any $\theta \in \Theta$ and $a \in A$.
\end{definition}

We extend their PoA bound to the intersection of communication equilibria and ANFCEs.
It is important to emphasize that our proof relies on properties of both communication equilibria and ANFCEs.
Since conditional smoothness considers deviations of each player depending on the current action, it seems to correspond to correlated equilibria, while (unconditional) smoothness seems to correspond to coarse correlated equilibria.
However, even under the (unconditional) smoothness assumption, the current proof requires properties of both communication equilibria and ANFCEs, and it cannot be directly extended to Bayes coarse correlated equilibria.

\begin{restatable}{theorem}{thmpoasum}
\label{thm:poa-sum}
Assume that the social welfare is the sum of all players' payoffs.
For any $(\lambda,\mu)$-conditionally smooth Bayesian game satisfying Assumptions 1 and 2, the price of anarchy of the intersection of communication equilibria and ANFCEs is at least $\lambda / (1+\mu)$.
\end{restatable}

\subsection{PoA bounds for conditionally smooth mechanisms}\label{sec:poa-mechanism}

Next, we consider applications to mechanism design.
Since it is not the focus of this paper, we do not introduce mechanism design in details.
See, e.g., \citet*{AGTmechanism} or \citet*{BMD} for basics of mechanism design and \citet*{ST13} for smooth mechanisms.

A mechanism (for auctions) is regarded as a function that maps players' type reports (bids) $a \in A$ to an allocation and payment.
Let $X_i$ be the set of allocation for player $i \in N$ and $X \subseteq \prod_{i \in N} X_i$ the set of all possible allocations.
An allocation function $f_i \colon A \to X_i$ for each $i \in N$ determines the allocation for player $i$ based on the bids $a \in A$.
We can consider the process of mechanisms as a game in which each player decides a bid $a_i \in A_i$ and then obtains a payoff based on the allocation and the payment.
We assume that each player's utility function $v_i$ is \textit{quasilinear}, i.e., there exist some valuation function $v_i^+ \colon \Theta_i \times X_i \to [0,1]$ and payment function $v_i^- \colon A \to [0,1]$ such that 
\begin{equation*}
v_i(\theta_i;a) = v_i^+(\theta_i;f_i(a)) - v_i^-(a).
\end{equation*}
An important assumption here is the payment function depends only on $a$, not on $\theta_i$.
This assumption naturally holds because the payment is determined by the mechanism that does not know each player's type.
We also assume that the payoff of each player is always non-negative, i.e., $v_i(\theta_i;a) \ge 0$ for any $a \in A$.
This should hold because each player can withdraw from the mechanism if the payoff is negative.
The social welfare function is defined as $\vSW(\theta;a) = \sum_{i \in N} v_i^+(\theta_i;f_i(a))$ for each $\theta \in \Theta$ and $a \in A$.

\citet*{ST13} proved that various simple Bayesian mechanisms, their simultaneous composition with XOS valuations, and their sequential composition with unit-demand bidders are conditionally smooth.

\begin{definition}[$(\lambda,\mu)$-Conditional smoothness of Bayesian mechanisms {\citep*{ST13}}]
A Bayesian mechanism is $(\lambda,\mu)$-conditionally smooth
if
there exists some $a^*_{i,\theta,a_i} \in A_i$ for each $i \in N$, $\theta \in \Theta$, and $a_i \in A_i$ such that
\begin{equation*}
\sum_{i \in N} v_i(\theta_i; a^*_{i,\theta,a_i}, a_{-i})
\ge
\lambda \max_{x \in X} \sum_{i \in N} v_i^+(\theta_i; x_i) - \mu \sum_{i \in N} v_i^-(a).
\end{equation*}
holds for any $\theta \in \Theta$ and $a \in A$.
\end{definition}

\citet*{ST13} proved a $\lambda / \max\{ 1, \mu \}$ lower bound on the PoA of Bayes--Nash equilibria with this assumption.
We extend it to the intersection of communication equilibria and ANFCEs.

\begin{restatable}{theorem}{thmpoamechanism}
\label{thm:poa-mechanism}
For any $(\lambda,\mu)$-conditionally smooth mechanism satisfying Assumptions 1 and 2, the price of anarchy of the intersection of communication equilibria and ANFCEs is at least $\lambda / \max\{ 1, \mu \}$.
\end{restatable}

\citet*{ST13} also proved a similar result for weak smoothness and budget constraints.
We can generalize our proof to these settings and obtain PoA bounds for more various mechanisms.
These results yield lower bounds on the PoA of the intersection of communication equilibria and ANFCEs for the simultaneous and sequential composition of first-price, second-price, all-pay, and many other auctions.

Note that the PoA lower bounds of ANFCCEs provided by \citet*{HST15} assume (unconditional) smoothness, which is stronger than conditional smoothness.
Since \citet*{ST13} proved only conditional smoothness for various mechanisms, applications of these lower bounds are limited.
For example, the sequential composition of smooth mechanisms with unit-demand bidders does not satisfy (unconditional) smoothness, and furthermore, their PoA of ANFCCEs is unbounded for some mechanisms.
This fact can been seen from the following example of the sequential first-price auction with complete information.

\begin{example}\label{ex:bad-sequential-mechanism}
We consider a sequential composition of two consecutive first-price auctions with two bidders.
A single item is sold in each stage.
In the first stage, an item that has value $0$ for the first bidder and $\epsilon \in (0,1)$ for the second bidder is sold.
In the second stage, an item that has value $1$ for the first bidder and $\epsilon$ for the second bidder is sold.

We consider a coarse correlated equilibrium of the game derived from this mechanism.
Suppose that a mediator decides recommendations as follows.
For both bidders, the mediator suggests bidding the same random number $b \epsilon$ in the first stage, where $b$ is generated from the uniform distribution over $[0,1]$, and bidding $0$ in the second stage.
Only when the first player does not bid $b \epsilon$ in the first stage, the second player is recommended bidding $100$ (any large number).
For simplicity, we employ a tie-breaking rule in which, if the bids are identical, an item is allocated to the second player.

This distribution is a coarse correlated equilibrium for the following reason.
If both players follow the recommendation, since both players' bids are identical in both stages, both items are allocated to the second player.
The payoffs for the first and second players are $0$ and $\epsilon(1-b)$, respectively.
Since both players' bids are identical, the second player cannot obtain each item if he reduces the bid.
Therefore, the second player has no incentive to deviate from the recommendation.
The first player only has the valuation for the second item.
To get the second item, he must deviate from the recommendation.
However, in a CCE, he must choose following the recommendation or decide a bid without observing the recommendation.
If he does not know the recommendation, he cannot bid $b\epsilon$ in the first stage, and then the second player blocks the first player by bidding $100$ in the second stage.
Therefore, this distribution is a CCE, and the PoA of CCEs is at most $\epsilon / 1$, which can be arbitrarily small.

\end{example}

\begin{remark}
In applications to auctions, a type $\theta_i$ represents a valuation of an item for player $i$, and an action $a_i \in A_i$ represents a bid for this good.
The sets of possible values $\Theta_i$ and $A_i$ are usually continuous intervals.
Our proposed algorithm for computing an equilibrium does not assume any structure for $\Theta_i$ and $A_i$ and requires polynomial time in $|\Theta_i|$ and $|A_i|$, which might be prohibitive.
A naive solution to this issue is to discretize $\Theta_i$ and $A_i$.
Another possible solution is to design an efficient algorithm using the continuous structure of $\Theta_i$ and $A_i$, which is left for future work.
\end{remark}

\section*{Acknowledgements}
This work was supported by JSPS KAKENHI Grant Number 22K17857.

\bibliographystyle{plainnat}
\bibliography{main}

\begin{thebibliography}{73}
\providecommand{\natexlab}[1]{#1}
\providecommand{\url}[1]{\texttt{#1}}
\expandafter\ifx\csname urlstyle\endcsname\relax
  \providecommand{\doi}[1]{doi: #1}\else
  \providecommand{\doi}{doi: \begingroup \urlstyle{rm}\Url}\fi

\bibitem[Aumann(1974)]{Aum74}
Robert Aumann.
\newblock Subjectivity and correlation in randomized strategies.
\newblock \emph{Journal of Mathematical Economics}, 1:\penalty0 67--96, 1974.

\bibitem[Babichenko(2016)]{Babichenko16}
Yakov Babichenko.
\newblock Query complexity of approximate {N}ash equilibria.
\newblock \emph{Journal of the {ACM}}, 63\penalty0 (4), 2016.

\bibitem[Babichenko and Barman(2017)]{BB17}
Yakov Babichenko and Siddharth Barman.
\newblock {Algorithmic aspects of private {B}ayesian persuasion}.
\newblock In \emph{8th Innovations in Theoretical Computer Science Conference (ITCS 2017)}, pages 34:1--34:16, 2017.

\bibitem[Barman and Ligett(2015)]{BL15}
Siddharth Barman and Katrina Ligett.
\newblock Finding any nontrivial coarse correlated equilibrium is hard.
\newblock In \emph{Proceedings of the 16th {ACM} Conference on Economics and Computation ({EC} 2015)}, pages 815--816. {ACM}, 2015.

\bibitem[Bergemann and Morris(2016)]{BM16}
Dirk Bergemann and Stephen Morris.
\newblock {B}ayes correlated equilibrium and the comparison of information structures in games.
\newblock \emph{Theoretical Economics}, 11\penalty0 (2):\penalty0 487--522, 2016.

\bibitem[Blum and Mansour(2007{\natexlab{a}})]{AGTlearning}
Avrim Blum and Yishay Mansour.
\newblock Learning, regret minimization, and equilibria.
\newblock In Noam Nisan, Tim Roughgarden, \'{E}va Tardon, and Vijay~V. Vazirani, editors, \emph{Algorithmic Game Theory}, chapter~4. Cambridge University Press, 2007{\natexlab{a}}.

\bibitem[Blum and Mansour(2007{\natexlab{b}})]{BM07}
Avrim Blum and Yishay Mansour.
\newblock From external to internal regret.
\newblock \emph{Journal of Machine Learning Research}, 8:\penalty0 1307--1324, 2007{\natexlab{b}}.

\bibitem[Blum et~al.(2008)Blum, Hajiaghayi, Ligett, and Roth]{BHLR08}
Avrim Blum, MohammadTaghi Hajiaghayi, Katrina Ligett, and Aaron Roth.
\newblock Regret minimization and the price of total anarchy.
\newblock In \emph{Proceedings of the 40th Annual {ACM} Symposium on Theory of Computing ({STOC} 2008)}, pages 373--382, 2008.

\bibitem[Caragiannis et~al.(2015)Caragiannis, Kaklamanis, Kanellopoulos, Kyropoulou, Lucier, Leme, and Tardos]{CKK15}
Ioannis Caragiannis, Christos Kaklamanis, Panagiotis Kanellopoulos, Maria Kyropoulou, Brendan Lucier, Renato~Paes Leme, and {\'{E}}va Tardos.
\newblock Bounding the inefficiency of outcomes in generalized second price auctions.
\newblock \emph{Journal of Economic Theory}, 156:\penalty0 343--388, 2015.

\bibitem[Celli and Gatti(2018)]{CG18}
Andrea Celli and Nicola Gatti.
\newblock Computational results for extensive-form adversarial team games.
\newblock In \emph{Proceedings of the AAAI Conference on Artificial Intelligence ({AAAI} 2018)}, volume~32, 2018.

\bibitem[Celli et~al.(2019)Celli, Coniglio, and Gatti]{CCG19}
Andrea Celli, Stefano Coniglio, and Nicola Gatti.
\newblock Computing optimal ex ante correlated equilibria in two-player sequential games.
\newblock In \emph{Proceedings of the 18th International Conference on Autonomous Agents and MultiAgent Systems ({AAMAS} 2019)}, pages 909--917, 2019.

\bibitem[Cesa{-}Bianchi and Lugosi(2006)]{PLG}
Nicol{\`{o}} Cesa{-}Bianchi and G{\'{a}}bor Lugosi.
\newblock \emph{Prediction, Learning, and Games}.
\newblock Cambridge University Press, 2006.

\bibitem[Chen et~al.(2009)Chen, Deng, and Teng]{CDT09}
Xi~Chen, Xiaotie Deng, and Shang{-}Hua Teng.
\newblock Settling the complexity of computing two-player {N}ash equilibria.
\newblock \emph{Journal of the {ACM}}, 56\penalty0 (3):\penalty0 14:1--14:57, 2009.

\bibitem[Christodoulou et~al.(2016)Christodoulou, Kov{\'{a}}cs, and Schapira]{CKS16}
George Christodoulou, Annam{\'{a}}ria Kov{\'{a}}cs, and Michael Schapira.
\newblock {B}ayesian combinatorial auctions.
\newblock \emph{Journal of the {ACM}}, 63\penalty0 (2):\penalty0 11:1--11:19, 2016.

\bibitem[Dagan et~al.(2024)Dagan, Daskalakis, Fishelson, and Golowich]{DDFG24}
Yuval Dagan, Constantinos Daskalakis, Maxwell Fishelson, and Noah Golowich.
\newblock From external to swap regret 2.0: An efficient reduction for large action spaces.
\newblock In \emph{Proceedings of the 56th Annual {ACM} Symposium on Theory of Computing ({STOC} 2024)}, pages 1216--1222, 2024.

\bibitem[Devanur et~al.(2015)Devanur, Morgenstern, Syrgkanis, and Weinberg]{DMSW15}
Nikhil~R. Devanur, Jamie Morgenstern, Vasilis Syrgkanis, and S.~Matthew Weinberg.
\newblock Simple auctions with simple strategies.
\newblock In \emph{Proceedings of the 16th {ACM} Conference on Economics and Computation ({EC} 2015)}, pages 305--322, 2015.

\bibitem[Dodis and Rabin(2007)]{AGTcrypto}
Yevgeniy Dodis and Tal Rabin.
\newblock Cryptography and game theory.
\newblock In Noam Nisan, Tim Roughgarden, \'{E}va Tardon, and Vijay~V. Vazirani, editors, \emph{Algorithmic Game Theory}, chapter~8. Cambridge University Press, 2007.

\bibitem[Dughmi and Xu(2021)]{DX21}
Shaddin Dughmi and Haifeng Xu.
\newblock Algorithmic {B}ayesian persuasion.
\newblock \emph{{SIAM} Journal on {C}omputing}, 50\penalty0 (3), 2021.

\bibitem[Farina and Pipis(2023)]{Farina23}
Gabriele Farina and Charilaos Pipis.
\newblock Polynomial-time linear-swap regret minimization in imperfect-information sequential games.
\newblock \emph{Advances in Neural Information Processing Systems ({NeurIPS} 2023)}, 36:\penalty0 28255--28280, 2023.

\bibitem[Farina et~al.(2020)Farina, Bianchi, and Sandholm]{FarinaBS20}
Gabriele Farina, Tommaso Bianchi, and Tuomas Sandholm.
\newblock Coarse correlation in extensive-form games.
\newblock In \emph{The Thirty-Fourth {AAAI} Conference on Artificial Intelligence ({AAAI} 2020)}, pages 1934--1941, 2020.

\bibitem[Farina et~al.(2022)Farina, Celli, Marchesi, and Gatti]{FCMG22}
Gabriele Farina, Andrea Celli, Alberto Marchesi, and Nicola Gatti.
\newblock Simple uncoupled no-regret learning dynamics for extensive-form correlated equilibrium.
\newblock \emph{Journal of the {ACM}}, 69\penalty0 (6):\penalty0 41:1--41:41, 2022.

\bibitem[Feldman et~al.(2020)Feldman, Fu, Gravin, and Lucier]{FFGL20}
Michal Feldman, Hu~Fu, Nick Gravin, and Brendan Lucier.
\newblock Simultaneous auctions without complements are (almost) efficient.
\newblock \emph{Games and Economic Behavior}, 123:\penalty0 327--341, 2020.

\bibitem[Forges(1986)]{Forges86}
Fran\c{c}oise Forges.
\newblock An approach to communication equilibria.
\newblock \emph{Econometrica}, pages 1375--1385, 1986.

\bibitem[Forges(1993)]{Forges93}
Fran\c{c}oise Forges.
\newblock Five legitimate definitions of correlated equilibrium in games with incomplete information.
\newblock \emph{Theory and Decision}, 35:\penalty0 277--310, 1993.

\bibitem[Forges(2006)]{Forges06}
Fran\c{c}oise Forges.
\newblock Correlated equilibrium in games with incomplete information revisited.
\newblock \emph{Theory and Decision}, 61:\penalty0 329--344, 2006.

\bibitem[Forges(2014)]{Forges14}
Fran{\c{c}}oise Forges.
\newblock Correlated equilibria and communication in games.
\newblock In Robert~A. Meyers, editor, \emph{Encyclopedia of Complexity and Systems Science}, pages 1--13. Springer Berlin, Heidelberg, 2014.

\bibitem[Foster and Vohra(1997)]{FV97}
Dean~P Foster and Rakesh~V Vohra.
\newblock Calibrated learning and correlated equilibrium.
\newblock \emph{Games and Economic Behavior}, 21\penalty0 (1-2):\penalty0 40--55, 1997.

\bibitem[Fujii(2025)]{Fujii25}
Kaito Fujii.
\newblock The power of mediators: {P}rice of anarchy and stability in bayesian games with submodular social welfare.
\newblock \emph{CoRR}, 2025.
\newblock URL \url{https://arxiv.org/abs/2506.02655}.

\bibitem[Fujii and Sakaue(2022)]{FS22}
Kaito Fujii and Shinsaku Sakaue.
\newblock Algorithmic {B}ayesian persuasion with combinatorial actions.
\newblock In \emph{Thirty-Sixth {AAAI} Conference on Artificial Intelligence ({AAAI} 2022)}, pages 5016--5024, 2022.

\bibitem[Ganor and {Karthik {C. S.}}(2018)]{GS18}
Anat Ganor and {Karthik {C. S.}}
\newblock Communication complexity of correlated equilibrium with small support.
\newblock In \emph{Approximation, Randomization, and Combinatorial Optimization. Algorithms and Techniques ({APPROX/RANDOM} 2018)}, pages 12:1--12:16, 2018.

\bibitem[Gilboa and Zemel(1989)]{GZ89}
Itzhak Gilboa and Eitan Zemel.
\newblock {N}ash and correlated equilibria: {S}ome complexity considerations.
\newblock \emph{Games and Economic Behavior}, 1\penalty0 (1):\penalty0 80--93, 1989.

\bibitem[Gordon et~al.(2008)Gordon, Greenwald, and Marks]{Gor08}
Geoffrey~J. Gordon, Amy Greenwald, and Casey Marks.
\newblock No-regret learning in convex games.
\newblock In \emph{Proceedings of the Twenty-Fifth International Conference on Machine Learning ({ICML} 2008)}, pages 360--367, 2008.

\bibitem[Greenwald and Jafari(2003)]{GJ03}
Amy Greenwald and Amir Jafari.
\newblock A general class of no-regret learning algorithms and game-theoretic equilibria.
\newblock In \emph{Proceedings of the 16th Annual Conference on Computational Learning Theory and 7th Kernel Workshop ({COLT}/Kernel 2003)}, pages 2--12, 2003.

\bibitem[Harsanyi(1967)]{Harsanyi67}
John~C. Harsanyi.
\newblock Games with incomplete information played by ``{B}ayesian'' players, {I}--{III} part {I}. {T}he basic model.
\newblock \emph{Management Science}, 14\penalty0 (3):\penalty0 159--182, 1967.

\bibitem[Harsanyi(1968{\natexlab{a}})]{Harsanyi68a}
John~C. Harsanyi.
\newblock Games with incomplete information played by ``{B}ayesian'' players part {II}. {B}ayesian equilibrium points.
\newblock \emph{Management Science}, 14\penalty0 (5):\penalty0 320--334, 1968{\natexlab{a}}.

\bibitem[Harsanyi(1968{\natexlab{b}})]{Harsanyi68b}
John~C. Harsanyi.
\newblock Games with incomplete information played by `{B}ayesian' players, part {III}. {T}he basic probability distribution of the game.
\newblock \emph{Management Science}, 14\penalty0 (7):\penalty0 486--502, 1968{\natexlab{b}}.

\bibitem[Hart and Mas-Colell(2000)]{HM00}
Sergiu Hart and Andreu Mas-Colell.
\newblock A simple adaptive procedure leading to correlated equilibrium.
\newblock \emph{Econometrica}, 68\penalty0 (5):\penalty0 1127--1150, 2000.

\bibitem[Hart and Nisan(2018)]{HN18}
Sergiu Hart and Noam Nisan.
\newblock The query complexity of correlated equilibria.
\newblock \emph{Games and Economic Behavior}, 108:\penalty0 401--410, 2018.

\bibitem[Hartline(2013)]{BMD}
Jason~D. Hartline.
\newblock Bayesian mechanism design.
\newblock \emph{Foundations and Trends® in Theoretical Computer Science}, 8\penalty0 (3):\penalty0 143--263, 2013.

\bibitem[Hartline et~al.(2014)Hartline, Hoy, and Taggart]{HHT14}
Jason~D. Hartline, Darrell Hoy, and Sam Taggart.
\newblock Price of anarchy for auction revenue.
\newblock In \emph{Proceedings of the 15th {ACM} Conference on Economics and Computation ({EC} 2014)}, pages 693--710, 2014.

\bibitem[Hartline et~al.(2015)Hartline, Syrgkanis, and Tardos]{HST15}
Jason~D. Hartline, Vasilis Syrgkanis, and {\'{E}}va Tardos.
\newblock No-regret learning in {B}ayesian games.
\newblock In \emph{Advances in Neural Information Processing Systems 28 ({NIPS} 2015)}, pages 3061--3069, 2015.

\bibitem[Hazan and Kale(2007)]{HK07}
Elad Hazan and Satyen Kale.
\newblock Computational equivalence of fixed points and no regret algorithms, and convergence to equilibria.
\newblock In \emph{Advances in Neural Information Processing Systems ({NIPS} 2007)}, pages 625--632, 2007.

\bibitem[Ito(2020)]{Ito20}
Shinji Ito.
\newblock A tight lower bound and efficient reduction for swap regret.
\newblock In \emph{Advances in Neural Information Processing Systems 33 ({NeurIPS} 2020)}, 2020.

\bibitem[Jiang and Leyton-Brown(2015)]{JL15}
Albert~Xin Jiang and Kevin Leyton-Brown.
\newblock Polynomial-time computation of exact correlated equilibrium in compact games.
\newblock \emph{Games and Economic Behavior}, 91:\penalty0 347--359, 2015.

\bibitem[Jin and Lu(2022)]{JL22}
Yaonan Jin and Pinyan Lu.
\newblock First price auction is $1 - 1 /e^2$ efficient.
\newblock In \emph{Proceedings of the 63rd {IEEE} Annual Symposium on Foundations of Computer Science ({FOCS} 2022)}, pages 179--187, 2022.

\bibitem[Jin and Lu(2023)]{JL23}
Yaonan Jin and Pinyan Lu.
\newblock The price of stability for first price auction.
\newblock In \emph{Proceedings of the 2023 {ACM-SIAM} Symposium on Discrete Algorithms ({SODA} 2023)}, pages 332--352, 2023.

\bibitem[Kalai and Vempala(2005)]{KV05}
Adam~Tauman Kalai and Santosh~S. Vempala.
\newblock Efficient algorithms for online decision problems.
\newblock \emph{Journal of Computer and System Sciences}, 71\penalty0 (3):\penalty0 291--307, 2005.

\bibitem[Kamenica and Gentzkow(2011)]{KG11}
Emir Kamenica and Matthew Gentzkow.
\newblock {B}ayesian persuasion.
\newblock \emph{American Economic Review}, 101\penalty0 (6):\penalty0 2590--2615, 2011.

\bibitem[Kleinberg and Oren(2022)]{KO22}
Jon~M. Kleinberg and Sigal Oren.
\newblock Mechanisms for (mis)allocating scientific credit.
\newblock \emph{Algorithmica}, 84\penalty0 (2):\penalty0 344--378, 2022.

\bibitem[Koolen et~al.(2010)Koolen, Warmuth, and Kivinen]{Koolen10}
Wouter~M. Koolen, Manfred~K. Warmuth, and Jyrki Kivinen.
\newblock Hedging structured concepts.
\newblock In \emph{Proceedings of the 23rd Conference on Learning Theory ({COLT} 2010)}, pages 93--105, 2010.

\bibitem[Koutsoupias and Papadimitriou(1999)]{KP99}
Elias Koutsoupias and Christos~H. Papadimitriou.
\newblock Worst-case equilibria.
\newblock In \emph{Proceedings of the 16th Annual Symposium on Theoretical Aspects of Computer Science ({STACS} 1999)}, pages 404--413, 1999.

\bibitem[Lehrer et~al.(2010)Lehrer, Rosenberg, and Shmaya]{LRS10}
Ehud Lehrer, Dinah Rosenberg, and Eran Shmaya.
\newblock Signaling and mediation in games with common interests.
\newblock \emph{Games and Economic Behavior}, 68\penalty0 (2):\penalty0 670--682, 2010.

\bibitem[Leme and Tardos(2010)]{PT10}
Renato~Paes Leme and {\'{E}}va Tardos.
\newblock Pure and {B}ayes-{N}ash price of anarchy for generalized second price auction.
\newblock In \emph{Proceedings of the 51th Annual {IEEE} Symposium on Foundations of Computer Science ({FOCS} 2010)}, pages 735--744, 2010.

\bibitem[Lucier and Syrgkanis(2015)]{LS15}
Brendan Lucier and Vasilis Syrgkanis.
\newblock Greedy algorithms make efficient mechanisms.
\newblock In \emph{Proceedings of the 16th {ACM} Conference on Economics and Computation ({EC} 2015)}, pages 221--238, 2015.

\bibitem[Mansour et~al.(2022)Mansour, Mohri, Schneider, and Sivan]{MMSS22}
Yishay Mansour, Mehryar Mohri, Jon Schneider, and Balasubramanian Sivan.
\newblock Strategizing against learners in {B}ayesian games.
\newblock In \emph{Proceedings of the 35th Conference on Learning Theory ({COLT} 2022)}, pages 5221--5252, 2022.

\bibitem[Myerson(1982)]{Myerson82}
Roger~B Myerson.
\newblock Optimal coordination mechanisms in generalized principal–agent problems.
\newblock \emph{Journal of Mathematical Economics}, 10\penalty0 (1):\penalty0 67--81, 1982.

\bibitem[Myerson(1997)]{Myerson}
Roger~B. Myerson.
\newblock \emph{Game Theory: Analysis of Conflict}.
\newblock Harvard University Press, 1997.

\bibitem[Nisan(2007)]{AGTmechanism}
Noam Nisan.
\newblock Introduction to mechanism design (for computer scientists).
\newblock In Noam Nisan, Tim Roughgarden, \'{E}va Tardon, and Vijay~V. Vazirani, editors, \emph{Algorithmic Game Theory}, chapter~9. Cambridge University Press, 2007.

\bibitem[Orabona(2019)]{Orabona19}
Francesco Orabona.
\newblock A modern introduction to online learning.
\newblock \emph{CoRR}, abs/1912.13213, 2019.
\newblock URL \url{http://arxiv.org/abs/1912.13213}.

\bibitem[Papadimitriou and Roughgarden(2008)]{PR08}
Christos~H. Papadimitriou and Tim Roughgarden.
\newblock Computing correlated equilibria in multi-player games.
\newblock \emph{Journal of the {ACM}}, 55\penalty0 (3):\penalty0 14:1--14:29, 2008.

\bibitem[Peng and Rubinstein(2024)]{PR24}
Binghui Peng and Aviad Rubinstein.
\newblock Fast swap regret minimization and applications to approximate correlated equilibria.
\newblock In \emph{Proceedings of the 56th Annual {ACM} Symposium on Theory of Computing ({STOC} 2024)}, pages 1223--1234, 2024.

\bibitem[Rakhlin et~al.(2011)Rakhlin, Sridharan, and Tewari]{RST11}
Alexander Rakhlin, Karthik Sridharan, and Ambuj Tewari.
\newblock Online learning: {B}eyond regret.
\newblock In \emph{Proceedings of the 24th Annual Conference on Learning Theory ({COLT} 2011)}, volume~19, pages 559--594, 2011.

\bibitem[Roughgarden(2015{\natexlab{a}})]{Roughgarden15}
Tim Roughgarden.
\newblock Intrinsic robustness of the price of anarchy.
\newblock \emph{Journal of the {ACM}}, 62\penalty0 (5):\penalty0 32:1--32:42, 2015{\natexlab{a}}.

\bibitem[Roughgarden(2015{\natexlab{b}})]{Roughgarden15incomplete}
Tim Roughgarden.
\newblock The price of anarchy in games of incomplete information.
\newblock \emph{{ACM} Transactions on Economics and Computation}, 3\penalty0 (1):\penalty0 6:1--6:20, 2015{\natexlab{b}}.

\bibitem[Rubinstein(2018)]{Rubinstein18}
Aviad Rubinstein.
\newblock Inapproximability of {N}ash equilibrium.
\newblock \emph{{SIAM} {J}ournal on {C}omputing}, 47\penalty0 (3):\penalty0 917--959, 2018.

\bibitem[Stoltz and Lugosi(2007)]{SL07}
Gilles Stoltz and G{\'{a}}bor Lugosi.
\newblock Learning correlated equilibria in games with compact sets of strategies.
\newblock \emph{Games and Economic Behavior}, 59\penalty0 (1):\penalty0 187--208, 2007.

\bibitem[Syrgkanis(2012)]{Syrgkanis12}
Vasilis Syrgkanis.
\newblock {B}ayesian games and the smoothness framework.
\newblock \emph{CoRR}, abs/1203.5155, 2012.
\newblock URL \url{https://arxiv.org/abs/1203.5155}.

\bibitem[Syrgkanis and Tardos(2013)]{ST13}
Vasilis Syrgkanis and {\'{E}}va Tardos.
\newblock Composable and efficient mechanisms.
\newblock In \emph{Proceedings of the 45th Annual ACM Symposium on Theory of Computing ({STOC} 2013)}, pages 211--220, 2013.

\bibitem[von Stengel and Forges(2008)]{vSF08}
Bernhard von Stengel and Fran\c{c}oise Forges.
\newblock Extensive-form correlated equilibrium: {D}efinition and computational complexity.
\newblock \emph{Mathematics of Operations Research}, 33\penalty0 (4):\penalty0 1002--1022, 2008.

\bibitem[Yao et~al.(2023)Yao, Li, Nekipelov, Wang, and Xu]{Yao23}
Fan Yao, Chuanhao Li, Denis Nekipelov, Hongning Wang, and Haifeng Xu.
\newblock How bad is top-$k$ recommendation under competing content creators?
\newblock In \emph{Proceedings of the 40th International Conference on Machine Learning ({ICML} 2023)}, pages 39674--39701, 2023.

\bibitem[Zhang and Sandholm(2022)]{ZS22}
Brian~Hu Zhang and Tuomas Sandholm.
\newblock Polynomial-time optimal equilibria with a mediator in extensive-form games.
\newblock In \emph{Advances in Neural Information Processing Systems 35 ({NeurIPS} 2022)}, 2022.

\bibitem[Zhang et~al.(2024)Zhang, Farina, and Sandholm]{Zhang24}
Brian~Hu Zhang, Gabriele Farina, and Tuomas Sandholm.
\newblock Mediator interpretation and faster learning algorithms for linear correlated equilibria in general sequential games.
\newblock In \emph{The Twelfth International Conference on Learning Representations ({ICLR} 2024)}, 2024.

\bibitem[Zinkevich et~al.(2007)Zinkevich, Johanson, Bowling, and Piccione]{Zinkevich07}
Martin Zinkevich, Michael Johanson, Michael~H. Bowling, and Carmelo Piccione.
\newblock Regret minimization in games with incomplete information.
\newblock In \emph{Advances in Neural Information Processing Systems 20 ({NIPS} 2007)}, pages 1729--1736, 2007.

\end{thebibliography}

\appendix

\section{Preliminaries}\label{sec:pre}

\paragraph{Basic notations}
The sets of reals and integers are denoted by $\bbR$ and $\bbZ$, respectively.
For any $a,b \in \bbR$, we define $[a,b] = \{ x \in \bbR \mid a \le x \le b\}$.
For any positive integer $n \in \bbZ$, we define $[n] = \{i \in \bbZ \mid 1 \le i \le n\} = \{ 1,2,\dots,n \}$.
For any finite set $A$, we denote the probability simplex by $\Delta(A) = \{ \pi \in [0,1]^A \mid \sum_{a \in A} \pi(a) = 1 \}$.
For any distribution $\pi \in \Delta(A)$, we write $a \sim \pi$ when $a \in A$ is generated from the distribution $\pi$.
By abuse of notation, we write $a \sim A$ when $a$ is generated from the uniform distribution over $A$.
To specify a random variable that a probability or an expected value is concerned with, we write $\Pr_{a \sim \pi} \left( \cdot \right)$ or $\E_{a \sim \pi} \left[ \cdot \right]$.
If the distribution is obvious, we simply write $\Pr_{a} \left( \cdot \right)$ or $\E_{a} \left[ \cdot \right]$.
For vectors $x,y \in \bbR^n$, the inner product is written as $\langle x, y \rangle = \sum_{i=1}^n x(i) y(i)$.
Similarly, for matrices $P,Q \in \bbR^{n \times m}$, the matrix inner product is written as $\langle P, Q \rangle = \sum_{i=1}^n \sum_{j=1}^m P(i,j) Q(i,j)$.

\paragraph{Bayesian games}
Let $n \in \bbZ$ be the number of players and $N = [n]$ a set of players.
Let $A_i$ be a finite set of actions for player $i \in N$ and $A = A_1 \times \dots \times A_n$ the set of action profiles of all players.
Let $\Theta_i$ be a set of finite possible types for each player $i \in N$ and $\Theta = \Theta_1 \times \dots \times \Theta_n$ the set of type profiles of all players.
Let $v_i \colon \Theta \times A \to [0,1]$ be a utility function that maps a type profile and action profile to a payoff value for player $i$.%
\footnote{%
If each player's type represents their own preference, it is more natural to assume that their payoff depends only on their own type, not the other players' types.
In this case, the utility function is defined to be $v_i \colon \Theta_i \times A \to [0,1]$ whose value is determined by their own type $\theta_i$, not by the other player's types $\theta_{-i}$.
We make this assumption to obtain PoA bounds in \Cref{sec:poa}.
}
Since the size of $\Theta$ and $A$ is exponential in $n$, we assume that each $v_i$ is given by an oracle that returns the value of $v_i(\theta;a)$ for any $\theta \in \Theta$ and $a \in A$.
For each player $i \in N$, subscript $-i$ represents the other players $N \setminus \{i\}$, i.e., $a_{-i} = (a_j)_{j \in N \setminus \{i\}}$ and $\theta_{-i} = (\theta_j)_{j \in N \setminus \{i\}}$.

In this paper, we distinguish between the two words ``strategies'' and ``actions.''
While actions are defined as above, strategies are decisions that determine an action for every type.
Formally, a strategy $s_i \colon \Theta_i \to A_i$ of player $i \in N$ determines action $s_i(\theta_i) \in A_i$ to be selected given their own type $\theta_i \in \Theta_i$.
For each player $i \in N$, let $S_i = {A_i}^{\Theta_i}$ be the set of all strategies.
We write a strategy profile as $s = (s_1,\dots,s_n)$, and let $S = S_1 \times \dots \times S_n$ be the set of all strategy profiles.
For notational simplicity, we write $s(\theta) = (s_j(\theta_j))_{j \in N}$ and $s_{-i}(\theta_{-i}) = (s_j(\theta_j))_{j \in N \setminus \{i\}}$ for each $s \in S$, $\theta \in \Theta$, and $i \in N$.

In Bayesian games, we assume that the type profile is generated from the probability distribution $\rho \in \Delta(\Theta)$ that every player knows in advance.
Since the size of $\Theta$ is exponential in $n$ in general, the distribution $\rho$ cannot be expressed in polynomial size.
In most parts of this paper, we assume that it is possible to efficiently sample $\theta_i$ from the marginal distribution of $\rho$, which is denoted by $\rho_i$, and $\theta_{-i}$ from the distribution $\rho$ conditioned on any $\theta_i \in \Theta_i$, which is denoted by $\rho|\theta_i$.
This assumption holds for a special case in which $\rho$ is a product distribution, i.e., $\rho(\theta) = \prod_{i \in N} \rho_i(\theta_i)$ holds for each $\theta \in \Theta$.
For PoA bounds in \Cref{sec:poa}, we assume that $\rho$ is a product distribution.

A Bayesian game proceeds in the following steps:
(1) First, a type profile $\theta$ is generated from $\rho \in \Delta(\Theta)$, and each player $i \in N$ is notified of their own type $\theta_i \in \Theta_i$.
(2) Each player then decides their action $a_i \in A_i$ without knowing the other players' types $\theta_{-i}$.
(3) Finally, each player obtains the payoff $v_i(\theta; a)$ depending on all players' actions and types.

There are two widely-used interpretations of Bayesian games: the strategic form (also called the random-vector model or the normal form) and the agent normal form (also called the posterior-lottery model, the Selten model, or the population interpretation).
Both interpretations reduce a Bayesian game to a normal-form game with complete information as follows.
\begin{itemize}
\item In the strategic form of a Bayesian game, the strategies $S_i$ are considered as a decision space for player $i \in N$.
Each player $i \in N$ obtains the expected payoff $\sum_{\theta \in \Theta} \rho(\theta) v_i(\theta; s(\theta))$ for strategy profile $s \in S$.

\item In the agent normal form, we hypothetically consider the same player with different types as different players, that is, the set of players are $\{ (i,\theta_i) \mid i \in N, ~ \theta_i \in \Theta_i \}$, which we denote by $N'$.
Each player $(i,\theta_i) \in N'$ selects an action from $A_i$.
A type profile $\theta \in \Theta$ is randomly generated from $\rho$, and then only one player $(i,\theta_i)$ becomes active for each $i \in N$.
Only the active players can obtain non-zero payoffs, which are determined only by the active players' actions.
The inactive players always obtain payoff $0$ and do not affect the other players' payoffs.
If we denote player $(i,\theta_i)$'s action by $a_{i,\theta_i}$, the expected payoff for player $(i,\theta_i)$ is $\sum_{\theta' \in \Theta \colon \theta'_i = \theta_i} \rho(\theta') v_i(\theta'; (a_{j,\theta'_j})_{j \in N})$.
\end{itemize}

\section{Further related work}\label{sec:related-work}

\paragraph{Computational studies on correlated equilibria}
There exist many studies for computing a correlated equilibrium in games with \textit{complete} information.
A pioneering study on this topic was given by \citet*{GZ89}, but the time complexity of the proposed algorithm is polynomial in the number of action profiles $|A|$, which is exponential in our setting.
There are mainly two approaches to designing efficient algorithms.
One approach is based on no-regret dynamics.
No-internal-regret dynamics converging to correlated equilibria were developed by \citet*{FV97} and \citet*{HM00} originally in the economics community.
It was later applied to efficiently computing correlated equilibria \citep*{AGTlearning} and extended to stronger regret notions \citep*{BM07,HK07,RST11}.
Another approach is \textit{Ellipsoid Against Hope} proposed by \citet*{PR08} (and modified by \citet*{JL15}), which computes an exact correlated equilibrium of succinctly represented games by the ellipsoid method.
On the hardness side, \citet*{HN18} proved that there is no deterministic or exact algorithm for computing a correlated equilibrium with polynomial query complexity.

\paragraph{Computational studies on Bayes correlated equilibria}
Various definitions of Bayes correlated equilibria can be found in several computer science papers, but their distinctions have not been well explored.
For Bayesian first-price auctions, \citet*{JL23} considered a class of Bayes correlated equilibria, which is equivalent to Bayesian solutions.
\citet*{AGTcrypto} discussed the connection between communication equilibria and multiparty computation.

Some studies mentioned Bayes \textit{coarse} correlated equilibria.
\citet*{HST15} considered no-external-regret dynamics in the agent normal form of Bayesian games and their convergence to agent-normal-form coarse correlated equilibria (ANFCCEs).
Since the intersection of communication equilibria and ANFCEs is included in ANFCCEs, our proposed dynamics converge to a narrower equilibrium concept.
\citet*{CKK15} and \citet*{JL23} considered a variant of ANFCCEs defined on $\Delta(A)^\Theta$, not $\Delta(S)$, for analyzing the PoA in generalized second-price auctions and the price of stability in first-price auctions, respectively.
We briefly discuss definitions of Bayes coarse correlated equilibria in \Cref{sec:bcce}.

\paragraph{Correlated equilibria in extensive-form games with imperfect information}
\citet*{Forges86} extended the notion of communication equilibria to extensive-form games, in which the mediator receives signals from players and then sends signals to players at the beginning of each node of the game tree.
Bayesian games can be regarded as a special case of extensive-form games with imperfect information by treating types as actions of a chance node.
Note that the size of the game tree obtained by this reduction is exponential:
the number of actions of this chance node is $|\Theta|$, and even in the case of complete information, the size of the game tree is at least the number of action profiles $|A|$.
There is a large amount of literature on computing extensions of correlated equilibrium to extensive-form games with imperfect information \citep*{vSF08,CG18,CCG19,FCMG22,ZS22}, but they consider algorithms running in polynomial time in the size of the game tree, which is exponential in our formulation.
\citet*{ZS22} mentioned how to design efficiently computable dynamics converging to communication equilibria as an open problem, which we resolve in the special case of Bayesian games.

The communication procedure for communication equilibria can be interpreted as an extensive-form game with imperfect information.
However, since any concept of correlated equilibria for this extensive-form game allows a ``meta-mediator'' to correlate the type reports of the players, it does not coincide with the setting of communication equilibria.
Hence, existing studies on extensive-form games do not directly apply to communication equilibria in Bayesian games.

\paragraph{Price of anarchy of Bayesian games}
The definition of the PoA for games with complete information was provided by \citet*{KP99} and then extended from Nash equilibria to (coarse) correlated equilibria by \citet*{BHLR08}.
Most existing studies for Bayesian games focus on the PoA of Bayes--Nash equilibria \citep*{CKS16,PT10,Roughgarden15incomplete,Syrgkanis12,ST13,HHT14,FFGL20,JL22}.
An exception is a study by \citet*{HST15}, which extended the smoothness framework for mechanisms developed by \citet*{ST13} to ANFCCEs.

As mentioned above, \citet*{CKK15} and \citet*{JL23} considered a variant of ANFCCEs defined on $\Delta(A)^\Theta$.
\citet*{CKK15} proved PoA bounds of this class for generalized second-price auctions.
For first-price auctions, \citet*{JL23} considered the price of stability, which is defined to be the ratio of the \textit{optimal} social welfare achieved by equilibria to the maximum social welfare.

In \Cref{sec:poa}, we assume that the players' types are independent, i.e., the prior distribution $\rho$ is a product distribution.
\citet*{Roughgarden15incomplete} called the PoA for the correlated case \textit{cPoA} (correlated PoA) and showed a lower bound if the deviation of each player $i \in [n]$ can be determined only by $\theta_i$ in the definition of smoothness.

\paragraph{Bayesian incentive compatible mechanisms}
The concept of communication equilibria is closely connected to Bayesian incentive compatible mechanisms.
Note that \citet*{Myerson82} originally called communication equilibria \textit{coordination mechanisms} with an emphasis on their relations to Bayesian incentive compatible mechanisms.
Although these two concepts are similar in that truthful type reporting is incentive-compatible, a mediator in communication equilibria is required to recommend incentive-compatible actions to players, while a mechanism determines an outcome in a top-down fashion.
A communication equilibrium can be interpreted as a solution concept combining truthful type reporting of Bayesian incentive compatible mechanisms and action recommendations of correlated equilibria.
See \citet*[Chapter 6]{Myerson} for more details of this connection.

\paragraph{Bayesian persuasion and information design}
Bayesian persuasion was proposed as a framework for information design by \citet*{KG11}.
The problem of computing a commnication equilibrium \textit{optimal} for the sender (mediator) in the information assymmetry regime is called \textit{private persuasion} by \citet*{BM16}.
Even for games with complete information, the problem of computing an optimal correlated equilibrium is in general harder than finding a correlated equilibrium \citep*{PR08,BL15}.
Most of the algorithmic studies of Bayesian persuasion make some assumption to alleviate complexity of equilibria:
there is only one player \citep*{DX21,FS22} or each player's action does not affect the other players' payoffs (no-externalities assumption) \citep*{BB17}.

\section{Proof for no-untruthful-swap-regret dynamics}\label{sec:dynamics-proof}

\thmdynamicscom*

\begin{proof}
From \Cref{prop:sr}, $\pi$ is strategy-representable.
From \Cref{prop:com-sr}, it is sufficient to prove
\begin{equation*}
\E_{\theta \sim \rho} \left[ \E_{a \sim \pi(\psi(\theta_i),\theta_{-i})} \left[ v_i(\theta; \phi(\theta_i, a_i), a_{-i}) \right] \right]
-
\E_{\theta \sim \rho} \left[ \E_{a \sim \pi(\theta)} \left[ v_i(\theta; a) \right] \right]
\le \frac{\max_{j \in N} \RUSj^T}{T}
\end{equation*}
for each $i \in N$, $\psi \colon \Theta_i \to \Theta_i$, and $\phi \colon \Theta_i \times A_i \to A_i$.
For each $t \in [T]$ and $\theta \in \Theta$, let $\pi^t(\theta) \in \Delta(A)$ be the product distribution that independently generates $a_j \sim \pi_j^t(\theta_j)$ for each $j \in N$.
Similarly, let $\pi_{-i}^t(\theta) \in \Delta(A_{-i})$ be the product distribution that independently generates $a_j \sim \pi_j^t(\theta_j)$ for each $j \in N \setminus \{i\}$.
The left-hand side can be bounded as
\begin{align*}
&\E_{\theta \sim \rho} \left[ \E_{a \sim \pi(\psi(\theta_i), \theta_{-i})} \left[ v_i(\theta; \phi(\theta_i, a_i), a_{-i}) \right] \right] - \E_{\theta \sim \rho} \left[ \E_{a \sim \pi(\theta)} \left[ v_i(\theta; a) \right] \right]\\
&=
\E_{\theta \sim \rho} \left[ \E_{t \sim [T]} \left[ \E_{a \sim \pi^t(\psi(\theta_i), \theta_{-i})} \left[ v_i(\theta; \phi(\theta_i, a_i), a_{-i}) \right] - \E_{a \sim \pi^t(\theta)} \left[ v_i(\theta; a) \right] \right] \right]\\
&=
\frac{1}{T} \sum_{t=1}^T \E_{\theta_i \sim \rho_i} \left[ \E_{\theta_{-i} \sim \rho|\theta_i} \left[ \E_{a_{-i} \sim \pi_{-i}^t(\theta_{-i})} \left[ \E_{a_i \sim \pi_i^t(\psi(\theta_i))} \left[ v_i(\theta; \phi(\theta_i,a_i),a_{-i}) \right]
- 
\E_{a_i \sim \pi_i^t(\theta_i)} \left[ v_i(\theta; a)\right] \right] \right] \right] \tag{since $\pi^t(\theta)$ is a product distribution}\\
&=
\frac{1}{T} \sum_{t=1}^T \E_{\theta_i \sim \rho_i} \left[ \E_{a_i \sim \pi_i^t(\psi(\theta_i))} \left[ u_i^t(\theta_i, \phi(\theta_i, a_i)) \right] - \E_{a_i \sim \pi_i^t(\theta_i)} \left[ u_i^t(\theta_i, a_i) \right] \right] \tag{from the definition of $u_i^t$}\\
&\le
\frac{\RUSi^T}{T}\\
&\le \frac{\max_{j \in N} \RUSj^T}{T},
\end{align*}
which completes the proof.
\end{proof}

\section{Proofs for untruthful swap regret minimization}\label{sec:ub-proof}

\RUSiPhi*

\begin{proof}
The expected reward for the algorithm is
\begin{align*}
\sum_{t=1}^T \E_{\theta_i \sim \rho_i} \left[ \E_{a_i \sim \pi_i^t(\theta_i)} \left[ u_i^t(\theta_i, a_i) \right] \right]
=
\sum_{t=1}^T \sum_{\theta_i \in \Theta_i} \rho_i(\theta_i) \sum_{a_i \in A_i} \pi_i^t(\theta_i; a_i) u_i^t(\theta_i, a_i)
=
\sum_{t=1}^T \langle x^t, \bar{u}^t \rangle.
\end{align*}

Next, we express the competitor's transformation that uses $\psi \colon \Theta_i \to \Theta_i$ and $\phi \colon \Theta_i \times A_i \to A_i$ as a linear transformation.
Given $\pi^t_i \in \Delta(A_i)^{\Theta_i}$, the competitor's expected reward in each round $t \in [T]$ is
\begin{align*}
\E_{\theta_i \sim \rho_i} \left[ \E_{a_i \sim \pi_i^t(\psi(\theta_i))} \left[ u_i^t(\theta_i, \phi(\theta_i, a_i)) \right] \right]
&=
\sum_{\theta_i \in \Theta_i} \rho_i(\theta_i) \sum_{a_i \in A_i} \pi_i^t(\psi(\theta_i); a_i) u_i^t(\theta_i, \phi(\theta_i, a_i))\\
&=
\sum_{\theta_i \in \Theta_i} \sum_{a_i \in A_i} x^t(\psi(\theta_i), a_i) \bar{u}^t(\theta_i, \phi(\theta_i, a_i))\\
&=
\left\langle Q_\psi x^t, Q_\phi \bar{u}^t \right\rangle,
\end{align*}
where $Q_\psi, Q_\phi \in [0,1]^{(\Theta_i \times A_i) \times (\Theta_i \times A_i)}$ are the matrices defined as follows.
\begin{itemize}
\item The linear transformation $Q_\psi$ for $x^t$ is defined as the Kronecker product $Q_\psi = W_\psi \otimes I$ of the zero-one stochastic matrix $W_\psi \in \{0,1\}^{\Theta_i \times \Theta_i}$, where $W_\psi(\theta_i,\theta'_i) = 1$ if $\theta'_i = \psi(\theta_i)$ and $0$ otherwise, and the identity matrix $I \in \{0,1\}^{A_i \times A_i}$.
\item The linear transformation $Q_\phi$ for $\bar{u}^t$ is defined as the block diagonal matrix, where each block $Q_{\phi(\theta_i,\cdot)} \in \{0,1\}^{A_i \times A_i}$ corresponding to $\theta_i \in \Theta_i$ is the zero-one stochastic matrix defined by $Q_{\phi(\theta_i,\cdot)}(a_i,a'_i) = 1$ if $a'_i = \phi(\theta_i,a_i)$ and $0$ otherwise.
\end{itemize}
Since $\left\langle Q_\psi x^t, Q_\phi \bar{u}^t \right\rangle = \left\langle \left( Q_\phi^\top Q_\psi \right) x^t, \bar{u}^t \right\rangle$, the competitor can be interpreted as transforming $x^t$ by using matrix $Q_\phi^\top Q_\psi$, which we denote by $Q_{\psi,\phi}$.
Hence, the set of all possible transformations for the competitor is
\begin{equation*}
\calQ' = \left\{ Q_{\psi,\phi} \;\middle|\; \psi \colon \Theta_i \to \Theta_i, \phi \colon \Theta_i \times A_i \to A_i \right\}.
\end{equation*}

Next, we prove that $\calQ$ equals the convex hull of $\calQ'$.
For each $\psi$ and $\phi$, by a simple calculation, we can see that each entry is $Q_{\psi,\phi}((\theta_i,a_i),(\theta'_i,a'_i)) = 1$ if $\theta'_i = \psi(\theta_i)$ and $a_i = \phi(\theta_i,a'_i)$, and $0$ otherwise.
All the constraints \eqref{eq:calQ} for $\calQ$ are satisfied by $Q_{\psi,\phi} \in \calQ$ for any $\psi$ and $\phi$.
The convex hull is therefore a subset of $\calQ$.

Conversely, we show that any $Q \in \calQ$ can be expressed as a convex combination of $Q_{\psi,\phi}$.
Fix any $Q \in \calQ$.
We can regard $Q$ as a block matrix with block $Q_{\theta_i,\theta'_i} \in [0,1]^{A_i \times A_i}$ for each $\theta_i,\theta'_i \in \Theta_i$.
From the definition of $\calQ$, there exists some stochastic matrices $W \in [0,1]^{\Theta_i \times \Theta_i}$ and $\tilde{Q}_{\theta_i,\theta'_i} \in [0,1]^{A_i \times A_i}$ for each $\theta_i,\theta'_i \in \Theta_i$ such that $Q_{\theta_i,\theta'_i}^\top = W(\theta_i,\theta'_i) \tilde{Q}_{\theta_i,\theta'_i}$ for each $\theta_i,\theta'_i \in \Theta_i$.
Since any stochastic matrix is a convex combination of zero-one stochastic matrices, there exists some distribution $\gamma$ over all possible $\psi$ such that $W = \sum_{\psi} \gamma(\psi) W_\psi$.
Similarly, for each $\theta_i,\theta'_i \in \Theta_i$, there exists some distribution $\kappa_{\theta_i,\theta'_i}$ over all possible $\phi(\theta_i,\cdot)$ such that $\tilde{Q}_{\theta_i,\theta'_i} = \sum_{\phi(\theta_i,\cdot)} \kappa_{\theta_i,\theta'_i}(\phi(\theta_i,\cdot)) Q_{\phi(\theta_i,\cdot)}$.
Each entry of $Q$ is
\begin{align*}
Q((\theta_i,a_i),(\theta'_i,a'_i))
&=
W(\theta_i,\theta'_i) \tilde{Q}_{\theta_i,\theta'_i}(a'_i,a_i)\\
&=
\left( \sum_{\psi} \gamma(\psi) W_\psi(\theta_i,\theta'_i) \right) \left( \sum_{\phi(\theta_i,\cdot)} \kappa_{\theta_i,\theta'_i}(\phi(\theta_i,\cdot)) Q_{\phi(\theta_i,\cdot)} (a'_i,a_i) \right)\\
&=
\left( \sum_{\psi} \gamma(\psi) \bfone_{\{\theta'_i = \psi(\theta_i) \}} \right) \left( \sum_{\phi(\theta_i,\cdot)} \kappa_{\theta_i,\theta'_i}(\phi(\theta_i,\cdot)) \bfone_{\{a_i = \phi(\theta_i,a'_i) \}} \right)\\
&=
\Pr_{\psi \sim \gamma} \left( \theta'_i = \psi(\theta_i) \right) \Pr_{\phi(\theta_i,\cdot) \sim \kappa_{\theta_i,\theta'_i}} \left( a_i = \phi(\theta_i,a'_i) \right).
\end{align*}
We consider a distribution that generates $\psi$ and $\phi$ with probability $\gamma({\psi}) \prod_{\theta_i \in \Theta_i} \kappa_{\theta_i,\psi(\theta_i)}(\phi(\theta_i,\cdot))$.
That is, $\psi$ is generated from $\gamma$ and then each $\phi(\theta_i,\cdot)$ is generated from $\kappa_{\theta_i,\psi(\theta_i)}$ independently for each $\theta_i \in \Theta_i$.
The expected value of each entry according to this distribution is
\begin{align*}
\E_{\psi,\phi} \left[ Q_{\psi,\phi}((\theta_i,a_i),(\theta'_i,a'_i)) \right]
&=
\Pr_{\psi,\phi}\left( \theta'_i = \psi(\theta_i), a_i = \phi(\theta_i,a'_i)\right)\\
&=
\Pr_{\psi \sim \gamma} \left( \theta'_i = \psi(\theta_i) \right) \Pr_{\phi(\theta_i,\cdot) \sim \kappa_{\theta_i,\theta'_i}} \left( a_i = \phi(\theta_i,a'_i)\right)\\
&=
Q((\theta_i,a_i),(\theta'_i,a'_i)).
\end{align*}
Since every $Q \in \calQ$ can be expressed as a convex combination of some matrices in $\calQ'$, we can see that $\calQ$ is a subset of the convex hull of $\calQ'$.
We conclude that $\calQ$ equals the convex hull of $\calQ'$.

The competitor's expected reward is
\begin{align*}
&\max_{\psi \colon \Theta_i \to \Theta_i} \max_{\phi \colon \Theta_i \times A_i \to A_i} \sum_{t=1}^T \E_{\theta_i \sim \rho_i} \left[ \E_{a_i \sim \pi_i^t(\psi(\theta_i))} \left[ u_i^t(\theta_i, \phi(\theta_i, a_i)) \right] \right]\\
&=
\max_{\psi \colon \Theta_i \to \Theta_i} \max_{\phi \colon \Theta_i \times A_i \to A_i} \sum_{t=1}^T \left\langle Q_{\psi,\phi} x^t, \bar{u}^t \right\rangle\\
&=
\max_{Q \in \calQ} \sum_{t=1}^T \langle Q x^t, \bar{u}^t \rangle,
\end{align*}
where the last equality holds because $\langle Q x^t, \bar{u}^t \rangle$ is linear in terms of $Q$, and $\calQ$ is the convex hull of $Q_{\psi,\phi}$ for all $\psi$ and $\phi$.
\end{proof}

\lemfixedpoint*

\begin{proof}
We prove this lemma by using Brouwer's fixed point theorem, which claims that any continuous function that maps a compact convex set to itself has a fixed point.
Since $\calX$ is a product of probability simplices, it is compact and convex.
In addition, $Q$ is continuous since it is a matrix.
Thus it is sufficient to prove that $Q$ maps every $x \in \calX$ to a vector in $\calX$.
Let $x \in \calX$ be an arbitrary vector.
Let $x_{\theta_i}$ be the block of $x$ corresponding to $\theta_i$ and $Q_{\theta_i,\theta'_i}$ be the block of $Q$ corresponding to $\theta_i,\theta'_i$.
Then $x_{\theta_i} \in \Delta(A_i)$ is a stochastic vector for each $\theta_i \in \Theta_i$.
From the definition of $\calQ$, there exists some stochastic matrices $W \in [0,1]^{\Theta_i \times \Theta_i}$ and $\tilde{Q}_{\theta_i,\theta'_i} \in [0,1]^{A_i \times A_i}$ for each $\theta_i,\theta'_i$ such that $Q_{\theta_i,\theta'_i}^\top = W(\theta_i,\theta'_i) \tilde{Q}_{\theta_i,\theta'_i}$ for each $\theta_i,\theta'_i \in \Theta_i$.
Since $\tilde{Q}_{\theta_i,\theta'_i}^\top x_{\theta'_i} \in \Delta(A_i)$ is also a stochastic vector, each block of $Q x$ corresponding to $\theta_i \in \Theta_i$ is
\begin{equation*}
\sum_{\theta'_i \in \Theta_i} Q_{\theta_i,\theta'_i} x_{\theta'_i}
= \sum_{\theta'_i \in \Theta_i} W(\theta_i,\theta'_i) \left( \tilde{Q}_{\theta_i,\theta'_i}^\top x_{\theta'_i} \right).
\end{equation*}
This is a convex sum of stochastic vectors, hence a stochastic vector in $\Delta(A_i)$.
This implies $Q x \in \calX$ for any $x \in \calX$.
From Brouwer's fixed point theorem, there exists a fixed point $x \in \calX$ such that $Q x = x$.
\end{proof}

\lemeigen*

\begin{proof}
We apply the Perron--Frobenius theorem, which claims that any positive matrix has an eigenvalue whose eigenspace is $1$-dimensional, and any positive eigenvector is an eigenvector corresponding to this eigenvalue.
From \Cref{lem:fixed-point}, there exists some $x \in \calX$ such that $Q x = x$, which implies that $Q$ has eigenvector $1$, and $x$ is its corresponding eigenvector.
Since $Q$ is a positive matrix and $x$ is a non-negative non-zero vector, $Qx=x$ is also a positive vector.
From the Perron--Frobenius theorem, the eigenspace corresponding to eigenvalue $1$ is $1$-dimensional and contains only a multiple of $x \in \calX$.
We can obtain $x \in \calX$ by computing an eigendecomposition of $Q$ and appropriately scaling the eigenvector corresponding to eigenvalue $1$.
Since $Q$ is a $|\Theta_i||A_i| \times |\Theta_i||A_i|$ matrix, its eigendecomposition can be computed in $O(|\Theta_i|^\omega |A_i|^\omega)$ time, where $\omega$ is the exponent of matrix multiplication.
\end{proof}

\lemreductionone*

\begin{proof}
From the definition of $x^t$, it holds that $Q^t x^t = x^t$ for each $t \in [T]$.
From the definition of $U^t$, the algorithm's expected reward is
\begin{align*}
\sum_{t=1}^T \langle Q^t, U^t \rangle
= 
\sum_{t=1}^T \left\langle Q^t, \bar{u}^t \left( x^t \right)^\top \right\rangle
=
\sum_{t=1}^T \left\langle Q^t x^t, \bar{u}^t \right\rangle
=
\sum_{t=1}^T \left\langle x^t, \bar{u}^t \right\rangle,
\end{align*}
where the last equality is due to $Q^t x^t = x^t$ for each $t \in [T]$.
Similarly, the competitor's expected reward is
\begin{align*}
\max_{Q \in \calQ} \sum_{t=1}^T \langle Q, U^t \rangle
=
\max_{Q \in \calQ} \sum_{t=1}^T \left\langle Q, \bar{u}^t \left( x^t \right)^\top \right\rangle
=
\max_{Q \in \calQ} \sum_{t=1}^T \left\langle Q x^t, \bar{u}^t \right\rangle.
\end{align*}
Therefore, we obtain
\begin{align*}
R^T_\calQ
=
\max_{Q \in \calQ} \sum_{t = 1}^T \langle Q, U^t \rangle - \sum_{t = 1}^T \langle Q^t, U^t \rangle
=
\max_{Q \in \calQ} \sum_{t = 1}^T \langle Q x^t, \bar{u}^t \rangle - \sum_{t = 1}^T \langle x^t, \bar{u}^t \rangle
=
\RUSi,
\end{align*}
where the last equality is due to \Cref{lem:RUSi-Phi}.
\end{proof}

\lemreductiontwo*

\begin{proof}
Recall that for each $\calE_{\theta_i,\theta'_i,a'_i}$, the reward for decision $a_i \in A_i$ is $x^t(\theta'_i,a'_i) \bar{u}_i^t(\theta_i,a_i)$.
Therefore, the external regret $R_{\theta_i,\theta'_i,a'_i}^T$ for $\calE_{\theta_i,\theta'_i,a'_i}$ is defined as
\begin{equation*}
R_{\theta_i,\theta'_i,a'_i}^T = \max_{a^*_i \in A_i} \sum_{t=1}^T x^t(\theta'_i,a'_i) \bar{u}^t(\theta_i,a^*_i) - \sum_{t=1}^T \sum_{a_i \in A_i} y^t_{\theta_i,\theta'_i,a'_i}(a_i) x^t(\theta'_i,a'_i) \bar{u}^t(\theta_i,a_i).
\end{equation*}
Since for each $\calE_{\theta_i}$, the reward for decision $\theta'_i \in \Theta_i$ is $\sum_{a_i,a'_i \in A_i} y^t_{\theta_i,\theta'_i,a'_i}(a_i) x^t(\theta'_i,a'_i) \bar{u}^t(\theta_i,a_i)$,
the external regret $R_{\theta_i}^T$ for $\calE_{\theta_i}$ is defined as
\begin{align*}
R_{\theta_i}^T
&=
\max_{\theta^*_i \in \Theta_i} \sum_{t=1}^T \sum_{a_i,a'_i \in A_i} y^t_{\theta_i,\theta^*_i,a'_i}(a_i) x^t(\theta^*_i,a'_i) \bar{u}^t(\theta_i,a_i) \\
&-
\sum_{t=1}^T \sum_{\theta'_i \in \Theta_i} w_{\theta_i}(\theta'_i) \sum_{a_i,a'_i \in A_i} y^t_{\theta_i,\theta'_i,a'_i}(a_i) x^t(\theta'_i,a'_i) \bar{u}^t(\theta_i,a_i) .
\end{align*}

Recall that $U^t = \bar{u}^t (x^t)^\top$.
Since each extreme point of $\calQ$ corresponds to a matrix $(Q_\phi^\top Q_\psi)$ for some $\psi$ and $\phi$, we obtain
\begin{align*}
&\max_{Q \in \calQ} \sum_{t = 1}^T \langle Q, U^t \rangle\\
&=
\max_{Q \in \calQ} \sum_{t = 1}^T \langle Q x^t, \bar{u}^t \rangle\\
&=
\max_{\psi \colon \Theta_i \to \Theta_i} \max_{\phi \colon \Theta_i \times A_i \to A_i} \sum_{t = 1}^T \langle Q_\psi x^t, Q_\phi \bar{u}^t \rangle\\
&= \max_{\psi \colon \Theta_i \to \Theta_i} \max_{\phi \colon \Theta_i \times A_i \to A_i} \sum_{\theta_i \in \Theta_i} \sum_{a'_i \in A_i} \sum_{t=1}^T x^t(\psi(\theta_i),a'_i) \bar{u}^t(\theta_i,\phi(\theta_i, a'_i))\\
&= \sum_{\theta_i \in \Theta_i} \max_{\theta'_i \in \Theta_i} \sum_{a'_i \in A_i} \max_{a_i \in A_i} \sum_{t=1}^T x^t(\theta'_i,a'_i) \bar{u}^t(\theta_i,a_i)\\
&= \sum_{\theta_i \in \Theta_i} \max_{\theta'_i \in \Theta_i} \sum_{a'_i \in A_i} \left\{ \sum_{t=1}^T \sum_{a_i \in A_i} y^t_{\theta_i,\theta'_i,a'_i}(a_i) x^t(\theta'_i,a'_i) \bar{u}^t(\theta_i,a_i) + R^T_{\theta_i,\theta'_i,a'_i} \right\}
\tag{due to the definition of $R^T_{\theta_i,\theta'_i,a'_i}$}\\
&\le \sum_{\theta_i \in \Theta_i} \left\{ \max_{\theta'_i \in \Theta_i} \sum_{t=1}^T \sum_{a'_i \in A_i} \sum_{a_i \in A_i} y^t_{\theta_i,\theta'_i,a'_i}(a_i) x^t(\theta'_i,a'_i) \bar{u}^t(\theta_i,a_i) + \max_{\theta'_i \in \Theta_i} \sum_{a'_i \in A_i} R^T_{\theta_i,\theta'_i,a'_i} \right\}\\
&= \sum_{\theta_i \in \Theta_i} \left\{ \sum_{t=1}^T \sum_{\theta'_i \in \Theta_i}  w^t_{\theta_i}(\theta'_i) \sum_{a'_i \in A_i} \sum_{a_i \in A_i} y^t_{\theta_i,\theta'_i,a'_i}(a_i) x^t(\theta'_i,a'_i) \bar{u}^t(\theta_i,a_i) + R^T_{\theta_i} + \max_{\theta'_i \in \Theta_i} \sum_{a'_i \in A_i} R^T_{\theta_i,\theta'_i,a'_i} \right\}
\tag{due to the definition of $R^T_{\theta_i}$}\\
&= \sum_{t=1}^T \langle Q^t, U^t \rangle + \sum_{\theta_i \in \Theta_i} R^T_{\theta_i} + \sum_{\theta_i \in \Theta_i} \max_{\theta'_i \in \Theta_i} \sum_{a'_i \in A_i} R^T_{\theta_i,\theta'_i,a'_i}, \tag{due to the definition of $Q^t$ and $U^t$}
\end{align*}
which yields an upper bound on $R^T_\calQ = \max_{Q \in \calQ} \sum_{t = 1}^T \langle Q, U^t \rangle - \sum_{t = 1}^T \langle Q^t, U^t \rangle$.
\end{proof}

To obtain an upper bound on untruthful swap regret, we use the following upper bounds for external regret minimization.
These algorithms are based on the multiplicative weights, and their output is always a positive vector.
\begin{theorem}[{\citep*[Theorem 2.2]{PLG}}]\label{thm:mwu}
For an online learning problem with rewards in $[0,1]$, there exists an algorithm such that its external regret is bounded above by $\sqrt{\frac{1}{2} T \log d}$, where $d$ is the number of possible decisions and $T$ is the number of rounds.
\end{theorem}

\begin{theorem}[see, e.g., {\citep*[Section 7.6]{Orabona19}}]\label{thm:adahedge}
For an online learning problem with reward vectors $u^1,u^2,\dots,u^T \in [0,1]$, there exists an algorithm such that its external regret is bounded above by $6 \sqrt{\left( \sum_{t=1}^T \|u^t\|_\infty \right) \log d}$, where $d$ is the number of possible decisions.%
\end{theorem}

\thmuntruthfulub*

\begin{proof}
From \Cref{lem:reduction1,lem:reduction2}, the untruthful swap regret of \Cref{alg:untruthful} is bounded as
\begin{equation*}
\RUSi^T = R^T_\calQ \le \sum_{\theta_i \in \Theta_i} R^T_{\theta_i} + \sum_{\theta_i \in \Theta_i} \max_{\theta'_i \in \Theta_i} \sum_{a'_i \in A_i} R^T_{\theta_i,\theta'_i,a'_i}.
\end{equation*}
Since the reward input to $\calE_{\theta_i}$ is bounded above by
\begin{equation*}
\max_{t \in [T]} \max_{\theta'_i \in \Theta_i} \sum_{a_i,a'_i \in A_i} y^t_{\theta_i,\theta'_i,a'_i}(a_i) x^t(\theta'_i,a'_i) \bar{u}^t(\theta_i,a_i)
\le
\max_{t \in [T]} \max_{\theta'_i \in \Theta_i} \sum_{a_i,a'_i \in A_i} y^t_{\theta_i,\theta'_i,a'_i}(a_i) x^t(\theta'_i,a'_i) \rho_i(\theta_i)
=
\rho_i(\theta_i),
\end{equation*}
we can obtain the regret upper bound $R^T_{\theta_i} \le \rho_i(\theta_i) \sqrt{\frac{1}{2} T \log |\Theta_i|}$ from \Cref{thm:mwu}.
Since the reward input to $\calE_{\theta_i,\theta'_i,a'_i}$ is similarly bounded above by $\rho_i(\theta_i)$ and the sum of the maximum rewards is bounded above as
\begin{equation*}
\sum_{t=1}^T \max_{a_i \in A_i} \pi_i^t(\theta'_i; a'_i) \bar{u}_i^t(\theta_i,a_i)
\le
\rho_i(\theta_i) \sum_{t=1}^T \pi_i^t(\theta'_i;a'_i),
\end{equation*}
we can obtain a regret upper bound $R^T_{\theta_i,\theta'_i,a'_i} = 6 \rho_i(\theta_i) \sqrt{\sum_{t=1}^T \pi_i^t(\theta'_i;a'_i) \log |A_i|}$.
By the Cauchy--Schwarz inequality, we have $\sum_{a'_i \in A_i} 1 \cdot \sqrt{\sum_{t=1}^T \pi_i^t(\theta'_i;a'_i)} \le \sqrt{|A_i| \sum_{a'_i \in A_i} \sum_{t=1}^T \pi_i^t(\theta'_i;a'_i)} = \sqrt{|A_i|T}$.
We thus obtain
\begin{align*}
&\sum_{\theta_i \in \Theta_i} R^T_{\theta_i} + \sum_{\theta_i \in \Theta_i} \max_{\theta'_i \in \Theta_i} \sum_{a'_i \in A_i} R^T_{\theta_i,\theta'_i,a'_i}\\
&\le \sum_{\theta_i \in \Theta_i} \rho_i(\theta_i) \sqrt{\frac{1}{2} T \log |\Theta_i|} + \sum_{\theta_i \in \Theta_i} \max_{\theta'_i \in \Theta_i} \rho_i(\theta_i) \left( 6 \sqrt{T |A_i| \log |A_i|} \right)\\
&= \sqrt{\frac{1}{2} T \log |\Theta_i|} + 6 \sqrt{T |A_i| \log |A_i|}.
\end{align*}
\end{proof}

\corcom*

\begin{proof}
We estimate the value of $u_i^t(\theta_i,a_i)$ for each $i \in N$, $t \in [T]$, $\theta_i \in \Theta_i$, and $a_i \in A_i$ by generating $\frac{8}{\epsilon^2} \log \frac{2nT\max_{i \in N}|\Theta_i||A_i|}{\delta}$ samples of $\theta_{-i} \sim \rho|\theta_i$ and $a_{-i} \sim \pi^t_{-i}(\theta_{-i})$ and taking the average of $v_i(\theta_i,\theta_{-i};a_i,a_{-i})$, where $T$ will be specified later.
Let $\tilde{u}_i^t(\theta_i,a_i)$ be this estimation.
By using H\oe{}ffding's inequality, we have
\begin{align*}
\Pr\left( \left| u_i^t(\theta_i,a_i) - \tilde{u}_i^t(\theta_i,a_i) \right| \ge \frac{\epsilon}{4} \right)
&\le 2 \exp \left( - \frac{2 \epsilon^2}{16} \cdot \frac{8}{\epsilon^2} \log \frac{2nT\max_{i \in N}|\Theta_i||A_i|}{\delta} \right)\\
&\le \frac{\delta}{nT \max_{i \in N} (|\Theta_i||A_i|)}
\end{align*}
for each $i \in N$, $t \in [T]$, $\theta_i \in \Theta_i$, and $a_i \in A_i$.
By the union bound, with probability at least $1-\delta$, the additive errors of all the rewards are at most $\epsilon/4$.

We apply \Cref{alg:untruthful} to these estimated rewards.
For each player $i \in N$, by using the upper bound on untruthful swap regret in \Cref{thm:untruthful-ub}, we obtain
\begin{align*}
&\frac{\RUSi}{T}
= \frac{1}{T} \max_{\psi \colon \Theta_i \to \Theta_i} \max_{\phi \colon \Theta_i \times A_i \to A_i} \sum_{t=1}^T \E_{\theta_i \sim \rho_i} \left[ \E_{a_i \sim \pi_i^t(\psi(\theta_i))} \left[ u_i^t(\theta_i, \phi(\theta_i, a_i)) \right] - \E_{a_i \sim \pi_i^t(\theta_i)} \left[ u_i^t(\theta_i, a_i) \right] \right]\\
&\le
\frac{1}{T} \left\{ \max_{\psi \colon \Theta_i \to \Theta_i} \max_{\phi \colon \Theta_i \times A_i \to A_i} \sum_{t=1}^T \E_{\theta_i \sim \rho_i} \left[ \E_{a_i \sim \pi_i^t(\psi(\theta_i))} \left[ \tilde{u}_i^t(\theta_i, \phi(\theta_i, a_i)) \right] - \E_{a_i \sim \pi_i^t(\theta_i)} \left[ \tilde{u}_i^t(\theta_i, a_i) \right] \right] + T \cdot \frac{\epsilon}{2} \right\}\\
&\le \sqrt{\frac{1}{2T} \log |\Theta_i|} + 6 \sqrt{\frac{|A_i| \log |A_i|}{T}} + \frac{\epsilon}{2}\\
&\le \epsilon,
\end{align*}
where we set $T = \max\{ \frac{18}{\epsilon^2} \log(\max_{i \in N}|\Theta_i|), \frac{144^2}{\epsilon^2} \max_{i \in N}|A_i| \log |A_i| \}$.
From \Cref{thm:dynamics-com}, $\pi \in \PiCom^\epsilon \cap \PiANF^\epsilon$ holds.
\end{proof}

\section{Proof of lower bound for untruthful swap regret}\label{sec:lb-proof}
Here, we provide a full proof of the $\Omega(\sqrt{T \log |\Theta_i|})$ lower bound on untruthful swap regret (\Cref{thm:untruthful-lb}).
This section focuses on an online learning problem with stochastic types for a single player $i \in N$.
The subscript $i$ for $A_i$, $a_i$, $\Theta_i$, $\theta_i$, $\pi_i^t$, and $\RUSi^T$ is not essential in this section, but we put it for notational consistency.

We want to prove that for any randomized algorithm, there exists a deterministic adversary for which the algorithm's untruthful swap regret is lower bounded.
From Yao's minimax principle, it is sufficient to show there exists a randomized adversary for which any deterministic algorithm's expected untruthful swap regret is lower bounded.
In the following, we construct such a randomized adversary.

Let $A_i = \{\alpha_0,\alpha_1\}$ be the set of actions and $\rho_i \in \Delta(\Theta_i)$ the uniform distribution over $\Theta_i$.
The set of time rounds $[T]$ is partitioned into $B$ blocks of equal length $L = T/B$, where $B = \log_2 |\Theta_i|-1$.
For each $b \in [B]$, we denote the $b$th block by $\calT_b = \{ t \in [T] \mid (b-1)L < t \le bL \}$.

The set of types $\Theta_i$ is partitioned into $\Theta'_i$ of size $2^B$ and $\Theta''_i$ of size $2^B$.
The rewards for types $\Theta'_i$ are determined by a bijection $\zeta \colon \Theta'_i \to \{0,1\}^B$ that associates each type in $\theta'_i \in \Theta'_i$ with binary sequence $\zeta(\theta'_i)$ of length $B$.
A bijection $\zeta$ is latently generated from the uniform distribution over all such bijections and unknown to the algorithm in the beginning.
For each time round $t \in \calT_b$ in the $b$th block, the reward $u_i^t(\theta_i,\alpha_0)$ is defined by $\zeta$ such that $u_i^t(\theta'_i,\alpha_0) = \zeta(\theta'_i)(b)$ for each $\theta'_i \in \Theta'_i$.
On the other hand, 
the rewards for types $\Theta''_i$ are determined by a random map $\xi \colon \Theta''_i \to \{0,1\}^T$.
For each $\theta''_i \in \Theta''_i$ and $t \in [T]$, the reward $u_i^t(\theta''_i,\alpha_0) = \xi(\theta''_i)(t)$ independently follows the uniform distribution over $\{0,1\}$.
For both $\Theta'_i$ and $\Theta''_i$, the reward for $\alpha_1$ is defined by flipping the reward for $\alpha_0$, i.e., $u_i^t(\theta_i,\alpha_1) = 1 - u_i^t(\theta_i,\alpha_0)$ for every type $\theta_i \in \Theta_i$ and round $t \in [T]$.
See \Cref{fig:hard-instance} for an illustration.

Once $\zeta$ and $\xi$ are fixed, the problem instance is determined.
Then, since the algorithm is deterministic, the algorithm's decisions $(\pi_i^t)_{t \in [T]}$ are also determined.
Furthermore, since in each round $t \in [T]$, the algorithm decides $\pi_i^t$ according to the rewards observed so far,
$\pi_i^t$ is deterministic if $(\zeta(\theta'_i)(b))_{\theta'_i \in \Theta'_i, b \in [\lceil (t-1)/L \rceil]}$ and $(\xi(\theta''_i)(t'))_{\theta''_i \in \Theta''_i, t' \in [t-1]}$ are fixed.

The untruthful swap regret $\RUSi^T$ depends on the problem instance parametrized by $\zeta$ and $\xi$.
We will develop an $\Omega(\sqrt{T \log |\Theta_i|})$ lower bound on the expected value $\E_{\zeta,\xi} \left[\RUSi^T\right]$.
This lower bound implies that there exists some $\zeta$ and $\xi$ such that $\RUSi^T = \Omega(\sqrt{T \log |\Theta_i|})$, which leads to the theorem.

\subsection{Analysis for randomly branching rewards}\label{sec:lb1}

The goal of this subsection is to prove that the algorithm with low untruthful swap regret must make significantly different decisions for types $\Theta'_i$ (\Cref{lem:gap-block}).
Toward this goal, we first show that the algorithm must select $\alpha_0$ or $\alpha_1$ for the type with all $0$ or all $1$ binary sequence, respectively (\Cref{lem:edge-type}).
We then show that the algorithm must make significantly different decisions for ``adjacent'' types (\Cref{lem:gap-type}).
By taking the expectation over $\zeta$ and $\xi$, we prove \Cref{lem:gap-block}.

First, we prove that if the untruthful swap regret is small, for the type $\theta^0_i \in \Theta'_i$ with $\zeta(\theta^0_i) = 00 \cdots 00$, the algorithm must choose $\alpha_0$ for most of the rounds.
Similarly, for the type $\theta^1_i \in \Theta'_i$ with $\zeta(\theta^1_i) = 11 \cdots 11$, the algorithm must choose $\alpha_1$ for most of the rounds.
\begin{lemma}\label{lem:edge-type}
Fix $\zeta$ and $\xi$.
For $\theta^0_i \in \Theta'_i$ such that $\zeta(\theta^0_i)(b) = 0$ for all $b \in [B]$, it must hold that
$\sum_{t=1}^T \pi_i^t(\theta^0_i;\alpha_0) \ge T - |\Theta_i| \RUSi^T$.
For $\theta^1_i \in \Theta'_i$ such that $\zeta(\theta^1_i)(b) = 1$ for all $b \in [B]$, it must hold that
$\sum_{t=1}^T \pi_i^t(\theta^1_i;\alpha_1) \ge T - |\Theta_i| \RUSi^T$.
\end{lemma}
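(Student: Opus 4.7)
The plan is to apply the definition of untruthful swap regret to a single well-chosen pair $(\psi,\phi)$ for each of the two statements. By the construction of the hard instance, the hypothesis $\zeta(\theta^0_i)(b)=0$ for every block $b\in[B]$ makes $\alpha_0$ the constant-optimal action at type $\theta^0_i$: specifically, $u_i^t(\theta^0_i,\alpha_0)=1$ and $u_i^t(\theta^0_i,\alpha_1)=0$ for every round $t\in[T]$.

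For the first inequality, I would take $\psi=\mathrm{id}_{\Theta_i}$ together with the action map $\phi$ defined by $\phi(\theta^0_i,\alpha_0)=\phi(\theta^0_i,\alpha_1)=\alpha_0$ and $\phi(\theta_i,\cdot)=\mathrm{id}_{A_i}$ for every $\theta_i\neq\theta^0_i$. Plugging $(\psi,\phi)$ into the definition of $\RUSi^T$, every summand at a type $\theta_i\neq\theta^0_i$ vanishes because $\phi$ acts as the identity there; the only surviving contribution is the $\theta^0_i$-summand, which in round $t$ equals
\[
u_i^t(\theta^0_i,\alpha_0)-\E_{a_i\sim\pi_i^t(\theta^0_i)}\bigl[u_i^t(\theta^0_i,a_i)\bigr]
\;=\;\pi_i^t(\theta^0_i;\alpha_1)\bigl(u_i^t(\theta^0_i,\alpha_0)-u_i^t(\theta^0_i,\alpha_1)\bigr)
\;=\;1-\pi_i^t(\theta^0_i;\alpha_0).
\]
Weighting by $\rho_i(\theta^0_i)=1/|\Theta_i|$ and summing over $t\in[T]$ then yields
\[
\RUSi^T\;\ge\;\frac{1}{|\Theta_i|}\Bigl(T-\sum_{t=1}^{T}\pi_i^t(\theta^0_i;\alpha_0)\Bigr),
\]
which rearranges to the claimed bound $\sum_{t=1}^{T}\pi_i^t(\theta^0_i;\alpha_0)\ge T-|\Theta_i|\RUSi^T$.

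The second inequality is proved by the symmetric choice: keep $\psi=\mathrm{id}_{\Theta_i}$ and now let $\phi(\theta^1_i,\alpha_0)=\phi(\theta^1_i,\alpha_1)=\alpha_1$, identity elsewhere. Under the hypothesis $\zeta(\theta^1_i)(b)=1$ for every $b$, action $\alpha_1$ is constant-optimal at $\theta^1_i$, so the identical computation with the roles of $\alpha_0$ and $\alpha_1$ exchanged delivers $\sum_{t=1}^{T}\pi_i^t(\theta^1_i;\alpha_1)\ge T-|\Theta_i|\RUSi^T$. There is no substantive obstacle: the lemma is immediate once these two transformations are fed into the definition of untruthful swap regret, and the rest of \Cref{sec:lb} will build on it to quantify how small untruthful swap regret forces separation between the algorithm's decisions on different types in $\Theta'_i$.
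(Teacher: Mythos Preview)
Your proposal is correct and essentially identical to the paper's proof: both choose $\psi=\mathrm{id}$ and the constant map $\phi(\theta^0_i,\cdot)\equiv\alpha_0$ (identity elsewhere), observe that the only nonzero contribution comes from type $\theta^0_i$ with weight $\rho_i(\theta^0_i)=1/|\Theta_i|$, and use $u_i^t(\theta^0_i,\alpha_0)=1$ to reduce the per-round term to $1-\pi_i^t(\theta^0_i;\alpha_0)$ before rearranging. The symmetric argument for $\theta^1_i$ is likewise the same.
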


\begin{proof}
Let $\theta^0_i \in \Theta'_i$ be the type satisfying $\zeta(\theta^0_i)(b) = 0$ for all $b \in [B]$.
We consider $\psi$ and $\phi$ in the definition of untruthful swap regret such that $\psi(\theta^0_i) = \theta^0_i$ and $\phi(\theta^0_i,\alpha_0) = \phi(\theta^0_i,\alpha_1) = \alpha_0$.
Let $\psi(\theta_i) = \theta_i$ and $\phi(\theta_i,\alpha) = \alpha$ for all the other types $\theta_i \in \Theta_i$ and any action $\alpha \in A_i$.
We thus obtain
\begin{align*}
\RUSi^T
&= \max_{\psi \colon \Theta_i \to \Theta_i} \max_{\phi \colon \Theta_i \times A_i \to A_i} \sum_{t=1}^T \E_{\theta_i \sim \rho_i} \left[ \E_{\alpha \sim \pi_i^t(\psi(\theta_i))} \left[ u_i^t(\theta, \phi(\theta_i, \alpha)) \right] - \E_{\alpha \sim \pi_i^t(\theta_i)} \left[ u_i^t(\theta_i, \alpha) \right] \right] \\
&\ge \frac{1}{|\Theta_i|} \sum_{t=1}^T \left\{ u_i^t(\theta^0_i, \alpha_0) - \E_{\alpha \sim \pi_i^t(\theta^0_i)} \left[ u_i^t(\theta^0_i, \alpha) \right] \right\}.
\end{align*}
Since $\zeta(\theta^0_i)(b) = 0$ for all $b \in [B]$, we have $u_i^t(\theta^0_i,\alpha_0) = 1$ for all $t \in [T]$, which implies
\begin{equation*}
\RUSi^T
\ge \frac{1}{|\Theta_i|} \left\{ T - \sum_{t =1}^T \pi_i^t(\theta^0_i; \alpha_0) \right\}.
\end{equation*}
Rearranging this inequality yields 
$\sum_{t=1}^T \pi_i^t(\theta^0_i;\alpha_0) \ge T - |\Theta_i| \RUSi^T$.
By applying the same argument to $\theta^1_i \in \Theta'_i$ such that $\zeta(\theta^1_i)(b) = 1$ for all $b \in [B]$, we obtain 
$\sum_{t=1}^T \pi_i^t(\theta^1_i;\alpha_1) \ge T - |\Theta_i| \RUSi^T$.
\end{proof}

\begin{figure}[t]
\centering
\begin{tikzpicture}
\newlength{\cw}
\setlength{\cw}{0.025\paperwidth}
\coordinate (o) at (0,0);
\node(t1) at (o) [align=right, anchor=north east]{$\nu_\zeta(\theta_i,b_1)$};
\node(t2) at (t1.south east) [align=right, anchor=north east]{$\nu_\zeta(\theta_i,b_2)$};
\node(t3) at (t2.south east) [align=right, anchor=north east]{$\theta_i = \nu_\zeta(\theta_i,b_3) = \nu_\zeta(\theta_i,b'_4)$};
\node(t4) at (t3.south east) [align=right, anchor=north east]{$\nu_\zeta(\theta_i,b'_3)$};
\node(t5) at (t4.south east) [align=right, anchor=north east]{$\nu_\zeta(\theta_i,b'_2)$};
\node(t6) at (t5.south east) [align=right, anchor=north east]{$\nu_\zeta(\theta_i,b'_1)$};
\node (r11) at (o) [anchor=north west, minimum width=\cw]{1};
\node (r21) at (t2.east) [anchor=west, minimum width=\cw]{1};
\node (r31) at (t3.east) [anchor=west, minimum width=\cw]{1};
\node (r41) at (t4.east) [anchor=west, minimum width=\cw]{1};
\node (r51) at (t5.east) [anchor=west, minimum width=\cw]{1};
\node (r61) at (t6.east) [anchor=west, minimum width=\cw]{0};
\node (r12) at (r11.east) [anchor=west, minimum width=\cw]{1};
\node (r22) at (r21.east) [anchor=west, minimum width=\cw]{0};
\node (r32) at (r31.east) [anchor=west, minimum width=\cw]{0};
\node (r42) at (r41.east) [anchor=west, minimum width=\cw]{0};
\node (r52) at (r51.east) [anchor=west, minimum width=\cw]{0};
\node (r62) at (r61.east) [anchor=west, minimum width=\cw]{0};
\node (r13) at (r12.east) [anchor=west, minimum width=\cw]{1};
\node (r23) at (r22.east) [anchor=west, minimum width=\cw]{1};
\node (r33) at (r32.east) [anchor=west, minimum width=\cw]{1};
\node (r43) at (r42.east) [anchor=west, minimum width=\cw]{1};
\node (r53) at (r52.east) [anchor=west, minimum width=\cw]{0};
\node (r63) at (r62.east) [anchor=west, minimum width=\cw]{0};
\node (r14) at (r13.east) [anchor=west, minimum width=\cw]{1};
\node (r24) at (r23.east) [anchor=west, minimum width=\cw]{1};
\node (r34) at (r33.east) [anchor=west, minimum width=\cw]{1};
\node (r44) at (r43.east) [anchor=west, minimum width=\cw]{0};
\node (r54) at (r53.east) [anchor=west, minimum width=\cw]{0};
\node (r64) at (r63.east) [anchor=west, minimum width=\cw]{0};
\node (r15) at (r14.east) [anchor=west, minimum width=\cw]{1};
\node (r25) at (r24.east) [anchor=west, minimum width=\cw]{1};
\node (r35) at (r34.east) [anchor=west, minimum width=\cw]{0};
\node (r45) at (r44.east) [anchor=west, minimum width=\cw]{0};
\node (r55) at (r54.east) [anchor=west, minimum width=\cw]{0};
\node (r65) at (r64.east) [anchor=west, minimum width=\cw]{0};
\node (r71) at ([yshift=-0.01\paperwidth]r61.south) [anchor=north, minimum width=\cw]{$b'_1$};
\node (r72) at ([yshift=-0.01\paperwidth]r62.south) [anchor=north, minimum width=\cw]{$b_1$};
\node (r73) at ([yshift=-0.01\paperwidth]r63.south) [anchor=north, minimum width=\cw]{$b'_2$};
\node (r74) at ([yshift=-0.01\paperwidth]r64.south) [anchor=north, minimum width=\cw]{$b'_3$};
\node (r75) at ([yshift=-0.01\paperwidth]r65.south) [anchor=north, minimum width=\cw]{$b_2$};
\node (rho) at (o) [anchor=south west] {$\zeta(\cdot)$};
\draw ([xshift=-0.1\paperwidth]o) -- (r15.north east);
\draw (rho.north west) -- ([yshift=-0.03\paperwidth]t6.south east);
\draw [dotted] (r11.north east) -- ([yshift=-0.03\paperwidth]r61.south east);
\draw [dotted] (r12.north east) -- ([yshift=-0.03\paperwidth]r62.south east);
\draw [dotted] (r13.north east) -- ([yshift=-0.03\paperwidth]r63.south east);
\draw [dotted] (r14.north east) -- ([yshift=-0.03\paperwidth]r64.south east);
\end{tikzpicture}
\caption{%
An example of the definition of $\nu_\zeta$.
Given $\theta_i \in \Theta'_i$, we denote the block indices corresponding to $0$ and $1$ by $b_1,b_2,\dots,b_{B_0}$ and $b'_1,b'_2,\dots,b'_{B_1}$, respectively.
We define the series of types $\nu_\zeta(\theta_i,b_1),\nu_\zeta(\theta_i,b_2),\dots,\nu_\zeta(\theta_i,b_{B_0}),\nu_\zeta(\theta_i,b_{B_0+1})=\theta_i$ such that their corresponding binary sequences gradually change from $11\cdots11$ to $\zeta(\theta_i)$.
Similarly, we define the series of types $\nu_\zeta(\theta_i,b'_1),\nu_\zeta(\theta_i,b'_2),\dots,\nu_\zeta(\theta_i,b'_{B_1}),\nu_\zeta(\theta_i,b'_{B_1+1})=\theta_i$ such that their corresponding binary sequences gradually change from $00\cdots00$ to $\zeta(\theta_i)$.
}\label{fig:type-sequence}
\end{figure}

The previous lemma shows that for the type with $\zeta(\theta_i) = 00\cdots00$ or $\zeta(\theta_i) = 11\cdots11$, the algorithm must choose the optimal action for most of the rounds.
For the other types, we cannot prove a similar claim since there exists a counterexample (see \Cref{sec:lb}).
However, we can show that the algorithm must make sufficiently different decisions for some pair of ``adjacent'' types.

Fix any $\theta_i \in \Theta'_i$.
To define the adjacency for $\theta_i$, we define a sequence of types as follows (see \Cref{fig:type-sequence} for example).
Let $B_0$ and $B_1$ be the numbers of blocks $b \in [B]$ such that $\zeta(\theta_i)(b) = 0$ and $\zeta(\theta_i)(b) = 1$, respectively.
We define the sorted indices $b_1 \le b_2 \le \dots \le b_{B_0}$ of blocks such that $\zeta(\theta_i)(b_k) = 0$ for each $k \in [B_0]$.
Similarly, we define the sorted indices $b'_1 \le b'_2 \le \dots \le b'_{B_1}$ of blocks such that $\zeta(\theta_i)(b'_{l}) = 1$ for each $l \in [B_1]$.
For notational convenience, we define $b_0 = b'_0 = 0$ and $b_{B_0+1} = b'_{B_1+1} = B+1$.

For each $k =1,2,\dots,B_0+1$, we define the type $\nu_\zeta(\theta_i,b_k) \in \Theta'_i$ such that $\zeta(\nu_\zeta(\theta_i,b_k))$ is the binary sequence whose first $b_{k-1}$ bits are identical to the prefix of $\zeta(\theta_i)$ and the remaining bits are all $1$, i.e., $\zeta(\nu_\zeta(\theta_i,b_k))(b) = \zeta(\theta_i)(b)$ for each $b \le b_{k-1}$ and $\zeta(\nu_\zeta(\theta_i,b_k))(b) = 1$ for each $b > b_{k-1}$.
Similarly, for each $l =1,2,\dots,B_1+1$, we define the type $\nu_\zeta(\theta_i,b'_l) \in \Theta'_i$ such that $\zeta(\nu_\zeta(\theta_{i},b'_l))$ is the binary sequence whose first $b'_{l-1}$ bits are identical to the prefix of $\zeta(\theta_i)$ and the remaining bits are all $0$.

In the following lemma, we show that the algorithm must make significantly different decisions for $\theta_i$ and $\nu_\zeta(\theta_i,b)$ in each block $b \in [B]$ if $\RUSi^T$ is small.

\begin{lemma}\label{lem:gap-type}
Fix $\zeta$ and $\xi$.
Let $\theta_i \in \Theta'_i$ be an arbitrary type.
Then it holds that
\begin{align*}
\sum_{k=1}^{B_0} \sum_{t \in \calT_{b_{k}}} \left\{ \pi_i^t(\theta_i;\alpha_0) - \pi_i^t(\nu_\zeta(\theta_i,b_{k});\alpha_0) \right\}
+
\sum_{l=1}^{B_1} \sum_{t \in \calT_{b'_{l}}} \left\{ \pi_i^t(\theta_i;\alpha_1) - \pi_i^t(\nu_\zeta(\theta_i,b'_{l});\alpha_1) \right\}
\ge \frac{T}{2} - 2 |\Theta_i| \RUSi^T.
\end{align*}
\end{lemma}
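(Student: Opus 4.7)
My plan is to apply the definition of untruthful swap regret (\Cref{def:untruthful}) with several carefully chosen type--action swap pairs $(\psi,\phi)$ and assemble the resulting inequalities, using \Cref{lem:edge-type} to anchor the decisions at the two endpoints of the chain $\nu_\zeta(\theta_i,b_1),\ldots,\nu_\zeta(\theta_i,b_{B_0+1})=\theta_i$ (and the analogous chain ending at $\nu_\zeta(\theta_i,b'_1)$). The $T/2$ term on the right-hand side will emerge from averaging two symmetric lower bounds of orders $LB_0$ and $LB_1$ coming from swaps that mapping $\theta_i$ toward the all-ones endpoint $\nu_\zeta(\theta_i,b_1)$ and the all-zeros endpoint $\nu_\zeta(\theta_i,b'_1)$, respectively.

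The first step is to apply \Cref{def:untruthful} with $\psi(\theta_i)$ set (successively) to each of the two endpoint types and $\phi$ the identity, while leaving $\psi$ and $\phi$ identity on all other types. Because \Cref{lem:edge-type} pins the endpoint decisions to one particular action up to an additive $|\Theta_i|\RUSi^T$ slack in the $\ell_1$ sum over $t$, the competitor's expected reward that appears on the right-hand side of the regret inequality evaluates to $LB_{*}-O(|\Theta_i|\RUSi^T)$ in each of the two cases. Rearranging then yields a lower bound on the algorithm's expected reward on $\theta_i$; using the identity $u_i^t(\theta_i,\alpha_0)+u_i^t(\theta_i,\alpha_1)=1$ and the block structure of the rewards, this reward rewrites as the $\sum\pi_i^t(\theta_i;\cdot)$ half of the target sum. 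Averaging the two bounds produces a $T/2-2|\Theta_i|\RUSi^T$ constant on the algorithm side.

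The second step is to handle the remaining $\sum_{t\in\calT_{b_k}}\pi_i^t(\nu_\zeta(\theta_i,b_k);\alpha_0)$ and $\sum_{t\in\calT_{b'_l}}\pi_i^t(\nu_\zeta(\theta_i,b'_l);\alpha_1)$ terms that survive after rewriting the target sum using $\pi(\alpha_0)+\pi(\alpha_1)=1$. For each intermediate type in the two chains I would invoke \Cref{def:untruthful} with $\psi$ the identity and $\phi$ forced to a constant action ($\alpha_0$ for the zero-chain and $\alpha_1$ for the one-chain). The resulting inequalities plug directly into the expression from the first step, isolating the target sum on one side with an error that accumulates to at most $2|\Theta_i|\RUSi^T$.

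The main obstacle I anticipate is the mismatch between the \emph{global} nature of each untruthful-swap-regret inequality---which bounds a sum over all blocks---and the \emph{localized} target sum, which is restricted to $\calT_{b_k}$ and $\calT_{b'_l}$. Separating per-block contributions requires exploiting the bit-structure of $\{\nu_\zeta(\theta_i,b_k)\}$, in which consecutive types differ in exactly one bit, so that the unwanted cross-block contributions from each swap either cancel telescopically along the chain or are absorbed by the $|\Theta_i|\RUSi^T$ error from \Cref{lem:edge-type} applied at the endpoints. Keeping the accumulated multiplicative constant at exactly $2$, rather than a larger number of applications of \Cref{def:untruthful}, is the delicate bookkeeping step that drives the choice of exactly which two anchor swaps to average and which $\phi$'s to use on the intermediate types.
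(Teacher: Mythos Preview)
Your first step is sound: swapping $\theta_i$ to each endpoint and invoking \Cref{lem:edge-type} does yield
\[
\sum_{k}\sum_{t\in\calT_{b_k}}\pi_i^t(\theta_i;\alpha_0)+\sum_{l}\sum_{t\in\calT_{b'_l}}\pi_i^t(\theta_i;\alpha_1)\ \ge\ \tfrac{T}{2}-2|\Theta_i|\RUSi^T.
\]
But notice that this already exhausts the entire error budget, and it bounds only the $\pi_i^t(\theta_i;\cdot)$ half of the target. You still need the subtracted terms $\sum_k\sum_{t\in\calT_{b_k}}\pi_i^t(\nu_\zeta(\theta_i,b_k);\alpha_0)$ (and the symmetric ones) to contribute nothing further---and they need not be small. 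An algorithm can have small $\RUSi^T$ and yet play $\alpha_0$ throughout block $b_k$ for the type $\nu_\zeta(\theta_i,b_k)$, as long as it compensates in other blocks; this is exactly the ``two contiguous blocks'' obstruction discussed in \Cref{sec:results}. So the $\nu_\zeta$ terms cannot be controlled on their own, and any argument that first isolates the $\theta_i$ part and then tries to bound the $\nu_\zeta$ part separately is doomed.

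Your proposed second step---$\psi$ the identity and $\phi$ constant on each intermediate type---does not produce the telescoping you correctly flag as necessary. Forcing $\phi(\nu_\zeta(\theta_i,b_k),\cdot)\equiv\alpha_1$ gives a single inequality comparing the algorithm's \emph{total} reward on $\nu_\zeta(\theta_i,b_k)$ to that of the constant action; this mixes all blocks and does not localize to $\calT_{b_k}$. The paper obtains the telescoping by a different device: a single application of the regret definition with the chain swap $\psi(\nu_\zeta(\theta_i,b_{k+1}))=\nu_\zeta(\theta_i,b_k)$ for all $k$ simultaneously (and $\phi$ the identity). Because consecutive chain types differ in exactly one bit, the reward differences decompose block-by-block, and summing over $k$ telescopes away the cross-block terms, leaving only the endpoints (handled by \Cref{lem:edge-type}) and the per-block differences $\pi_i^t(\theta_i;\cdot)-\pi_i^t(\nu_\zeta(\theta_i,b_k);\cdot)$ that the lemma targets. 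The symmetric chain through the $b'_l$ indices furnishes the second inequality, and averaging the two gives the constant $2$. The key point you are missing is that the telescoping must come from $\psi$ acting along the chain, not from $\phi$ acting on individual intermediate types.
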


\begin{proof}
We consider $\psi$ in the definition of untruthful swap regret such that $\psi(\nu(\theta_i,b_{k+1})) = \nu(\theta_i,b_k)$ for each $k \in [B_0]$.
Let $\psi(\theta_i) = \theta_i$ for all the other types $\theta_i \in \Theta_i$.
We use the identity map for actions, i.e., $\phi(\theta_i,\alpha) = \alpha$ for all types $\theta_i \in \Theta_i$ and actions $\alpha \in A_i$.
Then we obtain a lower bound on the untruthful swap regret as
\begin{align}
\RUSi^T
&\ge \frac{1}{|\Theta_i|} \sum_{k=1}^{B_0} \sum_{t=1}^T \left\{ \E_{\alpha \sim \pi_i^t(\nu_\zeta(\theta_i,b_k))}\left[ u_i^t(\nu_\zeta(\theta_i,b_{k+1}),\alpha) \right] - \E_{\alpha \sim \pi_i^t(\nu_\zeta(\theta_i,b_{k+1}))} \left[ u_i^t(\nu_\zeta(\theta_i,b_{k+1}), \alpha) \right] \right\}. \label{eq:regret-sequence}
\end{align}

From the definition of $\nu_\zeta(\theta_i,b_{k+1})$, the binary sequence $\zeta(\nu_\zeta(\theta_i,b_{k+1}))$ is identical to $\zeta(\theta_i)$ for the first $b_k$ blocks and all $1$ for the remaining $B - b_k$ blocks.
This implies that $u_i^t(\nu_\zeta(\theta_i,b_{k+1}),\alpha_0) = 1$ for all blocks with indices $b_1,b_2,\dots,b_k$, and $u_i^t(\nu_\zeta(\theta_i,b_{k+1}),\alpha_1) = 1$ for all blocks with indices $b_{k+1},b_{k+2}\dots,b_{B_0}$ and $b'_1,b'_2,\dots,b'_{B_1}$.
Each term on the right-hand side of \eqref{eq:regret-sequence} for $k \in [B_0]$ can be decomposed as
\begin{align*}
&\sum_{t=1}^T \left\{ \E_{\alpha \sim \pi_i^t(\nu_\zeta(\theta_i,b_k))} \left[ u_i^t(\nu_\zeta(\theta_i,b_{k+1}),\alpha) \right] - \E_{\alpha \sim \pi_i^t(\nu_\zeta(\theta_i,b_{k+1}))} \left[ u_i^t(\nu_\zeta(\theta_i,b_{k+1}),\alpha) \right] \right\}\\
&= \underbrace{\sum_{k'=1}^{k} \sum_{t \in \calT_{b_{k'}}} \left\{ \pi_i^t(\nu_\zeta(\theta_i,b_k);\alpha_0)-\pi_i^t(\nu_\zeta(\theta_i,b_{k+1});\alpha_0) \right\}}_{(A)}\\
&\quad + \underbrace{\sum_{k'=k+1}^{B_0} \sum_{t \in \calT_{b_{k'}}} \left\{ \pi_i^t(\nu_\zeta(\theta_i,b_k);\alpha_1)-\pi_i^t(\nu_\zeta(\theta_i,b_{k+1});\alpha_1) \right\}}_{(B)}\\
&\quad + \underbrace{\sum_{l=1}^{B_1} \sum_{t \in \calT_{b'_{l}}} \left\{ \pi_i^t(\nu_\zeta(\theta_i,b_k);\alpha_1)-\pi_i^t(\nu_\zeta(\theta_i,b_{k+1});\alpha_1) \right\}}_{(C)}.
\end{align*}
We take the summation of these terms over all $k = 1,\dots,B_0$.
The sum of (A) for all $k = 1,\dots,B_0$ is
\begin{align*}
&\sum_{k=1}^{B_0} \sum_{k'=1}^{k} \sum_{t \in \calT_{b_{k'}}} \left\{ \pi_i^t(\nu_\zeta(\theta_i,b_k);\alpha_0)-\pi_i^t(\nu_\zeta(\theta_i,b_{k+1});\alpha_0) \right\}\\
&=
\sum_{k'=1}^{B_0} \sum_{k=k'}^{B_0} \sum_{t \in \calT_{b_{k'}}} \left\{ \pi_i^t(\nu_\zeta(\theta_i,b_k);\alpha_0)-\pi_i^t(\nu_\zeta(\theta_i,b_{k+1});\alpha_0) \right\}\\
&=
\sum_{k'=1}^{B_0} \sum_{t \in \calT_{b_{k'}}} \left\{ \pi_i^t(\nu_\zeta(\theta_i,b_{k'});\alpha_0)-\pi_i^t(\nu_\zeta(\theta_i,b_{B_0+1});\alpha_0) \right\},
\end{align*}
where the first equality is obtained by changing the order of summations and the second equality is obtained by the telescoping sum.
Similarly, the sum of (B) for all $k = 1,\dots,B_0$ is
\begin{align*}
&\sum_{k=1}^{B_0} \sum_{k'=k+1}^{B_0} \sum_{t \in \calT_{b_{k'}}} \left\{ \pi_i^t(\nu_\zeta(\theta_i,b_k);\alpha_1)-\pi_i^t(\nu_\zeta(\theta_i,b_{k+1});\alpha_1) \right\}\\
&=
\sum_{k'=1}^{B_0} \sum_{k=1}^{k'-1} \sum_{t \in \calT_{b_{k'}}} \left\{ \pi_i^t(\nu_\zeta(\theta_i,b_k);\alpha_1)-\pi_i^t(\nu_\zeta(\theta_i,b_{k+1});\alpha_1) \right\}\\
&=
\sum_{k'=1}^{B_0} \sum_{t \in \calT_{b_{k'}}} \left\{ \pi_i^t(\nu_\zeta(\theta_i,b_1);\alpha_1)-\pi_i^t(\nu_\zeta(\theta_i,b_{k'});\alpha_1) \right\}.
\end{align*}
By using the telescoping sum, the sum of (C) for all $k = 1,\dots,B_0$ is 
\begin{align*}
&\sum_{k=1}^{B_0} \sum_{l=1}^{B_1} \sum_{t \in \calT_{b'_{l}}} \left\{ \pi_i^t(\nu_\zeta(\theta_i,b_k);\alpha_1)-\pi_i^t(\nu_\zeta(\theta_i,b_{k+1});\alpha_1) \right\}\\
&=
\sum_{l=1}^{B_1} \sum_{t \in \calT_{b'_{l}}} \left\{ \pi_i^t(\nu_\zeta(\theta_i,b_1);\alpha_1) - \pi_i^t(\nu_\zeta(\theta_i,b_{B_0+1});\alpha_1) \right\}.
\end{align*}
By substituting these terms into \eqref{eq:regret-sequence}, we obtain
\begin{align*}
|\Theta_i| \RUSi^T \ge \sum_{k=1}^{B_0} \sum_{t \in \calT_{b_k}} \left\{ \pi_i^t(\nu_\zeta(\theta_i,b_k);\alpha_0) - \pi_i^t(\theta_i;\alpha_0) + \pi_i^t(\nu_\zeta(\theta_i,b_1);\alpha_1) - \pi_i^t(\nu_\zeta(\theta_i,b_k);\alpha_1) \right\}\\
+ \sum_{l=1}^{B_1} \sum_{t \in \calT_{b'_l}} \left\{ \pi_i^t(\nu_\zeta(\theta_i,b_1);\alpha_1) - \pi_i^t(\theta_i;\alpha_1) \right\},
\end{align*}
where we used $\nu_\zeta(\theta_i,b_{B_0+1}) = \theta_i$.
Recall $\zeta(\nu_\zeta(\theta_i,b_1))(b) = 1$ for all $b \in [B]$.
From \Cref{lem:edge-type}, we have
\begin{equation*}
\sum_{k=1}^{B_0} \sum_{t \in \calT_{b_k}} \pi_i^t(\nu_\zeta(\theta_i,b_1);\alpha_1)
+ \sum_{l=1}^{B_1} \sum_{t \in \calT_{b'_l}} \pi_i^t(\nu_\zeta(\theta_i,b_1);\alpha_1)
\ge T - |\Theta_i| \RUSi^T.
\end{equation*}
By summing these two inequalities and rearranging, we obtain
\begin{align*}
\sum_{k=1}^{B_0} \sum_{t \in \calT_{b_k}} \left\{ \pi_i^t(\theta_i;\alpha_0) - \pi_i^t(\nu_\zeta(\theta_i,b_k);\alpha_0) + \pi_i^t(\nu_\zeta(\theta_i,b_k);\alpha_1) \right\}
+ \sum_{l=1}^{B_1} \sum_{t \in \calT_{b'_l}} \pi_i^t(\theta_i;\alpha_1)
\ge T - 2 |\Theta_i| \RUSi^T,
\end{align*}
where we removed $\sum_{k=1}^{B_0} \sum_{t \in \calT_{b_k}} \pi_i^t(\nu_\zeta(\theta_i,b_1);\alpha_1)$ and $\sum_{l=1}^{B_1} \sum_{t \in \calT_{b'_l}} \pi_i^t(\nu_\zeta(\theta_i,b_1);\alpha_1)$ from both sides.
Finally, by using $\pi_i^t(\nu_\zeta(\theta_i,b_k);\alpha_0) + \pi_i^t(\nu_\zeta(\theta_i,b_k);\alpha_1) = 1$ for each $t \in [T]$, we obtain
\begin{align*}
\sum_{k=1}^{B_0} \sum_{t \in \calT_{b_k}} \left\{ \pi_i^t(\theta_i;\alpha_0) - 2 \pi_i^t(\nu_\zeta(\theta_i,b_k);\alpha_0) + 1\right\}
+ \sum_{l=1}^{B_1} \sum_{t \in \calT_{b'_l}} \pi_i^t(\theta_i;\alpha_1)
&\ge T - 2 |\Theta_i| \RUSi^T.
\end{align*}
By using the symmetric analysis, we obtain
\begin{align*}
\sum_{k=1}^{B_0} \sum_{t \in \calT_{b_k}} \pi_i^t(\theta_i;\alpha_0)
+ \sum_{l=1}^{B_1} \sum_{t \in \calT_{b'_l}} \left\{ \pi_i^t(\theta_i;\alpha_1) - 2 \pi_i^t(\nu_\zeta(\theta_i,b'_l);\alpha_1) + 1\right\}
\ge T - 2 |\Theta_i| \RUSi^T.
\end{align*}
By summing these two inequalities, we obtain
\begin{align*}
\sum_{k=1}^{B_0} \sum_{t \in \calT_{b_k}} \left\{ 2 \pi_i^t(\theta_i;\alpha_0) - 2 \pi_i^t(\nu_\zeta(\theta_i,b_k);\alpha_0) + 1\right\}
+ \sum_{l=1}^{B_1} \sum_{t \in \calT_{b'_l}} \left\{ 2 \pi_i^t(\theta_i;\alpha_1) - 2 \pi_i^t(\nu_\zeta(\theta_i,b'_l);\alpha_1) + 1\right\}\\
\ge 2 T - 4 |\Theta_i| \RUSi^T.
\end{align*}
Since $\sum_{k=1}^{B_0} \sum_{t \in \calT_{b_k}} 1 + \sum_{l=1}^{B_1} \sum_{t \in \calT_{b'_l}} 1 = \sum_{t \in [T]} 1 = T$, by dividing both sides by $2$, we complete the proof.
\end{proof}

The previous lemma claims that compared to the decisions for the adjacent type $\nu_\zeta(\theta_i,b)$, the algorithm must choose the optimal action for type $\theta_i$ more frequently in each block $b \in [B]$.
The next lemma compares the algorithm's decisions in block $b \in [B]$ for types whose binary sequences first branch in block $b$, not only adjacent types.

For each $b \in [B]$ and a prefix $p \in \{0,1\}^{b-1}$, 
let $\calI_{p,1}$ and $\calI_{p,1}$ be the sets of types whose corresponding binary sequences have the prefix $p$ and the next bit is $0$ or $1$, respectively.
Formally,
we define $\calI_{p,0} = \{ \theta_i \in \Theta'_i \mid \zeta(\theta_i)(b') = p(b') ~ (\forall b' < b), ~ \zeta(\theta_i)(b) = 0 \}$
and $\calI_{p,1} = \{ \theta_i \in \Theta'_i \mid \zeta(\theta_i)(b') = p(b') ~ (\forall b' < b), ~ \zeta(\theta_i)(b) = 1 \}$.
Note that $\calI_{p,0}$ and $\calI_{p,1}$ are random variables determined by $\zeta_b$.
Let $\calI_p = \calI_{p,0} \cup \calI_{p,1}$ be their union, which is determined by $\zeta_{b-1}$.

In block $b$, the optimal action is $\alpha_0$ for types in $\calI_{p,0}$ and $\alpha_1$ for types in $\calI_{p,1}$.
In the following lemma, we show that the average decisions for $\calI_{p,0}$ and $\calI_{p,1}$ are sufficiently different with some constant probability.

\begin{lemma}\label{lem:gap-block}
Assume
\begin{equation*}
\E_{\zeta,\xi} \left[ \RUSi^T \right] < 2^{-28} \sqrt{T \log_2 |\Theta_i|}.
\end{equation*}
Then it holds that
\begin{align*}
\Pr_{\substack{ b \sim [B], \zeta_b,\xi_{b-1}, \\ p \sim \{0,1\}^{b-1} }} \left(
\E_{\xi_b} \left[ \E_{\theta_i \sim \calI_{p,0}} \left[ \sum_{t \in \calT_b} \pi_i^t(\theta_i;\alpha_0) \right] - \E_{\theta_i \sim \calI_{p,1}} \left[ \sum_{t \in \calT_b} \pi_i^t(\theta_i;\alpha_0) \right] \;\middle|\; \zeta_b,\xi_{b-1} \right]
\ge \frac{L}{8}
\right) \ge \frac{1}{8}.
\end{align*}
\end{lemma}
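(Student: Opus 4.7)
I would derive Lemma~\ref{lem:gap-block} by applying Lemma~\ref{lem:gap-type} to each $\theta\in\Theta'_i$, summing, taking expectations, regrouping in terms of the prefix $p$ of $\zeta(\theta)$, removing an ``edge'' correction by a symmetry argument in $\zeta$, and finally applying a reverse Markov inequality.

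The first step is algebraic. Fix $\theta\in\calI_{p,0}$ and let $b=b_k$ be the $k$-th $0$-position of $\zeta(\theta)$. Since between consecutive $0$'s of $\zeta(\theta)$ lies a run of $1$'s, the proof of Lemma~\ref{lem:gap-type} shows that $\nu_\zeta(\theta,b)$ has the sequence $\zeta(\theta)|_{[1,b_{k-1}]}\cdot 1^{B-b_{k-1}}$, which equals $p\cdot 1^{B-b+1}$; thus $\nu_\zeta(\theta,b)=\theta^\oplus_p:=\zeta^{-1}(p\cdot 1^{B-b+1})\in\calI_{p,1}$. Symmetrically, $\nu_\zeta(\theta,b)=\theta^\ominus_p:=\zeta^{-1}(p\cdot 0^{B-b+1})\in\calI_{p,0}$ when $\theta\in\calI_{p,1}$. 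Using $\pi_i^t(\cdot;\alpha_1)=1-\pi_i^t(\cdot;\alpha_0)$, $|\calI_{p,0}|=|\calI_{p,1}|=2^{B-b}$, and grouping $\sum_\theta$ by $(b,p)$ yields
\begin{equation*}
\sum_{\theta\in\Theta'_i}\mathrm{LHS}_\theta=\sum_{b=1}^{B}2^{B-b}\sum_{p\in\{0,1\}^{b-1}}\sum_{t\in\calT_b}\bigl(G(b,p,t)+E(b,p,t)\bigr),
\end{equation*}
where $\mathrm{LHS}_\theta$ is the left-hand side of Lemma~\ref{lem:gap-type}, $G(b,p,t)=\E_{\theta\sim\calI_{p,0}}[\pi_i^t(\theta;\alpha_0)]-\E_{\theta\sim\calI_{p,1}}[\pi_i^t(\theta;\alpha_0)]$ is the target gap, and $E(b,p,t)=\pi_i^t(\theta^\ominus_p;\alpha_0)-\pi_i^t(\theta^\oplus_p;\alpha_0)$ is an ``edge'' correction. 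Lemma~\ref{lem:gap-type} gives $\E[\mathrm{LHS}_\theta]\ge T/2-2|\Theta_i|\E[\RUSi^T]$; summing over $\Theta'_i$ and using $\E[\RUSi^T]<2^{-28}\sqrt{T\log_2|\Theta_i|}$ together with $T\ge 2^{-49}|\Theta_i|^2\log_2|\Theta_i|$ shows $\E[\sum_\theta\mathrm{LHS}_\theta]\ge 2^BT/4$. Dividing by $2^B$ and rewriting $\sum_p=2^{b-1}\E_p$ and $\sum_b=B\E_b$ converts this into $\E_{b,p,\zeta,\xi}[\sum_{t\in\calT_b}(G+E)]\ge L/2$.

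The crux is to show $\E[\sum_{t\in\calT_b}E]=\E[\sum_{t\in\calT_b}G]$, which collapses the last display to $\E_{b,p,\zeta,\xi}[\sum_{t\in\calT_b}G]\ge L/4$. For a round $t\in\calT_b$ after the first, condition on $\mathcal{G}=\sigma(\zeta_{\le b},\xi_{<t})$: then $\pi_i^t$ is $\mathcal{G}$-measurable, while $\theta^\ominus_p=\zeta^{-1}(p\cdot 0^{B-b+1})$ remains a uniform random element of $\calI_{p,0}$ (since $\zeta$ restricted to $\calI_{p,0}$ and blocks $>b$ is a uniform random bijection onto $\{0,1\}^{B-b}$) and $\theta^\oplus_p$ is uniform in $\calI_{p,1}$ independently, which gives $\E[E(b,p,t)\mid\mathcal{G}]=G(b,p,t)$. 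For the first round of block $b$, $\zeta_b$ itself is still random given $\mathcal{F}=\sigma(\zeta_{<b},\xi_{<t})$, so $(\calI_{p,0},\calI_{p,1})$ is a uniform random balanced partition of $\calI_p$ and $\theta^\oplus_p,\theta^\ominus_p$ are marginally uniform in $\calI_p$, yielding $\E[E\mid\mathcal{F}]=\E[G\mid\mathcal{F}]=0$. Combining these cases gives the claimed equality, and hence $\E_{b,p,\zeta,\xi}[\sum_{t\in\calT_b}G]\ge L/4$.

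Finally, set $Y=\E_{\xi_b}[\sum_{t\in\calT_b}G(b,p,t)\mid\zeta_b,\xi_{b-1}]$. The tower property gives $\E_{b,p,\zeta_b,\xi_{b-1}}[Y]\ge L/4$, and clearly $Y\le L$ pointwise, so the reverse Markov bound $L/4\le\E[Y]\le L\cdot\Pr[Y\ge L/8]+L/8$ yields $\Pr[Y\ge L/8]\ge 1/8$, which is the lemma. The hard part will be the symmetry argument of the previous paragraph: one must pick the conditioning $\sigma$-algebras so that $\pi_i^t$ is measurable while the labeling within $\calI_{p,0},\calI_{p,1}$ stays uniform, and the first round of block $b$ requires separate treatment because $\zeta_b$ is revealed only afterward.
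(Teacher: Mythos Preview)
Your proof is correct and takes essentially the same approach as the paper: both average Lemma~\ref{lem:gap-type} over $\Theta'_i$, use the key symmetry that conditioned on $\zeta_{\le b}$ the type $\nu_\zeta(\theta,b)$ is uniformly distributed in the appropriate $\calI_{p,\cdot}$, and finish with a reverse Markov inequality (the paper's Lemma~\ref{lem:gap-lower-bound}). Your explicit $G+E$ decomposition is just a reorganization of the paper's order of expectations, and your separate treatment of the first round of block $b$ is in fact unnecessary---conditioning on $\mathcal{G}=\sigma(\zeta_{\le b},\xi_{<t})$ already makes $\pi_i^t$ measurable and leaves $\theta^\ominus_p,\theta^\oplus_p$ uniform for every $t\in\calT_b$.
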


\begin{proof}
Since \Cref{lem:gap-type} holds for any $\theta_i \in \Theta'_i$, it holds in expectation for any distribution of $\theta_i$.
In this proof, assuming that $\theta_i$ follows the uniform distribution over $\Theta'_i$, we take the expectation of \Cref{lem:gap-type} with respect to $\zeta$, $\xi$, and $\theta_i$.

The expected value of the right-hand side of \Cref{lem:gap-type} is bounded as
\begin{align}
\E_{\zeta,\xi,\theta_i} \left[ \frac{T}{2} - 2 |\Theta_i| \RUSi^T \right]
=
\frac{T}{2} - 2 |\Theta_i| \E_{\zeta,\xi} \left[ \RUSi^T \right]
\ge
\frac{T}{4}, \label{eq:gap-type-rhs}
\end{align}
where the inequality is due to the assumptions $\E_{\zeta,\xi} \left[ \RUSi^T \right] < 2^{-28} \sqrt{T \log_2 |\Theta_i|}$ and $T \ge 2^{-49} |\Theta_i|^2 \log_2 |\Theta_i|$.

The expected value of the left-hand side of \Cref{lem:gap-type} can be decomposed for each block as
\begin{align}
&\E_{\zeta,\xi,\theta_i} \left[ \sum_{k=1}^{B_0} \sum_{t \in \calT_{b_{k}}} \left\{ \pi_i^t(\theta_i;\alpha_0) - \pi_i^t(\nu_\zeta(\theta_i,b_k);\alpha_0) \right\}
+
\sum_{l=1}^{B_1} \sum_{t \in \calT_{b'_{l}}} \left\{ \pi_i^t(\theta_i;\alpha_1) - \pi_i^t(\nu_\zeta(\theta_i,b'_l);\alpha_1) \right\} \right] \nonumber \\
&=
\sum_{b=1}^{B} 
\left\{ 
\Pr_{\zeta,\xi,\theta_i} (\zeta(\theta_i)(b)=0)
\E_{\zeta,\xi,\theta_i} \left[ \sum_{t \in \calT_b} \left\{ \pi_i^t(\theta_i;\alpha_0) - \pi_i^t(\nu_\zeta(\theta_i,b);\alpha_0) \right\} \;\middle|\; \zeta(\theta_i)(b)=0 \right] \right. \nonumber \\
&\qquad +
\left. \Pr_{\zeta,\xi,\theta_i} (\zeta(\theta_i)(b)=1)
\E_{\zeta,\xi,\theta_i} \left[ \sum_{t \in \calT_b} \left\{ \pi_i^t(\nu_\zeta(\theta_i,b);\alpha_0) - \pi_i^t(\theta_i;\alpha_0) \right\} \;\middle|\; \zeta(\theta_i)(b)=1 \right]
\right\}, \label{eq:gap-type-lhs}
\end{align}
where we used $\pi_i^t(\theta_i;\alpha_0) + \pi_i^t(\theta_i;\alpha_1) = 1$ and $\pi_i^t(\nu_\zeta(\theta_i,b);\alpha_0) + \pi_i^t(\nu_\zeta(\theta_i,b);\alpha_1) = 1$ for each $t \in [T]$.
We take the expectation separately for each $b \in [B]$.

First, we take the expectation conditioned on observations from the first block to the $b$th block.
Let $\zeta_b$ and $\xi_b$ be the part of $\zeta$ and $\xi$ that is revealed until the end of block $b \in [B]$.
Formally, we define $\zeta_b = (\zeta(\theta_i)(b'))_{\theta_i \in \Theta'_i, b' \in [b]}$ and $\xi_b = (\xi(\theta_i)(t))_{\theta_i \in \Theta''_i, t \in [bL]}$.
If we condition on $\zeta_b$ and $\xi_b$, the algorithm's outputs $(\pi_i^t)_{t \in [bL]}$ until the end of block $b$ are deterministic.
Therefore, conditioned on $\zeta_b$, $\xi_b$, and $\theta_i$, only $\nu_\zeta(\theta_i,b)$ is stochastic in the term of which we take the expectation in \eqref{eq:gap-type-lhs}.
Recall that $\nu_\zeta(\theta_i,b)$ is defined to be the type with the corresponding binary sequence equal to $\zeta(\theta_i)$ before block $b-1$ and all $1$ (or $0$) after block $b$ if $\zeta(\theta_i)(b)$ is $0$ (or $1$, respectively).
Let $p \in \{0,1\}^{b-1}$ be a random variable that represents the prefix of $\zeta(\theta_i)$.
Since $\zeta$ revealed after block $b$ is uniformly distributed, if the $b$th bit $\zeta(\theta_i)(b)$ is $0$, then $\nu_\zeta(\theta_i,b)$ is distributed uniformly over $\calI_{p,1}$.
We thus obtain
\begin{align}
\E_{\zeta,\xi} \left[ \sum_{t \in \calT_b} \left\{ \pi_i^t(\theta_i;\alpha_0) - \pi_i^t(\nu_\zeta(\theta_i,b);\alpha_0) \right\} \;\middle|\; \zeta_b,\xi_b \right]
=
\sum_{t \in \calT_b} \left\{ \pi_i^t(\theta_i;\alpha_0) - \E_{\theta'_i \sim \calI_{p,1}} \left[ \pi_i^t(\theta'_i;\alpha_0) \right] \right\}. \label{eq:gap-type-lhs1}
\end{align}
Next, we take the expectation over $\theta_i$ conditioned on $\zeta_b$ and $\xi_b$.
Since $\theta_i$ is uniformly distributed over $\Theta'_i$, the prefix $p$ follows the uniform distribution over $\{0,1\}^{b-1}$.
Then we obtain
\begin{align}
&\E_{\theta_i} \left[ \sum_{t \in \calT_b} \left\{ \pi_i^t(\theta_i;\alpha_0) - \E_{\theta'_i \sim \calI_{p,1}} \left[ \pi_i^t(\theta'_i;\alpha_0) \right] \right\} \;\middle|\; \zeta(\theta_i)(b) = 0, \zeta_b,\xi_b \right] \nonumber \\
&= \E_{p \sim \{0,1\}^{b-1}} \left[ \E_{\theta_i} \left[ \sum_{t \in \calT_b} \left\{ \pi_i^t(\theta_i;\alpha_0) - \E_{\theta'_i \sim \calI_{p,1}} \left[ \pi_i^t(\theta'_i;\alpha_0) \right] \right\} \;\middle|\; \zeta(\theta_i)(b) = 0, \text{$p$ is a prefix of $\zeta(\theta_i)$}, \zeta_b,\xi_b \right] \right] \nonumber \\
&= \E_{p \sim \{0,1\}^{b-1}} \left[ \sum_{t \in \calT_b} \left\{ \E_{\theta_i \sim \calI_{p,0}} \left[ \pi_i^t(\theta_i;\alpha_0) \right] - \E_{\theta_i \sim \calI_{p,1}} \left[ \pi_i^t(\theta_i;\alpha_0) \right] \right\} \right]. \label{eq:gap-type-lhs2}
\end{align}
Finally, we take the expectation over $\zeta_b$ and $\xi_b$, which determine $\pi_i^t$ for each $t \in [bL]$ and the partition $(\calI_{p,0}, \calI_{p,1})$ for each $p \in \{0,1\}^{b-1}$.
By combining it with \eqref{eq:gap-type-lhs1} and \eqref{eq:gap-type-lhs2}, we obtain
\begin{align*}
&\E_{\zeta,\xi,\theta_i} \left[ \sum_{t \in \calT_b} \left\{ \pi_i^t(\theta_i;\alpha_0) - \pi_i^t(\nu_\zeta(\theta_i,b);\alpha_0) \right\} \;\middle|\; \zeta(\theta_i)(b) = 0 \right]\\
&= \E_{\zeta,\xi,p \sim \{0,1\}^{b-1}} \left[ \sum_{t \in \calT_b} \left\{ \E_{\theta_i \sim \calI_{p,0}} \left[ \pi_i^t(\theta_i;\alpha_0) \right] - \E_{\theta_i \sim \calI_{p,1}} \left[ \pi_i^t(\theta_i;\alpha_0) \right] \right\} \right].
\end{align*}
By applying the same argument to the case of $\zeta(\theta_i)(b) = 1$, we obtain
\begin{align*}
&\E_{\zeta,\xi,\theta_i} \left[ \sum_{t \in \calT_b} \left\{ \pi_i^t(\nu_\zeta(\theta_i,b);\alpha_0) - \pi_i^t(\theta_i;\alpha_0) \right\} \;\middle|\; \zeta(\theta_i)(b) = 1 \right]\\
&= \E_{\zeta,\xi,p \sim \{0,1\}^{b-1}} \left[ \sum_{t \in \calT_b} \left\{ \E_{\theta_i \sim \calI_{p,0}} \left[ \pi_i^t(\theta_i;\alpha_0) \right] - \E_{\theta_i \sim \calI_{p,1}} \left[ \pi_i^t(\theta_i;\alpha_0) \right] \right\} \right].
\end{align*}
By plugging these expected values into \eqref{eq:gap-type-lhs} and bounding it by the right-hand side \eqref{eq:gap-type-rhs}, we obtain
\begin{align*}
&\sum_{b \in [B]} \E_{\zeta,\xi,p \sim \{0,1\}^{b-1}} \left[
\sum_{t \in \calT_{b}} \left\{ \E_{\theta_i \sim \calI_{p,0}} \left[ \pi_i^t(\theta_i;\alpha_0) \right] - \E_{\theta_i \sim \calI_{p,1}} \left[ \pi_i^t(\theta_i;\alpha_0) \right]
\right\} \right]
\ge \frac{T}{4}.
\end{align*}
By dividing both sides by $B$ and taking the conditional expectation, we obtain
\begin{align*}
&\E_{b \sim [B],\zeta_b,\xi_{b-1},p \sim \{0,1\}^{b-1}} \left[
\E_{\zeta,\xi} \left[
\sum_{t \in \calT_{b}} \left\{ \E_{\theta_i \sim \calI_{p,0}} \left[ \pi_i^t(\theta_i;\alpha_0) \right] - \E_{\theta_i \sim \calI_{p,1}} \left[ \pi_i^t(\theta_i;\alpha_0) \right]
\right\} \;\middle|\; \zeta_b,\xi_{b-1} \right] \right]
\ge \frac{L}{4}.
\end{align*}
Since $\calI_{p,0}$, $\calI_{p,1}$, and $(\pi_i^t)_{t \in \calT_b}$ are determined only by $\zeta_b$ and $\xi_b$, we can replace the expectation with respect to $\zeta$ and $\xi$ by the expectation with respect to $\xi_b$.
By applying \Cref{lem:gap-lower-bound} proved below, we complete the proof.
\end{proof}

In this proof, we used the following lemma.
\begin{lemma}\label{lem:gap-lower-bound}
Fix $x, y \in \bbR$ such that $0 < y < x$.
Let $X$ be a random variable whose value is at most $x$.
If $\E[X] \ge y$, then $\Pr(X \ge y/2) \ge \frac{y}{2x}$.
\end{lemma}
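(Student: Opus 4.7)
The plan is to prove this by a Markov-type splitting argument: decompose $\E[X]$ according to whether $X$ lies above or below the threshold $y/2$, bound each piece separately, and then solve for $\Pr(X \ge y/2)$. Specifically, I will write
\begin{equation*}
\E[X] = \E[X \bfone_{\{X \ge y/2\}}] + \E[X \bfone_{\{X < y/2\}}]
\end{equation*}
and handle the two summands in turn.

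For the first summand, the hypothesis $X \le x$ gives $\E[X \bfone_{\{X \ge y/2\}}] \le x \Pr(X \ge y/2)$. For the second, on the event $\{X < y/2\}$ we have the pointwise bound $X \bfone_{\{X < y/2\}} \le (y/2) \bfone_{\{X < y/2\}}$, so $\E[X \bfone_{\{X < y/2\}}] \le (y/2) \Pr(X < y/2) \le y/2$. Combining these two bounds with the hypothesis $\E[X] \ge y$ yields
\begin{equation*}
y \le \E[X] \le x \Pr(X \ge y/2) + y/2,
\end{equation*}
which rearranges directly to the desired inequality $\Pr(X \ge y/2) \ge y/(2x)$.

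There is no real obstacle; the argument is elementary and reduces to two inequalities once the decomposition is written down. The one mildly subtle point worth flagging is that no lower bound on $X$ is assumed, so $X$ could in principle be very negative. The splitting argument succeeds nonetheless because on $\{X < y/2\}$ we are upper-bounding $X$ by the strictly larger constant $y/2$, which only weakens the bound on $\E[X \bfone_{\{X < y/2\}}]$ rather than corrupts it. This permissiveness matches exactly the use case in \Cref{lem:gap-block}, where $X$ is a difference of two sums of conditional probabilities of playing $\alpha_0$ and can a priori take negative values.
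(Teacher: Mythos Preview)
Your proof is correct and follows essentially the same Markov-type argument as the paper: the paper applies Markov's inequality to the nonnegative variable $x-X$ and then passes through an $\epsilon$-perturbation, whereas your direct decomposition of $\E[X]$ over $\{X \ge y/2\}$ and $\{X < y/2\}$ unfolds the same inequality inline and reaches the conclusion without the $\epsilon$ step. The underlying idea is identical, and your version is if anything slightly cleaner.
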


\begin{proof}
By Markov's inequality to $x - X$, for any $\epsilon > 0$, we obtain
\begin{equation*}
\Pr\left(X \le \frac{y}{2} + \epsilon \right)
= \Pr\left(x - X \ge x - \frac{y}{2} - \epsilon \right)
\le \frac{\E[x-X]}{x-\frac{y}{2}-\epsilon}
\le \frac{x-y}{x-\frac{y}{2}-\epsilon}
= 1 - \frac{\frac{y}{2}-\epsilon}{x-\frac{y}{2}-\epsilon}
\end{equation*}
By setting sufficiently small $\epsilon \in (0, \frac{y^2}{4x-2y})$, we obtain
$\Pr\left(X \le \frac{y}{2} + \epsilon \right)
\le 1 - \frac{y}{2x}$.
By considering the complementary event, we obtain $\Pr(X \ge y/2) \ge \Pr(X > y/2 + \epsilon) \ge \frac{y}{2x}$.
\end{proof}

\subsection{Assignment of types}\label{sec:lb2}

The goal of this subsection is to define an assignment $\psi'_{\zeta,\xi} \colon \Theta''_i \to \Theta'_i$ such that against the reward sequence for each $\theta''_i \in \Theta''_i$, the decision for $\psi'_{\zeta,\xi}(\theta''_i) \in \Theta'_i$ achieves significantly better than $T/2$.
In the previous subsection, we showed that if the expected untruthful swap regret is small, the algorithm must take decisions different for $\calI_{p,0} \subseteq \Theta'_i$ and $\calI_{p,1} \subseteq \Theta'_i$ with some constant probability (\Cref{lem:gap-block}).
We consider assigning these decisions to the rewards for each $\theta''_i \in \Theta''_i$.
Since these rewards are determined by the independent random variables $\xi$, the total reward obtained by any decision is $L/2$ in expectation for every block.
Intuitively, if the decisions for $\calI_{p,0}$ and $\calI_{p,1}$ are sufficiently different, then the better one fits the rewards for $\theta''_i \in \Theta''_i$, thus achieving the expected reward $L/2 + \Omega(\sqrt{L})$ due to anti-concentration.

A simple but failed approach is to decide $\psi'_{\zeta,\xi}(\theta''_i)$ independently for each $\theta''_i \in \Theta''_i$.
For each $\zeta$ and $\xi$, this approach iteratively selects $\calI_{p,0}$ or $\calI_{p,1}$ whose average decision achieves better for the rewards of $\theta''_i$ as follows.
First, let $p$ be the empty sequence.
For each $b = 1,2,\dots,B$, we assign $\theta''_i$ to $0$ if the average decision for $\calI_{p,0}$ is better than that for $\calI_{p,1}$ against the reward sequence of $\theta''_i$, i.e.,
\begin{equation*}
\E_{\theta'_i \sim \calI_{p,0}} \left[ \sum_{t \in \calT_b} \E_{\alpha \sim \pi_i^t(\theta'_i)} \left[ u_i^t(\theta''_i, \alpha) \right] \right]
\ge
\E_{\theta'_i \sim \calI_{p,1}} \left[ \sum_{t \in \calT_b} \E_{\alpha \sim \pi_i^t(\theta'_i)} \left[ u_i^t(\theta''_i, \alpha) \right] \right]
\end{equation*}
and $1$ otherwise, and we then add the selected $0$ or $1$ to the end of $p$.
Finally, we set $\psi'_{\zeta,\xi}(\theta''_i) = \zeta^{-1}(p)$, where $\zeta^{-1} \colon \{0,1\}^B \to \Theta'_i$ is the inverse map of the bijection $\zeta$.
This assignment process is done for each $\theta''_i \in \Theta''_i$ independently.
This approach has the following two issues.
\begin{itemize}
\item
Since the assignment is independent for each $\theta''_i \in \Theta''_i$, the assigned types $(\psi'_{\zeta,\xi}(\theta''_i))_{\theta''_i \in \Theta''_i}$ might be unbalanced, that is, for some $\zeta_{b-1}$, $\xi_{b-1}$, and $p \in \{0,1\}^{b-1}$, there is no $\theta''_i \in \Theta''_i$ that is assigned to $\calI_{p,0}$ or $\calI_{p,1}$ conditioned on $\zeta_{b-1}$ and $\xi_{b-1}$.
If the average decisions for $\calI_{p,0}$ and $\calI_{p,1}$ are sufficiently different \textit{for all $p \in \{0,1\}^{b-1}$ with any $b \in [B]$}, then this assignment is enough.
However, \Cref{lem:gap-block} guarantees this condition only with a constant fraction of $p$ and $b$.
If the algorithm is manipulatively designed, this condition may fail to hold with a significant probability for each $\zeta_{b-1}$, $\xi_{b-1}$, and $p$ that $\theta''_i \in \Theta''_i$ is assigned to.

\item
The assignment for each block is decided according to the average decisions for $\calI_{p,0}$ and $\calI_{p,1}$.
However, to obtain a lower bound on the untruthful swap regret, we need to prove a lower bound for the decision for $\psi_{\zeta,\xi}(\theta''_i)$, not the average decision for $\calI_{p,0}$ or $\calI_{p,1}$.
The selection of $\psi_{\zeta,\xi}(\theta''_i)$ from $\calI_{p,0}$ or $\calI_{p,1}$ depends on the assignment in the future blocks $b,b+1,\dots,B$.
If the algorithm is manipulatively designed, the distribution of $\psi_{\zeta,\xi}(\theta''_i)$ can be far from the uniform distribution over $\calI_{p,0}$ or $\calI_{p,1}$.
\end{itemize}

To overcome these two obstacles, we use another definition of $\psi'_{\zeta,\xi}$.
The assignment process for each $\zeta$ and $\xi$ proceeds as follows.
Let $\calJ = \Theta''_i$, which is supposed to be $\calJ_p$ with the empty sequence $p$.
For each $b = 1,2,\dots,B$ and each $p \in \{0,1\}^{b-1}$, we partition $\calJ_p$ into two subsets $\calJ_{p,0}$ and $\calJ_{p,1}$ of equal size.
To decide this partition, we add an arbitrary index for each type in $\calJ_p$ as $\calJ_p = \{j_1,j_2,\dots,j_{2^{B-b+1}}\}$.
For each $k \in [2^{B-b}]$, we add one of $\{j_{2k-1}, j_{2k}\}$ into $\calJ_{p,0}$ and the other into $\calJ_{p,1}$.
Conditioned on $\zeta_b$ and $\xi_{b-1}$, we consider the distribution of
\begin{equation}
\label{eq:quantile}
\E_{\theta'_i \sim \calI_{p,0}} \left[ \sum_{t \in \calT_b} \E_{\alpha \sim \pi_i^t(\theta'_i)} \left[ u_i^t(j_{2k-1}, \alpha) - u_i^t(j_{2k}, \alpha) \right] \right]
-
\E_{\theta'_i \sim \calI_{p,1}} \left[ \sum_{t \in \calT_b} \E_{\alpha \sim \pi_i^t(\theta'_i)} \left[ u_i^t(j_{2k-1}, \alpha) - u_i^t(j_{2k}, \alpha) \right] \right],
\end{equation}
whose randomness comes from the randomness of $\xi_b$.
This value compares the average decisions for $\calI_{p,0}$ and $\calI_{p,1}$ in terms of the performance difference for the rewards of $j_{2k-1}$ and $j_{2k}$.
For the upper $50\%$ of this distribution, we set $j_{2k-1} \in \calJ_{p,0}$ and $j_{2k} \in \calJ_{p,1}$.
For the lower $50\%$ of this distribution, we set $j_{2k-1} \in \calJ_{p,1}$ and $j_{2k} \in \calJ_{p,0}$.
Since this is a discrete distribution, there might be arbitrariness in how to divide the mass on the median, but any division is allowed.
In the end of this process, for each $p \in \{0,1\}^B$, the assigned type $\calJ_p \subseteq \Theta''_i$ is a singleton.
Recall that $\calI_p$ is also a singleton for each $p \in \{0,1\}^B$.
For each $p \in \{0,1\}^B$, we define $\psi'_{\zeta,\xi}(\theta''_i) = \theta'_i$, where $\theta'_i \in \calI_p$ and $\theta''_i \in \calJ_p$ are the unique element of $\calI_p$ and $\calJ_p$, respectively.

This assignment addresses the two issues described above as follows.
\begin{itemize}
\item
The first issue is caused when $(\psi'_{\zeta,\xi}(\theta''_i))_{\theta''_i \in \Theta''_i}$ is unbalanced over $\Theta'_i$.
To avoid this issue, we consider all types in $\Theta''_i$ at the same time and partition them into equally sized subsets in each block.
This guarantees that $\psi'_{\zeta,\xi} \colon \Theta''_i \to \Theta'_i$ is always a bijection.
Therefore, for any $b \in [B]$, $p \in \{0,1\}^{b-1}$, $\zeta_{b-1}$, and $\xi_{b-1}$ such that the average decisions are sufficiently different for $\calI_{p,0}$ and $\calI_{p,1}$, there is some $\theta''_i$ assigned to it.

\item
The second issue is addressed by assigning each type in $\calJ_p$ to $\calJ_{p,0}$ and $\calJ_{p,1}$ with equal probability.
Since this partition applies recursively, $\psi'_{\zeta,\xi}(\theta''_i)$ follows the uniform distribution over $\calI_p$.
This is formally stated in the following lemma.
\end{itemize}

\begin{lemma}\label{lem:psi-dist}
Let $b \in [B]$ be an arbitrary block index.
Let $\zeta_b$ and $\xi_b$ be any realization of $\zeta$ and $\xi$ for the first $b$ blocks.
Let $p \in \{0,1\}^b$ be an arbitrary bit sequence of length $b$.
For any pair of $\theta'_i \in \calI_p$ and $\theta''_i \in \calJ_p$, it holds that
\begin{equation*}
\Pr_{\zeta,\xi} \left(\psi'_{\zeta,\xi}(\theta''_i) = \theta'_i \;\middle|\; \zeta_b,\xi_b \right) = \frac{1}{2^{B-b}}.
\end{equation*}
\end{lemma}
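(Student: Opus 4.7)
The plan is to argue by downward induction on $b$. The base case $b = B$ is immediate, since $\calI_p$ and $\calJ_p$ are singletons and the construction forces $\psi'_{\zeta,\xi}(\theta''_i) = \theta'_i$ deterministically for the unique pair, giving probability $1 = 1/2^0$. For the inductive step, assume the claim holds at level $b+1$ and fix any $\theta'_i \in \calI_p$ and $\theta''_i \in \calJ_p$ with $p \in \{0,1\}^b$. Conditioning on the additional randomness $(\zeta_{b+1}, \xi_{b+1})$ determines the sub-partitions $\calI_p = \calI_{p,0} \cup \calI_{p,1}$ and $\calJ_p = \calJ_{p,0} \cup \calJ_{p,1}$, each of size $2^{B-b-1}$ (by bijectivity of $\zeta$ for the $\calI$ side and by construction for the $\calJ$ side). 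Since the recursive pairing can match $\theta''_i$ with $\theta'_i$ only if they both lie in the same branch, the tower property combined with the inductive hypothesis yields
\begin{equation*}
\Pr_{\zeta,\xi}\left( \psi'_{\zeta,\xi}(\theta''_i) = \theta'_i \;\middle|\; \zeta_b, \xi_b \right)
= \frac{1}{2^{B-b-1}} \cdot \Pr_{\zeta,\xi}\left( \exists c \in \{0,1\}: \theta'_i \in \calI_{p,c},~ \theta''_i \in \calJ_{p,c} \;\middle|\; \zeta_b, \xi_b \right).
\end{equation*}

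It then suffices to show that the right-hand probability equals $1/2$. I would condition further on $\zeta_{b+1}$, which fixes $\calI_{p,0}, \calI_{p,1}$ and hence the unique index $c^* \in \{0,1\}$ with $\theta'_i \in \calI_{p,c^*}$, and then compute the probability of $\theta''_i \in \calJ_{p,c^*}$ over the remaining randomness in $\xi_{b+1}$. Let $k^*$ be such that $\theta''_i \in \{j_{2k^*-1}, j_{2k^*}\}$ in the arbitrary listing of $\calJ_p$. By construction, the sibling pair $(j_{2k^*-1}, j_{2k^*})$ is split into $\calJ_{p,0}$ and $\calJ_{p,1}$ according to whether the value \eqref{eq:quantile} falls in the upper or lower half of its distribution over $\xi_{b+1}$ (conditional on $\zeta_b, \xi_b, \zeta_{b+1}$). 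By the very definition of the 50\%-cutoff---with any tie-breaking rule producing two halves of mass exactly $1/2$, as the construction explicitly permits---each placement of $\theta''_i$ occurs with probability exactly $1/2$, regardless of the value of $c^*$. Averaging over $\zeta_{b+1}$ preserves this, and multiplying by $1/2^{B-b-1}$ gives the desired $1/2^{B-b}$.

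The main subtlety I expect is bookkeeping the $\sigma$-fields so that the 50\%-50\% step is genuinely applicable. Specifically, one must ensure that the cutoff defining $\calJ_{p,0}, \calJ_{p,1}$ is measurable with respect to $(\zeta_b, \xi_b, \zeta_{b+1})$ and that the residual randomness in $\xi_{b+1}$ used to evaluate \eqref{eq:quantile} is independent of the event $\theta'_i \in \calI_{p,c^*}$ (which is already determined by $\zeta_{b+1}$). Once this independence is in hand, the discrete median tie-breaking issue is the only remaining technicality, and it is resolved by appealing to the arbitrariness of the split on the median mass that the construction allows.
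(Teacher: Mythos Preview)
Your proof is correct and follows essentially the same approach as the paper's: both arguments show that at each level the probability of $\theta'_i$ and $\theta''_i$ landing in the same branch is exactly $1/2$, using the median-split construction for the $\calJ$ side. The only cosmetic difference is that the paper fixes $b$ and runs a forward induction on an auxiliary index $b' = b, b+1, \ldots, B$ tracking $\Pr(\calF_{b'} \mid \zeta_b, \xi_b)$, whereas you run a backward induction directly on the lemma parameter; your organization is arguably cleaner since it avoids invoking the uniformity of $\zeta$ (you simply condition on $\zeta_{b+1}$ to fix $c^*$ and use only the $50\%$--$50\%$ property of the $\calJ$ split).
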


\begin{proof}
Fix any $b \in [B]$, $\zeta_b$, $\xi_b$, $p \in \{0,1\}^b$, $\theta'_i \in \calI_p$, and $\theta''_i \in \calJ_p$.
Let $\calF_{b'}$ be the event that $\theta'_i$ and $\psi'_{\zeta,\xi}(\theta''_i)$ are in the same partition in the $b'$th block.
Formally, $\calF_{b'}$ is the event that $\theta'_i \in \calI_{p'}$ and $\theta''_i \in \calJ_{p'}$ hold for some $p' \in \{0,1\}^{b'}$.
By induction, we prove
\begin{equation}\label{eq:induction}
\Pr_{\zeta,\xi} \left(\calF_{b'} \;\middle|\; \zeta_b,\xi_b \right) = \frac{1}{2^{b'-b}}
\end{equation}
for each $b' \in [B]$ such that $b \le b' \le B$.

For the case of $b' = b$, since $\theta'_i \in \calI_p$ and $\theta''_i \in \calJ_p$, we obtain
\begin{equation*}
\Pr_{\zeta,\xi} \left(\calF_b \;\middle|\; \zeta_b,\xi_b \right) = 1,
\end{equation*}
which proves the base case.

Fix any $b' \in [B]$ such that $b \le b' \le B-1$.
Assume \eqref{eq:induction} holds for $b'$.
For any $p' \in \{0,1\}^{b'}$ that has $p$ as a prefix, we have
\begin{align}
&\Pr_{\zeta,\xi} \left( \theta'_i \in \calI_{p',0}, \theta''_i \in \calJ_{p',0} \;\middle|\; \zeta_b,\xi_b \right) \nonumber\\
&= 
\Pr_{\zeta,\xi} \left( \theta'_i \in \calI_{p'}, \theta''_i \in \calJ_{p'} \;\middle|\; \zeta_b,\xi_b \right) \nonumber\\
&\times \Pr_{\zeta,\xi} \left( \theta'_i \in \calI_{p',0} \;\middle|\; \zeta_b,\xi_b,\theta'_i \in \calI_{p'}, \theta''_i \in \calJ_{p'} \right)
\Pr_{\zeta,\xi} \left( \theta''_i \in \calJ_{p',0} \;\middle|\; \zeta_b,\xi_b,\theta'_i \in \calI_{p',0}, \theta''_i \in \calJ_{p'} \right). \label{eq:assignment}
\end{align}
Conditioned on $\zeta_{b'}$ and $\xi_{b'}$, whether $\theta'_i \in \calI_{p',0}$ or $\theta'_i \in \calI_{p',1}$ holds is determined only by the randomness of the $(b'+1)$th block of $\zeta$.
The second factor of \eqref{eq:assignment} is
\begin{align}
&\Pr_{\zeta,\xi} \left( \theta'_i \in \calI_{p',0} \;\middle|\; \zeta_b,\xi_b,\theta'_i \in \calI_{p'}, \theta''_i \in \calJ_{p'} \right) \nonumber \\
&= \E_{\zeta_{b'},\xi_{b'}} \left[
\Pr_{\zeta,\xi} \left( \theta'_i \in \calI_{p',0} \;\middle|\; \zeta_{b'},\xi_{b'},\theta'_i \in \calI_{p'}, \theta''_i \in \calJ_{p'} \right)
\;\middle|\; \zeta_{b},\xi_{b},\theta'_i \in \calI_{p'}, \theta''_i \in \calJ_{p'} \right]
\nonumber \\
&= \E_{\zeta_{b'},\xi_{b'}} \left[
\frac{1}{2}
\;\middle|\; \zeta_{b},\xi_{b},\theta'_i \in \calI_{p'}, \theta''_i \in \calJ_{p'} \right]
\nonumber \\
&= \frac{1}{2}. \label{eq:assignment1}
\end{align}
Conditioned on $\zeta_{b'+1}$ and $\xi_{b'}$, whether $\theta''_i \in \calJ_{p',0}$ or $\theta''_i \in \calJ_{p',1}$ is determined only by the randomness of $\xi_{b'+1}$.
Since we decide it according to whether the realized $\xi_{b'+1}$ is in the upper $50\%$ or lower $50\%$ of the distribution of some random variable, the third factor of \eqref{eq:assignment} is
\begin{align}
&\Pr_{\zeta,\xi} \left( \theta''_i \in \calJ_{p',0} \;\middle|\; \zeta_b,\xi_b,\theta'_i \in \calI_{p',0}, \theta''_i \in \calJ_{p'} \right) \nonumber \\
&= \E_{\zeta_{b'+1},\xi_{b'}} \left[
\Pr_{\zeta,\xi} \left( \theta''_i \in \calJ_{p',0} \;\middle|\; \zeta_{b'+1},\xi_{b'},\theta'_i \in \calI_{p',0}, \theta''_i \in \calJ_{p'} \right)
\;\middle|\; \zeta_{b},\xi_{b},\theta'_i \in \calI_{p',0}, \theta''_i \in \calJ_{p'} \right]
\nonumber \\
&= \E_{\zeta_{b'+1},\xi_{b'}} \left[
\frac{1}{2}
\;\middle|\; \zeta_{b},\xi_{b},\theta'_i \in \calI_{p',0}, \theta''_i \in \calJ_{p'} \right]
\nonumber \\
&= \frac{1}{2}.
\label{eq:assignment2}
\end{align}
By plugging \eqref{eq:assignment1} and \eqref{eq:assignment2} into \eqref{eq:assignment}, we obtain
\begin{align*}
\Pr_{\zeta,\xi} \left( \theta'_i \in \calI_{p',0}, \theta''_i \in \calJ_{p',0} \;\middle|\; \zeta_b,\xi_b \right)
= 
\frac{1}{4} \Pr_{\zeta,\xi} \left( \theta'_i \in \calI_{p'}, \theta''_i \in \calJ_{p'} \;\middle|\; \zeta_b,\xi_b \right).
\end{align*}
In the same way, we can obtain
\begin{equation*}
\Pr_{\zeta,\xi} \left( \theta'_i \in \calI_{p',1}, \theta''_i \in \calJ_{p',1} \;\middle|\; \zeta_b,\xi_b \right)
= \frac{1}{4} \Pr_{\zeta,\xi} \left( \theta'_i \in \calI_{p'}, \theta''_i \in \calJ_{p'} \;\middle|\; \zeta_b,\xi_b \right).
\end{equation*}
Then the probability of $\calF_{b'+1}$ is
\begin{align*}
&\Pr_{\zeta,\xi} \left(\calF_{b'+1} \;\middle|\; \zeta_b,\xi_b \right)\\
&=
\sum_{\substack{p' \in \{0,1\}^{b'+1}}}
\Pr_{\zeta,\xi} \left( \theta'_i \in \calI_{p'}, \theta''_i \in \calJ_{p'} \;\middle|\; \zeta_b,\xi_b \right)\\
&=
\sum_{\substack{p' \in \{0,1\}^{b'}}}
\left\{
\Pr_{\zeta,\xi} \left( \theta'_i \in \calI_{p',0}, \theta''_i \in \calJ_{p',0} \;\middle|\; \zeta_b,\xi_b \right)
+
\Pr_{\zeta,\xi} \left( \theta'_i \in \calI_{p',1}, \theta''_i \in \calJ_{p',1} \;\middle|\; \zeta_b,\xi_b \right)
\right\} \\
&=
\sum_{\substack{p' \in \{0,1\}^{b'}}}
\left\{
\frac{1}{4} \Pr_{\zeta,\xi} \left( \theta'_i \in \calI_{p'}, \theta''_i \in \calJ_{p'} \;\middle|\; \zeta_b,\xi_b \right)
+
\frac{1}{4} \Pr_{\zeta,\xi} \left( \theta'_i \in \calI_{p'}, \theta''_i \in \calJ_{p'} \;\middle|\; \zeta_b,\xi_b \right)
\right\} \\
&=
\frac{1}{2} \sum_{\substack{p' \in \{0,1\}^{b'}}}
\Pr_{\zeta,\xi} \left( \theta'_i \in \calI_{p'}, \theta''_i \in \calJ_{p'} \;\middle|\; \zeta_b,\xi_b \right)\\
&=
\frac{1}{2} \Pr_{\zeta,\xi} \left( \calF_{b'} \;\middle|\; \zeta_b,\xi_b \right)\\
&= \frac{1}{2} \cdot \frac{1}{2^{b'-b}} \tag{from the induction hypothesis}\\
&= \frac{1}{2^{b'-b+1}},
\end{align*}
which proves the induction step.

By induction, \eqref{eq:induction} holds for each $b \le b' \le B$.
By considering the case of $b' = B$, we obtain
\begin{equation*}
\Pr_{\zeta,\xi} \left(\calF_B \;\middle|\; \zeta_b,\xi_b \right) = \frac{1}{2^{B-b}}.
\end{equation*}
From the definition of $\psi'_{\zeta,\xi}$, the event $\calF_B$ implies $\psi'_{\zeta,\xi}(\theta''_i) = \theta'_i$, which completes the proof.
\end{proof}

\subsection{Analysis of the expected reward in each block}\label{sec:lb3}

In \Cref{sec:lb1}, we showed that the average decisions for $\calI_{p,0}$ and $\calI_{p,1}$ are sufficiently different with a constant probability.
In \Cref{sec:lb2}, we defined the assignment $\psi'_{\zeta,\xi} \colon \Theta''_i \to \Theta'_i$ such that $\calJ_{p,0}$ and $\calJ_{p,1}$ correspond to $\calI_{p,0}$ and $\calI_{p,1}$ in each block, respectively.
The next lemma claims that if the average decisions for $\calI_{p,0}$ and $\calI_{p,1}$ are sufficiently different, the average decisions for $\calI_{p,0}$ and $\calI_{p,1}$ achieve $L/2 + \Omega(\sqrt{L})$ for the rewards of $\calJ_{p,0}$ and $\calJ_{p,1}$, respectively.

The proof proceeds in the following steps.
First, by introducing variables $Y_{\theta''_i}^t$ and $Z_{\theta''_i}^t$ for each $\theta''_i \in \Theta''_i$ and $t \in \calT_b$, we express the expected value achieved by the assignment as \eqref{eq:sqrt-gap-lhs1}.
The first term of \eqref{eq:sqrt-gap-lhs1} is equal to $L/2$ as shown in \eqref{eq:sqrt-gap-lhs4}.
The remaining term of \eqref{eq:sqrt-gap-lhs1} can be written as a random walk term $D_k^\tmax$ with a sign $h_k(\xi_b)$ as shown in \eqref{eq:sqrt-gap-lhs2}.
To bound it by a term without $h_k(\xi_b)$, we define a more tractable $h'_k(\xi_b)$ and obtain \eqref{eq:sqrt-gap-lhs3}.
We bound \eqref{eq:sqrt-gap-lhs3} by using Doob's optional stopping theorem.
Combining all these inequalities, we prove the following lemma.

\begin{lemma}\label{lem:martingale}
Let $b \in [B]$ be an arbitrary block index.
Let $\zeta_b$ be any partial realization of $\zeta$ revealed until the end of the $b$th block.
Let $\xi_{b-1}$ be any partial realization of $\xi$ revealed until the end of the $(b-1)$th block.
Let $p \in \{0,1\}^{b-1}$ be an arbitrary bit sequence of length $b-1$.
Let $c \in [0,1]$ be an arbitrary non-negative constant.
If
\begin{equation*}
\E_{\xi_b} \left[ \sum_{t \in \calT_b} \left| \E_{\theta_i \sim \calI_{p,0}} \left[ \pi_i^t(\theta_i; \alpha_0) \right]
-
\E_{\theta_i \sim \calI_{p,1}} \left[ \pi_i^t(\theta_i; \alpha_0) \right] \right| \;\middle|\; \zeta_b,\xi_{b-1} \right]
\ge c L,
\end{equation*}
then
\begin{align}
\E_{\xi_b} \left[
\frac{1}{2} \E_{\substack{\theta'_i \sim \calI_{p,0} \\ \theta''_i \sim \calJ_{p,0} }} \left[ \sum_{t \in \calT_b} \E_{\alpha \sim \pi_i^t(\theta'_i)} \left[ u_i^t(\theta''_i, \alpha) \right] \right]
+
\frac{1}{2} \E_{\substack{ \theta'_i \sim \calI_{p,1} \\ \theta''_i \sim \calJ_{p,1} }} \left[ \sum_{t \in \calT_b} \E_{\alpha \sim \pi_i^t(\theta'_i)} \left[ u_i^t(\theta''_i, \alpha) \right] \right]
\;\middle|\; \zeta_b, \xi_{b-1} \right] \nonumber \\
\ge \frac{L}{2} + 2^{-10} c^{9/2}\sqrt{L}. \label{eq:sqrt-gap}
\end{align}
\end{lemma}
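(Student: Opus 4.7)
The plan is to unpack the LHS by tracking the quantity that $\psi'$ uses when assigning pairs $(j_{2k-1}, j_{2k}) \subseteq \calJ_p$, and then to lower bound the resulting random-walk term via a martingale argument. For each $\theta''_i \in \calJ_p$, abbreviate
\begin{equation*}
Y_{\theta''_i}^t = \E_{\theta'_i \sim \calI_{p,0}}\!\left[\E_{\alpha \sim \pi_i^t(\theta'_i)}[u_i^t(\theta''_i,\alpha)]\right], \quad Z_{\theta''_i}^t = \E_{\theta'_i \sim \calI_{p,1}}\!\left[\E_{\alpha \sim \pi_i^t(\theta'_i)}[u_i^t(\theta''_i,\alpha)]\right],
\end{equation*}
and let $h_k \in \{+1,-1\}$ record whether $\psi'$ places $j_{2k-1}$ in $\calJ_{p,0}$ or in $\calJ_{p,1}$. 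A routine rearrangement shows that the LHS of \eqref{eq:sqrt-gap} equals
\begin{equation*}
\frac{1}{2|\calJ_p|}\sum_{\theta''_i \in \calJ_p}\sum_{t\in\calT_b}(Y_{\theta''_i}^t + Z_{\theta''_i}^t) \;+\; \frac{1}{2|\calJ_p|}\sum_k h_k D_k,
\end{equation*}
where $D_k = \sum_{t \in \calT_b}(Y_{j_{2k-1}}^t + Z_{j_{2k}}^t - Y_{j_{2k}}^t - Z_{j_{2k-1}}^t)$ is exactly the quantity whose position in the conditional distribution determines the upper/lower-half assignment rule of $\psi'$.

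The first summand evaluates to $L/2$ in expectation over $\xi_b$: since $u_i^t(\theta''_i, \alpha_0)$ is an independent uniform bit and $u_i^t(\theta''_i,\alpha_1) = 1 - u_i^t(\theta''_i,\alpha_0)$, a direct computation gives $\E_{\xi_b}[\E_{\alpha \sim \pi_i^t(\theta'_i)}[u_i^t(\theta''_i,\alpha)]] = 1/2$ for any $\pi_i^t$ measurable with respect to history predating time $t$. For the second summand, using the same reward-flipping identity one simplifies
\begin{equation*}
D_k = 2\sum_{t \in \calT_b} g^t \bigl(u_i^t(j_{2k-1},\alpha_0) - u_i^t(j_{2k},\alpha_0)\bigr), \qquad g^t := \E_{\theta'_i \sim \calI_{p,0}}[\pi_i^t(\theta'_i;\alpha_0)] - \E_{\theta'_i \sim \calI_{p,1}}[\pi_i^t(\theta'_i;\alpha_0)].
\end{equation*}
Since $g^t$ is measurable with respect to the filtration up to round $t-1$ while the pair-reward difference is symmetric and conditionally mean zero, the partial sums $(D_k^t)_{t \in \calT_b}$ form a mean-zero martingale with $\E[(D_k^\tmax)^2] = 2\E_{\xi_b}[\sum_t(g^t)^2]$. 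The hypothesis combined with Cauchy--Schwarz yields $\E_{\xi_b}[\sum_t(g^t)^2] \geq c^2 L$, and a fourth-moment expansion using pairwise independence of the reward bits gives $\E[(D_k^\tmax)^4] = O(L \cdot \E_{\xi_b}[\sum_t (g^t)^2])$.

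To lower-bound $\E[h_k D_k^\tmax]$, I would avoid working with $h_k$ directly (since it is keyed to the conditional median of $D_k$ rather than the filtration) and instead introduce the stopping time $\tau_k = \min\{t \in \calT_b : |D_k^t| \geq \lambda\} \wedge \tmax$ for a threshold $\lambda = \Theta(c\sqrt{L})$, together with the surrogate sign $h'_k = \mathrm{sign}(D_k^{\tau_k})$, which is $\calF_{\tau_k}$-measurable. Doob's optional stopping theorem applied to the martingale $D_k^t$ gives
\begin{equation*}
\E[h'_k D_k^\tmax] = \E[|D_k^{\tau_k}|] \geq \lambda\,\Pr(\tau_k < \tmax) \geq \lambda\,\Pr(|D_k^\tmax| \geq \lambda),
\end{equation*}
and Paley--Zygmund applied to $(D_k^\tmax)^2$ (using the second- and fourth-moment bounds above) yields the matching anti-concentration bound. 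Transferring the resulting estimate from $h'_k$ to the median-based $h_k$ supplied by $\psi'$ requires showing that, on an event of probability controlled by a power of $c$, the median sign $h_k = \mathrm{sign}(D_k^\tmax - m_k)$ agrees with a suitable stopping-time sign; together with the losses in the anti-concentration step, this yields the claimed $c^{9/2}\sqrt{L}$ gain after summing over $k$ and dividing by $|\calJ_p|$.

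The main obstacle is precisely this transfer step. The sign $h_k$ is not free: it is forced on us by the conditional-median rule in the definition of $\psi'$, so the clean optional-stopping identity $\E[h'_k D_k^\tmax] = \E[|D_k^{\tau_k}|]$ does not apply directly, and relating $\E[h_k D_k^\tmax] = \E[|D_k^\tmax - m_k|]$ to the stopping-time bound requires carefully quantifying the discrepancy between the median and the stopping-time sign while simultaneously bounding $|m_k|$ in terms of $\sqrt{\E[(D_k^\tmax)^2]}$. Verifying that the adversarial dependence of $g^t$ on prior rewards preserves the martingale structure and the fourth-moment control is a further delicate point, though it follows from the $\calF_{t-1}$-measurability of $g^t$. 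These cumulative losses degrade the exponent from the natural $c$ suggested by a central-limit heuristic down to $c^{9/2}$ in the final bound.
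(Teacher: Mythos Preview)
Your decomposition is correct and matches the paper: the LHS splits as $L/2$ plus $\frac{1}{2|\calJ_p|}\sum_k \E[h_k D_k^{\tmax}]$, and each $D_k^t$ is a mean-zero martingale whose increments are $\pm g^t$ (up to your factor of $2$). The identification of this structure, and the observation that the assignment quantity \eqref{eq:quantile} equals a constant times $D_k^{\tmax}$, are exactly right.

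The genuine gap is the transfer step you flag as the main obstacle, and the reason it feels hard is that you are missing the one-line observation that dissolves it. Because the rule for $h_k$ sets $h_k=+1$ precisely on the upper half of the conditional distribution of $D_k^{\tmax}$, the sign $h_k$ \emph{maximizes} $\E[h\,D_k^{\tmax}]$ over all $\{\pm1\}$-valued $h$ with $\Pr(h=+1)=\tfrac12$. Hence there is no need to compare $h_k$ to your stopping-time sign pointwise or to control the median $m_k$: it suffices to exhibit \emph{any} $h'_k$ satisfying the $\tfrac12$--$\tfrac12$ constraint and lower-bound $\E[h'_k D_k^{\tmax}]$. Your candidate $h'_k=\mathrm{sign}(D_k^{\tau_k})$ does not satisfy this constraint (the process need not hit positive and negative thresholds with equal probability), which is why the transfer looks hard from your angle. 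The paper instead builds $h'_k$ from the event $\calF_+=\{\exists t:D_k^t>\tfrac14\sqrt{c^3L}\}$, padding or trimming against the distribution of $D_k^{\tmax}$ to force $\Pr(h'_k=+1)=\tfrac12$, and then does a short case analysis (on whether $\Pr(\calF_+)\gtrless\tfrac12$ and on the sign of the boundary value) to obtain $\E[h'_k D_k^{\tmax}]\ge\tfrac12\Pr(\calF)\,\E[D_k^{\tmax}\mid\calF_+]$.

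A secondary difference: to lower-bound $\Pr(\calF)$ the paper does not use Paley--Zygmund on $(D_k^{\tmax})^2$ but applies Doob's optional stopping to the martingale $(D_k^t)^2-\tfrac12\sum_{t'\le t}(g^{t'})^2$ at the first-exit time $t^*$, which directly relates $\Pr(\calF)$ to $\E[\sum_t(g^t)^2]$. Combined with a Markov-type step turning the first-moment hypothesis on $|g^t|$ into a lower bound on $\E[\sum_t(g^t)^2]\ge c^3L/8$, this yields $\Pr(\calF)\ge c^3/64$; together with $\E[D_k^{\tmax}\mid\calF_+]\ge\tfrac14\sqrt{c^3L}$ (a second optional-stopping argument), the product gives the $2^{-10}c^{9/2}\sqrt{L}$ in the statement. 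Your Cauchy--Schwarz route to $\E[\sum_t(g^t)^2]\ge c^2L$ is actually sharper than the paper's $c^3L/8$, so once you supply the optimality-of-$h_k$ observation and an $h'_k$ satisfying the balance constraint, your approach would likely recover the bound with a better power of $c$.
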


\begin{proof}
Fix $b \in [B]$ and $p \in \{0,1\}^{b-1}$.
We define random variables
\begin{equation*}
X_0^t = \E_{\theta'_i \sim \calI_{p,0}} \left[ \pi_i^t(\theta'_i;\alpha_0) \right]
\qquad \text{and} \qquad
X_1^t = \E_{\theta'_i \sim \calI_{p,1}} \left[ \pi_i^t(\theta'_i;\alpha_0) \right]
\end{equation*}
that represent the average decisions for $\calI_{p,0}$ and $\calI_{p,1}$ for each $t \in \calT_b$.
Note that they are deterministic if $\zeta_b$ and $\xi_b$ are fixed.
If we apply the average decision for $\calI_{p,0}$ to the rewards for each $\theta''_i \in \calJ_p$, the expected total reward in round $t \in \calT_b$ is
\begin{align*}
&\E_{\theta'_i \sim \calI_{p,0}} \left[ \E_{\alpha \sim \pi_i^t(\theta'_i)} \left[ u_i^t(\theta''_i,\alpha) \right] \right]\\
&= \E_{\theta'_i \sim \calI_{p,0}} \left[ \pi_i^t(\theta'_i;\alpha_0) u_i^t(\theta''_i,\alpha_0) + \pi_i^t(\theta'_i;\alpha_1) u_i^t(\theta''_i,\alpha_1) \right]\\
&= \E_{\theta'_i \sim \calI_{p,0}} \left[ \pi_i^t(\theta'_i;\alpha_0) \right]  u_i^t(\theta''_i,\alpha_0)
+ \E_{\theta'_i \sim \calI_{p,0}} \left[ \pi_i^t(\theta'_i;\alpha_1) \right] u_i^t(\theta''_i,\alpha_1)\\
&= X_0^t \xi(\theta''_i)(t) + (1-X_0^t) (1 - \xi(\theta''_i)(t)).
\end{align*}
Similarly, if we apply the average decision for $\calI_{p,1}$, the expected total reward is
\begin{align*}
\E_{\theta'_i \sim \calI_{p,1}} \left[ \E_{\alpha \sim \pi_i^t(\theta'_i)} \left[ u_i^t(\theta''_i,\alpha) \right] \right] 
= X_1^t \xi(\theta''_i)(t) + (1-X_1^t) (1 - \xi(\theta''_i)(t)).
\end{align*}
For each $\theta''_i \in \Theta''_i$ and $t \in \calT_b$, we define random variables
\begin{equation*}
Y_{\theta''_i}^t = \frac{X_0^t+X_1^t}{2} \xi(\theta''_i)(t) + \left( 1-\frac{X_0^t+X_1^t}{2}\right) (1 - \xi(\theta''_i)(t))
\end{equation*}
and
\begin{equation*}
Z_{\theta''_i}^t = \frac{X_0^t-X_1^t}{2} \xi(\theta''_i)(t) - \frac{X_0^t-X_1^t}{2} (1 - \xi(\theta''_i)(t)),
\end{equation*}
which are determined by $\zeta_b$ and $\xi_b$.
The total reward of the average decisions for $\calI_{p,0}$ in block $b$ is
\begin{align*}
\sum_{t \in \calT_b} \E_{\theta'_i \sim \calI_{p,0}} \left[ \E_{\alpha \sim \pi_i^t(\theta'_i)} \left[ u_i^t(\theta''_i,\alpha) \right] \right] 
&= \sum_{t \in \calT_b} \left\{ X^t_0 \xi(\theta''_i)(t) + (1-X^t_0) (1 - \xi(\theta''_i)(t)) \right\}\\
&= \sum_{t \in \calT_b} (Y_{\theta''_i}^t + Z_{\theta''_i}^t).
\end{align*}
Similarly, the total reward of the average decisions for $\calI_{p,1}$ is
\begin{align*}
\sum_{t \in \calT_b} 
\E_{\theta'_i \sim \calI_{p,1}} \left[ \E_{\alpha \sim \pi_i^t(\theta'_i)} \left[ u_i^t(\theta''_i,\alpha) \right] \right]
&=
\sum_{t \in \calT_b} \left\{ X_1^t \xi(\theta''_i)(t) + (1-X_1^t) (1 - \xi(\theta''_i)(t)) \right\}\\
&= \sum_{t \in \calT_b} (Y_{\theta''_i}^t - Z_{\theta''_i}^t).
\end{align*}
The left-hand side of the inequality \eqref{eq:sqrt-gap} is
\begin{align}
&\E_{\xi_b} \left[ \frac{1}{2} \E_{\substack{\theta'_i \sim \calI_{p,0} \\ \theta''_i \sim \calJ_{p,0} }} \left[ \sum_{t \in \calT_b} \E_{\alpha \sim \pi_i^t(\theta'_i)} \left[ u_i^t(\theta''_i, \alpha) \right] \right]
+
\frac{1}{2} \E_{\substack{ \theta'_i \sim \calI_{p,1} \\ \theta''_i \sim \calJ_{p,1} } } \left[ \sum_{t \in \calT_b} \E_{\alpha \sim \pi_i^t(\theta'_i)} \left[ u_i^t(\theta''_i, \alpha) \right] \right] \;\middle|\; \zeta_b,\xi_{b-1} \right] \nonumber \\
&=
\E_{\xi_b} \left[ \frac{1}{2} \E_{\theta''_i \sim \calJ_{p,0}} \left[ \sum_{t \in \calT_b} (Y_{\theta''_i}^t + Z_{\theta''_i}^t) \right]
+
\frac{1}{2} \E_{\theta''_i \sim \calJ_{p,1}} \left[ \sum_{t \in \calT_b} (Y_{\theta''_i}^t - Z_{\theta''_i}^t) \right] \;\middle|\; \zeta_b,\xi_{b-1} \right] \nonumber \\
&=
\E_{\xi_b} \left[ \E_{\theta''_i \sim \calJ_p} \left[ \sum_{t \in \calT_b} Y_{\theta''_i}^t \right]
+
\frac{1}{2} \E_{\theta''_i \sim \calJ_{p,0}} \left[ \sum_{t \in \calT_b} Z_{\theta''_i}^t \right]
-
\frac{1}{2} \E_{\theta''_i\sim \calJ_{p,1}} \left[ \sum_{t \in \calT_b} Z_{\theta''_i}^t \right] \;\middle|\; \zeta_b,\xi_{b-1} \right], \label{eq:sqrt-gap-lhs1}
\end{align}
where the second equality is due to $\calJ_p = \calJ_{p,0} \cup \calJ_{p,1}$.

The first term of \eqref{eq:sqrt-gap-lhs1} is
\begin{align*}
&\E_{\xi_b} \left[ \E_{\theta''_i \sim \calJ_p} \left[ \sum_{t \in \calT_b} Y_{\theta''_i}^t \right] \;\middle|\; \zeta_b,\xi_{b-1} \right] \\
&= \E_{\xi_b} \left[ \E_{\theta''_i \sim \calJ_p} \left[ \sum_{t \in \calT_b} \left\{ \frac{X_0^t+X_1^t}{2} \xi(\theta''_i)(t) + \left( 1-\frac{X_0^t+X_1^t}{2}\right) (1 - \xi(\theta''_i)(t)) \right\} \right] \;\middle|\; \zeta_b,\xi_{b-1} \right]\\
&= \E_{\theta''_i \sim \calJ_p} \left[ \sum_{t \in \calT_b} \E_{\xi_b} \left[ \frac{X_0^t+X_1^t}{2} \xi(\theta''_i)(t) + \left( 1-\frac{X_0^t+X_1^t}{2}\right) (1 - \xi(\theta''_i)(t)) \;\middle|\; \zeta_b,\xi_{b-1} \right] \right].
\end{align*}
Here, we note that $\calJ_p$ is determined by $\zeta_{b-1}$ and $\xi_{b-1}$.
Since $X_0^t$ and $X_1^t$ are determined by the algorithm before $\xi(\theta''_i)(t)$ is revealed, $\frac{X_0^t+X_1^t}{2}$ and $\xi(\theta''_i)(t)$ are independent.
Then we can take the expectation of $\frac{X_0^t+X_1^t}{2}$ and $\xi(\theta''_i)(t)$ separately.
It is also the case for $1-\frac{X_0^t+X_1^t}{2}$ and $1 - \xi(\theta''_i)(t)$.
Since each entry of $\xi$ follows the uniform distribution over $\{0,1\}$ independently, $\E_{\xi_b}\left[ \xi(\theta''_i)(t) \;\middle|\; \zeta_b,\xi_{b-1} \right] = \E_{\xi_b}\left[ 1-\xi(\theta''_i)(t) \;\middle|\; \zeta_b,\xi_{b-1} \right] = 1/2$.
We thus obtain
\begin{align}
&\E_{\xi_b} \left[ \E_{\theta''_i \sim \calJ_p} \left[ \sum_{t \in \calT_b} Y_{\theta''_i}^t \right] \;\middle|\; \zeta_b,\xi_{b-1} \right] \nonumber\\
&= \E_{\theta''_i \sim \calJ_p} \left[ \sum_{t \in \calT_b} \left\{ \frac{1}{2} \E_{\xi_b} \left[ \frac{X_0^t+X_1^t}{2} \;\middle|\; \zeta_b,\xi_{b-1} \right]
+ \frac{1}{2} \E_{\xi_b} \left[ 1-\frac{X_0^t+X_1^t}{2} \;\middle|\; \zeta_b,\xi_{b-1} \right] \right\} \right]\nonumber\\
&= \frac{|\calT_b|}{2} \nonumber\\
&= \frac{L}{2}.\label{eq:sqrt-gap-lhs4}
\end{align}

Next, we give a lower bound on the second and third terms of \eqref{eq:sqrt-gap-lhs1}.
Since $Z_{\theta''_i}^t$ can be expressed as
\begin{equation*}
Z_{\theta''_i}^t
=
\frac{X_0^t-X_1^t}{2} \xi(\theta''_i)(t) - \frac{X_0^t-X_1^t}{2} (1 - \xi(\theta''_i)(t))
=
\begin{cases}
\frac{X_0^t-X_1^t}{2} & \text{if $\xi(\theta''_i)(t) = 1$}\\
- \frac{X_0^t-X_1^t}{2} & \text{if $\xi(\theta''_i)(t) = 0$},
\end{cases}
\end{equation*}
the cumulative sum $\left( \sum_{t' \in \calT_b \colon t' \le t} Z_{\theta''_i}^{t'} \right)_{t \in \calT_b}$ can be regarded as a random walk with varying step size $\left| \frac{X_0^t-X_1^t}{2} \right|$ for each $\theta''_i \in \calJ_p$.
The step size in round $t \in \calT_b$ is determined by the algorithm that has not observed $\xi(\theta''_i)(t)$ for any $\theta''_i \in \Theta''_i$ yet.
Since $X_0^t \in [0,1]$ and $X_1^t \in [0,1]$, the step size is bounded as $\left| \frac{X_0^t-X_1^t}{2} \right| \in \left[ 0,\frac{1}{2} \right]$ for each $t \in \calT_b$.

Recall that how to partition $\{j_{2k-1},j_{2k}\}$ into $\calJ_{p,0}$ and $\calJ_{p,1}$ is determined independently for each $k \in [2^{B-b}]$.
This is decided by the value of \eqref{eq:quantile},
which can be expressed as
\begin{align*}
&\E_{\theta'_i \sim \calI_{p,0}} \left[ \sum_{t \in \calT_b} \E_{\alpha \sim \pi_i^t(\theta'_i)} \left[ u_i^t(j_{2k-1}, \alpha) - u_i^t(j_{2k}, \alpha) \right] \right]
-
\E_{\theta'_i \sim \calI_{p,1}} \left[ \sum_{t \in \calT_b} \E_{\alpha \sim \pi_i^t(\theta'_i)} \left[ u_i^t(j_{2k-1}, \alpha) - u_i^t(j_{2k}, \alpha) \right] \right]\\
&=
\left\{ \sum_{t \in \calT_b} \left( Y_{j_{2k-1}}^t + Z_{j_{2k-1}}^t \right)
-
\sum_{t \in \calT_b} \left( Y_{j_{2k}}^t + Z_{j_{2k}}^t \right) \right\}
-
\left\{ \sum_{t \in \calT_b} \left( Y_{j_{2k-1}}^t - Z_{j_{2k-1}}^t \right)
-
\sum_{t \in \calT_b} \left( Y_{j_{2k}}^t - Z_{j_{2k}}^t \right) \right\}\\
&=
2 \sum_{t \in \calT_b} \left( Z_{j_{2k-1}}^t - Z_{j_{2k}}^t \right).
\end{align*}
By considering the distribution of this value conditioned on $\zeta_b$ and $\xi_{b-1}$, we set $j_{2k-1} \in \calJ_{p,0}$ for the upper half of this distribution and $j_{2k-1} \in \calJ_{p,1}$ for the lower half.
We define the function $h_k$ such that $h_k(\xi_b) = 1$ if $j_{2k-1} \in \calJ_{p,0}$ and $h_k(\xi_b) = -1$ if $j_{2k-1} \in \calJ_{p,1}$.
Note that $h_k(\xi_b)$ might be random on the median of the distribution.

Let $D_k^t = \sum_{t' \in \calT_b \colon t' \le t} \left( Z_{j_{2k-1}}^{t'} - Z_{j_{2k}}^{t'} \right)$ be half of the cumulative sum of this value until round $t$.
From the definition of $Z_{\theta''_i}^t$, we have
\begin{equation}
D_k^t = \begin{cases}
D_k^{t-1} & \text{with probability $1/2$ (if $\xi(j_{2k-1})(t) = \xi(j_{2k})(t)$)}\\
D_k^{t-1} + (X_0^t-X_1^t) & \text{with probability $1/4$ (if $\xi(j_{2k-1})(t) = 1$ and $\xi(j_{2k})(t) = 0$)}\\
D_k^{t-1} - (X_0^t-X_1^t) & \text{with probability $1/4$ (if $\xi(j_{2k-1})(t) = 0$ and $\xi(j_{2k})(t) = 1$)},
\end{cases}\label{eq:random-walk}
\end{equation}
where $X_0^t-X_1^t$ is determined by the algorithm that has not observed $\xi(j_{2k-1})(t)$ and $\xi(j_{2k})(t)$ yet.

The second and third terms of \eqref{eq:sqrt-gap-lhs1} are expressed as
\begin{align}
&\E_{\xi_b} \left[ \frac{1}{2} \E_{\theta''_i \sim \calJ_{p,0}} \left[ \sum_{t \in \calT_b} Z_{\theta''_i}^t \right]
-
\frac{1}{2} \E_{\theta''_i\sim \calJ_{p,1}} \left[ \sum_{t \in \calT_b} Z_{\theta''_i}^t \right] \;\middle|\; \zeta_b,\xi_{b-1} \right] \nonumber \\
&=
\frac{1}{2^{B-b}} \sum_{k=1}^{2^{B-b}} \frac{1}{2} \E_{\xi_b} \left[ h_k(\xi_b) \left\{ \sum_{t \in \calT_b} Z_{j_{2k-1}}^t
-
\sum_{t \in \calT_b} Z_{j_{2k}}^t \right\} \;\middle|\; \zeta_b,\xi_{b-1} \right] \nonumber \\
&=
\frac{1}{2^{B-b+1}} \sum_{k=1}^{2^{B-b}} \E_{\xi_b} \left[ h_k(\xi_b) D_k^\tmax \;\middle|\; \zeta_b,\xi_{b-1} \right],\label{eq:sqrt-gap-lhs2}
\end{align}
where $\tmax = \max \calT_b$ is the index of the last round in block $b$.

In the remaining part of the proof, we fix $k \in [2^{B-b}]$.
Let $\calF$ be the event that $\left| D_k^t \right| > \frac{\sqrt{c^3L}}{4}$ for some $t \in \calT_b$.
Let $\calF_+$ be the event that $D_k^t > \frac{\sqrt{c^3L}}{4}$ for some $t \in \calT_b$
and $\calF_-$ the event that $D_k^t < \frac{\sqrt{c^3L}}{4}$ for some $t \in \calT_b$.
Since $\Pr_{\xi_b}(\calF \mid \zeta_b,\xi_{b-1}) \le \Pr_{\xi_b}(\calF_+ \mid \zeta_b,\xi_{b-1}) + \Pr_{\xi_b}(\calF_- \mid \zeta_b,\xi_{b-1})$, one of $\calF_+$ or $\calF_-$ happens with probability at least $\Pr_{\xi_b}(\calF \mid \zeta_b,\xi_{b-1})/2$.
Assume $\Pr_{\xi_b}(\calF_+ \mid \zeta_b,\xi_{b-1}) \ge \Pr_{\xi_b}(\calF \mid \zeta_b,\xi_{b-1})/2$ without loss of generality.
For the case of $\Pr_{\xi_b}(\calF_- \mid \zeta_b,\xi_{b-1}) \ge \Pr_{\xi_b}(\calF \mid \zeta_b,\xi_{b-1})/2$, we can prove the statement in the same way.

We define $h'_k$ as follows.
\begin{itemize}
\item
First, we consider the case of $\Pr(\calF_+) \ge 0.5$.
In this case, we set $h'_k(\xi_b) = 1$ for the upper part of $0.5 / \Pr(\calF_+)$ on the distribution of $D_k^\tmax$ conditioned on $\zeta_b$, $\xi_{b-1}$, and $\calF_+$.
We set $h'_k(\xi_b) = -1$ for the remaining part of $\calF_+$ and the complementary event $\calF_+^c$.

\item
Next, we consider the case of $\Pr(\calF_+) < 0.5$.
If $\calF_+$ holds, we set $h'_k(\xi_k) = 1$.
If $\calF_+$ does not hold, we decide $h'_k(\xi_k) = 1$ or $h'_k(\xi_k) = -1$ by considering the distribution of $D_k^\tmax$.
For the upper part of probability $(0.5-\Pr(\calF_+))/(1-\Pr(\calF_+))$ on the distribution of $D_k^\tmax$ conditioned on $\calF_+^c$, we set $h'_k(\xi_b) = 1$.
For the lower part of probability $0.5/(1-\Pr(\calF_+))$, we set $h'_k(\xi_b) = -1$.
\end{itemize}
In both cases, the mass on the boundary can be arbitrarily partitioned so that the above constraint is satisfied.
Note that $h'_k$ satisfies the constraint $\Pr_{\xi_b}(h'_k(\xi_b) = 1 \mid \zeta_b,\xi_{b-1}) = 0.5$.
Since $h_k$ maximizes $\E_{\xi_b}\left[ h_k(\xi_b) D_k^\tmax \;\middle|\; \zeta_b,\xi_{b-1} \right]$ under this constraint, we have
\begin{equation*}
\E_{\xi_b}\left[ h_k(\xi_b) D_k^\tmax \;\middle|\; \zeta_b,\xi_{b-1} \right]
\ge 
\E_{\xi_b}\left[ h'_k(\xi_b) D_k^\tmax \;\middle|\; \zeta_b,\xi_{b-1} \right].
\end{equation*}

In the following, we prove a lower bound
\begin{equation}
\E_{\xi_b} \left[ h_k(\xi_b) D_k^\tmax \;\middle|\; \zeta_b,\xi_{b-1} \right]
\ge \frac{1}{2} \Pr_{\xi_b} \left(\calF \;\middle|\; \zeta_b,\xi_{b-1} \right) \E_{\xi_b} \left[ D_k^\tmax \;\middle|\; \zeta_b,\xi_{b-1},\calF_+ \right] \label{eq:sqrt-gap-lhs3}
\end{equation}
on each term of \eqref{eq:sqrt-gap-lhs2}.
Since $D_k^t$ is a random walk specified as \eqref{eq:random-walk}, it is a martingale.
Therefore, $\E_{\xi_b}\left[D_k^\tmax \;\middle|\; \zeta_b,\xi_{b-1}\right] = 0$.
From the optimality of $h_k$, the left hand side of \eqref{eq:sqrt-gap-lhs3} is non-negative.
If $\E_{\xi_b} \left[ D_k^\tmax \;\middle|\; \zeta_b,\xi_{b-1},\calF_+ \right] < 0$, \eqref{eq:sqrt-gap-lhs3} immediately follows.
In the following, assuming $\E_{\xi_b} \left[ D_k^\tmax \;\middle|\; \zeta_b,\xi_{b-1},\calF_+ \right] \ge 0$, we prove \eqref{eq:sqrt-gap-lhs3} for each case separately.
Let $D_k^*$ be the maximum value of $D_k^\tmax$ when $h'_k(\xi_b) = -1$ and $\calF_+^c$.

\begin{itemize}
\item
First, we consider the case of $\Pr(\calF_+) \ge 0.5$.
It holds that
\begin{align*}
&\E_{\xi_b}\left[D_k^\tmax \;\middle|\; \zeta_b,\xi_{b-1}\right]\\
&=
\frac{1}{2}\E_{\xi_b}\left[D_k^\tmax \;\middle|\; \zeta_b,\xi_{b-1},h'_k(\xi_b)=1\right]
+
\frac{1}{2}\E_{\xi_b}\left[D_k^\tmax \;\middle|\; \zeta_b,\xi_{b-1},h'_k(\xi_b)=-1\right],
\end{align*}
where we used the fact that the probabilities of $h'_k(\xi_b) = 1$ and $h'_k(\xi_b) = -1$ are both $1/2$.
Combining it with $\E_{\xi_b}\left[D_k^\tmax \;\middle|\; \zeta_b,\xi_{b-1}\right] = 0$, we obtain
\begin{align*}
\frac{1}{2}\E_{\xi_b}\left[D_k^\tmax \;\middle|\; \zeta_b,\xi_{b-1},h'_k(\xi_b)=1\right]
=
- \frac{1}{2}\E_{\xi_b}\left[D_k^\tmax \;\middle|\; \zeta_b,\xi_{b-1},h'_k(\xi_b)=-1\right].
\end{align*}
We thus obtain
\begin{align*}
&\E_{\xi_b} \left[ h_k(\xi_b) D_k^\tmax \;\middle|\; \zeta_b,\xi_{b-1} \right]\\
&\ge \E_{\xi_b} \left[ h'_k(\xi_b) D_k^\tmax \;\middle|\; \zeta_b,\xi_{b-1} \right]\tag{due to the optimality of $h_k$}\\
&=
\frac{1}{2}\E_{\xi_b}\left[D_k^\tmax \;\middle|\; \zeta_b,\xi_{b-1},h'_k(\xi_b)=1\right]
-
\frac{1}{2}\E_{\xi_b}\left[D_k^\tmax \;\middle|\; \zeta_b,\xi_{b-1},h'_k(\xi_b)=-1\right]\\
&=
\E_{\xi_b}\left[D_k^\tmax \;\middle|\; \zeta_b,\xi_{b-1},h'_k(\xi_b)=1\right]\\
&\ge
\E_{\xi_b}\left[D_k^\tmax \;\middle|\; \zeta_b,\xi_{b-1},\calF_+\right]
\tag{since $h'_k(\xi_b) = 1$ holds for the upper part of $\calF_+$}\\
&\ge
\frac{1}{2} \Pr_{\xi_b} \left(\calF \;\middle|\; \zeta_b,\xi_{b-1} \right) 
\E_{\xi_b}\left[D_k^\tmax \;\middle|\; \zeta_b,\xi_{b-1},\calF_+ \right].
\end{align*}

\item
If $\Pr(\calF_+) < 0.5$ and $D^*_k \ge 0$, the former part of the proof in the previous case also applies to this case.
The remaining part is
\begin{align*}
&\E_{\xi_b} \left[ h_k(\xi_b) D_k^\tmax \;\middle|\; \zeta_b,\xi_{b-1} \right]\\
&\ge
\E_{\xi_b}\left[D_k^\tmax \;\middle|\; \zeta_b,\xi_{b-1},h'_k(\xi_b)=1\right]\\
&= \Pr_{\xi_b} \left(\calF_+ \;\middle|\; \zeta_b,\xi_{b-1},h'_k(\xi_b)=1 \right) \E_{\xi_b} \left[ D_k^\tmax \;\middle|\; \zeta_b,\xi_{b-1},\calF_+,h'_k(\xi_b)=1 \right] \\
&\qquad + \Pr_{\xi_b} \left(\calF^c_+ \;\middle|\; \zeta_b,\xi_{b-1},h'_k(\xi_b)=1 \right) \E_{\xi_b} \left[ D_k^\tmax \;\middle|\; \zeta_b,\xi_{b-1},\calF_+^c,h'_k(\xi_b)=1 \right] \\
&\ge \Pr_{\xi_b} \left(\calF_+ \;\middle|\; \zeta_b,\xi_{b-1},h'_k(\xi_b)=1 \right) \E_{\xi_b} \left[ D_k^\tmax \;\middle|\; \zeta_b,\xi_{b-1},\calF_+,h'_k(\xi_b)=1 \right]\\
&= \frac{ \Pr_{\xi_b} \left( \calF_+\;\middle|\; \zeta_b,\xi_{b-1} \right) }{ \Pr_{\xi_b} \left(h'_k(\xi_b)=1 \;\middle|\; \zeta_b,\xi_{b-1} \right) }  \E_{\xi_b} \left[ D_k^\tmax \;\middle|\; \zeta_b,\xi_{b-1},\calF_+ \right]\\
&= 2 \Pr_{\xi_b} \left( \calF_+\;\middle|\; \zeta_b,\xi_{b-1} \right)  \E_{\xi_b} \left[ D_k^\tmax \;\middle|\; \zeta_b,\xi_{b-1},\calF_+ \right]\\
&\ge \frac{1}{2} \Pr_{\xi_b} \left(\calF \;\middle|\; \zeta_b,\xi_{b-1} \right) \E_{\xi_b} \left[ D_k^\tmax \;\middle|\; \zeta_b,\xi_{b-1},\calF_+ \right].
\end{align*}
The first inequality is due to the non-negativity of $D_k^\tmax$ conditioned on $\calF^c_+$ and $h'_k(\xi_b) = 1$, which is implied by $D_k^* \ge 0$.
The second equality holds because $\calF_+$ holds only when $h'_k(\xi_b) = 1$.
The last inequality is due to the assumption $\Pr_{\xi_b}(\calF_+ \mid \zeta_b,\xi_{b-1}) \ge \Pr_{\xi_b}(\calF \mid \zeta_b,\xi_{b-1})/2$.

\item
If $\Pr(\calF_+) < 0.5$ and $D^*_k < 0$, we obtain
\begin{align*}
&\E_{\xi_b} \left[ h_k(\xi_b) D_k^\tmax \;\middle|\; \zeta_b,\xi_{b-1} \right] \\
&\ge \E_{\xi_b} \left[ h'_k(\xi_b) D_k^\tmax \;\middle|\; \zeta_b,\xi_{b-1} \right] \tag{due to the optimality of $h_k$}\\
&= \Pr_{\xi_b} \left(\calF_+ \;\middle|\; \zeta_b,\xi_{b-1} \right) \E_{\xi_b} \left[ D_k^\tmax \;\middle|\; \zeta_b,\xi_{b-1},\calF_+ \right]\\
& \qquad + \Pr_{\xi_b} \left(\calF_+^c, h'_k(\xi_b)=1 \;\middle|\; \zeta_b,\xi_{b-1}\right) \E_{\xi_b} \left[ D_k^\tmax \;\middle|\; \zeta_b,\xi_{b-1},\calF_+^c,h'_k(\xi_b) = 1 \right]\\
& \qquad - \Pr_{\xi_b} \left(\calF_+^c, h'_k(\xi_b)=-1 \;\middle|\; \zeta_b,\xi_{b-1} \right) \E_{\xi_b} \left[ D_k^\tmax \;\middle|\; \zeta_b,\xi_{b-1},\calF_+^c,h'_k(\xi_b) = -1 \right]\\
&\ge \Pr_{\xi_b} \left(\calF_+ \;\middle|\; \zeta_b,\xi_{b-1} \right) \E_{\xi_b} \left[ D_k^\tmax \;\middle|\; \zeta_b,\xi_{b-1},\calF_+ \right]\\
& \qquad + \Pr_{\xi_b} \left(\calF_+^c, h'_k(\xi_b)=1 \;\middle|\; \zeta_b,\xi_{b-1}\right) D_k^*
- \Pr_{\xi_b} \left(\calF_+^c, h'_k(\xi_b)=-1 \;\middle|\; \zeta_b,\xi_{b-1} \right) D_k^*\\
&\ge \Pr_{\xi_b} \left(\calF_+ \;\middle|\; \zeta_b,\xi_{b-1} \right) \E_{\xi_b} \left[ D_k^\tmax \;\middle|\; \zeta_b,\xi_{b-1},\calF_+ \right] \\
&\ge \frac{1}{2} \Pr_{\xi_b} \left(\calF \;\middle|\; \zeta_b,\xi_{b-1} \right) \E_{\xi_b} \left[ D_k^\tmax \;\middle|\; \zeta_b,\xi_{b-1},\calF_+ \right].
\end{align*}
The second inequality is due to the definition of $D_k^*$.
The third inequality is due to the negativity of $D_k^*$ and $\Pr_{\xi_b} \left(\calF_+^c, h'_k(\xi_b)=1 \;\middle|\; \zeta_b,\xi_{b-1}\right) \le \Pr_{\xi_b} \left(\calF_+^c, h'_k(\xi_b)=-1 \;\middle|\; \zeta_b,\xi_{b-1} \right)$.
\end{itemize}

We thus proved \eqref{eq:sqrt-gap-lhs3}.
In the remaining part of the proof, we give a lower bound on each of $\Pr_{\xi_b}\left(\calF \;\middle|\; \zeta_b,\xi_{b-1} \right)$ and $\E_{\xi_b} \left[ D_k^\tmax \;\middle|\; \zeta_b,\xi_{b-1},\calF_+ \right]$.

First, we show that the expected value of $D_k^\tmax$ is at least $\frac{\sqrt{c^3L}}{4}$ conditioned on $\calF_+$.
Let $t^*_+$ be the first time round that $D^t_k \ge \frac{\sqrt{c^3L}}{4}$ holds, which is well-defined conditioned on $\calF_+$.
For each $t \in \calT_b$, if we condition on $\calF_+$ and $t_+^* = t$, we have
\begin{align*}
\E_{\xi_b} \left[ D_k^{t'+1} \;\middle|\; \zeta_b,\xi_{b-1},\calF_+, t_+^*=t \right]
=
\E_{\xi_b} \left[ D_k^{t'} \;\middle|\; \zeta_b,\xi_{b-1},\calF_+, t_+^*=t \right]
\end{align*}
for any $t' \ge t$
because whether the events $\calF_+$ and $t^* = t$ hold is determined until round $t'$.
By considering the expected value for different $t_+^*$ separately,
\begin{align}
\E_{\xi_b} \left[ D_k^\tmax \;\middle|\; \zeta_b,\xi_{b-1},\calF_+ \right]
&= \sum_{t \in \calT_b} \Pr_{\xi_b} \left(t_+^*=t \;\middle|\; \zeta_b,\xi_{b-1},\calF_+ \right) \E_{\xi_b} \left[ D_k^\tmax \;\middle|\; \zeta_b,\xi_{b-1},\calF_+, t_+^*=t \right] \nonumber \\
&= \sum_{t \in \calT_b} \Pr_{\xi_b} \left(t_+^*=t \;\middle|\; \zeta_b,\xi_{b-1},\calF_+ \right) \E_{\xi_b} \left[ D_k^t \;\middle|\; \zeta_b,\xi_{b-1},\calF_+, t_+^*=t \right] \nonumber \\
&\ge \sum_{t \in \calT_b} \Pr_{\xi_b}(t_+^*=t \mid \calF_+) ~ \frac{\sqrt{c^3L}}{4} \tag{from the definition of $t^*_+$} \nonumber \\
&= \frac{\sqrt{c^3L}}{4}. \label{eq:expectation-lb}
\end{align}

Next, we provide a lower bound on $\Pr_{\xi_b}\left(\calF \;\middle|\; \zeta_b,\xi_{b-1}\right)$ by using the assumption
\begin{equation*}
\E_{\xi_b} \left[ \sum_{t \in \calT_b} \left| \E_{\theta_i \sim \calI_{p,0}} \left[ \pi_i^t(\theta_i; \alpha_0) \right]
-
\E_{\theta_i \sim \calI_{p,1}} \left[ \pi_i^t(\theta_i; \alpha_0) \right] \right| \;\middle|\; \zeta_b,\xi_{b-1} \right]
\ge cL.
\end{equation*}
Using random variables $X_0^t$ and $X_1^t$, this assumption can be written as
\begin{equation*}
\E_{t \sim \calT_b, \xi_b} \left[ \left| X_0^t - X_1^t \right| \;\middle|\; \zeta_b,\xi_{b-1} \right]
\ge c.
\end{equation*}
By applying \Cref{lem:gap-lower-bound}, we obtain
\begin{equation*}
\Pr_{t \sim \calT_b, \xi_b} \left( \left( X_0^t - X_1^t \right)^2 \ge \frac{c^2}{4} \;\middle|\; \zeta_b,\xi_{b-1} \right)
=
\Pr_{t \sim \calT_b, \xi_b} \left( \left| X_0^t - X_1^t \right| \ge \frac{c}{2} \;\middle|\; \zeta_b,\xi_{b-1} \right)
\ge \frac{c}{2}.
\end{equation*}
We thus obtain
\begin{equation}
\E_{\xi_b} \left[ \sum_{t \in \calT_b} \left( X_0^t - X_1^t \right)^2 \;\middle|\; \zeta_b,\xi_{b-1} \right]
=
L \E_{t \sim \calT_b, \xi_b} \left[ \left( X_0^t - X_1^t \right)^2 \;\middle|\; \zeta_b,\xi_{b-1} \right]
= L \cdot \frac{c}{2} \cdot \frac{c^2}{4}
\ge \frac{c^3 L}{8}. \label{eq:square-lb}
\end{equation}

We show that a stochastic process $(D_k^t)^2 - \frac{1}{2} \sum_{t' \in \calT_b \colon t' \le t} \left( X_0^{t'} - X_1^{t'} \right)^2$ for $t \in \calT_b$ is a martingale.
We denote by $\calF_t$ all the events that happen until the end of time round $t$.
Conditioned on $\calF_{t-1}$, the expected difference is
\begin{align*}
&\E_{\xi_b} \left[ (D_k^t)^2 - \frac{1}{2} \sum_{t' \in \calT_b \colon t' \le t} \left( X_0^{t'} - X_1^{t'} \right)^2 \;\middle|\; \zeta_b,\xi_{b-1},\calF_{t-1} \right]
- \left\{ (D_k^{t-1})^2 - \frac{1}{2} \sum_{t' \in \calT_b \colon t' \le t-1} \left( X_0^{t'} - X_1^{t'} \right)^2 \right\}\\
&= \E_{\xi_b} \left[ (D_k^t)^2 - (D_k^{t-1})^2 \;\middle|\; \zeta_b,\xi_{b-1},\calF_{t-1} \right] - \frac{1}{2} \left( X_0^t - X_1^t \right)^2\\
&= \E_{\xi_b} \left[ (D_k^t - D_k^{t-1})(D_k^t + D_k^{t-1}) \;\middle|\; \zeta_b,\xi_{b-1},\calF_{t-1} \right] - \frac{1}{2} \left( X_0^t - X_1^t \right)^2\\
&= \E_{\xi_b} \left[ (D_k^t - D_k^{t-1})(2 D_k^{t-1} + (D_k^t - D_k^{t-1})) \;\middle|\; \zeta_b,\xi_{b-1},\calF_{t-1} \right] - \frac{1}{2} \left( X_0^t - X_1^t \right)^2 \\
&= 2 D_k^{t-1} \E_{\xi_b} \left[ D_k^t - D_k^{t-1} \;\middle|\; \zeta_b,\xi_{b-1},\calF_{t-1} \right] + \E_{\xi_b} \left[ (D_k^t - D_k^{t-1})^2 \;\middle|\; \zeta_b,\xi_{b-1},\calF_{t-1} \right] - \frac{1}{2} \left( X_0^t - X_1^t \right)^2\\
&= 0 + \frac{1}{2} \left( X_0^t - X_1^t \right)^2 - \frac{1}{2} \left( X_0^t - X_1^t \right)^2 \tag{since \eqref{eq:random-walk}} \\
&= 0.
\end{align*}

We define the random variable $t^* \in \calT_b$ such that
\begin{equation*}
t^* = \begin{cases}
\min \left\{ t \in \calT_b \;\middle|\; |D_k^t| \ge \frac{\sqrt{c^3L}}{4} \right\} & \text{if $\calF$ holds}\\
\tmax & \text{otherwise}
\end{cases}
\end{equation*}
is the time round when $|D_k^t|$ exceeds $\frac{\sqrt{c^3L}}{4}$ for the first time if $\calF$ holds.
Since the stopping time $t^*$ is bounded above by $\tmax$, we can apply Doob's optional stopping theorem to the martingale defined above and obtain
\begin{equation*}
\E_{\xi_b} \left[ (D_k^{t^*})^2 - \sum_{t \in \calT_b \colon t \le t^*} \left( X_0^t - X_1^t \right)^2 \;\middle|\; \zeta_b,\xi_{b-1} \right] = 0.
\end{equation*}
Since $\sum_{t \in \calT_b \colon t > t^*} \left( X_0^t - X_1^t \right)^2$ is $0$ if $\calF$ does not hold and bounded above by $L$ if $\calF$ holds, we have
\begin{align*}
&\E_{\xi_b} \left[ \sum_{t \in \calT_b} \left( X_0^t - X_1^t \right)^2  \;\middle|\; \zeta_b,\xi_{b-1} \right]\\
&= \E_{\xi_b} \left[ \sum_{t \in \calT_b \colon t \le t^*} \left( X_0^t - X_1^t \right)^2  \;\middle|\; \zeta_b,\xi_{b-1}\right] + \E_{\xi_b} \left[ \sum_{t \in \calT_b \colon t > t^*} \left( X_0^t - X_1^t \right)^2  \;\middle|\; \zeta_b,\xi_{b-1} \right]\\
&\le \E_{\xi_b} \left[ (D_k^{t^*})^2 \;\middle|\; \zeta_b,\xi_{b-1} \right] 
+ \Pr_{\xi_b}\left(\calF \;\middle|\; \zeta_b,\xi_{b-1} \right) L\\
&= \Pr_{\xi_b}\left(\calF \;\middle|\; \zeta_b,\xi_{b-1} \right) \E_{\xi_b} \left[ (D_k^{t^*})^2 \;\middle|\; \zeta_b,\xi_{b-1},\calF \right]
+ \Pr_{\xi_b}\left(\calF^c \;\middle|\; \zeta_b,\xi_{b-1} \right) \E_{\xi_b} \left[ (D_k^{t^*})^2 \;\middle|\; \zeta_b,\xi_{b-1},\calF^c \right]
+ \Pr_{\xi_b}\left(\calF \;\middle|\; \zeta_b,\xi_{b-1} \right) L\\
&\le \Pr_{\xi_b}\left(\calF \;\middle|\; \zeta_b,\xi_{b-1} \right) \left(\frac{\sqrt{c^3L}}{4} + 1 \right)^2
+\Pr_{\xi_b}\left(\calF^c \;\middle|\; \zeta_b,\xi_{b-1} \right) \left(\frac{\sqrt{c^3L}}{4}\right)^2
+ \Pr_{\xi_b}\left(\calF \;\middle|\; \zeta_b,\xi_{b-1} \right) L\\
&\le 4 L \Pr_{\xi_b}\left(\calF \;\middle|\; \zeta_b,\xi_{b-1} \right)
+ \frac{c^3L}{16},
\end{align*}
where we used $c \le 1$, $\sqrt{L} \le L$, $L \ge 1$, and $\Pr_{\xi_b}\left(\calF^c \;\middle|\; \zeta_b,\xi_{b-1} \right) \le 1$ for the last inequality.
Combining it with \eqref{eq:square-lb}, we obtain
\begin{equation}
\Pr_{\xi_b}\left(\calF \;\middle|\; \zeta_b,\xi_{b-1} \right) \ge \frac{c^3L/8 - c^3L/16}{4L} = \frac{c^3}{64}.\label{eq:prob-lb}
\end{equation}

Plugging \eqref{eq:expectation-lb} and \eqref{eq:prob-lb} into \eqref{eq:sqrt-gap-lhs3}, we obtain
\begin{align*}
&\E_{\xi_b} \left[
\frac{1}{2} \E_{\substack{\theta'_i \sim \calI_{p,0} \\ \theta''_i \sim \calJ_{p,0} }} \left[ \sum_{t \in \calT_b} \E_{\alpha \sim \pi_i^t(\theta'_i)} \left[ u_i^t(\theta''_i, \alpha) \right] \right]
+
\frac{1}{2} \E_{\substack{ \theta'_i \sim \calI_{p,1} \\ \theta''_i \sim \calJ_{p,1} }} \left[ \sum_{t \in \calT_b} \E_{\alpha \sim \pi_i^t(\theta'_i)} \left[ u_i^t(\theta''_i, \alpha) \right] \right]
\;\middle|\; \zeta_b, \xi_{b-1} \right]\\
&=
\E_{\xi_b} \left[ \E_{\theta''_i \sim \calJ_p} \left[ \sum_{t \in \calT_b} Y_{\theta''_i}^t \right]
+
\frac{1}{2} \E_{\theta''_i \sim \calJ_{p,0}} \left[ \sum_{t \in \calT_b} Z_{\theta''_i}^t \right]
-
\frac{1}{2} \E_{\theta''_i\sim \calJ_{p,1}} \left[ \sum_{t \in \calT_b} Z_{\theta''_i}^t \right] \;\middle|\; \zeta_b,\xi_{b-1} \right]
\tag{since \eqref{eq:sqrt-gap-lhs1}}\\
&= \frac{L}{2} +
\frac{1}{2^{B-b+1}} \sum_{k=1}^{2^{B-b}} \E_{\xi_b} \left[ h_k(\xi_b) D_k^\tmax \;\middle|\; \zeta_b,\xi_{b-1} \right]
\tag{since \eqref{eq:sqrt-gap-lhs4} and \eqref{eq:sqrt-gap-lhs2}}
\\
&\ge \frac{L}{2} +
\frac{1}{2^{B-b+1}} \sum_{k=1}^{2^{B-b}} \frac{1}{2} \Pr_{\xi_b} \left(\calF \;\middle|\; \zeta_b,\xi_{b-1} \right) \E_{\xi_b} \left[ D_k^\tmax \;\middle|\; \zeta_b,\xi_{b-1},\calF_+ \right]
\tag{since \eqref{eq:sqrt-gap-lhs3}}
\\
&\ge \frac{L}{2} +
\frac{1}{2^{B-b+1}} \sum_{k=1}^{2^{B-b}} \frac{1}{2} \cdot \frac{c^3}{64} \cdot \frac{\sqrt{c^3L}}{4}
\tag{since \eqref{eq:expectation-lb} and \eqref{eq:prob-lb}}
\\
&= \frac{L}{2} + 2^{-10} c^{9/2}\sqrt{L},
\end{align*}
which completes the proof.
\end{proof}

\subsection{Proof of lower bound}\label{sec:lb4}

Finally, we prove the theorem using \Cref{lem:psi-dist,lem:gap-block,lem:martingale}.

\thmuntruthfullb*

\begin{proof}
Since the rewards for type $\theta''_i \in \Theta''_i$ are determined by independent Bernoulli random variables $\xi$, and therefore the expected reward obtained by the algorithm for each round is equal to $1/2$.
Formally, since the algorithm's decision $\pi_i^t(\theta''_i)$ is independent from $u_i^t(\theta''_i,\alpha_0) = 1 - u_i^t(\theta''_i,\alpha_1) = \xi(\theta''_i)(t)$, the expected reward is
\begin{align*}
\E_{\zeta,\xi} \left[ \E_{\alpha \sim \pi_i^t(\theta''_i)} \left[ u_i^t(\theta''_i, \alpha) \right] \right]
&=
\E_{\zeta,\xi} \left[ \pi_i^t(\theta''_i;\alpha_0) \xi(\theta''_i)(t) + (1-\pi_i^t(\theta''_i;\alpha_0)) (1-\xi(\theta''_i)(t)) \right]\\
&=
\pi_i^t(\theta''_i;\alpha_0) \cdot \frac{1}{2} + (1-\pi_i^t(\theta''_i;\alpha_0)) \cdot \frac{1}{2} \\
&= \frac{1}{2}.
\end{align*}
We consider $\phi$ in the definition of untruthful swap regret such that $\phi(\theta_i,\cdot)$ is the identity map for every $\theta_i \in \Theta_i$.
By considering only the types in $\Theta''_i$, we obtain
\begin{align*}
\E_{\zeta,\xi} \left[ \RUSi^T \right]
&\ge \E_{\zeta,\xi} \left[ \max_{\psi \colon \Theta_i \to \Theta_i}\sum_{t=1}^T \E_{\theta_i \sim \rho_i} \left[ \E_{\alpha \sim \pi_i^t(\psi(\theta_i))} \left[ u_i^t(\theta_i, \alpha) \right] - \E_{\alpha \sim \pi_i^t(\theta_i)} \left[ u_i^t(\theta_i, \alpha) \right] \right] \right]\\
&\ge \E_{\zeta,\xi} \left[ \max_{\psi \colon \Theta_i \to \Theta_i} \frac{1}{|\Theta_i|} \sum_{\theta''_i \in \Theta''_i} \sum_{t=1}^T \left\{ \E_{\alpha \sim \pi_i^t(\psi(\theta''_i))} \left[ u_i^t(\theta''_i,\alpha) \right] - \E_{\alpha \sim \pi_i^t(\theta''_i)} \left[ u_i^t(\theta''_i, \alpha) \right] \right\} \right]\\
&= \E_{\zeta,\xi} \left[ \max_{\psi \colon \Theta_i \to \Theta_i} \frac{1}{|\Theta_i|} \sum_{\theta''_i \in \Theta''_i} \sum_{t=1}^T \E_{\alpha \sim \pi_i^t(\psi(\theta''_i))} \left[ u_i^t(\theta''_i,\alpha) \right] \right] - \frac{1}{|\Theta_i|} \sum_{\theta''_i \in \Theta''_i} \sum_{t=1}^T \frac{1}{2}\\
&= \frac{1}{2} \E_{\zeta,\xi} \left[ \max_{\psi \colon \Theta_i \to \Theta_i} \frac{1}{|\Theta''_i|} \sum_{\theta''_i \in \Theta''_i} \sum_{t=1}^T \E_{\alpha \sim \pi_i^t(\psi(\theta''_i))} \left[ u_i^t(\theta''_i,\alpha) \right] \right] - \frac{T}{4},
\end{align*}
where the last equality is due to $|\Theta''_i| = |\Theta_i|/2$.
By using $\psi'_{\zeta,\xi}$ defined in \Cref{sec:lb2} for each $\zeta$ and $\xi$, we obtain
\begin{align}
\E_{\zeta,\xi} \left[ \RUSi^T \right]
&\ge \frac{1}{2} \E_{\zeta,\xi} \left[ \frac{1}{|\Theta''_i|} \sum_{\theta''_i \in \Theta''_i} \sum_{t=1}^T \E_{\alpha \sim \pi_i^t(\psi'_{\zeta,\xi}(\theta''_i))} \left[ u_i^t(\theta''_i, \alpha) \right] \right] - \frac{T}{4} \nonumber \\
&= \frac{1}{2} \sum_{b \in [B]} \E_{\zeta,\xi} \left[ \frac{1}{|\Theta''_i|} \sum_{\theta''_i \in \Theta''_i} \sum_{t \in \calT_b} \E_{\alpha \sim \pi_i^t(\psi'_{\zeta,\xi}(\theta''_i))} \left[ u_i^t(\theta''_i, \alpha) \right] - \frac{L}{2} \right] \nonumber \\
&= \frac{B}{2} \left\{ \E_{b \sim [B]} \left[ \E_{\zeta,\xi} \left[ \E_{\theta''_i \sim \Theta''_i} \left[ \sum_{t \in \calT_b} \E_{\alpha \sim \pi_i^t(\psi'_{\zeta,\xi}(\theta''_i))} \left[ u_i^t(\theta''_i, \alpha) \right] \right] \right] \right] - \frac{L}{2} \right\}. \label{eq:usr-lb}
\end{align}
For each $b \in [B]$, since $(\calJ_p)_{p \in \{0,1\}^{b-1}}$ is the equally sized partition of $\Theta''_i$, each term can be expressed as
\begin{align*}
&\E_{\zeta,\xi} \left[ \E_{\theta''_i \sim \Theta''_i} \left[ \sum_{t \in \calT_b} \E_{\alpha \sim \pi_i^t(\psi'_{\zeta,\xi}(\theta''_i))} \left[ u_i^t(\theta''_i, \alpha) \right] \right] \right]\\
&= \E_{\zeta,\xi} \left[ \E_{p \sim \{0,1\}^{b-1}} \left[ \E_{\theta''_i \sim \calJ_p} \left[ \sum_{t \in \calT_b} \E_{\alpha \sim \pi_i^t(\psi'_{\zeta,\xi}(\theta''_i))} \left[ u_i^t(\theta''_i, \alpha) \right] \right] \right] \right] \\
&= \E_{\substack{\zeta_b,\xi_{b-1} \\ p \sim \{0,1\}^{b-1} }} \left[ \E_{\zeta,\xi} \left[ \E_{\theta''_i \sim \calJ_p} \left[ \sum_{t \in \calT_b} \E_{\alpha \sim \pi_i^t(\psi'_{\zeta,\xi}(\theta''_i))} \left[ u_i^t(\theta''_i, \alpha) \right] \right] \;\middle|\; \zeta_b,\xi_{b-1} \right] \right]\\
&= \E_{\substack{\zeta_b,\xi_{b-1} \\ p \sim \{0,1\}^{b-1} }} \left[ \E_{\zeta,\xi} \left[ \frac{1}{2} \E_{\theta''_i \sim \calJ_{p,0}} \left[ \sum_{t \in \calT_b} \E_{\alpha \sim \pi_i^t(\psi'_{\zeta,\xi}(\theta''_i))} \left[ u_i^t(\theta''_i, \alpha) \right] \right] \right. \right.\\
&\qquad \qquad \left. \left. + \frac{1}{2} \E_{\theta''_i \sim \calJ_{p,1}} \left[ \sum_{t \in \calT_b} \E_{\alpha \sim \pi_i^t(\psi'_{\zeta,\xi}(\theta''_i))} \left[ u_i^t(\theta''_i, \alpha) \right] \right] \;\middle|\; \zeta_b,\xi_{b-1} \right] \right]\\
&= \E_{\substack{\zeta_b,\xi_{b-1} \\ p \sim \{0,1\}^{b-1} }} \left[ \E_{\xi_b} \left[ \frac{1}{2} \E_{\theta''_i \sim \calJ_{p,0}} \left[ \E_{\zeta,\xi} \left[ \sum_{t \in \calT_b} \E_{\alpha \sim \pi_i^t(\psi'_{\zeta,\xi}(\theta''_i))} \left[ u_i^t(\theta''_i, \alpha) \right] \;\middle|\; \zeta_b,\xi_b \right] \right. \right. \right.\\
&\qquad\qquad \left. \left. + \frac{1}{2} \E_{\theta''_i \sim \calJ_{p,1}} \left[ \E_{\zeta,\xi} \left[ \sum_{t \in \calT_b} \E_{\alpha \sim \pi_i^t(\psi'_{\zeta,\xi}(\theta''_i))} \left[ u_i^t(\theta''_i, \alpha) \right] \;\middle|\; \zeta_b,\xi_b \right] \right] \;\middle|\; \zeta_b,\xi_{b-1} \right] \right].
\end{align*}
Conditioned on $\zeta_b$ and $\xi_b$, all of $\calJ_{p,0}$, $\calJ_{p,1}$, $(\pi_i^t)_{t \in \calT_b}$, and $(u_i^t)_{t \in \calT_b}$ are deterministic.
For taking the expectation with respect to $\zeta$ and $\xi$ conditioned on $\zeta_b$ and $\xi_b$,
we need to consider only the randomness of $\psi'_{\zeta,\xi}(\theta''_i)$.
From \Cref{lem:psi-dist}, $\psi'_{\zeta,\xi}(\theta''_i)$ is distributed uniformly over $\calI_{p,0}$ for each $\theta''_i \in \calJ_{p,0}$.
Similarly, $\psi'_{\zeta,\xi}(\theta''_i)$ is distributed uniformly over $\calI_{p,1}$ for each $\theta''_i \in \calJ_{p,1}$.
Therefore, the above term can be expressed as
\begin{align*}
&\E_{\zeta,\xi} \left[ \E_{\theta''_i \sim \Theta''_i} \left[ \sum_{t \in \calT_b} \E_{\alpha \sim \pi_i^t(\psi'_{\zeta,\xi}(\theta''_i))} \left[ u_i^t(\theta''_i, \alpha) \right] \right] \right]\\
&= \E_{\substack{\zeta_b,\xi_{b-1} \\ p \sim \{0,1\}^{b-1} }} \left[ \E_{\xi_b} \left[ \frac{1}{2} \E_{\substack{ \theta'_i \sim \calI_{p,0} \\ \theta''_i \sim \calJ_{p,0} }} \left[ \sum_{t \in \calT_b} \E_{\alpha \sim \pi_i^t(\theta'_i)} \left[ u_i^t(\theta''_i, \alpha) \right] \right] 
+ \frac{1}{2} \E_{\substack{ \theta'_i \sim \calI_{p,0} \\ \theta''_i \sim \calJ_{p,1} }} \left[ \sum_{t \in \calT_b} \E_{\alpha \sim \pi_i^t(\theta'_i)} \left[ u_i^t(\theta''_i, \alpha) \right] \right] \;\middle|\; \zeta_b,\xi_{b-1} \right] \right].
\end{align*}
For each $b \in [B]$, $\zeta_b$, $\xi_{b-1}$, and $p \in \{0,1\}^{b-1}$, let $\calF_{b,\zeta_b,\xi_{b-1},p}$ be the event that
\begin{equation*}
\E_{\xi_b} \left[ \sum_{t \in \calT_b} \left| \E_{\theta_i \sim \calI_{p,0}} \left[ \pi_i^t(\theta_i;\alpha_0) \right] - \E_{\theta_i \sim \calI_{p,1}} \left[ \pi_i^t(\theta_i;\alpha_0) \right] \right| \;\middle|\; \zeta_b,\xi_{b-1} \right]
\ge \frac{L}{8}.
\end{equation*}

Toward a contradiction, assume $\E_{\zeta,\xi} \left[ \RUSi^T \right] < 2^{-28} \sqrt{T \log_2 |\Theta_i|}$.
From \Cref{lem:gap-block}, we have $\Pr_{b,\zeta_b,\xi_{b-1},p}(\calF_{b,\zeta_b,\xi_{b-1},p}) \ge 1/8$.
Next, we apply \Cref{lem:martingale} to each $b$, $\zeta_b$, $\xi_{b-1}$ and $p$.
If $\calF_{b,\zeta_b,\xi_{b-1},p}$ holds, we apply \Cref{lem:martingale} with $c = 1/8$.
If $\calF_{b,\zeta_b,\xi_{b-1},p}$ does not hold, since the assumption of \Cref{lem:martingale} always holds for $c=0$, we apply \Cref{lem:martingale} with $c = 0$.
Then we obtain a lower bound on the above term as
\begin{align*}
&\E_{b \sim [B]} \left[ \E_{\zeta,\xi} \left[ \E_{\theta''_i \sim \Theta''_i} \left[ \sum_{t \in \calT_b} \E_{\alpha \sim \pi_i^t(\psi'_{\zeta,\xi}(\theta''_i))} \left[ u_i^t(\theta''_i, \alpha) \right] \right] \right] \right]\\
&\ge \Pr_{\substack{ b \sim [B],\zeta_b,\xi_{b-1},p \sim \{0,1\}^{b-1}}} \left( \calF_{b,\zeta_b,\xi_{b-1},p} \right) \left( \frac{L}{2} + 2^{-24} \sqrt{L} \right)
+
\Pr_{\substack{ b \sim [B],\zeta_b,\xi_{b-1}, p \sim \{0,1\}^{b-1}}} \left( \calF^c_{b,\zeta_b,\xi_{b-1},p} \right) \frac{L}{2}\\
&= \frac{L}{2} + \Pr_{\substack{ b \sim [B], \zeta_b,\xi_{b-1}, p \sim \{0,1\}^{b-1}}} \left( \calF_{b,\zeta_b,\xi_{b-1},p} \right) \cdot 2^{-24} \sqrt{L}\\
&\ge \frac{L}{2} + 2^{-27} \sqrt{L}.
\end{align*}
Finally, substituting this into \eqref{eq:usr-lb}, we obtain
\begin{align*}
\E_{\zeta,\xi} \left[ \RUSi^T \right]
&\ge \frac{B}{2} \cdot 2^{-27} \sqrt{L}\\
&= 2^{-28} \cdot B \cdot \sqrt{T/B} \tag{since $L = T/B$}\\
&= 2^{-28} \sqrt{TB}\\
&\ge 2^{-28} \sqrt{T \log_2 |\Theta_i|},
\end{align*}
which contradicts the assumption 
$\E_{\zeta,\xi} \left[ \RUSi^T \right] < 2^{-28} \sqrt{T \log_2 |\Theta_i|}$.
Then there exist some $\zeta$ and $\xi$ such that for the problem instance with parameter $\zeta$ and $\xi$, it holds $\RUSi^T \ge 2^{-28} \sqrt{T \log_2 |\Theta_i|}$.
\end{proof}

\section{Proofs for smoothness and price of anarchy}\label{sec:poa-app}

First, we present characterizations of the intersection of communication equilibria and ANFCEs, which will be used for analyzing the PoA.
Note that $\PiCom^\epsilon \subseteq \Delta(A)^\Theta$ is the set of $\epsilon$-approximate communication equilibria and $\PiANF^\epsilon \subseteq \Delta(A)^\Theta$ is the set of $\epsilon$-ANFCEs mapped to $\Delta(A)^\Theta$.
See \Cref{sec:sr} for the formal definition of strategy representability and \Cref{sec:anf} for $\PiANF^\epsilon$.

\begin{proposition}
\label{prop:com-sr}
For any type-wise distribution $\pi \in \Delta(A)^\Theta$, the following are equivalent:
\begin{itemize}
\item[(i)] $\pi \in \PiCom^\epsilon$ and $\pi$ is strategy-representable.
\item[(ii)] $\pi \in \PiCom^\epsilon \cap \PiANF^\epsilon$.
\item[(iii)] There exists some $\sigma \in \Delta(S)$ such that $\eta(\sigma) = \pi$ and
\begin{equation}\label{eq:ic-comanf}
\tag{$\mathrm{IC}_{\mathrm{ComSR}}$}
\E_{\theta \sim \rho} \left[ \E_{s \sim \sigma} \left[ v_i(\theta_i; s(\theta)) \right] \right]
\ge                                        
\E_{\theta \sim \rho} \left[ \E_{s \sim \sigma} \left[ v_i(\theta_i; \phi(\theta_i, s_i(\psi(\theta_i))), s_{-i}(\theta_{-i})) \right] \right] - \epsilon.
\end{equation}
holds for any $i \in N$, $\psi \colon \Theta_i \to \Theta_i$, and $\phi \colon \Theta_i \times A_i \to A_i$.
\end{itemize}
\end{proposition}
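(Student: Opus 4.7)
The plan is to close the cycle of implications (ii) $\Rightarrow$ (i) $\Rightarrow$ (ii) and (i) $\Leftrightarrow$ (iii), leveraging \Cref{prop:anf-pi} on the ANFCE side and a direct translation between expectations under $\pi$ and under $\sigma$ on the communication-equilibrium side. Both non-trivial ingredients are already available: the ANFCE characterization (\Cref{prop:anf-pi}), and the defining identity $\pi(\theta;a)=\Pr_{s\sim\sigma}(s(\theta)=a)$ for $\pi=\eta(\sigma)$.

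For (ii) $\Rightarrow$ (i) there is nothing to do: $\pi\in\PiANF^\epsilon$ is strategy-representable by \Cref{prop:anf-pi}, so (i) follows. For the converse (i) $\Rightarrow$ (ii), I need to verify $\pi\in\PiANF^\epsilon$ from $\pi\in\PiCom^\epsilon$ plus strategy representability. By \Cref{prop:anf-pi}, this reduces to deriving \eqref{eq:ic-anf-pi} from \eqref{eq:ic-com}, and this is immediate by specializing \eqref{eq:ic-com} to the identity map $\psi(\theta_i)=\theta_i$, which collapses $\pi(\psi(\theta_i),\theta_{-i})$ to $\pi(\theta)$ and leaves $\phi$ free.

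The substantive content lies in (i) $\Leftrightarrow$ (iii). The plan is to fix a witness $\sigma\in\Delta(S)$ with $\eta(\sigma)=\pi$ (its existence is exactly strategy representability) and then translate each side of \eqref{eq:ic-com} into an expectation over $s\sim\sigma$ using the identity
\begin{equation*}
\E_{a\sim\pi(\theta')}\!\left[f(a)\right]=\E_{s\sim\sigma}\!\left[f(s(\theta'))\right]
\end{equation*}
valid for every bounded $f\colon A\to\bbR$ and every $\theta'\in\Theta$. Applying this with $\theta'=\theta$ on the LHS of \eqref{eq:ic-com} and with $\theta'=(\psi(\theta_i),\theta_{-i})$ on the RHS transforms \eqref{eq:ic-com} into \eqref{eq:ic-comanf}, and the derivation is reversible. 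Thus (i) $\Rightarrow$ (iii) uses the witness $\sigma$ supplied by strategy representability, while (iii) $\Rightarrow$ (i) uses the $\sigma$ supplied by the hypothesis and notes that $\eta(\sigma)=\pi$ already gives strategy representability.

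The main point to be careful about is the joint structure on the RHS of \eqref{eq:ic-com}: one samples a single action profile $a$ from $\pi(\psi(\theta_i),\theta_{-i})$ and then applies $\phi$ only to its $i$-th coordinate. I must check that the joint distribution of $(s_i(\psi(\theta_i)),s_{-i}(\theta_{-i}))$ under $s\sim\sigma$ really is $\pi(\psi(\theta_i),\theta_{-i})$. This is where the proof could go wrong if one worried about correlations between $s(\theta')$ for different profiles $\theta'$, but in fact only the single profile $\theta'=(\psi(\theta_i),\theta_{-i})$ is evaluated on the RHS, so the definition $\pi(\theta';a)=\Pr_{s\sim\sigma}(s(\theta')=a)$ applies directly and pins the required joint distribution down. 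Once this observation is in hand, the rest is an entirely syntactic rewriting of the incentive constraint, and the proposition follows.
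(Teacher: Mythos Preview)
Your proposal is correct and essentially follows the paper's own argument: both use \Cref{prop:anf-pi} to link $\PiANF^\epsilon$ with strategy representability plus \eqref{eq:ic-anf-pi}, specialize $\psi$ to the identity to extract \eqref{eq:ic-anf-pi} from \eqref{eq:ic-com}, and translate between $\pi$- and $\sigma$-expectations via the identity $\pi(\theta';a)=\Pr_{s\sim\sigma}(s(\theta')=a)$ applied at $\theta'=(\psi(\theta_i),\theta_{-i})$. The only cosmetic difference is that the paper closes the cycle (i)$\Rightarrow$(ii)$\Rightarrow$(iii)$\Rightarrow$(i) whereas you prove (i)$\Leftrightarrow$(ii) and (i)$\Leftrightarrow$(iii) directly; the underlying steps are identical.
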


\begin{proof}
First, we prove (ii) assuming (i).
Suppose $\pi \in \PiCom^\epsilon$ and $\pi$ is strategy-representable.
Since \eqref{eq:ic-anf-pi} is a weaker condition than \eqref{eq:ic-com}, $\pi$ satisfies \eqref{eq:ic-anf-pi} for any $\phi \colon \Theta_i \times A_i \to A_i$.
We thus obtain $\pi \in \PiANF^\epsilon$ from \Cref{prop:anf-pi}.

Next, we prove (iii) assuming (ii).
Suppose $\pi \in \PiCom^\epsilon \cap \PiANF^\epsilon$.
Since $\PiANF^\epsilon = \eta(\SigmaANF^\epsilon)$ from the definition, $\pi \in \PiANF^\epsilon$ implies that there exists $\sigma \in \SigmaANF^\epsilon$ such that $\eta(\sigma) = \pi$.
Moreover, since $\pi \in \PiCom^\epsilon$, \eqref{eq:ic-com} holds for any $i \in N$, $\psi \colon \Theta_i \to \Theta_i$, and $\phi \colon \Theta_i \times A_i \to A_i$.
Since $\eta$ is defined by $\pi(\theta;a) = \Pr_{s \sim \sigma} (s(\theta) = a)$ for each $\theta \in \Theta$ and $a \in A$, the left-hand side of \eqref{eq:ic-com} is equal to the left-hand side of \eqref{eq:ic-comanf}.
Then it is sufficient to show the equality of their right-hand sides.
Again from the definition of $\eta$, for any $\theta \in \Theta$, we have
\begin{align*}
\pi \left( \psi(\theta_i),\theta_{-i}; a\right)
=
\Pr_{s \sim \sigma} \left( s_i(\psi(\theta_i)) = a_i, ~ s_{-i}(\theta_{-i}) = a_{-i} \right).
\end{align*}
Hence, for each fixed $\theta \in \Theta$, the distribution of $a$ in the right-hand side of \eqref{eq:ic-com} equals the distribution of $(s_i(\psi(\theta_i)), s_{-i}(\theta_{-i}))$ in the right-hand side of \eqref{eq:ic-comanf}, and then the right-hand sides are equal.

Finally, we prove (i) assuming (iii).
The existence of $\sigma \in \Delta(S)$ such that $\eta(\sigma) = \pi$ implies the strategy representability of $\pi$.
It is sufficient to prove that \eqref{eq:ic-com} holds for any $i \in N$, $\psi \colon \Theta_i \to \Theta_i$, and $\phi \colon \Theta_i \times A_i \to A_i$.
Since \eqref{eq:ic-comanf} and \eqref{eq:ic-com} are equivalent as proved above, we obtain $\pi \in \PiCom^\epsilon$.
\end{proof}

We prepare a lemma that is commonly used in both cases.
The proof is based on the ones provided by \citet*{Roughgarden15incomplete,Syrgkanis12,ST13}, but we here extend them from Bayes--Nash equilibria to the intersection of communication equilibria and ANFCEs.
We use the characterization of this class given in \Cref{prop:com-sr}.

\begin{lemma}\label{lem:poa}
Suppose that a Bayesian game satisfies Assumptions 1 and 2.
Fix any $i \in N$ and $\pi \in \PiCom^0 \cap \PiANF^0$.
Given any $a_{i,\theta,a_i}^* \in A_i$ for each $\theta \in \Theta$ and $a_i \in A_i$, it holds that
\begin{equation*}
\E_{\theta \sim \rho} \left[ \E_{a \sim \pi(\theta)} \left[ v_i(\theta_i; a) \right] \right]
\ge
\E_{\theta \sim \rho} \left[ \E_{\theta' \sim \rho} \left[ \E_{a \sim \pi(\theta')} \left[ v_i(\theta_i; a^*_{i,\theta,a_i}, a_{-i}) \right] \right] \right].
\end{equation*}
\end{lemma}

\begin{proof}
Let $\pi \in \PiCom^0 \cap \PiANF^0$.
From \Cref{prop:com-sr}, there exists $\sigma \in \Delta(S)$ that satisfies $\eta(\sigma) = \pi$ and \eqref{eq:ic-comanf} for each $i \in N$, $\psi \colon \Theta_i \to \Theta_i$, and $\phi \colon \Theta_i \times A_i \to A_i$.
Recall that $\eta \colon \Delta(S) \to \Delta(A)^\Theta$ is defined as $(\eta(\sigma))(\theta;a) = \Pr_{s \sim \sigma} (s(\theta) = a)$ for each $\theta \in \Theta$ and $a \in A$ (see \Cref{sec:sr}).
Since $\eta(\sigma) = \pi$, the expected payoff for player $i$ in this equilibrium is
\begin{align}
\E_{\theta \sim \rho} \left[ \E_{a \sim \pi(\theta)} \left[ v_i(\theta_i; a) \right] \right]
&=
\E_{\theta \sim \rho} \left[ \E_{s \sim \sigma} \left[ v_i(\theta_i; s(\theta)) \right] \right]. \label{eq:payoff-sum}
\end{align}
Next, we apply \eqref{eq:ic-comanf} for each $i \in N$.
Fix any $\theta' \in \Theta$.
If we set $\psi(\theta_i) = \theta'_i$ for each $\theta_i \in \Theta_i$ and $\phi(\theta_i,a_i) = a^*_{i,(\theta_i,\theta'_{-i}),a_i}$ for each $\theta_i \in \Theta_i$ and $a_i \in A_i$, then \eqref{eq:ic-comanf} implies
\begin{equation*}
\E_{\theta \sim \rho} \left[ \E_{s \sim \sigma} \left[ v_i(\theta_i; s(\theta)) \right] \right]
\ge
\E_{\theta \sim \rho} \left[ \E_{s \sim \sigma} \left[ v_i(\theta_i; a^*_{i,(\theta_i,\theta'_{-i}),s_i(\theta'_i)}, s_{-i}(\theta_{-i})) \right] \right].
\end{equation*}
By taking the expectation over $\theta' \sim \rho$, we obtain
\begin{equation*}
\E_{\theta \sim \rho} \left[ \E_{s \sim \sigma} \left[ v_i(\theta_i; s(\theta)) \right] \right]
\ge
\E_{\theta' \sim \rho} \left[ \E_{\theta \sim \rho} \left[ \E_{s \sim \sigma} \left[ v_i(\theta_i; a^*_{i,(\theta_i,\theta'_{-i}),s_i(\theta'_i)}, s_{-i}(\theta_{-i})) \right] \right] \right].
\end{equation*}
Since $\rho$ is a product distribution, we can swap $\theta_{-i}$ and $\theta'_{-i}$ on the right-hand side and obtain
\begin{equation*}
\E_{\theta \sim \rho} \left[ \E_{s \sim \sigma} \left[ v_i(\theta_i; s(\theta)) \right] \right]
\ge
\E_{\theta \sim \rho} \left[ \E_{\theta' \sim \rho} \left[ \E_{s \sim \sigma} \left[ v_i(\theta_i; a^*_{i,\theta,s_i(\theta'_i)}, s_{-i}(\theta'_{-i})) \right] \right] \right].
\end{equation*}
By plugging this inequality into \eqref{eq:payoff-sum}, we obtain
\begin{align*}
\E_{\theta \sim \rho} \left[ \E_{a \sim \pi(\theta)} \left[ v_i(\theta_i; a) \right] \right] \nonumber
&\ge
\E_{\theta \sim \rho} \left[ \E_{\theta' \sim \rho} \left[ \E_{s \sim \sigma} \left[ v_i(\theta_i; a^*_{i,\theta,s_i(\theta'_i)}, s_{-i}(\theta'_{-i})) \right] \right] \right]\\
&=
\E_{\theta \sim \rho} \left[ \E_{\theta' \sim \rho} \left[ \E_{a \sim \pi(\theta')} \left[ v_i(\theta_i; a^*_{i,\theta,a_i}, a_{-i}) \right] \right] \right],
\end{align*}
where the equality holds since the distribution of $s(\theta')$ equals that of $a \sim \pi(\theta')$ due to $\pi = \eta(\sigma)$.
\end{proof}

\subsection{Proof of PoA bounds for the sum of payoffs}\label{sec:poa-sum-proof}

\thmpoasum*

\begin{proof}
Let $\pi \in \PiCom^0 \cap \PiANF^0$.
By taking the summation of \Cref{lem:poa} over $i \in N$, we obtain
\begin{align}
\E_{\theta \sim \rho} \left[ \E_{a \sim \pi(\theta)} \left[ \vSW(\theta; a) \right] \right]
&=
\sum_{i \in N} \E_{\theta \sim \rho} \left[ \E_{a \sim \pi(\theta)} \left[ v_i(\theta_i; a) \right] \right] \tag{due to the assumption} \nonumber \\
&\ge
\sum_{i \in N} \E_{\theta \sim \rho} \left[ \E_{\theta' \sim \rho} \left[ \E_{a \sim \pi(\theta')} \left[ v_i(\theta_i; a^*_{i,\theta,a_i}, a_{-i}) \right] \right] \right] \nonumber \\
&\ge
\E_{\theta \sim \rho} \left[ \E_{\theta' \sim \rho} \left[ \E_{a \sim \pi(\theta')} \left[ \lambda \max_{a' \in A} \vSW(\theta;a') - \mu \vSW(\theta;a) \right] \right] \right] \tag{due to conditional smoothness} \nonumber \\
&=
\lambda \E_{\theta \sim \rho} \left[ \max_{a' \in A} \vSW(\theta;a') \right]
- \mu \E_{\theta \sim \rho} \left[ \E_{\theta' \sim \rho} \left[ \E_{a \sim \pi(\theta')} \left[ \vSW(\theta;a) \right] \right] \right]. \label{eq:sum-payoff-bound}
\end{align}
As in the proof of \Cref{lem:poa}, we use $\sigma \in \Delta(S)$ defined in \Cref{prop:com-sr}.
By using the assumption of $\vSW$, we obtain
\begin{align}
\E_{\theta \sim \rho} \left[ \E_{\theta' \sim \rho} \left[ \E_{a \sim \pi(\theta')} \left[ \vSW(\theta;a) \right] \right] \right] \nonumber
&=
\E_{\theta \sim \rho} \left[ \E_{\theta' \sim \rho} \left[ \E_{s \sim \sigma} \left[ \vSW(\theta;s(\theta')) \right] \right] \right] \nonumber\\
&=
\E_{\theta \sim \rho} \left[ \E_{\theta' \sim \rho} \left[ \E_{s \sim \sigma} \left[ \sum_{i \in N} v_i(\theta_i;s(\theta')) \right] \right] \right] \nonumber\\
&=
\sum_{i \in N} \E_{\theta_i \sim \rho_i} \left[ \E_{\theta' \sim \rho} \left[ \E_{s \sim \sigma} \left[ v_i(\theta_i;s(\theta')) \right] \right] \right] \nonumber\\
&=
\sum_{i \in N} \E_{\theta'_i \sim \rho_i} \left[ \E_{\theta \sim \rho} \left[ \E_{s \sim \sigma} \left[ v_i(\theta_i;s_i(\theta'_i),s_{-i}(\theta_{-i})) \right] \right] \right], \label{eq:sum-payoff-equal}
\end{align}
where in the last equality, $\theta'_{-i}$ is replaced with $\theta_{-i}$ since $\rho$ is a product distribution.
Next, we apply \eqref{eq:ic-comanf} for each $i \in N$.
Fix any $\theta'_i \in \Theta_i$.
If we set $\psi(\theta_i) = \theta'_i$ for any $\theta_i \in \Theta_i$ and $\phi(\theta_i,a_i) = a_i$ for any $\theta_i \in \Theta_i$ and $a_i \in A_i$, then \eqref{eq:ic-comanf} implies
\begin{equation*}
\E_{\theta \sim \rho} \left[ \E_{s \sim \sigma} \left[ v_i(\theta_i; s(\theta)) \right] \right]
\ge
\E_{\theta \sim \rho} \left[ \E_{s \sim \sigma} \left[ v_i(\theta_i; s_i(\theta'_i), s_{-i}(\theta_{-i})) \right] \right].
\end{equation*}
By taking the expectation over $\theta'_i \sim \rho_i$, we obtain
\begin{equation*}
\E_{\theta \sim \rho} \left[ \E_{s \sim \sigma} \left[ v_i(\theta_i; s(\theta)) \right] \right]
\ge
\E_{\theta'_i \sim \rho_i} \left[ \E_{\theta \sim \rho} \left[ \E_{s \sim \sigma} \left[ v_i(\theta_i; s_i(\theta'_i), s_{-i}(\theta_{-i})) \right] \right] \right].
\end{equation*}
By plugging this inequality into \eqref{eq:sum-payoff-equal}, we obtain
\begin{align*}
\E_{\theta \sim \rho} \left[ \E_{\theta' \sim \rho} \left[ \E_{a \sim \pi(\theta')} \left[ \vSW(\theta;a) \right] \right] \right]
\le
\sum_{i \in N} \E_{\theta \sim \rho} \left[ \E_{s \sim \sigma} \left[ v_i(\theta_i; s(\theta)) \right] \right]
=
\E_{\theta \sim \rho} \left[ \E_{a \sim \pi(\theta)} \left[ \vSW(\theta; a) \right] \right].
\end{align*}
From \eqref{eq:sum-payoff-bound}, we obtain
\begin{equation*}
\E_{\theta \sim \rho} \left[ \E_{a \sim \pi(\theta)} \left[ \vSW(\theta; a) \right] \right] \nonumber
\ge
\lambda \E_{\theta \sim \rho} \left[ \max_{a' \in A} \vSW(\theta;a') \right]
- \mu \E_{\theta \sim \rho} \left[ \E_{a \sim \pi(\theta)} \left[ \vSW(\theta; a) \right] \right],
\end{equation*}
and then rearranging terms leads to a desired lower bound on the price of anarchy.
\end{proof}

\begin{remark}
In the proof of \Cref{thm:poa-sum} (and also in the proof of \Cref{lem:poa}), we use \eqref{eq:ic-comanf} only for constant map $\psi$.
We can prove the same PoA bound for the equilibrium concept with $\psi$ restricted to constant maps, which is broader than $\PiCom^0 \cap \PiANF^0$.
However, since each player $i \in N$ knows their type $\theta_i$ by definition, the equilibrium concept in which each player ignores their type is artificial.
\end{remark}

\subsection{Proof of PoA bounds for conditionally smooth mechanisms}\label{sec:poa-mechanism-proof}

\thmpoamechanism*

\begin{proof}
Let $\pi \in \PiCom^0 \cap \PiANF^0$.
By summing \Cref{lem:poa} for each $i \in N$, we obtain
\begin{align*}
\E_{\theta \sim \rho} \left[ \E_{a \sim \pi(\theta)} \left[ \sum_{i \in N} v_i(\theta_i; a) \right] \right]
&\ge
\E_{\theta \sim \rho} \left[ \E_{\theta' \sim \rho} \left[ \E_{a \sim \pi(\theta')} \left[ \sum_{i \in N} v_i(\theta_i; a^*_{i,\theta,a_i}, a_{-i}) \right] \right] \right]\\
&\ge
\E_{\theta \sim \rho} \left[ \E_{\theta' \sim \rho} \left[ \E_{a \sim \pi(\theta')} \left[ \lambda \max_{x \in X} \sum_{i \in N} v_i^+(\theta_i; x_i) - \mu \sum_{i \in N} v_i^-(a) \right] \right] \right] \tag{due to conditional smoothness}\\
&=
\lambda \E_{\theta \sim \rho} \left[ \max_{x \in X} \sum_{i \in N} v_i^+(\theta_i; x_i) \right]
- \mu \E_{\theta \sim \rho} \left[ \E_{a \sim \pi(\theta)} \left[ \sum_{i \in N} v_i^-(a) \right] \right]\\
&=
\lambda \E_{\theta \sim \rho} \left[ \max_{a \in A} \vSW(\theta;a) \right]
- \mu \E_{\theta \sim \rho} \left[ \E_{a \sim \pi(\theta)} \left[ \sum_{i \in N} v_i^-(a) \right] \right].
\end{align*}
Since $v_i(\theta_i;a) = v_i^+(\theta_i;f_i(a)) - v_i^-(a)$ for each $i \in N$, we obtain
\begin{align*}
\E_{\theta \sim \rho} \left[ \E_{a \sim \pi(\theta)} \left[ \vSW (\theta; a) \right] \right]
\ge
\lambda \E_{\theta \sim \rho} \left[ \max_{a \in A} \vSW(\theta;a) \right]
+ (1-\mu) \E_{\theta \sim \rho} \left[ \E_{a \sim \pi(\theta)} \left[ \sum_{i \in N} v_i^-(a) \right] \right].
\end{align*}
If $\mu \le 1$, the second term on the right-hand side is non-negative, and therefore, the PoA is at least $\lambda$.
If $\mu > 1$, since $v_i(\theta_i;a) = v_i^+(\theta_i;f_i(a)) - v_i^-(a) \ge 0$, the second term on the right-hand side is bounded below by $(1-\mu) \bbE_{\theta \sim \rho} \left[ \bbE_{s \sim \sigma} \left[ \vSW(\theta;s(\theta)) \right] \right]$, which implies a $\lambda / \mu$ lower bound on the PoA.
\end{proof}

\section{Bayes correlated equilibria and no-regret dynamics}\label{sec:dynamics}

In this section, we review various classes of Bayes correlated equilibria surveyed by \citet{Forges93} and propose variants of no-regret dynamics converging to them.
\Cref{sec:sr} defines strategy representability and compares it with conditional independence.
\Cref{sec:sf} provides a definition of approximate SFCEs and shows that dynamics minimizing a variant of regret called strategy swap regret converge to them.
\Cref{sec:anf} reviews a definition of approximate ANFCEs and shows that dynamics minimizing a variant of regret called type-wise swap regret converge to them.
\Cref{sec:relation} shows several inclusion relations between the classes of Bayes correlated equilibria and Bayes--Nash equilibria.

\subsection{Strategy representability}\label{sec:sr}

To classify Bayes correlated equilibria in terms of distributions, we need to compare different kinds of distributions.
As we will see, SFCEs and ANFCEs are defined as distributions over strategy profiles, that is, $\sigma \in \Delta(S)$.
This corresponds to scenarios where a mediator generates $s \sim \sigma \sim \Delta(S)$ without observing type profile $\theta \sim \rho$, and then each player $i \in N$ takes action $s_i(\theta_i)$.
On the other hand, communication equilibria and Bayesian solutions are defined as distributions over action profiles for each type profile, which we call \textit{type-wise distributions}.
A type-wise distribution $\pi \in \Delta(A)^\Theta$ determines a distribution $\pi(\theta) \in \Delta(A)$ for each type $\theta \in \Theta$.
This corresponds to scenarios where a mediator generates $a \sim \pi(\theta) \in \Delta(A)$ based on type profile $\theta \in \Theta$.
We write $\pi(\theta; a)$ instead of $\pi(\theta)(a)$ for readability.

To compare these distributions, we adopt the approach of transforming $\sigma \in \Delta(S)$ into $\pi \in \Delta(A)^\Theta$ by considering the marginal distribution.
Suppose that the mediator samples $s \sim \sigma$ independently of $\theta \sim \rho$ and then recommends $s_i(\theta_i)$ to each player $i \in N$.
Then, for each $\sigma \in \Delta(S)$, we can define its corresponding distribution $\pi \in \Delta(A)^\Theta$ by
\begin{equation*}
\pi(\theta;a) = \Pr_{s \sim \sigma} (s(\theta) = a)
\end{equation*}
for each $\theta \in \Theta$ and $a \in A$.
This operation defines a function $\eta \colon \Delta(S) \to \Delta(A)^\Theta$.

We define the strategy representability as the property that $\pi \in \eta(\Delta(S))$.
This property implies that the mediator can realize $\pi$ by sending a randomized recommendation of $s \in S$ without observing the type profile $\theta \in \Theta$.

\begin{definition}[Strategy representability]\label{def:sr}
We call a type-wise distribution $\pi \in \Delta(A)^\Theta$ \textit{strategy-representable}
if there exists $\sigma \in \Delta(S)$ such that $\pi(\theta;a) = \Pr_{s \sim \sigma} (s(\theta) = a)$ for each $\theta \in \Theta$ and $a \in A$.
\end{definition}

As we show below, not all type-wise distributions $\Delta(A)^\Theta$ are strategy-representable.
That is, if we generate $s \sim \sigma$ independently of $\theta \sim \rho$, we cannot realize all distributions in $\Delta(A)^\Theta$.

\begin{figure}
\centering
\begin{tikzpicture}[line width=1pt]
\newcommand{\ptwo}{\scriptsize$1/2$}
\newcommand{\pzero}{\scriptsize$0$}

\node[anchor=east] at (-20pt,55pt){$\theta_1$};
\node[anchor=east] at (-20pt,10pt){$\theta'_1$};

\node[anchor=east] at (-10pt,65pt){$a_1$};
\node[anchor=east] at (-10pt,45pt){$a'_1$};
\node[anchor=east] at (-10pt,20pt){$a_1$};
\node[anchor=east] at (-10pt,00pt){$a'_1$};

\node[anchor=south] at (10pt,85pt){$\theta_2$};
\node[anchor=south] at (55pt,85pt){$\theta'_2$};

\node[anchor=south] at (00pt,75pt){$a_2$};
\node[anchor=south] at (20pt,75pt){$a'_2$};
\node[anchor=south] at (45pt,75pt){$a_2$};
\node[anchor=south] at (65pt,75pt){$a'_2$};

\node[draw, rectangle, minimum width=20pt, minimum height=20pt] at (00pt,45pt){\pzero};
\node[draw, rectangle, minimum width=20pt, minimum height=20pt] at (00pt,65pt){\ptwo};
\node[draw, rectangle, minimum width=20pt, minimum height=20pt] at (20pt,45pt){\ptwo};
\node[draw, rectangle, minimum width=20pt, minimum height=20pt] at (20pt,65pt){\pzero};

\node[draw, rectangle, minimum width=20pt, minimum height=20pt] at (45pt,45pt){\pzero};
\node[draw, rectangle, minimum width=20pt, minimum height=20pt] at (45pt,65pt){\ptwo};
\node[draw, rectangle, minimum width=20pt, minimum height=20pt] at (65pt,45pt){\ptwo};
\node[draw, rectangle, minimum width=20pt, minimum height=20pt] at (65pt,65pt){\pzero};

\node[draw, rectangle, minimum width=20pt, minimum height=20pt] at (00pt,00pt){\pzero};
\node[draw, rectangle, minimum width=20pt, minimum height=20pt] at (00pt,20pt){\ptwo};
\node[draw, rectangle, minimum width=20pt, minimum height=20pt] at (20pt,00pt){\ptwo};
\node[draw, rectangle, minimum width=20pt, minimum height=20pt] at (20pt,20pt){\pzero};

\node[draw, rectangle, minimum width=20pt, minimum height=20pt] at (45pt,00pt){\ptwo};
\node[draw, rectangle, minimum width=20pt, minimum height=20pt] at (45pt,20pt){\pzero};
\node[draw, rectangle, minimum width=20pt, minimum height=20pt] at (65pt,00pt){\pzero};
\node[draw, rectangle, minimum width=20pt, minimum height=20pt] at (65pt,20pt){\ptwo};
\end{tikzpicture}
\caption{An example of a type-wise distribution that is not strategy-representable.}
\label{fig:non-representable}
\end{figure}

\begin{example}\label{ex:sr}
We give an example $\pi \in \Delta(A)^\Theta$ that cannot be represented by any $\sigma \in \Delta(S)$.
Suppose that $A_1 = \{a_1,a'_1\}$, $A_2 = \{a_2,a'_2\}$, $\Theta_1 = \{\theta_1,\theta'_1\}$, and $\Theta_2 = \{\theta_2,\theta'_2\}$.
Let $\rho$ be the uniform distribution on $\Theta_1 \times \Theta_2$.
As illustrated in \Cref{fig:non-representable}, we define $\pi \in \Delta(A)^\Theta$ by
$\pi(\theta_1,\theta_2; a_1,a_2)
= \pi(\theta_1,\theta_2; a'_1,a'_2)
= \pi(\theta_1,\theta'_2; a_1,a_2)
= \pi(\theta_1,\theta'_2; a'_1,a'_2)
= \pi(\theta'_1,\theta_2; a_1,a_2)
= \pi(\theta'_1,\theta_2; a'_1,a'_2)
= \pi(\theta'_1,\theta'_2; a_1,a'_2)
= \pi(\theta'_1,\theta'_2; a'_1,a_2)
= \frac{1}{2}$.
We show that $\pi$ is not strategy-representable.
If there exists $\sigma \in \Delta(S)$ that satisfies $\pi(\theta;a) = \Pr_{s \sim \sigma}(s(\theta) = a)$ for any $\theta \in \Theta$ and $a \in A$,
then $\Pr(s_1(\theta_1) = a_1, ~ s_2(\theta_2) = a_2) = \Pr(s_1(\theta_1) = a'_1, ~ s_2(\theta_2) = a'_2) = \frac{1}{2}$.
This implies that the event $s_1(\theta_1) = a_1$ coincides with the event $s_2(\theta_2) = a_2$, and $s_1(\theta_1) = a'_1$ coincides with $s_2(\theta_2) = a'_2$.
Similarly, from $\Pr(s_1(\theta_1) = a_1, ~ s_2(\theta'_2) = a_2) = \Pr(s_1(\theta_1) = a'_1, ~ s_2(\theta'_2) = a'_2) = \frac{1}{2}$, we can see that $s_1(\theta_1) = a_1$ coincides with $s_2(\theta'_2) = a_2$, and $s_1(\theta_1) = a'_1$ coincides with $s_2(\theta'_2) = a'_2$.
Therefore, $s_2(\theta_2) = s_2(\theta'_2)$ holds with probability $1$.
By symmetry, we can see that $s_1(\theta_1) = s_1(\theta'_1)$ holds with probability $1$.
Hence, the distributions of $(s_1(\theta_1), s_2(\theta_2))$ and $(s_1(\theta'_1), s_2(\theta'_2))$ must be identical, but they are not, which leads to a contradiction.
This implies that $\pi$ is not strategy-representable.
\end{example}

We say $\pi \in \Delta(A)^\Theta$ is a \textit{type-wise product distribution} if there exists some $\pi_{i,\theta_i} \in \Delta(A_i)$ for each $i \in N$ and $\theta_i \in \Theta_i$ such that $\pi(\theta;a) = \prod_{i \in N} \pi_{i,\theta_i}(a_i)$ for every $\theta \in \Theta$ and $a \in A$.
Let $\Pi_{\mathrm{prod}} \subseteq \Delta(A)^\Theta$ be the set of all type-wise product distributions.
As we show below, a finite mixture of type-wise product distributions is always strategy-representable.
We will use this fact to prove that the average of finite rounds of repeated play in the agent normal form is strategy-representable since the distribution in each round is type-wise product.
Furthermore, the inverse also holds.

\begin{proposition}\label{prop:sr}
A distribution $\pi \in \Delta(A)^\Theta$ is strategy-representable
if and only if
$\pi$ is a finite mixture of type-wise product distributions, that is, there exist a positive integer $T$, $\sigma' \in \Delta([T])$, and $\pi^t \in \PiProd$ for each $t \in [T]$ such that $\pi(\theta;a) = \sum_{t \in [T]} \sigma'(t) \pi^t(\theta;a)$ for each $\theta \in \Theta$ and $a \in A$.
\end{proposition}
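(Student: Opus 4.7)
The plan is to prove both directions by constructing an explicit correspondence between $\Delta(S)$ and finite mixtures of type-wise product distributions, exploiting the fact that $S = \prod_{i \in N} A_i^{\Theta_i}$ is finite.

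For the forward direction, suppose $\pi$ is strategy-representable via some $\sigma \in \Delta(S)$. Since $S$ is finite, $\sigma$ has finite support; enumerate this support as $s^1,\dots,s^T$ with weights $\sigma'(t) = \sigma(s^t)$, so $\sigma'\in\Delta([T])$. For each pure strategy profile $s^t$, define $\pi^t(\theta;a) = \prod_{i\in N} \mathbf{1}[s^t_i(\theta_i) = a_i]$; this is a type-wise product distribution (taking $\pi^t_{i,\theta_i}$ to be the point mass at $s^t_i(\theta_i)$). Then
\begin{equation*}
\pi(\theta;a) = \Pr_{s\sim\sigma}(s(\theta)=a) = \sum_{t\in[T]} \sigma'(t)\, \mathbf{1}[s^t(\theta)=a] = \sum_{t\in[T]} \sigma'(t)\,\pi^t(\theta;a),
\end{equation*}
which exhibits $\pi$ as a finite mixture of type-wise product distributions.

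For the converse, by linearity of the map $\eta\colon\Delta(S)\to\Delta(A)^\Theta$ it suffices to show that a single type-wise product distribution $\pi^t$ with $\pi^t(\theta;a)=\prod_{i\in N}\pi^t_{i,\theta_i}(a_i)$ is strategy-representable; once this holds, we combine the resulting $\sigma^t \in \Delta(S)$ by $\sigma = \sum_{t\in[T]}\sigma'(t)\sigma^t$ to realize the mixture $\pi$. To realize $\pi^t$, I will construct $\sigma^t$ as a product distribution on $S = \prod_{i\in N} A_i^{\Theta_i}$ in which each player $i$'s strategy $s_i$ is drawn by independently sampling $s_i(\theta_i)\sim \pi^t_{i,\theta_i}$ for every $\theta_i\in\Theta_i$. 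Then independence across players and across types yields
\begin{equation*}
\Pr_{s\sim\sigma^t}(s(\theta) = a) = \prod_{i\in N} \Pr(s_i(\theta_i) = a_i) = \prod_{i\in N} \pi^t_{i,\theta_i}(a_i) = \pi^t(\theta;a),
\end{equation*}
as required.

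There is no genuine obstacle in this proof; the only thing to notice is the slight asymmetry between the two directions. The forward direction uses only the finiteness of $S$ (so that any $\sigma$ is automatically a finite mixture of Dirac masses, each of which is a degenerate type-wise product), whereas the backward direction requires the non-obvious step of independently randomizing each $s_i(\theta_i)$ over types. This independence is exactly what makes the construction produce a genuine strategy distribution rather than a correlated recommendation, and it is the conceptual content that \Cref{ex:sr} shows can fail for arbitrary $\pi \in \Delta(A)^\Theta$.
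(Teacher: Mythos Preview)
Your proof is correct and follows essentially the same approach as the paper's own proof: both directions use the same constructions (Dirac masses at pure strategy profiles for the forward direction, and the product distribution $\sigma^t(s)=\prod_{i\in N}\prod_{\theta_i\in\Theta_i}\pi^t_{i,\theta_i}(s_i(\theta_i))$ for the backward direction). The only cosmetic difference is that the paper indexes the forward mixture by all of $S$ rather than just the support of $\sigma$.
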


\begin{proof}
Assume $\pi$ is strategy-representable, that is, there exists $\sigma \in \Delta(S)$ such that $\pi(\theta;a) = \Pr_{s \sim \sigma} (s(\theta) = a)$ for each $\theta \in \Theta$ and $a \in A$.
Let $\sigma' = \sigma$ and define $\pi^{s} \in \Delta(A)^\Theta$ by $\pi^{s}(\theta;a) = \bfone_{\{s(\theta) = a\}}$.
We can check $\pi^{s} \in \PiProd$ by setting $\pi_{i,\theta_i}(a_i) = \bfone_{\{s_i(\theta_i) = a_i\}}$.
Then
\begin{equation*}
\pi(\theta;a)
= \Pr_{s \sim \sigma} (s(\theta) = a)
= \sum_{s \in S} \sigma(s) \bfone_{\{s(\theta) = a\}}
= \sum_{s \in S} \sigma'(s) \pi^{s}(\theta;a)
\end{equation*}
for each $\theta \in \Theta$ and $a \in A$.

Assume there exists a positive integer $T$, $\sigma' \in \Delta(T)$, and $\pi^t \in \PiProd$ for each $t \in [T]$ such that $\pi(\theta;a) = \sum_{t \in [T]} \sigma'(t) \pi^t(\theta;a)$.
Since $\pi^t \in \PiProd$ for each $t \in [T]$, there exists $\pi^t_{i,\theta_i} \in \Delta(A_i)$ for each $i \in N$ and $\theta_i \in \Theta_i$ such that $\pi^t(\theta;a) = \prod_{i \in N} \pi^t_{i,\theta_i}(a_i)$ for each $\theta \in \Theta$ and $a \in A$.
We define $\sigma^t \in \Delta(S)$ by $\sigma^t(s) = \prod_{i \in N} \prod_{\theta_i \in \Theta_i} \pi^t_{i,\theta_i}(s_i(\theta_i))$ for each $s \in S$.
Then for each $\theta \in \Theta$ and $a \in A$, it holds that
\begin{equation*}
\Pr_{s \sim \sigma^t} (s(\theta) = a)
= \sum_{\substack{s \in S \colon \\ s(\theta) = a}} \prod_{i \in N} \prod_{\theta'_i \in \Theta_i} \pi^t_{i,\theta'_i}(s_i(\theta'_i))
= \prod_{i \in N} \pi^t_{i,\theta_i}(a_i)
= \pi^t(\theta;a).
\end{equation*}
Define $\sigma \in \Delta(S)$ by $\sigma(s) = \sum_{t \in [T]} \sigma'(t) \sigma^t(s)$ for every $s \in S$.
Then
\begin{equation*}
\Pr_{s \sim \sigma} (s(\theta) = a)
= \sum_{t \in [T]} \sigma'(t) \Pr_{s \sim \sigma^t} (s(\theta) = a)
= \sum_{t \in [T]} \sigma'(t) \pi^t(\theta;a) 
= \pi(\theta;a)
\end{equation*}
for each $\theta \in \Theta$ and $a \in A$, which implies that $\pi$ is strategy-representable.
\end{proof}

\citet*{Forges93} did not formally define the strategy representability but instead claimed that a property called \textit{conditional independence property} is important to classify Bayes correlated equilibria.
As \citet*{Forges93} claimed, the conditional independence property holds for any strategy-representable distribution.
Although it was not stated explicitly, the conditional independence property is not equivalent to the strategy representability.%
\footnote{A related discussion about \textit{belief invariant Bayesian solutions} can be found in existing studies \citep*{LRS10,Forges06}.}
As claimed in \Cref{prop:com-sr}, the strategy representability exactly characterize the difference between communication equilibria and ANFCEs.

Here, we show the difference between the strategy representability and the conditional independence property.
The conditional independence property was mentioned by \citet*{Forges93} as an important property that holds for SFCEs and ANFCEs.

\begin{definition}[Conditional independence property]
Let $\pi \in \Delta(A)^\Theta$.
We consider a joint distribution in $\Delta(\Theta \times A)$ that first generates $\theta \in \Theta$ according to $\rho$ and then generates $a \in A$ according to $\pi(\theta)$.
We say $\pi$ satisfies the \textit{conditional independence property}
\footnote{It is originally defined as a property of a joint distribution in $\Delta(\Theta \times A)$, but we can consider a joint distribution in $\Delta(\Theta \times A)$ and a type-wise distribution in $\Delta(A)^\Theta$ interchangeably.
For each $\pi \in \Delta(A)^\Theta$, the corresponding joint distribution in $\Delta(\Theta \times A)$ is uniquely determined.
Conversely, for any joint distribution whose marginal for $\theta$ coincides with $\rho$, we can uniquely determine the corresponding distribution in $\Delta(A)^\Theta$.
}
if $a_i$ is conditionally independent of $\theta_{-i}$ given $\theta_i$ for each $i \in N$.
\end{definition}

As \citet*{Forges93} claimed, this property holds for any strategy-representable distribution.
Although it was not claimed explicitly, this property is not equivalent to the strategy representability.

\begin{proposition}
If $\pi \in \Delta(A)^\Theta$ is strategy-representable,
then $\pi$ satisfies the conditional independence property.
On the other hand, there exists $\pi \in \Delta(A)^\Theta$ that satisfies the conditional independence property but not the strategy representability.
\end{proposition}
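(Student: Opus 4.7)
The plan is to prove the two parts of this proposition separately: first establish the implication, then exhibit a counterexample to the converse.

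For the forward implication, I would unpack the definition of strategy representability. Assume that $\pi = \eta(\sigma)$ for some $\sigma \in \Delta(S)$. Consider the joint distribution on $\Theta \times A$ obtained by first drawing $\theta \sim \rho$, then drawing $s \sim \sigma$ independently of $\theta$, and finally setting $a = s(\theta)$. Since each component $a_i = s_i(\theta_i)$ depends only on the strategy $s_i$ (which is drawn independently of $\theta$) and on the own-type $\theta_i$, the conditional law of $a_i$ given $(\theta_i,\theta_{-i})$ coincides with the conditional law given only $\theta_i$. Formally, I would write
\begin{equation*}
\Pr(a_i = \alpha \mid \theta_i, \theta_{-i})
= \Pr_{s_i \sim \sigma_i^{\mathrm{marg}}}(s_i(\theta_i) = \alpha \mid \theta_i),
\end{equation*}
where $\sigma_i^{\mathrm{marg}}$ denotes the marginal of $\sigma$ on $S_i$; the right-hand side does not involve $\theta_{-i}$, yielding the conditional independence property.

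For the converse direction, my plan is to reuse the distribution $\pi$ constructed in \Cref{ex:sr}, which has already been shown there not to be strategy-representable. It therefore suffices to verify that this $\pi$ satisfies the conditional independence property. I would compute the conditional marginals of $a_1$ given each pair $(\theta_1,\theta_2)$: inspecting the eight nonzero entries, in every one of the four cells $\pi(\tilde{\theta}_1,\tilde{\theta}_2)$ the marginal of $a_1$ places mass $1/2$ on $a_1$ and $1/2$ on $a'_1$, so $\Pr(a_1 \mid \theta_1,\theta_2)$ does not depend on $\theta_2$. The symmetric computation for $a_2$ yields that $\Pr(a_2 \mid \theta_1,\theta_2)$ does not depend on $\theta_1$. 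This establishes conditional independence, while the argument already given in \Cref{ex:sr} precludes strategy representability.

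I do not expect any serious obstacle: the forward direction is essentially a restatement of the fact that $s$ is drawn independently of $\theta$ under $\eta$, and the converse is entirely carried by \Cref{ex:sr}, so the only work is a short bookkeeping check of the four conditional distributions. If anything, the subtle point is merely to make clear that conditional independence is strictly weaker than strategy representability because it imposes only per-player marginal constraints, whereas strategy representability also enforces consistency across players' recommendations (it was this cross-player consistency that failed in \Cref{ex:sr} and produced the contradiction there).
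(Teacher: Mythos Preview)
Your proposal is correct and follows essentially the same approach as the paper: the forward implication uses that $s$ is drawn independently of $\theta$ so that $a_i = s_i(\theta_i)$ depends on $\theta$ only through $\theta_i$, and the counterexample reuses \Cref{ex:sr} together with a direct check that each player's action marginal is uniform regardless of the other player's type. The paper carries out the forward direction via an explicit chain of equalities rather than your more conceptual one-line argument, but the content is the same.
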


\begin{proof}
First, we show that the strategy representability implies the conditional independence property.
Assume there exists $\sigma \in \Delta(S)$ such that $\pi(\theta;a) = \Pr_{s \sim \sigma} \left( s(\theta) = a \right)$ for each $\theta \in \Theta$ and $a \in A$.
To prove the conditional independence property, it is sufficient to prove the conditional independence of $a_i$ and $\theta_{-i}$ given $\theta_i$ for any $i \in N$, $\theta \in \Theta$, and $a_i \in A_i$.
Using the strategy representability of $\pi$, we obtain
\begin{align*}
&\Pr_{\substack{\theta' \sim \rho\\a' \sim \pi(\theta')}}(\theta' = \theta, ~ a'_i = a_i)
\Pr_{\theta' \sim \rho}(\theta'_i = \theta_i)\\
&=
\left( \rho(\theta)
\Pr_{a' \sim \pi(\theta)}(a'_i = a_i) \right)
\Pr_{\theta' \sim \rho}(\theta'_i = \theta_i)\\
&=
\rho(\theta)
\Pr_{s' \sim \sigma}(s'_i(\theta_i) = a_i)
\Pr_{\theta' \sim \rho}(\theta'_i = \theta_i)
\tag{from the definition of $\sigma$}
\\
&=
\rho(\theta)
\Pr_{\substack{\theta' \sim \rho \\ s' \sim \sigma}}(s'_i(\theta_i) = a_i , ~ \theta'_i = \theta_i)\\
&=
\rho(\theta)
\Pr_{\substack{\theta' \sim \rho \\ s' \sim \sigma}}(s'_i(\theta'_i) = a_i , ~ \theta'_i = \theta_i)
\tag{since $s'_i(\theta'_i)$ can be replaced by $s'_i(\theta_i)$ when $\theta'_i = \theta_i$}\\
&=
\rho(\theta)
\Pr_{\substack{\theta' \sim \rho \\ a' \sim \pi(\theta')}}(a'_i = a_i, ~ \theta'_i = \theta_i).
\tag{from the definition of $\sigma$}
\end{align*}
Dividing both sides by $(\Pr_{\theta' \sim \rho}(\theta'_i = \theta_i))^2$, we obtain the conditional independence of $a_i$ and $\theta_{-i}$ given $\theta_i$:
\begin{equation*}
\Pr_{\substack{\theta' \sim \rho\\a' \sim \pi(\theta')}}(\theta'_{-i} = \theta_{-i}, ~ a'_i = a_i \mid \theta'_i = \theta_i)
=
\Pr_{\substack{\theta' \sim \rho\\a' \sim \pi(\theta')}}(\theta'_{-i} = \theta_{-i} \mid \theta'_i = \theta_i)
\Pr_{\substack{\theta' \sim \rho\\a' \sim \pi(\theta')}}(a_i = a_i \mid \theta'_i = \theta_i).
\end{equation*}

Next, we show the existence of an example that satisfies the conditional independence property but not the strategy representability.
We consider $\pi \in \Delta(A)^\Theta$ described in \Cref{ex:sr} again (see \Cref{fig:non-representable}).
Since we already proved that this $\pi$ is not strategy-representable, it is sufficient to show that $\pi$ satisfies the conditional independence property.
It can be verified by considering the distribution conditional on each player's type.
If the first player's type is $\theta_1$, the conditional distribution of the first player's action and the second player's type is the uniform distribution on $A_1 \times \Theta_2$, hence conditionally independent.
This is also the case for $\theta'_1$.
Since the definition of $\pi$ is symmetric for two players, we can conclude that $\pi$ satisfies the conditional independence property.

\end{proof}

\subsection{Strategic-form correlated equilibria}\label{sec:sf}

\textit{Strategic-form correlated equilibria} (SFCEs; also known as normal-form correlated equilibria) are defined as correlated equilibria of the strategic form of a Bayesian game.
Recall that in the strategic form, the set of strategies $S_i = {A_i}^{\Theta_i}$ is considered as a decision space for each player $i \in N$.
This class is defined as a set of joint distributions $\sigma \in \Delta(S)$ on strategy profiles in which each player $i \in N$ cannot gain by deviating from recommended strategy $s_i \in S_i$ in expectation.
This deviation is represented by a swap $\phiSF \colon S_i \to S_i$ that maps recommended strategy $s_i \in S_i$ to another strategy $\phiSF(s_i)$.
Here, we define an approximate version in which each player can gain at most $\epsilon$ in expectation by deviating from the recommendation.

\begin{definition}[$\epsilon$-Approximate strategic-form correlated equilibria ($\epsilon$-SFCEs)]
For any $\epsilon \ge 0$, a distribution $\sigma \in \Delta(S)$ is an $\epsilon$-approximate strategic-form correlated equilibrium
if
for any $i \in N$ and any $\phiSF \colon S_i \to S_i$, it holds that
\begin{equation}\label{eq:ic-sf}
\tag{$\mathrm{IC}_{\mathrm{SF}}$}
\E_{\theta \sim \rho} \left[ \E_{s \sim \sigma} \left[ v_i(\theta; s(\theta)) \right] \right]
\ge                              
\E_{\theta \sim \rho} \left[ \E_{s \sim \sigma} \left[ v_i(\theta; (\phiSF(s_i))(\theta_i),s_{-i}(\theta_{-i})) \right] \right] - \epsilon.
\end{equation}
Let $\SigmaSF^\epsilon \subseteq \Delta(S)$ be the set of all $\epsilon$-SFCEs.
\end{definition}

The notion of SFCE is a natural equilibrium concept for the following scenario.
A mediator samples $s \in S$ according to a distribution $\sigma \in \Delta(S)$ and then informs each player $i \in N$ of $s_i$.
Independently, a type profile $\theta$ is sampled from $\rho$, and each player $i \in N$ observes their own type $\theta_i$.
Then, each player $i \in N$ decides their action $s_i(\theta_i) \in A_i$.
If every player cannot gain by deviating from action $s_i(\theta_i)$ in expectation, the distribution $\sigma$ is an SFCE.

The role of this mediator can be interpreted as a correlation device.
A correlation device privately sends (possibly correlated) signals to each player according to some distribution.
An SFCE can be defined as a Nash equilibrium of an extended game equipped with a correlation device.
A correlation device can send any signal in general, but from the revelation principle, we can assume that each signal corresponds to a recommendation of each strategy without loss of generality.

Since SFCEs are correlated equilibria of the strategic form of Bayesian games, no-swap-regret dynamics converging to correlated equilibria can be directly extended to dynamics converging to SFCEs.
In contrast to the problem setting introduced in \Cref{sec:concept}, the decision space for each player $i \in N$ in the strategic form is $S_i$.
In each round $t \in [T]$, each player $i \in N$ decides a randomized strategy $\sigma_i^t \in \Delta(S_i)$ and then obtains the expected payoff $\E_{\theta \sim \rho} \left[ \E_{s \sim \sigma^t} \left[ v_i(\theta;s(\theta)) \right] \right]$, where $\sigma^t$ is the product distribution that independently generates $s_j \sim \sigma_j^t$ for each $j \in N$.
We define the reward vector $u_i^t \in [0,1]^{\Theta_i \times A_i}$ by
\begin{equation}\label{eq:reward}
u_i^t(\theta_i, a_i) = \E_{\theta_{-i} \sim \rho|\theta_i} \left[ \E_{s_{-i} \sim \sigma^t_{-i}} \left[ v_i( \theta; a_i, s_{-i}(\theta_{-i})) \right] \right]
\end{equation}
for each type $\theta_i \in \Theta_i$ and action $a_i \in A_i$, where $\sigma_{-i}^t$ is the product distribution that independently generates $s_j \sim \sigma_j^t$ for each $j \in N \setminus \{i\}$.
Note that the decision space is $S_i$, but the rewards for them can be succinctly represented by $u_i^t \in [0,1]^{\Theta_i \times A_i}$.

Then each player $i \in N$ decides a strategy distribution $\sigma_i^t \in \Delta(S_i)$ in each round $t \in [T]$ and obtains the expected payoff $\E_{\theta_i \sim \rho_i} \left[ \E_{s_i \sim \sigma_i^t} \left[ u_i^t(\theta_i, s_i(\theta_i)) \right] \right]$ according to reward vector $u_i^t \in [0,1]^{\Theta_i \times A_i}$ defined as \eqref{eq:reward}, which is the setting of online learning with stochastic types.
We consider an extension of swap regret to this online learning problem with decision space $S_i$, which we call \textit{strategy swap regret}.

\begin{definition}[{Strategy swap regret}]
For online learning with stochastic types specified by actions $A_i$, types $\Theta_i$, prior distribution $\rho_i$, and reward vector $u_i^t \in [0,1]^{\Theta_i \times A_i}$ for every time round $t \in [T]$, strategy swap regret is defined as
\begin{equation*}
\RSSi^T = \max_{\phiSF \colon S_i \to S_i} \sum_{t=1}^T \E_{\theta_i \sim \rho_i} \left[ \E_{s_i \sim \sigma_i^t} \left[ u_i^t(\theta_i, (\phiSF(s_i))(\theta_i)) - u_i^t(\theta_i, s_i(\theta_i)) \right] \right].
\end{equation*}
\end{definition}

\begin{algorithm}[t]
\caption{Dynamics for strategic-form correlated equilibria}\label{alg:sf-dynamics}
	For each player $i \in N$, let $\calA_{i}$ be a subroutine that minimizes strategy swap regret for online learning with stochastic types.\\
	\For{each round $t = 1,\dots,T$}{
		Each player $i \in N$ decides their randomized strategy $\sigma^t_i \in \Delta(S_i)$ according to $\calA_{i}$ and shares it with the other players.\\
		Each player $i \in N$ computes reward $u_i^t(\theta_i, a_i) = \E_{\theta_{-i} \sim \rho|\theta_i} \left[ \E_{s_{-i} \sim \sigma^t_{-i}} \left[ v_i(\theta; a_i, s_{-i}(\theta_{-i})) \right] \right]$ for every $\theta_i \in \Theta_i$ and $a_i \in A_i$, where $\sigma^t_{-i} \in \Delta(S_{-i})$ is the product distribution that independently generates $s_j \sim \sigma_j^t$ for each $j \in N \setminus \{i\}$.\\
		Feed $u_i^t \in [0,1]^{\Theta_i \times A_i}$ to $\calA_i$ as a reward vector for round $t$.
	}
\end{algorithm}

Theorem 3 of \citet*{BM07} relates each player's swap regret with the convergence to a correlated equilibrium.
By directly applying it to the strategic form of Bayesian games, we can show that if the strategy swap regret of each player $i \in N$ grows sublinear in $T$, we can guarantee the convergence of the dynamics to an SFCE.
This is claimed in the following proposition formally.

\begin{proposition}\label{prop:dynamics-sf}
Let $\sigma_i^t \in \Delta(S_i)$ be the randomized strategy of each player $i \in N$ for each round $t \in [T]$ in \Cref{alg:sf-dynamics}.
Then, the empirical distribution $\sigma \in \Delta(S)$ defined by $\sigma(s) = \frac{1}{T} \sum_{t=1}^T \prod_{i \in N} \sigma_i^t(s_i)$ for each $s \in S$
is a $\frac{\max_{i \in N} \RSSi^T}{T}$-SFCE, where $\RSSi^T$ is the strategy swap regret for each subroutine $\calA_i$.
\end{proposition}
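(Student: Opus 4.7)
The plan is to mimic the classical Blum--Mansour reduction from swap regret to correlated equilibria, now carried out in the strategic form of the Bayesian game. I would fix an arbitrary player $i \in N$ and an arbitrary swap $\phiSF \colon S_i \to S_i$ and verify that the empirical distribution $\sigma$ satisfies the incentive constraint \eqref{eq:ic-sf} with slack $\RSSi^T / T$; taking the worst player then gives the stated bound.

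The main step is to rewrite both sides of \eqref{eq:ic-sf} as time-averages of the per-round expected rewards in the online learning problem solved by player $i$. Using $\sigma(s) = \frac{1}{T}\sum_t \prod_j \sigma_j^t(s_j)$, pulling the sum over $t$ outside, and factoring the expectations over $\theta_{-i}$ and $s_{-i}$ into the reward vector $u_i^t$ defined in \eqref{eq:reward}, the left-hand side of \eqref{eq:ic-sf} becomes
\[
\E_{\theta \sim \rho}\left[\E_{s \sim \sigma}\left[v_i(\theta; s(\theta))\right]\right] = \frac{1}{T}\sum_{t=1}^T \E_{\theta_i \sim \rho_i}\left[\E_{s_i \sim \sigma_i^t}\left[u_i^t(\theta_i, s_i(\theta_i))\right]\right].
\]
The deviation side is handled identically: replacing $s_i(\theta_i)$ by $(\phiSF(s_i))(\theta_i)$ only changes player $i$'s own contribution, so the same $u_i^t$ (which depends only on $\sigma_{-i}^t$) captures the deviated payoff, yielding
\[
\E_{\theta \sim \rho}\left[\E_{s \sim \sigma}\left[v_i(\theta; (\phiSF(s_i))(\theta_i), s_{-i}(\theta_{-i}))\right]\right] = \frac{1}{T}\sum_{t=1}^T \E_{\theta_i \sim \rho_i}\left[\E_{s_i \sim \sigma_i^t}\left[u_i^t(\theta_i, (\phiSF(s_i))(\theta_i))\right]\right].
\]
Subtracting the two identities, the difference is exactly $1/T$ times the quantity inside the maximum defining $\RSSi^T$ for the chosen $\phiSF$, hence at most $\RSSi^T / T$.

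There is no real obstacle here; the argument is essentially bookkeeping that translates the definition of $\RSSi^T$ (phrased in the language of online learning with stochastic types) into the definition of an $\epsilon$-SFCE. The one subtle point worth stressing is that the empirical distribution is defined as a mixture over rounds of products across players, not as a product of per-player time-averages; this is precisely what allows the round-$t$ reward vector $u_i^t$, which is built from $\sigma_{-i}^t$, to line up cleanly with the round-$t$ summand when $\E_{s \sim \sigma}$ is expanded.
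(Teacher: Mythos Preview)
Your proposal is correct and follows essentially the same approach as the paper's proof: fix a player $i$ and a swap $\phiSF$, expand $\E_{s\sim\sigma}$ as a time-average over $\sigma^t$, use the product structure of $\sigma^t$ to absorb $\theta_{-i}$ and $s_{-i}$ into the reward $u_i^t$, and recognize the resulting difference as bounded by $\RSSi^T/T$. Your remark that $\sigma$ must be the mixture over rounds of per-round products (rather than a product of per-player time-averages) is exactly the point that makes the argument work.
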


\begin{proof}
From the definition of $\epsilon$-SFCEs, it is sufficient to prove
\begin{equation*}
\E_{\theta \sim \rho} \left[ \E_{s \sim \sigma} \left[ v_i(\theta; (\phiSF(s_i))(\theta_i),s_{-i}(\theta_{-i})) \right] \right]
-
\E_{\theta \sim \rho} \left[ \E_{s \sim \sigma} \left[ v_i(\theta; s(\theta)) \right] \right]
\le \frac{\max_{j \in N} \RSSj^T}{T}
\end{equation*}
for each $i \in N$ and $\phiSF \colon S_i \to S_i$.
Let $\sigma^t \in \Delta(S_{-i})$ be the product distribution that independently generates $s_j \sim \sigma_j^t$ for each $j \in N$ and $\sigma_{-i}^t \in \Delta(S)$ the product distribution that independently generates $s_j \sim \sigma_j^t$ for each $j \in N \setminus \{i\}$.
Then the left-hand side can be bounded as
\begin{align*}
&\E_{\theta \sim \rho} \left[ \E_{s \sim \sigma} \left[ v_i(\theta; (\phiSF(s_i))(\theta_i),s_{-i}(\theta_{-i})) \right] \right]
- 
\E_{\theta \sim \rho} \left[ \E_{s \sim \sigma} \left[ v_i(\theta; s(\theta)) \right] \right]\\
&= 
\E_{t \sim [T]} \left[ \E_{s \sim \sigma^t} \left[ \E_{\theta \sim \rho} \left[ v_i(\theta; (\phiSF(s_i))(\theta_i),s_{-i}(\theta_{-i})) - v_i(\theta; s(\theta)) \right] \right] \right] \tag{from the definition of $\sigma$} \\
&= 
\frac{1}{T} \sum_{t=1}^T \E_{s_i \sim \sigma_i^t} \left[ \E_{s_{-i} \sim \sigma_{-i}^t} \left[ \E_{\theta_i \sim \rho_i} \left[ \E_{\theta_{-i} \sim \rho|\theta_i} \left[ v_i(\theta; (\phiSF(s_i))(\theta_i), s_{-i}(\theta_{-i}))
- 
v_i(\theta; s(\theta))\right] \right] \right]\right] \tag{since $\sigma^t$ is the product distribution}\\
&= 
\frac{1}{T} \sum_{t=1}^T \E_{s_i \sim \sigma_i^t} \left[ \E_{\theta_{i} \sim \rho_{i}} \left[ u_i^t(\theta_i, (\phiSF(s_i))(\theta_i))
- 
u_i^t(\theta_i, s_i(\theta_i))\right] \right] \tag{from the definition of $u_i^t$}\\
&\le \frac{\RSSi^T}{T} \\
&\le \frac{\max_{j \in N} \RSSj^T}{T},
\end{align*}
which completes the proof.
\end{proof}

We provide a detailed description of the dynamics in \Cref{alg:sf-dynamics}.
We should note that this proposition does not imply the existence of a polynomial-time algorithm for computing an approximate SFCE.
This is because the size of the decision space $S_i$ for each player $i \in N$ is exponential.
In \Cref{sec:min-strategy-swap}, we provide an algorithm that has an $O\left(\sqrt{T |A_i|^{|\Theta_i|} \log |A_i|} \right)$ upper bound on strategy swap regret, which is slightly better than a direct application of the swap regret minimization algorithm proposed by \citet*{BM07} but still exponential in $|\Theta_i|$.
Computing an $\epsilon$-SFCE by simulating the dynamics with this algorithm requires time exponential in $|\Theta_i|$.
Whether there exists an algorithm that computes an $\epsilon$-SFCE in time polynomial in $n$, $|A_i|$, $|\Theta_i|$, and $1/\epsilon$ with an oracle for utility functions $(v_i)_{i=1}^n$ is an open problem left for future work.

\begin{remark}
We define approximate SFCEs based on swap regret, but it is also possible to define them based on internal regret.
Note that we have two different definitions of approximate correlated equilibria that are based on internal regret or swap regret.%
\footnote{\citet*{GS18} called them \textit{approximate correlated equilibria} and \textit{approximate rule correlated equilibria}, respectively.}
These definitions coincide when the error is $0$, but different when the error is non-zero.
In this paper, we adopt the extension based on swap regret, but it is also possible to use the extension based on internal regret,
in which $\phiSF \colon S_i \to S_i$ in the definition is restricted to a swap for a single strategy:
\begin{equation*}
\phiSF(s_i) = 
\begin{cases}
s''_i & \text{if $s_i = s'_i$}\\
s_i & \text{otherwise}
\end{cases}
\end{equation*}
for some $s'_i,s''_i \in S_i$.
However, the problem of computing this version of approximate SFCEs is rather easy.
It is because if we set $\sigma \in \Delta(S)$ to the uniform distribution over $S$, then the error is always at most $1/(\min_{i \in N}|S_i|)$.
Hence, for $\epsilon \ge 1/(\min_{i \in N}|S_i|)$, computing it can be done easily.
On the other hand, for $\epsilon \le 1/(\min_{i \in N}|S_i|)$, if we assume $|S_i|$ has the same order for every $i \in N$, then we can straightforwardly run an internal regret minimization algorithm with decision space $S_i$ in time polynomial in $1/\epsilon = \min_{i \in N}|S_i|$.
\end{remark}

To compare SFCEs with other classes defined on $\Delta(A)^\Theta$, we define the set of distributions obtained by mapping SFCEs to $\Delta(A)^\Theta$ by $\eta \colon \Delta(S) \to \Delta(A)^\Theta$ (see \Cref{sec:sr} for the definition of $\eta$).
Formally, it is defined as $\PiSF^\epsilon = \{ \eta(\sigma) \in \Delta(A)^\Theta \mid \sigma \in \SigmaSF^\epsilon \}$.
From the definition, every $\pi \in \PiSF^\epsilon$ is strategy-representable.

As we will see, $\PiSF^\epsilon$ is a subset of approximate communication equilibria.
Informally, computing an $\epsilon$-SFCE is harder than computing an $\epsilon$-approximate communication equilibrium.
If an $\epsilon$-SFCE $\sigma \in \Delta(S)$ can be obtained, then we can compute its corresponding distribution $\eta(\sigma) \in \Delta(A)^\Theta$, which is also an $\epsilon$-approximate communication equilibrium.
Note that this reduction is informal because distributions in $\Delta(S)$ and $\Delta(A)^\Theta$ have an exponentially large support in general, and the computational complexity of computing $\eta(\sigma)$ depends on the succinct representation of $\sigma$.

Whether $\sigma \in \Delta(S)$ is an SFCE cannot be determined by its corresponding $\eta(\sigma) \in \Delta(A)^\Theta$.
As in the following example, even if $\sigma, \sigma' \in \Delta(S)$ correspond to the same $\pi \in \Delta(A)^\Theta$, it is possible that $\sigma$ is an SFCE, but $\sigma'$ is not.

\begin{example}
Suppose $\Theta_1 = \{\theta_1,\theta'_1\}$ and $\Theta_2 = \{\theta_2\}$ with $A_1 = \{a_1,a'_1\}$ and $A_2 = \{a_2,a'_2\}$.
The prior distribution $\rho \in \Delta(\Theta)$ is the uniform distribution over $\Theta$.
Let $v_1(\tilde{\theta}; a_1,a_2) = v_1(\tilde{\theta}; a'_1,a'_2) = 1$ and $v_1(\tilde{\theta}; a'_1,a_2) = v_1(\tilde{\theta}; a_1,a'_2) = 0$ for any $\tilde{\theta} \in \Theta$,
while the second player's payoff is always $0$, i.e., $v_2 \equiv 0$.
We consider two distinct distributions $\sigma, \sigma' \in \Delta(S)$ with the same corresponding $\pi \in \Delta(A)^\Theta$.
The first one $\sigma$ is the uniform distribution over $S$.
The second one $\sigma'$ is the uniform distribution over $s^1,s^2,s^3,s^4$ defined as
\begin{align*}
&s^1_1(\theta_1) = a_1,  ~s^1_1(\theta'_1) = a_1,  ~s^1_2(\theta_2) = a_2, \\
&s^2_1(\theta_1) = a'_1, ~s^2_1(\theta'_1) = a'_1, ~s^2_2(\theta_2) = a_2, \\
&s^3_1(\theta_1) = a_1,  ~s^3_1(\theta'_1) = a'_1, ~s^3_2(\theta_2) = a'_2, \\
&s^4_1(\theta_1) = a'_1, ~s^4_1(\theta'_1) = a_1,  ~s^4_2(\theta_2) = a'_2.
\end{align*}
Both $\pi(\theta_1,\theta_2)$ and $\pi(\theta'_1,\theta_2)$ are the uniform distribution over $A_1 \times A_2$.
The expected payoff for each type of $\theta_1$ and $\theta'_1$ is $\frac{1}{2}$.

First, we show that $\sigma$ is an SFCE.
Since $v_2 \equiv 0$, the second player does not have any incentive to deviate.
Since $\sigma$ is the uniform distribution over $S$, when $s_1 \in S_1$ is recommended, the posterior distribution of $s_2$ is the uniform distribution over $S_2$.
Hence, the expected payoff of each action of $A_1$ is $\frac{1}{2}$, and the first player does not have any incentive to deviate.

Next, we show that $\sigma'$ is not an SFCE.
Since $s^1_1,s^2_1,s^3_1,s^4_1$ are all distinct, the first player can infer the recommendation $s_2(\theta_2)$ for the second player.
The first player can increase the payoff to $1$ by taking the same action as the second player.
Therefore, $\sigma'$ is not an SFCE.
\end{example}

\subsection{Agent-normal-form correlated equilibria}\label{sec:anf}

Agent-normal-form correlated equilibria (ANFCEs) are defined as correlated equilibria of the agent normal form of a Bayesian game.
In an ANFCE, the mediator recommends possibly correlated actions to all hypothetical players $(i,\theta_i) \in N'$.
Since an action profile for the agent normal form assigns some $a_i \in A_i$ to every $i \in N$ and $\theta_i \in \Theta_i$, it can be identified with a strategy profile $s \in S$.
Hence, we can define ANFCEs as the set of distributions over $S$.
The only difference from SFCEs is the information that each player can use for deviations.
While each player $i \in N$ can decide their action according to the recommended actions for all types $\Theta_i$ in SFCEs, each player $i \in N$ can use only the recommended action $s_i(\theta_i)$ for their realized type $\theta_i \in \Theta_i$ in ANFCEs.
Here, we define an approximate version of ANFCEs with additive error $\epsilon$.

\begin{definition}[$\epsilon$-Approximate agent-normal-form correlated equilibria ($\epsilon$-ANFCEs)]
For any $\epsilon \ge 0$, a distribution $\sigma \in \Delta(S)$ is an $\epsilon$-approximate agent-normal-form correlated equilibrium
if
for any $i \in N$ and $\phi \colon \Theta_i \times A_i \to A_i$, it holds that
\begin{equation}\label{eq:ic-anf}
\tag{$\mathrm{IC}_{\mathrm{ANF}}$}
\E_{\theta \sim \rho} \left[ \E_{s \sim \sigma} \left[ v_i(\theta; s(\theta)) \right] \right]
\ge                                        
\E_{\theta \sim \rho} \left[ \E_{s \sim \sigma} \left[ v_i(\theta; \phi(\theta_i, s_i(\theta_i)), s_{-i}(\theta_{-i})) \right] \right] - \epsilon.
\end{equation}
Let $\SigmaANF^\epsilon \subseteq \Delta(S)$ be the set of all $\epsilon$-ANFCEs.
\end{definition}

Here, we consider a natural extension of no-swap-regret dynamics to the agent normal form, which converges to ANFCEs.
Recall that in the agent normal form, the same player with different types is hypothetically considered as distinct players.

In contrast to SFCEs, each player $i \in N$ in ANFCEs is informed of only the action for the realized type $s_i(\theta_i)$, not the full strategy $s_i$.
In an SFCE, each player can use recommendations for unrealized types, which differentiates SFCEs from ANFCEs.
In an ANFCE, each player $i \in N$ with type $\theta_i \in \Theta_i$ can use only $s_i(\theta_i)$ to decide deviation $\phi(\theta_i,s_i(\theta_i))$.
This difference makes $\SigmaANF^\epsilon$ broader than $\SigmaSF^\epsilon$.

\begin{remark}
The incentive constraint in our definition is slightly stronger than the incentive constraint for each hypothetical player $(i,\theta'_i) \in N'$, which can be written as
\begin{equation*}
\E_{\theta \sim \rho} \left[ \bfone_{\{\theta_i=\theta'_i\}} \E_{s \sim \sigma} \left[ v_i(\theta; s(\theta)) \right] \right]
\ge                                        
\E_{\theta \sim \rho} \left[ \bfone_{\{\theta_i=\theta'_i\}} \E_{s \sim \sigma} \left[ v_i(\theta; \phi_{\theta'_i}(s_i(\theta_i)), s_{-i}(\theta_{-i})) \right] \right] - \epsilon
\end{equation*}
for any $\phi_{\theta'_i} \colon A_i \to A_i$.
This can be checked by letting $\phi(\theta'_i,\cdot) = \phi_{\theta'_i}$ and $\phi(\theta''_i,\cdot)$ be the identity map for each $\theta''_i \in \Theta_i \setminus \{\theta'_i\}$ in \eqref{eq:ic-anf}.
Note that this difference is only for the error parameter, and these two incentive constraints are equivalent for $\epsilon = 0$.
\end{remark}

Since the setting of the agent normal form is hypothetical, it is difficult to consider a natural scenario for ANFCEs.
In an ANFCE, the mediator determines a strategy profile $s \in S$ according to $\sigma \in \Delta(S)$ but recommends to each player only $s_i(\theta_i)$.
To realize it, the mediator must know the type profile $\theta \in \Theta$ in advance.
However, if it is the case, then it is possible to realize any distribution $\pi(\theta) \in \Delta(A)$ for each $\theta \in \Theta$, for which the notion of Bayesian solutions introduced later is more appropriate.

By extending no-swap-regret-dynamics to the agent normal form, we can obtain dynamics converging to ANFCEs.
In these dynamics, for every time round $t \in [T]$, each hypothetical player $(i,\theta_i) \in N'$ decides a distribution $\pi_i^t(\theta_i) \in \Delta(A_i)$ over their actions.
Hence, swap regret for each hypothetical player $(i,\theta'_i) \in N'$ considers an action swap $\phi_{\theta'_i} \colon A_i \to A_i$.
As with the incentive constraints discussed above, we consider the sum of swap regrets for all hypothetical players corresponding to $i \in N$, which we call \textit{type-wise swap regret}.

\begin{definition}[{Type-wise swap regret}]
For online learning with stochastic types specified by actions $A_i$, types $\Theta_i$, prior distribution $\rho_i$, and reward vector $u_i^t \in [0,1]^{\Theta_i \times A_i}$ for each round $t \in [T]$, type-wise swap regret is defined as
\begin{equation*}
\RTSi^T = \max_{\phi \colon \Theta_i \times A_i \to A_i} \sum_{t=1}^T \E_{\theta_i \sim \rho_i} \left[ \E_{a_i \sim \pi_i^t(\theta_i)} \left[ u_i^t(\theta_i, \phi(\theta_i, a_i)) - u_i^t(\theta_i, a_i) \right] \right].
\end{equation*}
\end{definition}

\begin{algorithm}[t]
\caption{Dynamics for ANFCEs}\label{alg:anf-dynamics}
	For each $i \in N$, let $\calA_i$ be a subroutine that minimizes type-wise swap regret for online learning with stochastic types.\\
	\For{each round $t = 1,\dots,T$}{
		Each player $i \in N$ decides their randomized strategy $\pi^t_{i} \in \Delta(A_i)^{\Theta_i}$ according to $\calA_{i}$ and shares it with the other players.\\
		Each player $i \in N$ computes reward $u_i^t(\theta_i, a_i) = \E_{\theta_{-i} \sim \rho|\theta_i} \left[ \E_{a_{-i} \sim \pi^t_{-i}(\theta_{-i})} \left[ v_i(\theta; a) \right] \right]$ for every $\theta_i \in \Theta_i$ and $a_i \in A_i$, where $\pi^t_{-i}(\theta_{-i}) \in \Delta(A_{-i})$ is the product distribution that independently generates $a_j \sim \pi_j^t(\theta_j)$ for each $j \in N \setminus \{i\}$.\\
		Feed $u_i^t \in [0,1]^{\Theta_i \times A_i}$ to each $\calA_{i}$ as a reward vector for round $t$.
	}
\end{algorithm}

As with SFCEs, by directly applying Theorem 3 of \citet*{BM07} to the agent normal form of Bayesian games, we obtain the following claim.

\begin{proposition}\label{prop:dynamics-anf}
Let $\pi_i^t \in \Delta(A_i)^{\Theta_i}$ be the type-wise distribution of each player $i \in N$ for each round $t \in [T]$ in \Cref{alg:anf-dynamics}.
Then, the empirical distribution $\sigma \in \Delta(S)$,
which is defined by $\sigma(s) = \frac{1}{T} \sum_{t=1}^T \prod_{i \in N} \prod_{\theta_i \in \Theta_i} \pi_{i}^t(\theta_i; s_i(\theta_i))$ for each $s \in S$,
is a $\frac{\max_{i \in N} \RTSi^T}{T}$-ANFCE,
where $\RTSi^T$ is the type-wise swap regret for each subroutine $\calA_i$.
\end{proposition}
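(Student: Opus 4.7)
The plan is to mimic the proof of \Cref{prop:dynamics-sf} almost verbatim, substituting the strategy-swap deviation $\phiSF\colon S_i \to S_i$ with the type--action swap $\phi\colon \Theta_i \times A_i \to A_i$ and using the key fact that in each single round the joint distribution is a product of type-wise distributions, which lets us collapse a sum over $s\in S$ to a sum over pairs $(\theta_i,a_i)$.

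First I would fix $i \in N$ and $\phi\colon \Theta_i \times A_i \to A_i$ and bound
\[
\E_{\theta \sim \rho}\!\left[\E_{s \sim \sigma}\!\left[v_i(\theta;\phi(\theta_i,s_i(\theta_i)),s_{-i}(\theta_{-i}))\right]\right]
- \E_{\theta \sim \rho}\!\left[\E_{s \sim \sigma}\!\left[v_i(\theta;s(\theta))\right]\right]
\]
by the definition of $\sigma$, i.e. replacing the outer average by $\tfrac{1}{T}\sum_{t=1}^T$ of the corresponding quantity under the product distribution $\sigma^t \in \Delta(S)$ whose marginal on player $j$ is $\sigma_j^t(s_j)=\prod_{\theta_j\in\Theta_j}\pi_j^t(\theta_j;s_j(\theta_j))$.

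Next, within each round $t$, I would use the product structure of $\sigma^t$ to rewrite the expectation in terms of the reward vector $u_i^t$ from \eqref{eq:reward-pi}: because for every fixed $\theta_i$ the sample $s_i(\theta_i)$ under $\sigma_i^t$ is distributed as $\pi_i^t(\theta_i)$ and is independent of $s_{-i}$, and because the other players' draws $s_j(\theta_j) \sim \pi_j^t(\theta_j)$ together with $\theta_{-i} \sim \rho|\theta_i$ generate $a_{-i} \sim \pi_{-i}^t(\theta_{-i})$, we obtain
\[
\E_{\theta\sim\rho}\!\left[\E_{s\sim\sigma^t}\!\left[v_i(\theta;\phi(\theta_i,s_i(\theta_i)),s_{-i}(\theta_{-i}))\right]\right]
= \E_{\theta_i\sim\rho_i}\!\left[\E_{a_i\sim\pi_i^t(\theta_i)}\!\left[u_i^t(\theta_i,\phi(\theta_i,a_i))\right]\right],
\]
and similarly for the non-deviated term with $a_i$ in place of $\phi(\theta_i,a_i)$. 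Summing over $t$ and invoking the definition of $\RTSi^T$ with this particular $\phi$ bounds the total gap by $\RTSi^T/T \le \max_{j\in N}\RTSj^T/T$, finishing the proof.

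There is essentially no conceptual obstacle here; the only point requiring care is the product-structure reduction in the second step, where one must check that the simultaneous independence of $s_i(\theta_i)$ across the different coordinates $\theta_i$ (built into the form of $\sigma_i^t$) is exactly what is needed so that the single action $s_i(\theta_i)$ observed at the realized type is distributed as $\pi_i^t(\theta_i)$, independently of $s_{-i}(\theta_{-i})$. This is the same step that made the SFCE proof go through, just applied at the level of actions instead of full strategies.
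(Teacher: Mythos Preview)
Your proposal is correct and follows essentially the same approach as the paper's proof: fix $i$ and $\phi$, unfold $\sigma$ as the time average of the per-round product distributions $\sigma^t$, use the product structure to reduce the inner expectation to $\E_{\theta_i\sim\rho_i}\E_{a_i\sim\pi_i^t(\theta_i)}[u_i^t(\theta_i,\phi(\theta_i,a_i))-u_i^t(\theta_i,a_i)]$ via the definition of $u_i^t$ in \eqref{eq:reward-pi}, and bound by $\RTSi^T/T$. The paper makes the intermediate step through $\pi_{-i}^t(\theta_{-i})$ slightly more explicit, but the argument is the same.
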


\begin{proof}
From the definition of $\epsilon$-ANFCEs, it is sufficient to prove
\begin{equation*}
\E_{\theta \sim \rho} \left[ \E_{s \sim \sigma} \left[ v_i(\theta; \phi(\theta_i, s_i(\theta_i)), s_{-i}(\theta_{-i})) \right] \right] - \E_{\theta \sim \rho} \left[ \E_{s \sim \sigma} \left[ v_i(\theta; s(\theta)) \right] \right]
\le \frac{\max_{j \in N} \RTSj^T}{T}
\end{equation*}
for each $i \in N$ and $\phi \colon \Theta_i \times A_i \to A_i$.
Let $\sigma^t \in \Delta(S)$ be the product distribution that independently generates $s_j(\theta_j) \sim \pi_j^t(\theta_j)$ for each $j \in N$ and $\theta_j \in \Theta_j$.
For each $\theta_{-i} \in \Theta_{-i}$, let $\pi_{-i}^t(\theta_{-i}) \in \Delta(A_{-i})$ be the product distribution that independently generates $a_j \sim \pi_j^t(\theta_j)$ for each $j \in N \setminus \{i\}$.
Then the left-hand side can be bounded as
\begin{align*}
&\E_{\theta \sim \rho} \left[ \E_{s \sim \sigma} \left[v_i(\theta; \phi(\theta_i, s_i(\theta_i)), s_{-i}(\theta_{-i})) \right] \right] - \E_{\theta \sim \rho} \left[ \E_{s \sim \sigma} \left[ v_i(\theta; s(\theta)) \right] \right]\\
&= 
\E_{t \sim [T]} \left[ \E_{\theta \sim \rho} \left[ \E_{s \sim \sigma^t} \left[ v_i(\theta; \phi(\theta_i, s_i(\theta_i)), s_{-i}(\theta_{-i})) - v_i(\theta; s(\theta)) \right] \right] \right] \tag{from the definition of $\sigma$} \\
&= 
\E_{t \sim [T]} \left[ \E_{\theta \sim \rho} \left[ \E_{a_i \sim \pi_i^t(\theta_i)} \left[ \E_{a_{-i} \sim \pi_{-i}^t(\theta_{-i})} \left[ v_i(\theta; \phi(\theta_i, a_i), a_{-i}) - v_i(\theta; a) \right] \right] \right] \right] \tag{from the definition of $\sigma^t$} \\
&=
\frac{1}{T} \sum_{t=1}^T \E_{\theta_i \sim \rho_i} \left[ \E_{a_i \sim \pi_i^t(\theta_i)} \left[ \E_{\theta_{-i} \sim \rho|\theta_i} \left[ \E_{a_{-i} \sim \pi_{-i}^t(\theta_{-i})} \left[ v_i(\theta; \phi(\theta_i,a_i),a_{-i})
- 
v_i(\theta; a)\right] \right] \right] \right]\\
&=
\frac{1}{T} \sum_{t=1}^T \E_{\theta_i \sim \rho_i} \left[ \E_{a_i \sim \pi_i^t(\theta_i)} \left[ u_i^t(\theta_i, \phi(\theta_i, a_i)) - u_i^t(\theta_i, a_i) \right] \right] \tag{from the definition of $u_i^t$}\\
&\le \frac{\RTSi^T}{T} \\
&\le \frac{\max_{j \in N} \RTSj^T}{T},
\end{align*}
which completes the proof.
\end{proof}

A detailed description of the dynamics can be found in \Cref{alg:anf-dynamics}.
We can obtain an algorithm for minimizing type-wise swap regret by running the swap regret minimization algorithm proposed by \citet*{BM07} separately for each $(i,\theta_i) \in N'$.
Since the rewards for each $(i,\theta_i) \in N'$ are always bounded above by $\rho_i(\theta_i)$, an upper bound of $O(\sqrt{T |A_i| \log |A_i|})$ on swap regret directly applies to type-wise swap regret.

As with $\epsilon$-SFCEs, we can define its counterpart $\PiANF^\epsilon \subseteq \Delta(A)^\Theta$ by $\PiANF^\epsilon = \eta(\SigmaANF^\epsilon)$ (see \Cref{sec:sr} for the definition of $\eta$).
Since the incentive constraints for ANFCEs only rely on the action profile $s(\theta)$ instead of the strategy profile $s$, we can characterize $\PiANF^\epsilon$ as follows.

\begin{proposition}\label{prop:anf-pi}
For any type-wise distribution $\pi \in \Delta(A)^\Theta$, the following are equivalent:
\begin{itemize}
\item[(i)] $\pi \in \PiANF^\epsilon$.
\item[(ii)] $\pi$ is strategy-representable, and
for any $i \in N$ and any $\phi \colon \Theta_i \times A_i \to A_i$, it holds that
\begin{equation}\label{eq:ic-anf-pi}
\tag{$\mathrm{IC}'_{\mathrm{ANF}}$}
\E_{\theta \sim \rho} \left[ \E_{a \sim \pi(\theta)} \left[ v_i(\theta; a) \right] \right]
\ge
\E_{\theta \sim \rho} \left[ \E_{a \sim \pi(\theta)} \left[ v_i(\theta; \phi(\theta_i, a_i),a_{-i}) \right] \right] - \epsilon.
\end{equation}
\end{itemize}
\end{proposition}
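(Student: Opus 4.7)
The key observation is that the incentive constraint (IC$_{\mathrm{ANF}}$) for a strategy distribution $\sigma \in \Delta(S)$ depends on $\sigma$ only through the action-profile marginals $a = s(\theta)$, which are exactly encoded by $\eta(\sigma) \in \Delta(A)^\Theta$. So the proof plan is to reduce both directions to this single algebraic identity and then use the definition $\PiANF^\epsilon = \eta(\SigmaANF^\epsilon)$ together with strategy representability.

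For (i) $\Rightarrow$ (ii), the plan is: assume $\pi \in \PiANF^\epsilon$, so by definition there is some $\sigma \in \SigmaANF^\epsilon$ with $\pi = \eta(\sigma)$. This immediately witnesses strategy representability by \Cref{def:sr}. To derive \eqref{eq:ic-anf-pi}, I would rewrite both sides of \eqref{eq:ic-anf} using the identity
\begin{equation*}
\E_{s \sim \sigma}\bigl[ f(s(\theta)) \bigr] = \E_{a \sim \pi(\theta)}\bigl[ f(a) \bigr]
\end{equation*}
for every fixed $\theta \in \Theta$ and every function $f \colon A \to \bbR$, which is immediate from $\pi(\theta;a) = \Pr_{s \sim \sigma}(s(\theta)=a)$. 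Applying this once with $f(a) = v_i(\theta;a)$ on the left and once with $f(a) = v_i(\theta;\phi(\theta_i,a_i),a_{-i})$ on the right (noting that $s(\theta) = (s_i(\theta_i),s_{-i}(\theta_{-i}))$ so that $\phi(\theta_i,s_i(\theta_i))$ corresponds to $\phi(\theta_i,a_i)$), I would obtain \eqref{eq:ic-anf-pi} exactly.

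For (ii) $\Rightarrow$ (i), the plan is symmetric: strategy representability gives some $\sigma \in \Delta(S)$ with $\pi = \eta(\sigma)$, and the same identity above turns \eqref{eq:ic-anf-pi} for $\pi$ into \eqref{eq:ic-anf} for this $\sigma$, so $\sigma \in \SigmaANF^\epsilon$ and therefore $\pi = \eta(\sigma) \in \PiANF^\epsilon$.

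There is no real obstacle here — the proof is essentially a restatement of the definition of $\eta$ combined with the fact that the utility $v_i(\theta;\cdot)$ depends on strategies only via the realized action profile $s(\theta)$. The only point worth being careful about is that \emph{without} the strategy representability assumption, \eqref{eq:ic-anf-pi} alone does not imply $\pi \in \PiANF^\epsilon$: by \Cref{ex:sr} there exist $\pi \in \Delta(A)^\Theta$ that cannot be written as $\eta(\sigma)$ at all, and for such $\pi$ membership in $\PiANF^\epsilon$ is ruled out regardless of incentive constraints. I would mention this explicitly to justify why the strategy-representability clause in (ii) cannot be dropped.
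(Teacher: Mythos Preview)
Your proposal is correct and follows essentially the same approach as the paper: both directions hinge on the identity $\E_{s \sim \sigma}[f(s(\theta))] = \E_{a \sim \pi(\theta)}[f(a)]$ for $\pi = \eta(\sigma)$, which translates \eqref{eq:ic-anf} into \eqref{eq:ic-anf-pi} and back. Your added remark about why strategy representability cannot be dropped (citing \Cref{ex:sr}) is a nice clarification that the paper's proof omits but is entirely consistent with it.
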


\begin{proof}
First, we prove (ii) assuming (i).
Suppose $\pi \in \PiANF^\epsilon$.
Since $\PiANF^\epsilon = \eta(\SigmaANF^\epsilon)$, there exists $\sigma \in \SigmaANF^\epsilon$ such that $\eta(\sigma) = \pi$.
From the definition of $\SigmaANF^\epsilon$, the incentive constraint \eqref{eq:ic-anf} holds for every $i \in N$ and $\phi \colon \Theta_i \times A_i \to A_i$.
Since $\pi = \eta(\sigma)$, when $s$ is generated from $\sigma$, the distribution of $s(\theta)$ equals $\pi(\theta)$.
Then \eqref{eq:ic-anf} implies \eqref{eq:ic-anf-pi}.

Next, we prove (i) assuming (ii).
Suppose $\pi$ is strategy-representable and satisfies the incentive constraint \eqref{eq:ic-anf-pi} for every $i \in N$ and $\phi \colon \Theta_i \times A_i \to A_i$.
From the strategy representability, there exists $\sigma \in \Delta(S)$ such that $\pi = \eta(\sigma)$.
As discussed above, when $s$ is generated from $\sigma$, the distribution of $s(\theta)$ equals $\pi(\theta)$.
Then \eqref{eq:ic-anf-pi} implies \eqref{eq:ic-anf}.
\end{proof}

\begin{table}
\caption{A comparison of the classes of Bayes correlated equilibria (or their counterparts in $\Delta(A)^\Theta$). SR stands for ``strategy representability'' defined in \Cref{sec:sr}.}\label{tab:bce}
\centering
\begin{tabular}{cc}
\toprule
Class & Characterization\\
\cmidrule(lr){1-2}
Strategic-form correlated equilibria & Incentive constraints w.r.t.\ strategies \eqref{eq:ic-sf} \& SR\\
Agent-normal-form correlated equilibria & Incentive constraints w.r.t.\ actions \eqref{eq:ic-anf-pi} \& SR\\
Communication equilibria & Incentive constraints w.r.t.\ types and actions \eqref{eq:ic-com}\\
Bayesian solutions & Incentive constraints w.r.t.\ actions \eqref{eq:ic-anf-pi}\\
\hline
\end{tabular}
\end{table}

\subsection{Relations among classes of Bayes correlated equilibria}\label{sec:relation}

Here, we compare the classes of Bayes correlated equilibria and show their relations, summarized in \Cref{tab:bce}.
First, we define an approximate version of \textit{Bayesian solutions} (also called \textit{partial Bayesian approach}) \citep*{Forges93}.

\begin{definition}[$\epsilon$-approximate Bayesian solutions]
For any $\epsilon \ge 0$, a type-wise distribution $\pi \in \Delta(A)^\Theta$ is an $\epsilon$-approximate Bayesian solution
if
for any $i \in N$ and $\phi \colon \Theta_i \times A_i \to A_i$, it holds that
\begin{equation}\label{eq:ic-bs}
\tag{$\mathrm{IC}_{\mathrm{BS}}$}
\E_{\theta \sim \rho} \left[ \E_{a \sim \pi(\theta)} \left[ v_i(\theta; a) \right] \right]
\ge
\E_{\theta \sim \rho} \left[ \E_{a \sim \pi(\theta)} \left[ v_i(\theta; \phi(\theta_i, a_i), a_{-i}) \right] \right] - \epsilon.
\end{equation}
Let $\PiBS^\epsilon \subseteq \Delta(A)^\Theta$ be the set of all $\epsilon$-approximate Bayesian solutions.
\end{definition}

\citet*{Forges93} showed the relations among the classes in terms of the set of players' payoff values.
Formally, if we denote the expected payoff vector achieved by $\pi \in \Delta(A)^\Theta$ by
\begin{equation*}
\vEP(\pi) = \left(\E_{\theta \sim \rho} \left[ \E_{a \sim \pi(\theta)} \left[ v_i(\theta; a) \right] \right] \right)_{i \in N} \in [0,1]^{N},
\end{equation*}
then \citet*{Forges93} showed $\vEP(\PiSF^0) \subseteq \vEP(\PiANF^0) \cap \vEP(\PiCom^0)$ and $\vEP(\PiANF^0) \cup \vEP(\PiCom^0) \subseteq \vEP(\PiBS^0)$.
Moreover, \citet*{Forges93} showed that these inclusions are strict, and $\vEP(\PiANF^0) \not\subseteq \vEP(\PiCom^0)$ and $\vEP(\PiCom^0) \not\subseteq \vEP(\PiANF^0)$ for some Bayesian games.

Here, we show the same inclusion relations among the approximate versions in terms of type-wise distributions, that is,
\begin{equation*}
\PiSF^\epsilon \subseteq \PiANF^\epsilon \cap \PiCom^\epsilon 
\quad\text{and}\quad
\PiANF^\epsilon \cup \PiCom^\epsilon \subseteq \PiBS^\epsilon
\end{equation*}
for each $\epsilon \ge 0$.
We also show that these inclusion relations are strict.
First, we show the following relation.

\begin{proposition}\label{prop:intersection}
For any Bayesian game, it holds that $\PiSF^\epsilon \subseteq \PiANF^\epsilon \cap \PiCom^\epsilon$ for any $\epsilon \ge 0$.
\end{proposition}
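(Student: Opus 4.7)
The plan is to exploit the fact that $\pi \in \PiSF^\epsilon$ means $\pi = \eta(\sigma)$ for some $\sigma \in \SigmaSF^\epsilon$, and that a strategy swap $\phiSF \colon S_i \to S_i$ is expressive enough to encode both an action swap $\phi \colon \Theta_i \times A_i \to A_i$ (which suffices for ANFCE) and a combination of a type swap $\psi \colon \Theta_i \to \Theta_i$ with an action swap $\phi$ (which suffices for a communication equilibrium). The underlying identity is that, because $\sigma$ is drawn before $\theta$, for any fixed $\theta$ and any $\psi$, the joint law of $(s_i(\psi(\theta_i)), s_{-i}(\theta_{-i}))$ under $s \sim \sigma$ is exactly $\pi(\psi(\theta_i), \theta_{-i})$, so the mediator's re-randomization-over-reported-types in a communication equilibrium can be simulated by evaluating the drawn strategy at a different type.

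\medskip

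\noindent\textbf{Step 1 (SFCE $\subseteq$ ANFCE).} Fix $\pi = \eta(\sigma)$ with $\sigma \in \SigmaSF^\epsilon$. Given a candidate ANFCE deviation $\phi \colon \Theta_i \times A_i \to A_i$, lift it to a strategy swap $\phiSF \colon S_i \to S_i$ by $(\phiSF(s_i))(\theta_i) = \phi(\theta_i, s_i(\theta_i))$. Apply \eqref{eq:ic-sf} to this $\phiSF$, and then rewrite both sides in terms of $\pi$ using $\pi(\theta;a) = \Pr_{s \sim \sigma}(s(\theta) = a)$. The LHS becomes $\E_\theta \E_{a \sim \pi(\theta)}[v_i(\theta;a)]$, and the RHS becomes $\E_\theta \E_{a \sim \pi(\theta)}[v_i(\theta; \phi(\theta_i, a_i), a_{-i})] - \epsilon$, which is exactly \eqref{eq:ic-anf-pi}. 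Combined with strategy representability of $\pi$ (immediate from $\pi = \eta(\sigma)$), \Cref{prop:anf-pi} then gives $\pi \in \PiANF^\epsilon$.

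\medskip

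\noindent\textbf{Step 2 (SFCE $\subseteq$ Com).} Given a communication-equilibrium deviation $(\psi, \phi)$ with $\psi \colon \Theta_i \to \Theta_i$ and $\phi \colon \Theta_i \times A_i \to A_i$, define the strategy swap
\begin{equation*}
(\phiSF(s_i))(\theta_i) = \phi\bigl(\theta_i, \, s_i(\psi(\theta_i))\bigr).
\end{equation*}
Applying \eqref{eq:ic-sf} to $\sigma$ with this $\phiSF$ yields
\begin{equation*}
\E_{\theta \sim \rho}\E_{s \sim \sigma}\bigl[v_i(\theta; s(\theta))\bigr] \ge \E_{\theta \sim \rho}\E_{s \sim \sigma}\bigl[v_i(\theta; \phi(\theta_i, s_i(\psi(\theta_i))), s_{-i}(\theta_{-i}))\bigr] - \epsilon.
\end{equation*}
The LHS equals $\E_\theta \E_{a \sim \pi(\theta)}[v_i(\theta;a)]$ as before. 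For the RHS, the key identity is that for any fixed $\theta \in \Theta$, the distribution of $(s_i(\psi(\theta_i)), s_{-i}(\theta_{-i}))$ under $s \sim \sigma$ is precisely $\pi(\psi(\theta_i), \theta_{-i})$; this is immediate from the definition of $\eta$ applied to the type profile $(\psi(\theta_i), \theta_{-i})$. Substituting gives exactly the RHS of \eqref{eq:ic-com}, proving $\pi \in \PiCom^\epsilon$.

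\medskip

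There is no real obstacle: both containments reduce to choosing the right $\phiSF$ and invoking the marginal identity $\pi(\cdot;a) = \Pr_{s \sim \sigma}(s(\cdot)=a)$. The only point deserving care is the communication-equilibrium direction, where one must notice that although $\theta_i$ and $\psi(\theta_i)$ are different type coordinates, the strategy-first/type-second sampling order of $\eta$ makes $s_i(\psi(\theta_i))$ distributed exactly as a recommendation drawn from $\pi(\psi(\theta_i), \theta_{-i})_i$; this is precisely what allows a single strategy-swap $\phiSF$ to simulate the mediator's use of the misreported type.
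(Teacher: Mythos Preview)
Your proof is correct and uses essentially the same key construction as the paper: defining $(\phiSF(s_i))(\theta_i) = \phi(\theta_i, s_i(\psi(\theta_i)))$ and invoking \eqref{eq:ic-sf}, together with the marginal identity $\pi(\psi(\theta_i),\theta_{-i};a) = \Pr_{s\sim\sigma}(s_i(\psi(\theta_i))=a_i,\, s_{-i}(\theta_{-i})=a_{-i})$. The only difference is organizational: the paper invokes \Cref{prop:com-sr} (condition (iii)) to reduce both inclusions to the single combined constraint \eqref{eq:ic-comanf}, whereas you verify $\pi\in\PiANF^\epsilon$ and $\pi\in\PiCom^\epsilon$ separately via \Cref{prop:anf-pi} and the definition of $\PiCom^\epsilon$; your Step~1 is just the $\psi=\mathrm{id}$ special case of the paper's single check.
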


\begin{proof}
Let $\sigma \in \SigmaSF^\epsilon$ be an arbitrary $\epsilon$-SFCE and $\pi = \eta(\sigma) \in \PiSF^\epsilon$ its corresponding type-wise distribution.
From \Cref{prop:com-sr}, it is sufficient to prove the incentive constraint \eqref{eq:ic-comanf} for each $i \in N$, $\psi \colon \Theta_i \to \Theta_i$, and $\phi \colon \Theta_i \times A_i \to A_i$.
Since $\sigma$ is an $\epsilon$-SFCE, the incentive constraint \eqref{eq:ic-sf} holds for any $\phiSF \colon S_i \to S_i$.
If we set $(\phiSF(s_i))(\theta_i) = \phi(\theta_i, s_i(\psi(\theta_i)))$ for each $\theta_i \in \Theta_i$,
then it holds that
\begin{equation*}
\E_{\theta \sim \rho} \left[ \E_{s \sim \sigma} \left[ v_i(\theta; s(\theta)) \right] \right]
\ge
\E_{\theta \sim \rho} \left[ \E_{s \sim \sigma} \left[ v_i(\theta; \phi(\theta_i, s_i(\psi(\theta_i))),s_{-i}(\theta_{-i})) \right] \right] - \epsilon,
\end{equation*}
which completes the proof.
\end{proof}

Next, we provide an example of a Bayesian game in which $\pi \in (\PiANF^0 \cap \PiCom^0) \setminus \PiSF^0$ exists.
While deviations for SFCEs allow players to use recommendations for all other types, the deviations for communication equilibria allow players to use recommendations for at most one other type and ANFCE no other type.
We design a distribution such that a player can gain by using recommendations for two other types.
Then this distribution is a communication equilibrium and ANFCE but not SFCE.

\begin{proposition}
For some Bayesian game, $(\PiANF^0 \cap \PiCom^0) \setminus \PiSF^0 \neq \emptyset$.
\end{proposition}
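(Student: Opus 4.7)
The plan is to exhibit a specific Bayesian game together with $\pi \in \Delta(A)^\Theta$ lying in $\PiANF^0 \cap \PiCom^0 \setminus \PiSF^0$. The separation must exploit what SFCE can do that COM and ANFCE cannot. A crucial observation is that, since the joint distribution of $(s_i(\theta_i'), s_{-i}(\theta_{-i}))$ under any $\sigma \in \Delta(S)$ equals $\pi(\theta_i', \theta_{-i})$, any SFCE deviation $\phiSF$ whose output at type $\theta_i$ depends on at most one other type's recommendation $s_i(\theta_i')$ is reproducible by a COM deviation with $\psi(\theta_i) = \theta_i'$ and a suitable $\phi$. A genuine separation therefore requires an SFCE deviation whose output \emph{jointly} depends on at least two other types' recommendations, forcing $|\Theta_i| \geq 3$ for some player.

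Concretely, I would take a two-player game with $|\Theta_1| = 3$, $|\Theta_2| = 1$, $|A_1| = |A_2| = 2$, and uniform prior on $\Theta_1$. I would choose $\sigma \in \Delta(S)$ whose support ties $s_2$ to a parity function of all three coordinates of $s_1$ (say, $s_2 = A$ iff $s_1(\theta^1) \oplus s_1(\theta^2) \oplus s_1(\theta^3) = 0$), with weights arranged so that the per-type marginals $\pi(\theta^j, \star) = \eta(\sigma)(\theta^j, \star)$ are \emph{not} uniform over $A_1 \times A_2$---otherwise the independent-uniform lift over $\Delta(S)$ would immediately place $\pi$ in $\PiSF^0$. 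After computing $\pi$ explicitly, I would choose payoffs $v_1(\theta^j; \cdot, \cdot)$ so that (i) for each realized type $\theta^j$ the posterior over $a_2$ given $a_1$ makes both player-1 actions give equal expected payoff, yielding \eqref{eq:ic-anf-pi} with equality; and (ii) for every type swap $\psi$ followed by action map $\phi$, the combined COM deviation weakly fails---this reduces to a finite set of inequalities in the per-type marginals of $\pi$.

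The SFCE violation is then witnessed by a cross-type strategy swap $\phiSF(s_1)(\theta^j) = g(s_1(\theta^k), s_1(\theta^l))$ with $k, l$ the two other types, where $g$ is chosen so that on $\sigma$'s parity-constrained support the output matches the optimal response given the $s_2$ that $\phiSF$ can infer; a direct calculation against $\sigma$ yields strictly positive expected gain.

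The main obstacle is ruling out every $\sigma' \in \eta^{-1}(\pi)$ rather than only the chosen $\sigma$. The fiber has positive dimension in general, and cross-type gains depend on joint marginals of $\sigma'$ that are not pinned down by $\pi$. To handle this, I would place enough zero entries in $\pi$ (by deleting parity-compatible profiles from the support of $\sigma$) so that the linear equations $\eta(\sigma') = \pi$, together with the nonnegativity of $\sigma'$, force $\sigma' = \sigma$ uniquely. Given this uniqueness, the already-computed positive gain of the parity-based $\phiSF$ against $\sigma$ certifies $\pi \notin \PiSF^0$, while the ANFCE and COM conditions, which depend only on per-type marginals of $\pi$, remain intact.
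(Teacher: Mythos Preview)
Your high-level strategy matches the paper's almost exactly: one player with three types and one player with a single type, a distribution $\sigma$ engineered so that the recommendations at the two auxiliary types jointly determine $a_2$ while each individually does not, payoffs chosen so that exploiting this joint information strictly helps at the remaining type, and the fiber $\eta^{-1}(\pi)$ forced to a singleton via zeros in $\pi$. The paper carries this out with $|A_1|=5$, $|A_2|=4$: the recommendations at $\theta'_1$ and $\theta''_1$ encode two \emph{independent} bits of $a_2\in\{1,2,3,4\}$, each worth nothing alone (expected payoff stays at $1/2$ under any single-type misreport) but together pinning down $a_2$ (payoff jumps to $1$), and the zero pattern in $\pi$ makes $\sigma$ the unique preimage.

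The gap in your plan is the specific choice $|A_1|=|A_2|=2$ with a parity tie. With $s_2=s_1(\theta^1)\oplus s_1(\theta^2)\oplus s_1(\theta^3)$ and uniform weights, every $\pi(\theta^j)$ is uniform on $\{0,1\}^2$, and the independent-uniform $\sigma'$ lies in the fiber and is trivially an SFCE---you correctly note this. But once you perturb or delete support to create zeros, you must simultaneously keep (i) each $\pi(\theta^j)$ satisfying the COM constraint (which, for a matching-type payoff at some $\theta^j$, essentially forces $a_1\perp a_2$ under $\pi(\theta^k)$ for every reported $\theta^k$), and (ii) enough zeros to collapse the fiber. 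With $2\times 2$ marginals and a shared $a_2$-marginal across types, the linear constraints from $\pi$ leave several degrees of freedom in $\sigma'$; the few zeros you can place without destroying (i) do not obviously pin it down. Your proposal asserts that ``placing enough zero entries'' works but does not exhibit a single instance, and attempts along the lines you describe run into non-unique fibers. There is also a minor slip: with parity, $g(s_1(\theta^k),s_1(\theta^l))$ alone yields only $s_1(\theta^j)\oplus s_2$, not $s_2$; the profitable $\phiSF$ at type $\theta^j$ must use all three coordinates of $s_1$.

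The paper's larger action sets are not cosmetic: having $|A_2|=4$ means a single auxiliary recommendation halves the uncertainty without resolving it, so COM gives no gain, while two recommendations resolve it completely; and the induced zero pattern in $\pi$ is rich enough to force the fiber to a singleton by a short chain of implications. If you want to salvage the $2\times 2$ route you would need to produce an explicit $\sigma$, verify all COM deviations, and prove fiber uniqueness---none of which is done, and it is unclear it can be.
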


\begin{proof}
The distribution is defined as follows.
Let $\Theta_1 = \{\theta_1,\theta'_1,\theta''_1\}$ and $\Theta_2 = \{\theta_2\}$ with the uniform prior distribution $\rho$ over $\Theta_1 \times \Theta_2$.
Let $A_1 = \{0,1,2,3,4\}$ and $A_2 = \{1,2,3,4\}$.
We define the first player's payoffs for $\theta_1, \theta'_1, \theta''_1$ by
\begin{align*}
v_1(\theta_1, \theta_2; a_1,a_2) &= \begin{cases}
\frac{1}{2} & \text{if $a_1=0$}\\
1 & \text{if $a_1 \neq 0$ and $a_1=a_2$}\\
0 & \text{if $a_1 \neq 0$ and $a_1 \neq a_2$}.
\end{cases}\\
v_1(\theta'_1, \theta_2; a_1,a_2) &= \begin{cases}
\frac{1}{2} & \text{if $(a_1,a_2) \in \{(4,1),(4,2),(1,3),(1,4)\}$}\\
0 & \text{otherwise}
\end{cases}\\
v_1(\theta''_1, \theta_2; a_1,a_2) &= \begin{cases}
\frac{1}{2} & \text{if $(a_1,a_2) \in \{(2,1),(2,3),(3,2),(3,4)\}$}\\
0 & \text{otherwise}.
\end{cases}
\end{align*}
The second player's payoff $v_2$ is defined as $v_2(\tilde{\theta}_1,\theta_2; a_1,a_2) = \bfone_{\{a_1 \neq a_2\}}$ for any $\tilde{\theta}_1 \in \Theta_1$.
We consider $\pi \in \Delta(A)^\Theta$ such that $\pi(\theta_1,\theta_2)$, $\pi(\theta'_1,\theta_2)$, and $\pi(\theta''_1,\theta_2)$ are the uniform distributions over $\{0\} \times \{1,2,3,4\}$, $\{(4,1),(4,2),(1,3),(1,4)\}$, and $\{(2,1),(2,3),(3,2),(3,4)\}$, respectively.
\begin{itemize}

\item
First, we show that $\pi$ is strategy-representable by providing $\sigma \in \Delta(S)$ such that $\pi = \eta(\sigma)$.
From the definition of $\pi(\theta_1,\theta_2)$, it must hold that $s_1(\theta_1) = 0$ with probability $1$.
From the definition of $\pi(\theta'_1,\theta_2)$, it holds that 
\begin{align*}
&\Pr_{s \sim \sigma}(s_1(\theta'_1) = 4 \mid s_2(\theta_2) = 1) = 1
&\qquad
&\Pr_{s \sim \sigma}(s_1(\theta'_1) = 4 \mid s_2(\theta_2) = 2) = 1\\
&\Pr_{s \sim \sigma}(s_1(\theta'_1) = 1 \mid s_2(\theta_2) = 3) = 1
&\qquad
&\Pr_{s \sim \sigma}(s_1(\theta'_1) = 1 \mid s_2(\theta_2) = 4) = 1.
\end{align*}
Similarly, from the definition of $\pi(\theta''_1,\theta_2)$, it holds that
\begin{align*}
&\Pr_{s \sim \sigma}(s_1(\theta''_1) = 2 \mid s_2(\theta_2) = 1) = 1
&\qquad
&\Pr_{s \sim \sigma}(s_1(\theta''_1) = 3 \mid s_2(\theta_2) = 2) = 1\\
&\Pr_{s \sim \sigma}(s_1(\theta''_1) = 2 \mid s_2(\theta_2) = 3) = 1
&\qquad
&\Pr_{s \sim \sigma}(s_1(\theta''_1) = 3 \mid s_2(\theta_2) = 4) = 1.
\end{align*}
Hence, $\sigma$ can be defined as
\begin{equation}\label{eq:corresponding-sigma}
\begin{aligned}
&\Pr_{s \sim \sigma}(s_1(\theta_1) = 0, ~s_1(\theta'_1) = 4, ~s_1(\theta''_1) = 2, ~s_2(\theta_2) = 1) = \frac{1}{4}\\
&\Pr_{s \sim \sigma}(s_1(\theta_1) = 0, ~s_1(\theta'_1) = 4, ~s_1(\theta''_1) = 3, ~s_2(\theta_2) = 2) = \frac{1}{4}\\
&\Pr_{s \sim \sigma}(s_1(\theta_1) = 0, ~s_1(\theta'_1) = 1, ~s_1(\theta''_1) = 2, ~s_2(\theta_2) = 3) = \frac{1}{4}\\
&\Pr_{s \sim \sigma}(s_1(\theta_1) = 0, ~s_1(\theta'_1) = 1, ~s_1(\theta''_1) = 3, ~s_2(\theta_2) = 4) = \frac{1}{4}.
\end{aligned}
\end{equation}
Therefore, $\pi$ is strategy-representable.

\item
Next, we show that for any $\psi \colon \Theta_i \to \Theta_i$ and $\phi \colon \Theta_i \times A_i \to A_i$, distribution $\pi$ satisfies the incentive constraint for communication equilibria \eqref{eq:ic-com}.
Since the second player's payoff is at most $1$ and the expected payoff is $1$ under $\pi$, this player does not have any incentive to deviate from the recommendation.
Under distribution $\pi$, the first player always obtains payoff $\frac{1}{2}$, and then the expected payoff is $\frac{1}{2}$.
Then we show that the first player cannot gain by using any combination of $\psi$ and $\phi$.
When the type is $\theta'_1$ or $\theta''_1$, the payoff is at most $\frac{1}{2}$ and the expected payoff is $\frac{1}{2}$ under $\pi$, the first player does not have any incentive to deviate from the recommendation.
When the type is $\theta_1$, we consider the case of $\psi(\theta_1) = \theta_1$ and the others separately.
If $\psi(\theta_1) = \theta_1$, then the first player is always recommended taking the action $0$, and the posterior distribution on the second player's action is still the uniform distribution.
If $\psi(\theta_1) = \theta'_1$ (or $\psi(\theta_1) = \theta''_1$), then the first player can learn whether $s_2(\theta_2) \in \{1,2\}$ or $s_2(\theta_2) \in \{3,4\}$ ($s_2(\theta_2) \in \{1,3\}$ or $s_2(\theta_2) \in \{2,4\}$, respectively).
By using this information, the first player can take the same action $s_2(\theta_2)$ with probability $1/2$, but the expected payoff is $\frac{1}{2} \cdot 1 = \frac{1}{2}$, which is not better than the recommendation. 
Therefore, the right-hand side is at most $\frac{1}{2}$ for any $\psi$ and $\phi$.

\end{itemize}
These two facts imply $\pi \in \PiANF^0 \cap \PiCom^0$ from \Cref{prop:com-sr}.
Finally, we show $\pi \not\in \PiSF^0$.
As discussed above, it is sufficient to consider $\sigma$ that satisfies the conditions \eqref{eq:corresponding-sigma}.
When the first player's type is $\theta_1$, the first player can learn $s_2(\theta_2)$ from $s_1(\theta'_1)$ and $s_1(\theta''_1)$.
Then $\phiSF$ can be defined such that $(\phiSF(s_1))(\theta_1) = s_2(\theta_2)$ always holds.
This deviation provides the expected payoff $\frac{1}{3} \cdot 1 + \frac{1}{3} \cdot \frac{1}{2} + \frac{1}{3} \cdot \frac{1}{2} = \frac{2}{3}$ to the first player, which is better than $\frac{1}{2}$ obtained by the recommendation $\sigma$.
Therefore, $\sigma$ is not an SFCE and $\pi \not\in \PiSF^0$.
\end{proof}

\begin{remark}
The example used in the proof also proves that $\vEP(\PiANF^0 \cap \PiCom^0) \setminus \vEP(\PiSF^0) \neq \emptyset$ for some Bayesian game.
This is stronger than the claim that $(\vEP(\PiANF^0) \cap \vEP(\PiCom^0)) \setminus \vEP(\PiSF^0) \neq \emptyset$ for some Bayesian game, which was proved by \citet*{Forges93} with distinct $\pi \in \PiANF^0$ and $\pi' \in \PiCom^0$ such that $\vEP(\pi) = \vEP(\pi')$.

We can check $\vEP(\pi) = (\frac{1}{2},1) \not\in \vEP(\PiSF^0)$ as follows.
If the second player's expected payoff is $1$, then $a_1 \neq a_2$ holds with probability $1$.
Hence, the first player with type $\theta_1$ never obtains the payoff $1$.
To achieve the expected payoff $\frac{1}{2}$, the first player must achieve the payoff $\frac{1}{2}$ with probability $1$, but then $\sigma$ must satisfy the conditions \eqref{eq:corresponding-sigma}.
As proved above, $\sigma$ is not an SFCE.
\end{remark}

\paragraph{Bayesian solutions}
Next, we show that $\epsilon$-approximate Bayesian solutions contain the union of $\epsilon$-ANFCEs and $\epsilon$-approximate communication equilibria.

\begin{proposition}\label{prop:union}
For any Bayesian game, it holds that $\PiANF^\epsilon \cup \PiCom^\epsilon \subseteq \PiBS^\epsilon$ for any $\epsilon \ge 0$.
\end{proposition}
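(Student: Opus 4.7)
The plan is to prove both inclusions separately, each by direct comparison of the incentive constraints.

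For the first inclusion $\PiANF^\epsilon \subseteq \PiBS^\epsilon$, I would take an arbitrary $\pi \in \PiANF^\epsilon$ and apply \Cref{prop:anf-pi} (the characterization proven just above), which states that such a $\pi$ must satisfy \eqref{eq:ic-anf-pi} for every $i \in N$ and every $\phi \colon \Theta_i \times A_i \to A_i$. But \eqref{eq:ic-anf-pi} is literally the same inequality as \eqref{eq:ic-bs}, so $\pi \in \PiBS^\epsilon$ is immediate. (The extra strategy-representability condition in \Cref{prop:anf-pi} is not needed here, since $\PiBS^\epsilon$ imposes no such requirement.)

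For the second inclusion $\PiCom^\epsilon \subseteq \PiBS^\epsilon$, I would take an arbitrary $\pi \in \PiCom^\epsilon$ and specialize the type-report deviation $\psi \colon \Theta_i \to \Theta_i$ in \eqref{eq:ic-com} to the identity map $\psi(\theta_i) = \theta_i$. Under this choice, $\pi(\psi(\theta_i), \theta_{-i}) = \pi(\theta)$, so \eqref{eq:ic-com} collapses exactly to \eqref{eq:ic-bs} for the given $\phi$. Since the constraint holds for every $\phi \colon \Theta_i \times A_i \to A_i$ and every $i \in N$, this gives $\pi \in \PiBS^\epsilon$.

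There is no real obstacle: the content of the proposition is essentially that $\PiBS^\epsilon$ is defined by the weakest of the three incentive constraints, since it permits only action-swap deviations (no type-report deviations as in $\PiCom^\epsilon$), and it imposes no strategy-representability requirement (as $\PiANF^\epsilon$ does via \Cref{prop:anf-pi}). The only step worth stating carefully is invoking \Cref{prop:anf-pi} so as to convert the $\Delta(S)$-level definition of $\PiANF^\epsilon$ into the type-wise-distribution incentive constraint \eqref{eq:ic-anf-pi} that aligns syntactically with \eqref{eq:ic-bs}.
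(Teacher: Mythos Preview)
Your proposal is correct and matches the paper's proof essentially line for line: for $\PiANF^\epsilon$ it uses \Cref{prop:anf-pi} to obtain \eqref{eq:ic-anf-pi}, which is identical to \eqref{eq:ic-bs}, and for $\PiCom^\epsilon$ it specializes $\psi$ to the identity map in \eqref{eq:ic-com}. Your explicit mention of invoking \Cref{prop:anf-pi} is actually a slight improvement in rigor over the paper's terse statement.
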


\begin{proof}
Assume $\pi \in \PiANF^\epsilon$ or $\pi \in \PiCom^\epsilon$.
If $\pi \in \PiANF^\epsilon$, then $\pi$ satisfies the incentive constraint \eqref{eq:ic-anf-pi} for any $\phi \colon \Theta_i \times A_i \to A_i$.
Since \eqref{eq:ic-bs} is identical to \eqref{eq:ic-anf-pi}, this implies $\pi \in \PiBS^\epsilon$.
If $\pi \in \PiCom^\epsilon$, then $\pi$ satisfies the incentive constraint \eqref{eq:ic-com} for any $\psi \colon \Theta_i \to \Theta_i$ and $\phi \colon \Theta_i \times A_i \to A_i$.
By considering $\psi$ as the identity map, we can obtain \eqref{eq:ic-bs}, which implies $\pi \in \PiBS^\epsilon$.
\end{proof}

This inclusion for $\epsilon = 0$ is strict for some Bayesian game.
\citet*{Forges93} provided an example of a Bayesian game for which $\vEP(\PiBS^0) \setminus (\vEP(\PiANF^0) \cup \vEP(\PiCom^0)) \neq \emptyset$ holds.
Since $\vEP(\PiANF^0) \cup \vEP(\PiCom^0) = \vEP(\PiANF^0 \cup \PiCom^0)$ holds in general, $\vEP(\PiBS^0) \setminus \vEP(\PiANF^0 \cup \PiCom^0) \neq \emptyset$ in this game, which implies $\PiBS^0 \setminus (\PiANF^0 \cup \PiCom^0) \neq \emptyset$.

\paragraph{Bayes--Nash equilibria}

Finally, we consider the relations between Bayes--Nash equilibria and the classes of Bayes correlated equilibria.
Recall that there are two interpretations of a Bayesian game: the strategic form and the agent normal form.
For each of these two different normal-form games corresponding to the same Bayesian game, we can consider the concept of Nash equilibria.
A well-known interesting fact is that these two concepts are equivalent.
This equilibrium concept is called \textit{Bayes--Nash equilibrium}.
Here, we define approximate versions of Bayes--Nash equilibria in the strategic form and the agent normal form, respectively.

First, we define Bayes--Nash equilibria as Nash equilibria of the strategic form.
In the strategic form, each player $i \in N$ independently decides a strategy $s_i \in S_i$ according to some distribution $\sigma_i \in \Delta(S_i)$.
Here, we denote this Nash equilibrium by the product distribution of $\sigma_1, \dots, \sigma_n$.
Let $\SigmaProd \subseteq \Delta(S)$ be the set of all product distributions, that is, for each $\sigma \in \SigmaProd$, there exists some $\sigma_i \in \Delta(S_i)$ for each $i \in N$ such that $\sigma(s) = \prod_{i=1}^n \sigma_i(s_i)$ holds for every $s \in S$.
We then define an approximate version of Bayes--Nash equilibria in the strategic form as follows.

\begin{definition}[{$\epsilon$-Approximate Bayes--Nash equilibria ($\epsilon$-BNEs) in the strategic form}]
For any $\epsilon \ge 0$, a distribution $\sigma \in \SigmaProd \subseteq \Delta(S)$ is an $\epsilon$-approximate Bayes--Nash equilibrium in the strategic form
if
for any $i \in N$ and any $\phiSF \colon S_i \to S_i$, it holds that
\begin{equation}\label{eq:ic-bne}
\tag{$\mathrm{IC}_{\mathrm{BNE}}$}
\E_{\theta \sim \rho} \left[ \E_{s \sim \sigma} \left[ v_i(\theta; s(\theta)) \right] \right]
\ge                          
\E_{\theta \sim \rho} \left[ \E_{s \sim \sigma} \left[ v_i(\theta; (\phiSF(s_i))(\theta_i),s_{-i}(\theta_{-i})) \right] \right] - \epsilon.
\end{equation}
Let $\SigmaBNE^\epsilon \subseteq \Delta(S)$ be the set of all $\epsilon$-approximate Bayes--Nash equilibria in the strategic form.
\end{definition}

Next, we define Bayes--Nash equilibria as Nash equilibria in the agent normal form.
In the agent normal form, each hypothetical player $(i,\theta_i) \in N'$ decides a distribution $\pi_{i,\theta_i} \in \Delta(A_i)$.
A type profile $\theta \in \Theta$ sampled from $\rho \in \Delta(\Theta)$ determines the active players,
whose actions $a_i \in A_i$ are independently sampled from $\pi_{i,\theta_i}$ for each $i \in N$.
Therefore, the action profile of the active players follows a type-wise product distribution $\pi \in \PiProd$.
Recall that we call $\pi \in \Delta(A)^\Theta$ a type-wise product distribution if there exists some $\pi_{i,\theta_i} \in \Delta(A_i)$ for each $i \in N$ and $\theta_i \in \Theta_i$ such that $\pi(\theta;a) = \prod_{i \in N} \pi_{i,\theta_i}(a_i)$ for every $\theta \in \Theta$ and $a \in A$.
We define an approximate version of Bayes--Nash equilibria in the agent normal form as follows.

\begin{definition}[{$\epsilon$-Approximate Bayes--Nash equilibria ($\epsilon$-BNEs) in the agent normal form}]
For any $\epsilon \ge 0$, a type-wise product distribution $\pi \in \PiProd \subseteq \Delta(A)^\Theta$ is an $\epsilon$-approximate Bayes--Nash equilibria in the agent normal form
if
for any $i \in N$ and $\phi \colon \Theta_i \times A_i \to A_i$, it holds that
\begin{equation}\label{eq:ic-bne-pi}
\tag{$\mathrm{IC}'_{\mathrm{BNE}}$}
\E_{\theta \sim \rho} \left[ \E_{a \sim \pi(\theta)} \left[ v_i(\theta; a) \right] \right]
\ge
\E_{\theta \sim \rho} \left[ \E_{a \sim \pi(\theta)} \left[ v_i(\theta; \phi(\theta_i, a_i), a_{-i}) \right] \right] - \epsilon.
\end{equation}
Let $\PiBNE^\epsilon \subseteq \Delta(A)^\Theta$ be the set of all $\epsilon$-approximate Bayes--Nash equilibria in the agent normal form.
\end{definition}

Here, we show the equivalence of $\SigmaBNE^\epsilon$ and $\PiBNE^\epsilon$.
This is an approximate version of the well-known fact that Nash equilibria in the strategic form and the agent normal form are equivalent \citep*{Harsanyi67}.

\begin{proposition}\label{prop:bne}
For any distribution $\sigma \in \SigmaBNE^\epsilon$, its corresponding type-wise distribution satisfies $\eta(\sigma) \in \PiBNE^\epsilon$.
Conversely,
for any type-wise distribution $\pi \in \PiBNE^\epsilon$, there exists a distribution $\sigma \in \SigmaBNE^\epsilon$ such that $\eta(\sigma) = \pi$.
\end{proposition}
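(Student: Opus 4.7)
The plan is to prove the two directions separately, in each case using the product structure of $\sigma \in \SigmaProd$ or $\pi \in \PiProd$ to transport incentive constraints through the map $\eta$.

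For the forward direction, given $\sigma \in \SigmaBNE^\epsilon$, write $\sigma(s) = \prod_{i \in N}\sigma_i(s_i)$ with $\sigma_i \in \Delta(S_i)$ and define $\pi = \eta(\sigma)$. The first step is to observe that $\pi \in \PiProd$: by the definition of $\eta$ and the independence of the $\sigma_i$,
\begin{equation*}
\pi(\theta;a) = \Pr_{s \sim \sigma}(s(\theta)=a) = \prod_{i \in N} \Pr_{s_i \sim \sigma_i}(s_i(\theta_i)=a_i),
\end{equation*}
so setting $\pi_{i,\theta_i}(a_i) = \Pr_{s_i \sim \sigma_i}(s_i(\theta_i)=a_i)$ gives the required product decomposition. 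The second step is to verify \eqref{eq:ic-bne-pi}. Given $\phi \colon \Theta_i \times A_i \to A_i$, define $\phiSF \colon S_i \to S_i$ by $(\phiSF(s_i))(\theta_i) = \phi(\theta_i,s_i(\theta_i))$. Then because $\eta(\sigma)(\theta)$ is precisely the law of $s(\theta)$ under $s \sim \sigma$, the right-hand sides of \eqref{eq:ic-bne} and \eqref{eq:ic-bne-pi} coincide, and the left-hand sides coincide as well; the inequality \eqref{eq:ic-bne} supplied by $\sigma \in \SigmaBNE^\epsilon$ therefore yields \eqref{eq:ic-bne-pi}.

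For the reverse direction, given $\pi \in \PiBNE^\epsilon$ with decomposition $\pi(\theta;a)=\prod_{i}\pi_{i,\theta_i}(a_i)$, define $\sigma_i \in \Delta(S_i)$ by $\sigma_i(s_i) = \prod_{\theta_i \in \Theta_i}\pi_{i,\theta_i}(s_i(\theta_i))$ and set $\sigma(s) = \prod_{i \in N}\sigma_i(s_i)$, so $\sigma \in \SigmaProd$. Then $\eta(\sigma)=\pi$ follows by the same computation as above, using that the coordinates $s_i(\theta_i)$ are independent across $i$ under $\sigma$. The remaining task is to establish \eqref{eq:ic-bne} for an arbitrary $\phiSF \colon S_i \to S_i$. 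I reduce this to the case of constant $\phiSF$: since $\sigma$ is a product distribution, $s_i$ and $s_{-i}$ are independent, so
\begin{equation*}
\E_{s \sim \sigma}\E_{\theta \sim \rho}[v_i(\theta; (\phiSF(s_i))(\theta_i), s_{-i}(\theta_{-i}))] \le \max_{s_i^* \in S_i}\E_{s_{-i},\theta}[v_i(\theta; s_i^*(\theta_i), s_{-i}(\theta_{-i}))].
\end{equation*}
For each fixed $s_i^* \in S_i$, apply \eqref{eq:ic-bne-pi} with the action-independent map $\phi(\theta_i,a_i) = s_i^*(\theta_i)$; using $\eta(\sigma)=\pi$ to identify expectations over $a \sim \pi(\theta)$ with expectations over $s(\theta)$ for $s \sim \sigma$, this produces exactly the inequality \eqref{eq:ic-bne} with the chosen $s_i^*$, and hence with any $\phiSF$ by the reduction.

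No step is genuinely difficult: the only subtlety is the revelation-principle-style observation that arbitrary $\phiSF \colon S_i \to S_i$ can be replaced by a constant strategy for the purpose of checking \eqref{eq:ic-bne}, which is what lets us feed $\phi$ of the special form $\phi(\theta_i,a_i) = s_i^*(\theta_i)$ into \eqref{eq:ic-bne-pi}. Once that is in place, both directions are just rewriting expectations, making use of $\eta(\sigma)=\pi$ together with the product structures $\sigma = \prod_i \sigma_i$ and $\pi(\theta;\cdot) = \prod_i \pi_{i,\theta_i}$.
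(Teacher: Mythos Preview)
Your proof is correct and follows essentially the same approach as the paper: both directions construct $\sigma$ and $\pi$ from each other via the product decompositions, and in the reverse direction both exploit that, under a product $\sigma$, an arbitrary $\phiSF$ may be replaced by a single best-response strategy $s_i^*$ (the paper phrases this as choosing $\phi(\theta_i,a_i) \in \argmax_{a'_i}\E[v_i(\theta;a'_i,s_{-i}(\theta_{-i}))]$, which is constant in $a_i$). The only cosmetic difference is that you take the maximum over $s_i^*$ explicitly while the paper absorbs it into the definition of $\phi$.
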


Finally, we relate the classes of Bayes correlated equilibria to Bayes--Nash equilibria.
The following proposition shows that if a type-wise product distribution is contained in each class of Bayes correlated equilibria, it is a Bayes--Nash equilibrium.

\begin{proposition}
$\PiSF^\epsilon \cap \PiProd = \PiANF^\epsilon \cap \PiProd = \PiCom^\epsilon \cap \PiProd = \PiBS^\epsilon \cap \PiProd = \PiBNE^\epsilon$.
\end{proposition}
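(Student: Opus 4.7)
The plan is to close a cyclic chain of inclusions
$\PiBNE^\epsilon \subseteq \PiSF^\epsilon \cap \PiProd \subseteq \PiANF^\epsilon \cap \PiProd, \; \PiCom^\epsilon \cap \PiProd \subseteq \PiBS^\epsilon \cap \PiProd \subseteq \PiBNE^\epsilon,$
which forces equality throughout. The two ``middle'' inclusions are immediate from results already proved in this subsection; the real content is at the two endpoints that bridge $\PiBNE^\epsilon$ with the Bayes correlated equilibrium classes, and both of these reduce to inspecting that pairs of incentive constraints are literally the same inequality.

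First, I intersect Propositions \ref{prop:intersection} and \ref{prop:union} with $\PiProd$ to obtain
$\PiSF^\epsilon \cap \PiProd \subseteq \PiANF^\epsilon \cap \PiProd$ and $\PiSF^\epsilon \cap \PiProd \subseteq \PiCom^\epsilon \cap \PiProd$, and
$\PiANF^\epsilon \cap \PiProd \subseteq \PiBS^\epsilon \cap \PiProd$ and $\PiCom^\epsilon \cap \PiProd \subseteq \PiBS^\epsilon \cap \PiProd$. No new argument is needed at this step.

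Second, for $\PiBS^\epsilon \cap \PiProd \subseteq \PiBNE^\epsilon$, I take any $\pi \in \PiBS^\epsilon \cap \PiProd$. By assumption $\pi$ is a type-wise product distribution, so the only thing to verify for $\pi \in \PiBNE^\epsilon$ is the constraint \eqref{eq:ic-bne-pi}. But \eqref{eq:ic-bs} and \eqref{eq:ic-bne-pi} are syntactically the same inequality, so membership in $\PiBS^\epsilon$ already supplies it.

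Third, for $\PiBNE^\epsilon \subseteq \PiSF^\epsilon \cap \PiProd$, I take $\pi \in \PiBNE^\epsilon$. Membership in $\PiProd$ is immediate from the definition. To see $\pi \in \PiSF^\epsilon$, I invoke Proposition \ref{prop:bne} to obtain $\sigma \in \SigmaBNE^\epsilon$ with $\eta(\sigma) = \pi$; then observe that \eqref{eq:ic-bne} is syntactically \eqref{eq:ic-sf}, so $\sigma \in \SigmaBNE^\epsilon$ implies $\sigma \in \SigmaSF^\epsilon$ (the SFCE definition drops the product-distribution requirement, which only enlarges the class). Hence $\pi = \eta(\sigma) \in \eta(\SigmaSF^\epsilon) = \PiSF^\epsilon$.

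No genuine obstacle arises beyond careful bookkeeping: the correlated equilibria and BNE share the same incentive inequalities once one matches constraints on strategies with constraints on action distributions via $\eta$, and the product-distribution assumption collapses the distinction between ``correlated'' and ``independent'' play. Chaining the inclusions in both directions (through $\PiANF^\epsilon \cap \PiProd$ and through $\PiCom^\epsilon \cap \PiProd$) yields the claimed fourfold equality.
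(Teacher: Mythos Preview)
Your proof is correct and follows essentially the same approach as the paper: both reduce to the two endpoint inclusions $\PiBNE^\epsilon \subseteq \PiSF^\epsilon \cap \PiProd$ and $\PiBS^\epsilon \cap \PiProd \subseteq \PiBNE^\epsilon$, handle the intermediate inclusions via Propositions~\ref{prop:intersection} and~\ref{prop:union}, and dispatch the endpoints by observing that \eqref{eq:ic-bne} coincides with \eqref{eq:ic-sf} and \eqref{eq:ic-bs} coincides with \eqref{eq:ic-bne-pi}, invoking Proposition~\ref{prop:bne} for the former.
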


\begin{proof}
From \Cref{prop:intersection,prop:union}, $\PiSF^\epsilon$ is a subset of $\PiANF^\epsilon$, $\PiCom^\epsilon$, and $\PiBS^\epsilon$.
Moreover, $\PiBS^\epsilon$ is a superset of $\PiSF^\epsilon$, $\PiANF^\epsilon$, and $\PiCom^\epsilon$.
Therefore, it is sufficient to show that $\PiBNE^\epsilon \subseteq \PiSF^\epsilon \cap \PiProd$ and $\PiBS^\epsilon \cap \PiProd \subseteq \PiBNE^\epsilon$.

First, we prove $\PiBNE^\epsilon \subseteq \PiSF^\epsilon \cap \PiProd$.
Suppose $\pi \in \PiBNE^\epsilon$.
From \Cref{prop:bne}, there exists $\sigma \in \SigmaBNE^\epsilon$ such that $\eta(\sigma) = \pi$.
Since this $\sigma$ satisfies \eqref{eq:ic-bne} for each $i \in N$ and $\phiSF \colon S_i \to S_i$, which is equivalent to \eqref{eq:ic-sf}.
Hence, $\sigma \in \SigmaSF^\epsilon$, and then $\pi = \eta(\sigma) \in \PiSF^\epsilon$.
Moreover, from the definition of $\epsilon$-BNEs in the agent normal form, $\pi$ is a type-wise product distribution, which implies $\PiBNE^\epsilon \subseteq \PiSF^\epsilon \cap \PiProd$.

Next, we prove $\PiBS^\epsilon \cap \PiProd \subseteq \PiBNE^\epsilon$.
Suppose $\pi \in \PiBS^\epsilon \cap \PiProd$.
Since $\pi \in \PiBS^\epsilon$, the incentive constraint \eqref{eq:ic-bs} holds for each $i \in N$ and $\phi \colon \Theta_i \times A_i$, which is equivalent to \eqref{eq:ic-bne-pi}.
Moreover, $\pi$ is a type-wise product distribution since $\pi \in \PiProd$.
From the definition of $\epsilon$-BNEs in the agent normal form, we obtain $\pi \in \PiBNE^\epsilon$.
\end{proof}

\section{Algorithm for minimizing strategy swap regret}\label{sec:min-strategy-swap}

Here we propose an algorithm for minimizing strategy swap regret $\RSSi$ for an online learning problem with stochastic types.

Since strategy swap regret $\RSSi$ is swap regret when $S_i$ is regarded as the decision space,
we can apply the swap regret minimization algorithm proposed by \citet*{BM07}.
The direct application leads to an upper bound of $O(\sqrt{T |S_i| \log |S_i|})$ on the strategy swap regret.

Their algorithm reduces swap regret minimization to $|S_i|$ external regret minimization problems with decision space $S_i$.
In the case of strategy swap regret, we can further reduce them to $|S_i||\Theta_i|$ external regret minimization problems with decision space $A_i$.
As a result, we obtain an upper bound of $O(\sqrt{T |S_i| \log |A_i|})$ on the strategy swap regret, which is slightly better than the above bound.
Since $|S_i| = |A_i|^{|\Theta_i|}$, this bound is still exponentially large in the number of types $|\Theta_i|$.
Moreover, the algorithm uses exponentially many subroutines, which require exponential time computation for each round.
It is open whether there is an efficient algorithm with strategy swap regret sublinear in $T$ and polynomial in $n$, $|A_i|$, and $|\Theta_i|$.
If it exists, we can compute an $\epsilon$-SFCE in polynomial time by simulating the dynamics.

\begin{algorithm}[t]
\caption{Algorithm for minimizing strategy swap regret}\label{alg:strategy}
	\KwIn{The set of types $\Theta_i$ and the set of actions $A_i$ are specified in advance. The reward vector $u_i^t \in [0,1]^{\Theta_i \times A_i}$ is given at the end of each round $t \in [T]$.}
	Let $\calE_{s_i,\theta_i}$ be AdaHedge algorithm (see, e.g., \citep*[Section 7.6]{Orabona19}) with decision space $S_i$ for each $\theta_i \in \Theta_i$ and $s_i \in S_i$\;
	\For{each round $t = 1,\dots,T$}{
		Let $z^t_{s_i,\theta_i} \in \Delta(A_i)$ be the output of $\calE_{s_i,\theta_i}$ in round $t$ for each $s_i \in S_i$ and $\theta_i \in \Theta_i$\;
		Define $P^t \in [0,1]^{S_i \times S_i}$ by $P^t(s_i,s'_i) = \prod_{\theta_i \in \Theta_i} z^t_{s'_i,\theta_i}(s_i(\theta_i))$ for each $s_i,s'_i \in S_i$\;
		Compute $\sigma_i^t \in \Delta(S_i)$ such that $P^t \sigma_i^t = \sigma_i^t$ by eigenvector computation\;
		Let $\sigma_i^t$ be the output for round $t$\;
		Observe reward vector $u_i^t \in [0,1]^{\Theta_i \times A_i}$ and feed the reward vector $\tilde{u}^t_{s_i,\theta_i} \in [0,1]^{A_i}$ defined by $\tilde{u}^t_{s_i,\theta_i}(a_i) = \sigma_i^t(s_i) u_i^t(\theta_i, a_i) $ for each $a_i \in A_i$ to subroutine $\calE_{s_i,\theta_i}$ for each $s_i \in S_i$, $\theta_i \in \Theta_i$\;
	}
\end{algorithm}

The reduction proceeds as follows.
Let $\calE_{s_i,\theta_i}$ be AdaHedge (see, e.g., \citep*[Section 7.6]{Orabona19}) with decision space $A_i$ for each $\theta_i \in \Theta_i$ and $s_i \in S_i$.
In each round $t \in [T]$, each subroutine $\calE_{s_i,\theta_i}$ outputs $z^t_{s_i,\theta_i} \in \Delta(A_i)$.
From these outputs, we define the stochastic matrix $P^t \in [0,1]^{S_i \times S_i}$ by $P^t(s_i,s'_i) = \prod_{\theta_i \in \Theta_i} z^t_{s'_i,\theta_i}(s_i(\theta_i))$ for each $s_i,s'_i \in S_i$.
Since each $P^t \in \calP$ is a stochastic matrix, we can compute its stationary distribution $\sigma_i^t \in \Delta(S_i)$ that satisfies $P^t \sigma_i^t = \sigma_i^t$, and let $\sigma_i^t$ be the decision for round $t$.
Then feed the reward vector $\tilde{u}^t_{s_i,\theta_i} \in [0,1]^{A_i}$ defined by $\tilde{u}^t_{s_i,\theta_i}(a_i) = \sigma_i^t(s_i) u_i^t(\theta_i, a_i) $ for each $a_i \in A_i$ to subroutine $\calE_{s_i,\theta_i}$ for each $s_i \in S_i$, $\theta_i \in \Theta_i$.
Let
\begin{equation*}
R^T_{s_i,\theta_i} = \max_{a_i \in A_i} \sum_{t=1}^T \tilde{u}^t_{s_i,\theta_i}(a_i) - \sum_{t=1}^T \sum_{a_i \in A_i} z_{s_i,\theta_i}^t(a_i) \tilde{u}^t_{s_i,\theta_i}(a_i)
\end{equation*}
be the external regret of $\calE_{s_i,\theta_i}$ for each $s_i \in S_i$ and $\theta_i \in \Theta_i$.
Then strategy swap regret $\RSSi^T$ equals the sum of the external regrets as follows.

\begin{lemma}
\begin{equation*}
\RSSi^T = \sum_{\theta_i \in \Theta_i} \rho_i(\theta_i) \sum_{s_i \in S_i} R^T_{s_i,\theta_i}.
\end{equation*}
\end{lemma}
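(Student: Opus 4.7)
The plan is to show two equalities: one for the algorithm's cumulative expected reward, and one for the maximum reward attainable by the best swap $\phiSF \colon S_i \to S_i$. Subtracting them and comparing term-by-term with $R^T_{s_i,\theta_i}$ will give the claim.

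First I would exploit the fact that $\sigma_i^t$ is chosen as a fixed point of $P^t$. Concretely, $\sigma_i^t(s_i) = \sum_{s'_i \in S_i} P^t(s_i,s'_i) \sigma_i^t(s'_i) = \sum_{s'_i} \sigma_i^t(s'_i) \prod_{\theta'_i} z^t_{s'_i,\theta'_i}(s_i(\theta'_i))$. Plugging this into $\E_{s_i \sim \sigma_i^t}[u_i^t(\theta_i, s_i(\theta_i))]$ and swapping the order of summation, the inner sum becomes $\sum_{s_i \in A_i^{\Theta_i}} \prod_{\theta'_i} z^t_{s'_i,\theta'_i}(s_i(\theta'_i)) \, u_i^t(\theta_i, s_i(\theta_i))$. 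Since the summand depends on $s_i$ only through $s_i(\theta_i)$, every factor with $\theta'_i \neq \theta_i$ marginalizes to $\sum_{a'_i} z^t_{s'_i,\theta'_i}(a'_i) = 1$, leaving $\sum_{a_i} z^t_{s'_i,\theta_i}(a_i) u_i^t(\theta_i,a_i)$. Folding $\sigma_i^t(s'_i)$ inside gives exactly $\sum_{a_i} z^t_{s'_i,\theta_i}(a_i)\, \tilde u^t_{s'_i,\theta_i}(a_i)$. Taking expectation over $\theta_i \sim \rho_i$ and summing over $t$, the algorithm's total expected reward equals $\sum_{\theta_i} \rho_i(\theta_i) \sum_{s'_i} \sum_t \sum_{a_i} z^t_{s'_i,\theta_i}(a_i)\, \tilde u^t_{s'_i,\theta_i}(a_i)$, matching the second term of $\sum_{\theta_i}\rho_i(\theta_i)\sum_{s_i} R^T_{s_i,\theta_i}$.

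For the competitor's side, I would observe that in $\sum_t \E_{\theta_i \sim \rho_i}\E_{s_i \sim \sigma_i^t}[u_i^t(\theta_i,(\phiSF(s_i))(\theta_i))]$, the function $\phiSF$ enters only through the value $(\phiSF(s_i))(\theta_i) \in A_i$ for each pair $(s_i,\theta_i)$, and these values may be chosen independently of one another (since $\phiSF$ is otherwise unconstrained). Hence the outer maximum decomposes as $\sum_{\theta_i} \rho_i(\theta_i) \sum_{s_i} \max_{a_i \in A_i} \sum_t \sigma_i^t(s_i) u_i^t(\theta_i, a_i) = \sum_{\theta_i} \rho_i(\theta_i) \sum_{s_i} \max_{a_i}\sum_t \tilde u^t_{s_i,\theta_i}(a_i)$, which matches the first term of $\sum_{\theta_i}\rho_i(\theta_i)\sum_{s_i} R^T_{s_i,\theta_i}$. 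Subtracting the two identities yields the lemma.

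The only nontrivial step is the marginalization in the algorithm's reward: one must carefully factor the product over $\theta'_i \in \Theta_i$ and use the fact that each $z^t_{s'_i,\theta'_i}$ is a probability distribution on $A_i$. Everything else is bookkeeping and interchange of finite sums, so I do not anticipate any real obstacle beyond being explicit about how the fixed-point condition $Q^t x^t = x^t$ (in the notation of this algorithm, $P^t \sigma_i^t = \sigma_i^t$) couples the subroutines' outputs to the algorithm's distribution over strategies.
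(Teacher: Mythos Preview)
Your proposal is correct and follows essentially the same approach as the paper: both arguments use the fixed-point identity $P^t\sigma_i^t=\sigma_i^t$ together with the product form of $P^t$ to marginalize all $\theta'_i\neq\theta_i$ factors in the algorithm's reward, and both observe that $(\phiSF(s_i))(\theta_i)$ can be optimized independently over $(s_i,\theta_i)$ to decompose the competitor's maximum. The only difference is the order in which the two terms are handled.
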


\begin{proof}
Recall that the strategy swap regret is defined as
\begin{equation*}
\RSSi^T = \max_{\phiSF \colon S_i \to S_i} \sum_{t=1}^T \E_{\theta_i \sim \rho_i} \left[ \E_{s_i \sim \sigma_i^t} \left[ u_i^t(\theta_i, (\phiSF(s_i))(\theta_i)) \right] \right] - \sum_{t=1}^T \E_{\theta_i \sim \rho_i} \left[ \E_{s_i \sim \sigma_i^t} \left[ u_i^t(\theta_i, s_i(\theta_i)) \right] \right].
\end{equation*}
We analyze the first term and the second term separately.
The first term is
\begin{align*}
&\max_{\phiSF \colon S_i \to S_i} \sum_{t=1}^T \E_{\theta_i \sim \rho_i} \left[ \E_{s_i \sim \sigma_i^t} \left[ u_i^t(\theta_i, (\phiSF(s_i))(\theta_i)) \right] \right]\\
&= \max_{\phiSF \colon S_i \to S_i} \sum_{t=1}^T \sum_{\theta_i \in \Theta_i} \rho_i(\theta_i) \sum_{s_i \in S_i} \sigma_i^t(s_i) u_i^t(\theta_i,(\phiSF(s_i))(\theta_i))\\
&= \max_{\phiSF \colon S_i \to S_i} \sum_{t=1}^T \sum_{\theta_i \in \Theta_i} \rho_i(\theta_i) \sum_{s_i \in S_i} \tilde{u}_{s_i,\theta_i}^t((\phiSF(s_i))(\theta_i)) \tag{from the definition of $\tilde{u}^t_{s_i,\theta_i}$}\\
&= \sum_{\theta_i \in \Theta_i} \rho_i(\theta_i) \sum_{s_i \in S_i} \max_{a_i \in A_i} \sum_{t=1}^T \tilde{u}_{s_i,\theta_i}^t(a_i),
\end{align*}
where the last equality holds because $(\phiSF(s_i))(\theta_i)$ can be optimized separately for each $s_i \in S_i$ and $\theta_i \in \Theta_i$.

The second term is
\begin{align*}
&\sum_{t=1}^T \E_{\theta_i \sim \rho_i} \left[ \E_{s_i \sim \sigma_i^t} \left[ u_i^t(\theta_i, s_i(\theta_i)) \right] \right]\\
&= \sum_{t=1}^T \sum_{\theta_i \in \Theta_i} \rho_i(\theta_i) \sum_{s_i \in S_i} \sigma_i^t(s_i) u_i^t(\theta_i,s_i(\theta_i))\\
&= \sum_{t=1}^T \sum_{\theta_i \in \Theta_i} \rho_i(\theta_i) \sum_{s_i,s'_i \in S_i} P^t(s_i,s'_i) \sigma_i^t(s'_i) u_i^t(\theta_i,s_i(\theta_i)) \tag{since $P^t \sigma_i^t = \sigma_i^t$}\\
&= \sum_{t=1}^T \sum_{\theta_i \in \Theta_i} \rho_i(\theta_i) \sum_{s_i,s'_i \in S_i} \sigma_i^t(s'_i) \prod_{\theta'_i \in \Theta_i} z^t_{s'_i,\theta'_i}(s_i(\theta'_i)) u_i^t(\theta_i,s_i(\theta_i)) \tag{from the definition of $P^t$}\\
&= \sum_{t=1}^T \sum_{\theta_i \in \Theta_i} \rho_i(\theta_i) \sum_{s_i,s'_i \in S_i} \prod_{\theta'_i \in \Theta_i} z^t_{s'_i,\theta'_i}(s_i(\theta'_i)) \tilde{u}_{s'_i,\theta_i}^t(s_i(\theta_i)) \tag{from the definition of $\tilde{u}^t_{s_i,\theta_i}$}\\
&= \sum_{t=1}^T \sum_{\theta_i \in \Theta_i} \rho_i(\theta_i) \sum_{s'_i \in S_i} \sum_{a_i \in A_i} \left( \sum_{\substack{s_i \in S_i\colon \\ s_i(\theta_i) = a_i}} \prod_{\theta'_i \in \Theta_i} z^t_{s'_i,\theta'_i}(s_i(\theta'_i)) \right) \tilde{u}_{s'_i,\theta_i}^t(a_i)\\
&= \sum_{t=1}^T \sum_{\theta_i \in \Theta_i} \rho_i(\theta_i) \sum_{s'_i \in S_i} \sum_{a_i \in A_i} z^t_{s'_i,\theta_i}(a_i) \tilde{u}_{s'_i,\theta_i}^t(a_i).
\end{align*}

Finally, we obtain
\begin{align*}
\RSSi^T
&= \sum_{\theta_i \in \Theta_i} \rho_i(\theta_i) \sum_{s_i \in S_i} \max_{a_i \in A_i} \sum_{t=1}^T \tilde{u}_{s_i,\theta_i}^t(a_i) - \sum_{t=1}^T \sum_{\theta_i \in \Theta_i} \rho_i(\theta_i) \sum_{s_i \in S_i} \sum_{a_i \in A_i} z^t_{s_i,\theta_i}(a_i) \tilde{u}_{s_i,\theta_i}^t(a_i) \\
&= \sum_{\theta_i \in \Theta_i} \rho_i(\theta_i) \sum_{s_i \in S_i} \left\{ \max_{a_i \in A_i} \sum_{t=1}^T \tilde{u}_{s_i,\theta_i}^t(a_i) - \sum_{t=1}^T \sum_{a_i \in A_i} z^t_{s_i,\theta_i}(a_i) \tilde{u}_{s_i,\theta_i}^t(a_i) \right\} \\
&= \sum_{\theta_i \in \Theta_i} \rho_i(\theta_i) \sum_{s_i \in S_i} R^T_{s_i,\theta_i}. \qedhere
\end{align*}
\end{proof}

By applying \Cref{thm:adahedge} (an upper bound for AdaHedge), we obtain an upper bound on strategy swap regret.

\begin{theorem}
\begin{equation*}
\RSSi \le 6 \sqrt{T |S_i| \log |A_i|}.
\end{equation*}
\end{theorem}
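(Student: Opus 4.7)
The plan is to bound each $R^T_{s_i,\theta_i}$ via the regret guarantee of the multiplicative-weights subroutine with the doubling trick (\Cref{thm:mwu-doubling}), then aggregate over $s_i$ using Cauchy--Schwarz and over $\theta_i$ using the fact that $\rho_i$ is a probability distribution. First I would invoke the lemma proved just before the theorem, which gives the identity $\RSSi^T = \sum_{\theta_i \in \Theta_i} \rho_i(\theta_i) \sum_{s_i \in S_i} R^T_{s_i,\theta_i}$, reducing the problem to uniformly bounding each $R^T_{s_i,\theta_i}$.

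Next I would analyze the reward stream fed to $\calE_{s_i,\theta_i}$. By definition $\tilde{u}^t_{s_i,\theta_i}(a_i) = \sigma_i^t(s_i)\, u_i^t(\theta_i,a_i) \in [0,1]$ (since $\sigma_i^t(s_i) \le 1$ and $u_i^t \in [0,1]$), so \Cref{thm:mwu-doubling} applies with $d = |A_i|$. The cumulative reward of any fixed action $a_i$ satisfies
\begin{equation*}
\sum_{t=1}^T \tilde{u}^t_{s_i,\theta_i}(a_i) \le \sum_{t=1}^T \sigma_i^t(s_i),
\end{equation*}
so the optimal fixed-decision reward $u^*$ in the guarantee is at most $\sum_{t=1}^T \sigma_i^t(s_i)$, giving
\begin{equation*}
R^T_{s_i,\theta_i} \le 6 \sqrt{\Bigl(\sum_{t=1}^T \sigma_i^t(s_i)\Bigr) \log |A_i|} + 2 \log |A_i|.
\end{equation*}

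The aggregation step uses Cauchy--Schwarz over $s_i \in S_i$: since $\sum_{s_i \in S_i} \sigma_i^t(s_i) = 1$ for every $t$, we have
\begin{equation*}
\sum_{s_i \in S_i} \sqrt{\sum_{t=1}^T \sigma_i^t(s_i)} \le \sqrt{|S_i| \sum_{s_i \in S_i} \sum_{t=1}^T \sigma_i^t(s_i)} = \sqrt{|S_i|\, T}.
\end{equation*}
Hence $\sum_{s_i \in S_i} R^T_{s_i,\theta_i} \le 6 \sqrt{T |S_i| \log |A_i|} + 2 |S_i| \log |A_i|$. Finally, taking the $\rho_i$-weighted sum over $\theta_i \in \Theta_i$ and using $\sum_{\theta_i} \rho_i(\theta_i) = 1$ yields the claimed bound.

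There is no real obstacle here; the only point requiring a little care is verifying that the bound on the ``optimal fixed-decision reward'' is $\sum_t \sigma_i^t(s_i)$ rather than $T$, which is what lets the Cauchy--Schwarz step convert a naive $O(|S_i|\sqrt{T})$ into the desired $O(\sqrt{T|S_i|})$ scaling and justifies why we gain a factor of $\sqrt{|S_i|}$ over a direct application of the swap-regret bound.
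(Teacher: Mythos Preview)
Your proposal is correct and follows essentially the same approach as the paper: invoke the decomposition lemma, bound each $R^T_{s_i,\theta_i}$ via \Cref{thm:mwu-doubling} with $u^* \le \sum_t \sigma_i^t(s_i)$, apply Cauchy--Schwarz over $s_i$ to collapse $\sum_{s_i}\sqrt{\sum_t \sigma_i^t(s_i)}$ into $\sqrt{T|S_i|}$, and average over $\theta_i$ using $\sum_{\theta_i}\rho_i(\theta_i)=1$. The paper's proof is identical in structure and in all the key steps.
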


\begin{proof}
Since the sum of the maximum reward for $\calE_{s_i,\theta_i}$ is bounded above as $\sum_{t=1}^T \max_{a_i \in A_i} \tilde{u}^t_{s_i,\theta_i}(a_i) \le \sum_{t=1}^T \max_{a_i \in A_i} \sigma_i^t(s_i) u_i^t(\theta_i,a_i) \le \sum_{t=1}^T \sigma_i^t(s_i)$, we can obtain an upper bound on the external regret
\begin{equation*}
R^T_{s_i,\theta_i}
\le 6 \sqrt{\sum_{t=1}^T \sigma_i^t(s_i) \log |A_i|}.
\end{equation*}
By summing this upper bound for all $\theta_i \in \Theta_i$ and $s_i \in S_i$, we obtain
\begin{align*}
\RSSi^T
&= \sum_{\theta_i \in \Theta_i} \rho_i(\theta_i) \sum_{s_i \in S_i} R^T_{s_i,\theta_i}\\
&\le \sum_{\theta_i \in \Theta_i} \rho_i(\theta_i) \sum_{s_i \in S_i} \left\{ 6 \sqrt{\sum_{t=1}^T \sigma_i^t(s_i) \log |A_i|} \right\}\\
&\le \sum_{\theta_i \in \Theta_i} \rho_i(\theta_i) \left\{ 6 \sqrt{|S_i| \sum_{s_i \in S_i} \sum_{t=1}^T \sigma_i^t(s_i) \log |A_i|} \right\} \tag{by the Cauchy--Schwarz inequality}\\
&= \sum_{\theta_i \in \Theta_i} \rho_i(\theta_i) \left\{ 6 \sqrt{ T |S_i| \log |A_i|} \right\} \tag{since $\sum_{s_i \in S_i} \sigma_i^t(s_i) = 1$ for each $t \in [T]$}\\
&= 6 \sqrt{ T |S_i| \log |A_i|}. \qedhere
\end{align*}
\end{proof}

\section{On linear swap regret minimization}\label{sec:linear-swap}

\citet*{MMSS22} defined linear swap regret for online linear optimization with a polytope constraint as follows.

\begin{definition}[Linear swap regret {\citep*{MMSS22}}]
Let $\calP \subseteq \bbR^d$ be a polytope and $\calM(\calP)$ be the set of all valid linear transformations, where a linear transformation $M \colon \bbR^d \to \bbR^d$ is defined to be valid when $M x \in \calP$ holds for all $x \in \calP$.
For online linear optimization with reward vector $\bar{u}^t \in [0,1]^d$ for each $t \in [T]$ and feasible region $\calP$, linear swap regret is defined as
\begin{equation*}
\RLS^T = \max_{M \in \calM(\calP)} \sum_{t=1}^T \left( \langle M x^t, \bar{u}^t \rangle - \langle x^t, \bar{u}^t \rangle \right),
\end{equation*}
where $x^t \in \calP$ is the algorithm's output in each round $t \in [T]$.
\end{definition}

\citet*{MMSS22} defined linear swap regret for a general polytope but focused on a special case of $\calP = \calX$ with applications to two-player Bayesian games.
Recall that $\calX = \{ x \in [0,1]^{\Theta_i \times A_i} \mid \sum_{a_i \in A_i} x(\theta_i,a_i) = 1 ~ (\forall \theta_i \in \Theta_i) \}$ is the set of vectors that represent each $\pi_i \in \Delta(A_i)^{\Theta_i}$.

Here, we show that linear swap regret minimization with $\calP = \calX$ can be reduced to untruthful swap regret minimization.
Note that they propose an algorithm for minimizing linear swap regret in this special case (Algorithm 2 in their paper), but focused on guarantees on the Stackelberg value and did not provide any rigorous upper bound on linear swap regret.

The following proposition claims that all valid linear transformations can be expressed by some $Q \in \calQ$, which represents a linear transformation for untruthful swap regret.
Recall that $\calQ$ is defined by \eqref{eq:calQ}.

\begin{proposition}
For any $M \in \calM(\calX)$, there exists some $Q \in \calQ$ such that $Mx = Qx$ holds for any $x \in \calX$.
\end{proposition}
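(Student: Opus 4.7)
My plan is to exhibit $Q$ as an additive shift of $M$ that is invisible on $\calX$ but promotes $M$ into a matrix satisfying the non-negativity and sum-structure defining $\calQ$. The construction rests on two observations about $\calX = \{x \in [0,1]^{\Theta_i \times A_i} \mid \sum_{a_i} x(\theta_i, a_i) = 1 ~ (\forall \theta_i \in \Theta_i)\}$. First, for any function $c(\theta'_i;\theta_i,a_i)$ independent of $a'_i$, replacing $M((\theta_i, a_i),(\theta'_i, a'_i))$ by $M((\theta_i, a_i),(\theta'_i, a'_i)) + c(\theta'_i;\theta_i,a_i)$ changes $(Mx)(\theta_i, a_i)$ by $\sum_{\theta'_i} c(\theta'_i;\theta_i,a_i)$ uniformly over $x \in \calX$, because $\sum_{a'_i} x(\theta'_i, a'_i) = 1$ for every $x \in \calX$; hence any shift with $\sum_{\theta'_i} c(\theta'_i;\theta_i,a_i) = 0$ leaves the action on $\calX$ unchanged. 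Second, comparing $(Mv_{s_i})(\theta_i, \cdot)$ and $(Mv_{s'_i})(\theta_i, \cdot)$ for two vertices of $\calX$ differing only at type $\theta'_i$ (so their difference is $e_{(\theta'_i, a'_i)} - e_{(\theta'_i, a''_i)}$) and using $\sum_{a_i}(Mx)(\theta_i, a_i) = 1$ on $\calX$ shows that $W_M(\theta_i, \theta'_i) := \sum_{a_i} M((\theta_i, a_i),(\theta'_i, a'_i))$ is independent of $a'_i$ and that $\sum_{\theta'_i} W_M(\theta_i, \theta'_i) = 1$.

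The key inequality driving the construction comes from validity: letting $m(\theta_i, a_i, \theta'_i) := \min_{a'_i} M((\theta_i, a_i),(\theta'_i, a'_i))$, the vector $x \in \calX$ that, for each $\theta'_i$, places all its mass on a minimizing $a'_i$ gives $(Mx)(\theta_i, a_i) = \sum_{\theta'_i} m(\theta_i, a_i, \theta'_i)$, and non-negativity of $Mx$ forces $\sum_{\theta'_i} m(\theta_i, a_i, \theta'_i) \ge 0$. Using this, I would define
\[
c(\theta'_i; \theta_i, a_i) := -m(\theta_i, a_i, \theta'_i) + \frac{1}{|\Theta_i|} \sum_{\theta''_i \in \Theta_i} m(\theta_i, a_i, \theta''_i)
\]
and set $Q := M + c$. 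By construction $\sum_{\theta'_i} c(\theta'_i;\theta_i,a_i) = 0$ and $c(\theta'_i;\theta_i, a_i) \ge -m(\theta_i, a_i, \theta'_i)$, so $Qx = Mx$ for every $x \in \calX$ and every entry of $Q$ is non-negative.

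To verify $Q \in \calQ$, note that since $c$ is $a'_i$-independent, the first observation gives that $\sum_{a_i} Q((\theta_i, a_i),(\theta'_i, a'_i))$ is independent of $a'_i$; call it $W(\theta_i, \theta'_i)$. Summing over $\theta'_i$ yields $\sum_{\theta'_i} W(\theta_i, \theta'_i) = \sum_{\theta'_i} W_M(\theta_i, \theta'_i) + \sum_{a_i}\sum_{\theta'_i} c(\theta'_i;\theta_i, a_i) = 1$, and $W \ge 0$ follows from $Q \ge 0$, so $W$ is a stochastic matrix in $[0,1]^{\Theta_i \times \Theta_i}$. Moreover, because the non-negative entries in each fixed block sum to $W(\theta_i, \theta'_i) \le 1$, each individual entry of $Q$ lies in $[0,1]$, so every clause of the definition of $\calQ$ is met.

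The main obstacle, to my mind, is identifying the correct notion of freedom in representing the linear map by a matrix: $M$ is not uniquely determined by its action on $\calX$, and a naive attempt to factor $M$ directly as $W(\theta_i, \theta'_i) \tilde Q_{\theta_i, \theta'_i}(a_i, a'_i)$ fails whenever $M$ has negative entries or a negative $W_M(\theta_i, \theta'_i)$. Once the affine freedom is pinned down as $a'_i$-independent column shifts summing to zero over $\theta'_i$, the validity-derived inequality $\sum_{\theta'_i} m(\theta_i, a_i, \theta'_i) \ge 0$ is precisely the slack needed for the explicit shift to land in the non-negative orthant, and the remaining bookkeeping is routine.
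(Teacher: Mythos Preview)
Your proof is correct and follows the same core strategy as the paper: both establish that $\sum_{a_i} M((\theta_i,a_i),(\theta'_i,a'_i))$ is independent of $a'_i$ with row sums equal to $1$, observe that shifts constant in $a'_i$ and summing to zero over $\theta'_i$ leave the action on $\calX$ unchanged, and use the validity inequality $\sum_{\theta'_i} \min_{a'_i} M((\theta_i,a_i),(\theta'_i,a'_i)) \ge 0$ to push $M$ into $\calQ$. The only difference is the particular shift chosen: the paper fixes a distinguished $\theta_i^*$, subtracts the block minima for all other $\theta'_i$, and dumps the compensating sum onto the $\theta_i^*$ block, which then requires the additional inequalities $\max - \min \in [0,1]$ and $\sum_{\theta'_i}\max \in [0,1]$ to verify that block stays in $[0,1]$. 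Your symmetric redistribution --- adding $\tfrac{1}{|\Theta_i|}\sum_{\theta''_i} m(\theta_i,a_i,\theta''_i)$ uniformly --- avoids singling out a block and lets you deduce $Q \in [0,1]$ directly from $Q \ge 0$ and the column-sum bound $\sum_{a_i} Q = W \le 1$, making the bookkeeping shorter. Both arguments buy the same conclusion; yours is a modest streamlining of the same idea.
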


\begin{proof}
Fix any $M \in \calM(\calX)$.
We construct $Q \in \calQ$ such that $Mx=Qx$ holds for any $x \in \calX$.

Since $M$ is a valid linear transformation, $Mx \in \calX$ holds for any $x \in \calX$.
We use this fact for $x^1,x^2 \in \calX$ defined as follows.
Fix any $\theta'_i \in \Theta_i$ and $a^1_i, a^2_i \in A_i$.
Define $x^1(\theta'_i,a^1_i) = 1$ and $x(\theta'_i,a_i) = 0$ for any other $a_i \in A_i \setminus \{a^1_i\}$.
Similarly, let $x^2(\theta'_i,a^2_i) = 1$ and $x^2(\theta'_i,a_i) = 0$ for any other $a_i \in A_i \setminus \{a^2_i\}$.
The other entries of $x^1$ can be defined arbitrarily, and the other entries of $x^2$ are set to be equal to $x^1$.
That is, for any $\theta_i \in \Theta_i \setminus \{\theta'_i\}$, we must have $x^1(\theta_i,a_i) = x^2(\theta_i,a_i)$ for any $a_i \in A_i$.

Now we use $Mx^1 \in \calX$ and $Mx^2 \in \calX$.
For each $\theta_i \in \Theta_i$, we have
\begin{equation*}
\sum_{a_i \in A_i} (Mx^1)(\theta_i,a_i) = 1
\quad
\text{and}
\quad
\sum_{a_i \in A_i} (Mx^2)(\theta_i,a_i) = 1,
\end{equation*}
which implies
\begin{equation*}
\sum_{a_i \in A_i} (M(x^1-x^2))(\theta_i,a_i) = 0.
\end{equation*}
Since $(x^1-x^2)(\theta'_i,a^1_i) = 1$, $(x^1-x^2)(\theta'_i,a^2_i) = -1$, and the other entries of $x^1-x^2$ are $0$, we obtain
\begin{equation*}
\sum_{a_i \in A_i} M((\theta_i,a_i), (\theta'_i,a^1_i)) = \sum_{a_i \in A_i} M((\theta_i,a_i), (\theta'_i,a^2_i)).
\end{equation*}
Since this holds for any pair of $a^1_i,a^2_i \in A_i$, there exists some value $W(\theta_i,\theta'_i) \in \bbR$ for each $\theta_i,\theta'_i \in \Theta_i$ such that 
\begin{equation*}
\sum_{a_i \in A_i} M((\theta_i,a_i), (\theta'_i,a'_i)) = W(\theta_i,\theta'_i)
\end{equation*}
for any $a_i \in A_i$.
Moreover, by using $Mx \in \calX$ for each $x \in \calX$, it holds that for any $\theta_i \in \Theta_i$,
\begin{align*}
1
&=
\sum_{a_i \in A_i} (Mx)(\theta_i,a_i)\\
&=
\sum_{a_i \in A_i} \sum_{\theta'_i \in \Theta_i} \sum_{a'_i \in A_i} x(\theta'_i,a'_i) M((\theta_i,a_i),(\theta'_i,a'_i))\\
&=
\sum_{\theta'_i \in \Theta_i} \sum_{a'_i \in A_i} x(\theta'_i,a'_i) \left\{ \sum_{a_i \in A_i} M((\theta_i,a_i),(\theta'_i,a'_i)) \right\} \\
&=
\sum_{\theta'_i \in \Theta_i} \left\{ \sum_{a'_i \in A_i} x(\theta'_i,a'_i) \right\} W(\theta_i,\theta'_i) \tag{from the definition of $W(\theta_i,\theta'_i)$} \\
&=
\sum_{\theta'_i \in \Theta_i} W(\theta_i, \theta'_i). \tag{since $x \in \calX$}
\end{align*}

If every entry of $M$ is included in $[0,1]$, then the proof is finished, but this does not necessarily holds.
We show that we can achieve $M \in \calQ$ by shifting each entry of $M$ without changing $Mx$ for every $x \in \calX$.
Fix any $\theta_i \in \Theta_i$ and $a_i \in A_i$, and we focus on the row $M((\theta_i,a_i),(\cdot,\cdot))$.
For any $\theta'_i \in \Theta_i$, we define $a^1_i \in \argmax_{a'_i \in A_i} M((\theta_i,a_i),(\theta'_i,a'_i))$ and $a^2_i \in \argmin_{a'_i \in A_i} M((\theta_i,a_i),(\theta'_i,a'_i))$.
We reuse the above definition of $x^1$ and $x^2$ with these new $a^1_i$ and $a^2_i$.
Since $(Mx^1)(\theta_i,a_i)$ and $(Mx^2)(\theta_i,a_i)$ are included in $[0,1]$, we have
\begin{equation}\label{eq:sum-01}
\max_{a'_i \in A_i} M((\theta_i,a_i),(\theta'_i,a'_i)) - \min_{a'_i \in A_i}M((\theta_i,a_i),(\theta'_i,a'_i)) \in [0,1].
\end{equation}
If we set $x(\theta'_i,a_i) = 1$ for some $a_i \in \argmax_{a'_i \in A_i} M((\theta_i,a_i),(\theta'_i,a'_i))$ for each $\theta_i$ and set $0$ otherwise, $(Mx)(\theta_i,a_i) \in [0,1]$ implies
\begin{equation}\label{eq:maxsum-ub}
\sum_{\theta'_i \in \Theta_i} \max_{a'_i \in A_i}M((\theta_i,a_i),(\theta'_i,a'_i)) \in [0,1].
\end{equation}
Similarly, we can obtain
\begin{equation}\label{eq:minsum-lb}
\sum_{\theta'_i \in \Theta_i} \min_{a'_i \in A_i}M((\theta_i,a_i),(\theta'_i,a'_i)) \in [0,1].
\end{equation}

For any $\theta'_i,\theta''_i \in \Theta_i$, even if we add an arbitrary value $C \in \bbR$ to all entries $M((\theta_i,a_i),(\theta'_i,\cdot))$ and subtract $C$ from all entries $M((\theta_i,a_i),(\theta''_i,\cdot))$, the value of $(Mx)(\theta_i,a_i)$ does not change.
If we denote this shifted matrix by $M'$, this fact can be checked as
\begin{equation*}
(M'x)(\theta_i,a_i)
=
(Mx)(\theta_i,a_i)
+
\sum_{a'_i \in A_i} C x(\theta'_i,a'_i)
-
\sum_{a'_i \in A_i} C x(\theta''_i,a'_i)
=
(Mx)(\theta_i,a_i).
\end{equation*}

Now we consider shifting the entries of $M$ and obtain $M'$.
Fix any $\theta^*_i \in \Theta_i$.
For $\theta'_i \in \Theta_i \setminus \{\theta^*_i\}$, we define all entries $M'((\theta_i,a_i),(\theta'_i,\cdot))$ by
\begin{equation*}
M'((\theta_i,a_i),(\theta'_i,a''_i))
= M((\theta_i,a_i),(\theta'_i,a''_i)) - \min_{a'_i \in A_i} M((\theta_i,a_i),(\theta'_i,a'_i))
\end{equation*}
for each $a''_i \in A_i$.
For $\theta^*_i$, we define
\begin{equation*}
M'((\theta_i,a_i),(\theta^*_i,a''_i))
= M((\theta_i,a_i),(\theta^*_i,a''_i)) + \sum_{\theta'_i \in \Theta_i \setminus \{\theta^*_i\}} \min_{a'_i \in A_i} M((\theta_i,a_i),(\theta'_i,a'_i))
\end{equation*}
for each $a''_i \in A_i$.
Note that the sum of shifts is $0$, and therefore, $Mx = M'x$ for every $x \in \calX$.
We check that all the entries of $M'$ on this row are included in $[0,1]$.
For $\theta'_i \in \Theta_i \setminus \{\theta^*_i\}$, all entries $M'((\theta_i,a_i),(\theta'_i,\cdot))$ are included in $[0,1]$ due to \eqref{eq:sum-01}.
For $\theta^*_i$, the largest entry is
\begin{align*}
\max_{a''_i \in A_i} M'((\theta_i,a_i),(\theta^*_i,a''_i))
&=
\max_{a'_i \in A_i} M((\theta_i,a_i),(\theta^*_i,a'_i)) + \sum_{\theta'_i \in \Theta_i \setminus \{\theta^*_i\}} \min_{a'_i \in A_i} M((\theta_i,a_i),(\theta'_i,a'_i))\\
&\le
\sum_{\theta'_i \in \Theta_i \setminus \{\theta^*_i\}} \max_{a'_i \in A_i} M((\theta_i,a_i),(\theta'_i,a'_i)) \in [0,1]
\end{align*}
due to \eqref{eq:maxsum-ub}.
The smallest entry is
\begin{equation*}
\min_{a''_i \in A_i} M'((\theta_i,a_i),(\theta^*_i,a''_i))
= \sum_{\theta'_i \in \Theta_i} \min_{a'_i \in A_i} M((\theta_i,a_i),(\theta'_i,a'_i)) \in [0,1]
\end{equation*}
due to \eqref{eq:minsum-lb}.
Therefore, all entries of $M'$ on this row are included in $[0,1]$. 

We can apply the same shifting operation to all rows of $M$ and denote the obtained matrix by $Q$.
All the entries of $Q$ are included in $[0,1]$ with $Qx = Mx$ for every $x \in \calX$.
From its definition, $W(\theta_i,\theta'_i)$ is the sum of non-negative values, hence non-negative for each $\theta_i,\theta'_i \in \Theta_i$.
Since $\sum_{\theta'_i \in \Theta_i} W(\theta_i,\theta'_i) = 1$, each $W(\theta_i,\theta'_i)$ is at most $1$.
Therefore, $Q$ satisfies all the constraints for $\calQ$, and we have $Q \in \calQ$.
\end{proof}

By using this proposition, we can reduce linear swap regret minimization to untruthful swap regret minimization as
\begin{equation*}
\RLS^T
= \max_{M \in \calM(\calP)} \sum_{t=1}^T \left( \langle M x^t, \bar{u}^t \rangle - \langle x^t, \bar{u}^t \rangle \right)
= \max_{Q \in \calQ} \sum_{t=1}^T \left( \langle Q x^t, \bar{u}^t \rangle - \langle x^t, \bar{u}^t \rangle \right),
\end{equation*}
which equals $\RUSi^T$ from \Cref{lem:RUSi-Phi}.

\section{On Bayes coarse correlated equilibria}\label{sec:bcce}

Here, we briefly present definitions of Bayes coarse correlated equilibria for comparison with Bayes correlated equilibria.
As with SFCEs, we can define coarse correlated equilibria of the strategic form.
In an equilibrium of this class, each player $i \in N$ does not have incentive to ignore recommendation $s_i \in S_i$ and stick to any strategy $s'_i \in S_i$.

\begin{definition}[Strategic-form coarse correlated equilibria (SFCCEs)]
A distribution $\sigma \in \Delta(S)$ is a strategic-form coarse correlated equilibrium
if
for any $i \in N$ and any $s'_i \in S_i$, it holds that
\begin{equation*}
\E_{\theta \sim \rho} \left[ \E_{s \sim \sigma} \left[ v_i(\theta; s(\theta)) \right] \right]
\ge                              
\E_{\theta \sim \rho} \left[ \E_{s \sim \sigma} \left[ v_i(\theta; s'_i(\theta_i),s_{-i}(\theta_{-i})) \right] \right].
\end{equation*}
\end{definition}

Similarly, we can define coarse correlated equilibria of the agent normal form.
In an equilibrium of this class, each hypothetical player $(i,\theta_i) \in N'$ does not have incentive to stick to any action $a'_i \in A_i$.
\citet*{HST15} proposed dynamics converging to this class of equilibria.

\begin{definition}[Agent-normal-form coarse correlated equilibria (ANFCCEs)]
A distribution $\sigma \in \Delta(S)$ is an agent-normal-form coarse correlated equilibrium
if
for any $i \in N$, $\theta'_i \in \Theta_i$, and $a'_i \in A_i$, it holds that
\begin{equation*}
\E_{\theta \sim \rho} \left[ \bfone_{\{\theta_i = \theta'_i\}} \E_{s \sim \sigma} \left[ v_i(\theta; s(\theta)) \right] \right]
\ge                              
\E_{\theta \sim \rho} \left[ \bfone_{\{\theta_i = \theta'_i\}} \E_{s \sim \sigma} \left[ v_i(\theta; a'_i,s_{-i}(\theta_{-i})) \right] \right].
\end{equation*}
\end{definition}

The incentive constraints for ANFCCEs are imposed separately for each type $\theta'_i$.
For (non-coarse) ANFCEs, the separate constraints are equivalent to the total constraints \eqref{eq:ic-anf} (for the non-approximate version, i.e., $\epsilon = 0$), and the difference between SFCEs and ANFCEs comes from the difference between deviation $\phiSF \colon S_i \to S_i$ and $\phi \colon \Theta_i \times A_i \to A_i$.
For SFCCEs and ANFCCEs, there is no difference in deviations, but the total constraints for SFCEs and the separate constraints for ANFCCEs are not equivalent.
If we take the summation of the incentive constraint for ANFCCEs over type $\theta'_i \in \Theta_i$, we obtain those for SFCCEs by setting $s'_i(\theta'_i) = a'_i$ for each $\theta'_i \in \Theta_i$.
On the other hand, the incentive constraints for SFCCEs do not imply those for ANFCCEs.
Thus, the set of ANFCCEs is a subset of the set of SFCCEs, which makes a striking contrast to the relation of SFCEs and ANFCEs.
A generalization of this relation for extensive-form games with imperfect information was proved by \citet*{FarinaBS20}.
Here, we show this relation is strict even for Bayesian games by presenting a simple example.

\begin{example}
We consider a Bayesian game with two players.
There are two possible types for each player, and their types are completely correlated.
More precisely, let $\Theta_1 = \{\theta_1,\theta'_1\}$ and $\Theta_2 = \{\theta_2,\theta'_2\}$, and $(\theta_1,\theta_2)$ realizes with probability $0.5$ and $(\theta'_1,\theta'_2)$ with probability $0.5$.
Each player has two possible actions $A_1 = \{a_1, a'_1\}$ and $A_2 = \{a_2, a'_2\}$, respectively.
The payoffs for player $1$ with each type are presented in \Cref{tab:cce-example}.
The payoffs for player $2$ are always defined to be $0$.

\begin{table}
\caption{A Bayesian game for which the set of SFCCEs and the set of ANFCCEs are different. The left table is the payoffs for player $1$ with type $\theta_1$, and the right table is the payoffs for player 1 with type $\theta'_1$. The payoff for player $2$ is always $0$.}\label{tab:cce-example}
\centering
\begin{tabular}{cc}
\begin{minipage}{0.3\textwidth}
\centering
\begin{tabular}{c|cc}
	& $a_2$ & $a'_2$\\
	\hline
	$a_1$ & $0$ & $0$ \\
	$a'_1$ & $0.5$ & $0$
\end{tabular}	
\end{minipage}
\medskip
\begin{minipage}{0.3\textwidth}
\centering
\begin{tabular}{c|cc}
	& $a_2$ & $a'_2$\\
	\hline
	$a_1$ & $0$ & $1$ \\
	$a'_1$ & $1$ & $0$
\end{tabular}	
\end{minipage}
\end{tabular}
\end{table}

We consider a distribution $\sigma \in \Delta(S)$ that recommends two possible strategies $s,s'$ each with probability $0.5$.
For $(\theta_1,\theta_2)$, these two strategies are the same: both of $s_1$ and $s_2$ recommend $a_1$ to player $1$ and $a_2$ to player $2$.
For $(\theta'_1,\theta'_2)$, these two strategies are different: $s_1(\theta'_1) = a'_1$ and $s_2(\theta'_2) = a_2$, while $s'_1(\theta'_1) = a_1$ and $s'_2(\theta'_2) = a'_2$.
The expected payoff for player $1$ is $0$ for $(\theta_1,\theta_2)$ and $1$ for $(\theta'_1,\theta'_2)$.
The total expected payoff is $0.5$.

First, we show that $\sigma$ is an SFCCE by considering an optimal deviation for player $1$.
For $\theta_1$, selecting $a'_1$ increases payoff to $0.5$.
On the other hand, for $\theta'_1$, selecting either of $a_1$ or $a'_1$ decreases payoff to $0.5$.
In total, player $1$ cannot increase the expected payoff.
Since the payoff for player $2$ is always $0$, this player never has incentive to deviate.
Therefore, $\sigma$ is an SFCCE.
On the other hand, $\sigma$ is not an ANFCCE since player $1$ with $\theta_1$ can gain by choosing $a'_1$.
\end{example}

\citet*{CKK15} and \citet*{JL23} used the following definition of Bayes coarse correlated equilibria.
This is similar to that of ANFCCEs, but defined on $\Delta(A)^\Theta$, not on $\Delta(S)$.
To our knowledge, this class has not yet been given any specific name.
A possible candidate is ``coarse Bayesian solutions'' since this notion is a coarse variant of Bayesian solutions.

\begin{definition}[coarse Bayesian solutions]
A type-wise distribution $\pi \in \Delta(A)^\Theta$ is a coarse Bayesian solution
if
for any $i \in N$, $\theta'_i \in \Theta_i$, and $a'_i \in A_i$, it holds that
\begin{equation*}
\E_{\theta \sim \rho} \left[ \bfone_{\{\theta_i = \theta'_i\}} \E_{a \sim \pi(\theta)} \left[ v_i(\theta; a) \right] \right]
\ge                              
\E_{\theta \sim \rho} \left[ \bfone_{\{\theta_i = \theta'_i\}} \E_{a \sim \pi(\theta)} \left[ v_i(\theta; a'_i,a_{-i}) \right] \right].
\end{equation*}
\end{definition}

\end{document}